\newtheorem{stat}{Statement}[section]
\newtheorem{df}{Definition}[section]
\newtheorem{prop}{Proposition}[section]
\newtheorem{theo}{Theorem}[section]
\newtheorem{lemma}{Lemma}[section]
\newtheorem{cor}{Corollary}[section]
\newtheorem{rkkk}{Remarks}[section]
\newtheorem{rkk}{Remark}[section]
\newcommand{\bb}{\mathbb}
\newcommand{\veps}{\varepsilon}
\newcommand{\noi}{\noindent}
\newcommand{\vphi}{\varphi}
\newcommand{\R}{\bb R}\Large\LARGE
\newcommand{\I}{\vert}
\newcommand{\demo}{\noindent\textit{ Proof -~}}
\newcommand{\findemo}{\hfill $\Box$}
\title{The Coefficient Problem and Multifractality of Whole-Plane SLE \& LLE}
\author{Bertrand Duplantier$\,^{(a,b)}$\thanks{e-mail: \texttt{bertrand.duplantier@cea.fr} -- Partially supported by  grant  ANR-08-BLAN-0311-CSD5.}, 
Nguyen Thi Phuong Chi$\,^{(c)}$\thanks{e-mail: \texttt{ntpchi@gmail.com}},\\ Nguyen Thi Thuy Nga$\thanks{e-mail: \texttt{nguyennga.1a@gmail.com} -- Partially supported by the ``R\'{e}gion Centre'' through the grant APR 
TRUC (Transport, R\'{e}seaux, Croissance, Urbanisme.)}\,^{(c)}$, Michel Zinsmeister$\,^{(c,d)}$\thanks{e-mail: \texttt{zins@univ-orleans.fr} 
}
\\
{}\\
{\it $^{(a)}$Institut de Physique Th\'{e}orique, CEA/Saclay}\\
{\it F-91191 Gif-sur-Yvette Cedex, France} \\
{}\\
{\it $^{(b)}$The Mathematical Sciences Research Institute}\\
{\it 17 Gauss Way,
Berkeley, CA 94720-5070, USA}\\ 
{}\\
{\it $^{(c)}${\sc mapmo}}\\
{\it Universit\'e d'Orl\'eans}\\
{\it B\^{a}timent de math\'{e}matiques, rue de Chartres}\\
{\it 
B.P. 6759 - F-45067 Orl\'{e}ans Cedex 2, France}\\
{}\\
{\it $^{(d)}$ Laboratoire de physique th\'{e}orique de la mati\`{e}re condens\'{e}e}\\ 
{\it UMR CNRS 7600, Tour 12-13/13-23, Bo\^{\i}te 121}\\
{\it 4, Place Jussieu}\\
{\it 75252 Paris Cedex 05, France}}
\begin{document}
\maketitle

\begin{abstract} 
	 Karl L\"owner (later known as Charles Loewner) introduced his famous differential equation in 1923 in order to solve the Bieberbach conjecture for series expansion coefficients of univalent analytic functions at  level $n=3$. His method was revived in 1999 by Oded Schramm when he introduced the {\it Stochastic Loewner Evolution} (SLE), a conformally invariant process which made it possible to prove many predictions from conformal field theory  for critical planar models in statistical mechanics. The aim of this paper is to revisit the Bieberbach conjecture in the framework of {\it SLE processes} and, more generally, {\it L\'evy processes}. The study of their {\it unbounded whole-plane} versions leads to a discrete series of exact results for the expectations of coefficients and their variances, and, more generally, for the derivative moments of some prescribed order $p$. These results are generalized to the  ``oddified'' or $m$-fold conformal maps of whole-plane SLEs or L\'evy--Loewner Evolutions (LLEs). We also study the (average) integral means multifractal spectra of these unbounded whole-plane SLE curves. We prove the existence of a phase transition at a moment order $p=p^*(\kappa)>0$, at which one goes from the bulk SLE$_\kappa$ average integral means spectrum, as predicted by one of us \cite{2000PhRvL..84.1363D} and established by Beliaev and Smirnov \cite{BS}, and valid for $p\leq p^*(\kappa)$, to a new integral means spectrum for $p\geq p^*(\kappa)$, as conjectured in part in Ref. \cite{IL}.  The latter spectrum is furthermore shown to be intimately related, via the associated packing spectrum, to the radial SLE derivative exponents  obtained by Lawler, Schramm and Werner \cite{MR2002m:60159b}, and to the local SLE tip multifractal exponents obtained from quantum gravity in Ref. \cite{MR2112128}. This is  generalized to the integral means spectrum of the $m$-fold transform of the unbounded whole-plane SLE map. A  succinct, preliminary, version of this study first 
appeared in Ref. \cite{Hal-DNNZ}.
\end{abstract}
\section{Introduction}
\subsection{The coefficient problem and Schramm--Loewner evolution}\label{coeff}
 \noi Let $ f(z)=\sum_{n\geq 0}a_nz^n$ be a holomorphic function in the unit disc $\bb D$. We further assume that the function $f$ is injective: what then can be said about the coefficients $a_n$? A trivial observation is that $a_1\neq 0$ and Bieberbach \cite{Bi} proved in 1916 that $$\I a_2\I\leq 2\I a_1\I.$$ In the same paper he famously conjectured that$$\forall n\geq 2,\,\I a_n\I\leq n\I a_1\I,$$ guided by the intuition that the function (afterwards called the \textit{Koebe} function) 
 \begin{equation}
 \label{koebe}\mathcal K(z):=-\sum_{n\geq 1} n (-z)^{n}=\frac{z}{(1+z)^2},
 \end{equation} 
 which is a holomorphic bijection between $\bb D$ and $\bb C\backslash [1/4, +\infty)$, should be extremal. This conjecture was finally proven in 1984 by de Branges \cite{dB}: its proof was made possible by the addition of a new idea (an inequality of Askey and Gasper) to a series of methods and results developed in almost a century of effort.
 It is largely accepted that the earliest important contribution to the proof of Bieberbach's conjecture is the proof  \cite{Lo} by Loewner in 1923 that $\I a_3\I \leq 3\I a_1\I$. De Branges' proof in  1985 \cite{dB} indeed used Loewner's idea in a crucial way, as did many contributors to the proof around that time.
In an Appendix to this article, we recall the proof by Bieberbach for the case $n=2$, and that by Loewner for $n=3$. 
It ends with a brief account of post-Loewner steps towards the proof of Bieberbach's conjecture.

Loewner's ideas go far beyond Bieberbach's conjecture: Oded Schramm \cite{Schr} revived Loewner's method in 1999, introducing {\it randomness} into it, as driven by standard Brownian motion. This field, now called the theory of SLE processes (initially for Stochastic Loewner, now for Schramm--Loewner, Evolution),  provides a unified and rigorous approach to the geometry of conformally invariant processes and critical curves in two-dimensional statistical mechanics. It led to the two Fields medals of W. Werner (for the application of SLE to planar Brownian paths) and of S. Smirnov 
(for application of SLE to critical percolation and Ising models). 

The aim of the present paper is to revisit Bieberbach's conjecture in the framework of SLE theory, that is to study the coefficients of univalent functions coming from the conformal maps associated with this process. We also extend our study to the so-called {\it L\'evy--Loewner Evolution} (LLE), where the Brownian source term in Loewner's equation is generalized to a L\'evy process. (See, e.g.,\cite{2006JSMTE..01..001R,2008JSMTE..01..019O}.)


There exist several variants of $\textrm{SLE}_\kappa$,  known, in a terminology due to Schramm, as {\it chordal}, {\it radial}, or {\it whole-plane}. \textit{The one we adopt in this work is a variant of the whole-plane one}, corresponding to the original setting introduced by Loewner. 
As in the radial case, the whole-plane Loewner process is determined by a function $\lambda: [0,+\infty)\to \partial \mathbb D:=\{z: \vert z\vert=1\}$, called the {\it driving function}, obtained as follows.  
Define $\gamma:\,[0,\infty)\to\bb C$ to be a Jordan arc joining $\gamma(0)$ to $\infty$, and not containing  the origin $0$ (see Fig. \ref{whpl}).  Define then for each $t>0$, the slit domain $\Omega_t=\bb C\backslash \gamma([t,\infty)).$ It is a simply connected domain containing $0$ and we can thus consider the Riemann mapping $f_t: \bb D\to \Omega_t,\,f_t(0)=0, f_t'(0)>0.$ By the Caratheodory convergence theorem, $f_t$ converges as $t\to 0$ to $f:=f_0$,  the Riemann mapping of $\Omega_0$. We may assume without loss of generality that $f'(0)=1$ and, by changing the time $t$ if necessary, choose the normalization $f_t'(0)=e^t$.

The key idea of Loewner is to use the fact that the sequence of domains $\Omega_t$ is increasing, which translates into the fact that $ \Re\left(\frac{\partial f_t}{\partial t}/z\frac{\partial f_t}{\partial z}\right)>0$ or, equivalently, that this quantity is the Poisson integral of a positive measure on the unit circle, actually a probability measure because of the above normalization. Now the fact that the domains $\Omega_t$ are slit domains implies that for every $t$ this probability measure must be a Dirac mass at point $\lambda(t)=f_t^{-1}(\gamma(t))$. It is worthwhile to notice that $\lambda$ is a continuous function. One says that the Loewner chain $(f_t)$ associated with $(\Omega_t)$ is driven by the function $\lambda(t)$, in the sense that $f_t$ satisfies the Loewner differential equation
\begin{equation}\label{loewner}
\frac{\partial f_t}{\partial t}=z\frac{\partial f_t}{\partial z}\frac{\lambda(t)+z}{\lambda(t)-z},\,\,\,z\in \mathbb D.
\end{equation}
\noi It is remarkable that that the Loewner method can be reversed: given a function $\lambda$ which is {\it c\`adl\`ag}, i.e., right continuous with left limits at every point of $\bb R_+$ with values in the unit circle, then the Loewner equation \eqref{loewner}, \textcolor{black}{supplemented by the final condition $f_{t\to +\infty}(z)=z$,}  has a solution $f_t(z)$ which is the Riemann mapping of a domain $\Omega_t$ and the corresponding family is increasing in $t$.\\

\begin{figure}[tb]
\begin{center}
\includegraphics[angle=90,width=.93290\linewidth]{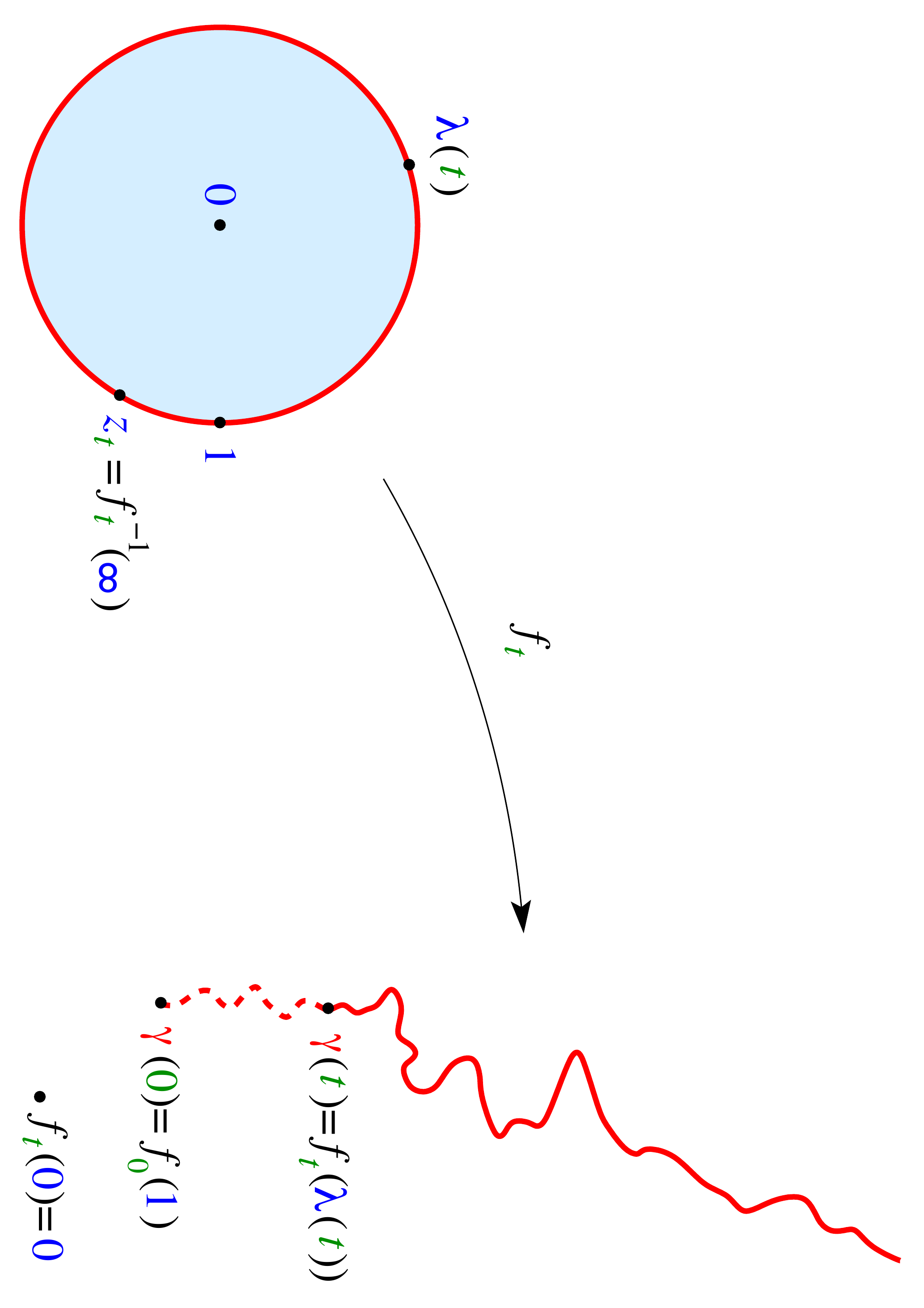}
\caption{{\it Loewner map $z\mapsto f_t(z)$ from $\mathbb D$ to the slit domain $\Omega_t=\bb C\backslash \gamma([t,\infty))$ (here slit by a single curve $\gamma([t,\infty))$ for $\kappa\leq 4$). One has $f_t(0)=0, \forall t\geq 0$. At $t=0$, the driving function $\lambda (0)=1$, so that the image of $z=1$ is at the tip $\gamma(0)=f_0(1)$ of the curve.}}
\label{whpl}
\end{center}
\end{figure}As is well-known, Schramm's fundamental insight was to consider as a particular driving function 
\begin{equation}\label{SLE}\lambda(t)=e^{i\sqrt{\kappa}B_t},\end{equation}
where $\kappa \in [0,\infty)$, and $B_t$ is standard, one-dimensional, Brownian motion, characterized by the three fundamental properties:
\begin{enumerate}[(a)]\item {\it Stationarity:} if $0\leq s \leq t$, then $B_t-B_s$ has the same law as $B_{t-s}$;
\item {\it Markov property:}  if $0\leq s \leq t$, then $B_t-B_s$ is independent of $B_s$;
\item {\it Gaussianity:} $B_t$ has a normal distribution with mean $0$ and variance $t$.
\end{enumerate}
A L\'evy  process $L_t$ provides the generalization 
that is assumed to satisfy only the first two of these properties, the essential difference with Brownian motion being that jumps are then allowed. \textcolor{black}{The corresponding stochastic L\'evy--Loewner evolution (LLE) obeys \eqref{loewner} with a source term
 that generalizes \eqref{SLE}  
 \begin{equation}\label{LLE}
 \lambda(t)=e^{iL_t}.
 \end{equation}}
The characteristic function of a L\'evy process $L_t$ has the form
\begin{equation} \label{Levychar}
\mathbb E(e^{i\xi L_t})=e^{-t\eta(\xi)}\end{equation}
where $\eta$ (called the L\'evy symbol) is a continuous complex function of $\xi\in \mathbb R$,  satisfying (in addition to necessary Bochner type conditions \cite{applebaum})
$\eta(0)=0$, and $\eta(-\xi)=\overline{\eta(\xi)}.$
$\textrm{SLE}_\kappa$ corresponds to a Gaussian characteristic function, and \textcolor{black}{its driving function} is a L\'evy process with symbol 
\begin{equation}\label{esle}
\eta(\xi)=\kappa \xi^2/2.
\end{equation}
More generally, the function
\begin{equation}\label{estable}\eta(\xi)=\kappa \vert \xi\vert^\alpha/2,\,\,\,\alpha\in (0,2]\end{equation}
is the L\'evy symbol of the so-called $\alpha-${\it stable process}.  The normalization here is chosen so that this process \textcolor{black}{gives} $\textrm{SLE}_\kappa$ for $\alpha=2$. Another L\'evy symbol of interest is given (up to  constant factor) by 
$\eta(\xi)=1-({\sin \pi\xi)}/{\pi\xi},$ 
and corresponds to a certain compound Poisson process which serves as a model for a dendritic growth process; this aspect will be developed in a forthcoming paper (see also \cite{Johansson2009238}). 

\textcolor{black}{
The most general form of a L\'evy symbol is given by the well-known L\'evy-Khintchine formula (which makes precise the Bochner-type conditions  mentioned above). It states that a L\'evy symbol (in dimension one) has the necessary form
$$ \eta(\xi)=ib\xi+a^2\xi^2-\int_{\bb R\backslash\{0\}} \left[e^{i\xi y}-1-i\xi y1_{[-1,1]}\right]d\nu(y),$$
where $a,b\in \mathbb R$, and $\nu$ is a measure on $\bb R\backslash\{0\}$ such that
$$ \int_{\bb R\backslash\{0\}}(1\wedge y^2) d\nu(y)<\infty.$$}
In the examples above $\eta$ is a real, therefore even function, a property which we will assume throughout, except in the beginning of Section \ref{analytic}. As we shall see, all the quantities that we will consider depend only on the values of the L\'evy symbol at {\it integer  arguments}; for this reason we shall use the ``sequence" notation: $\eta_k:=\eta(k),\,k\in\bb Z$.

The associated conformal maps,  obeying  \eqref{loewner}, are denoted by $f_t$, and  in this work, we study their coefficients $a_n(t)$, which are random variables, defined by the normalized series expansion: 
\begin{equation}\label{defcoeff}
f_t(z)=e^t\big(z+\sum_{n\geq2} a_n(t)z^n\big).
\end{equation} 
Section \ref{analytic} starts with the computation, in terms of the L\'evy symbols $\eta_k, k\in \mathbb Z$, of $\mathbb E(a_n)$ for all $n$, and of $\mathbb E(\I a_n\I^2)$ for small $n$,  for a general L\'evy--Loewner evolution process $f_t$. Note that a similar idea already appeared in Ref. \cite{2010JSP...139..108K}, where A. Kemppainen  studied in detail the coefficients associated with the Schramm--Loewner evolution, using a stationarity property of SLE \cite{2006JPhA...39L.657K}. However, the  focus there was on expectations of the moments of those coefficients, rather than on the moments of their moduli.  

We also consider the  associated odd (``oddified'') process, defined as :   
\begin{equation}\label{defoddified}
h_t(z):=z\sqrt{f_t(z^2)/z^2}),
\end{equation} represented by the normalized series expansion: 
\begin{equation}\label{defcoeffb}
e^{-t/2}h_t(z)=z+\sum_{n\geq 1}b_{2n+1}(t)z^{2n+1}.\end{equation} 
\textcolor{black}{The transform \eqref{defoddified} was the key to the proof of the Bieberbach conjecture.  The so-called Littlewood-Paley conjecture that the odd coefficients satisfy $|b_{2n+1}|\leq 1$ (an inequality which implies Bieberbach's) was actually disproved by Fekete and Szeg\H o, but its modification by Robertson claiming that $\sum_{k=1}^n|b_{2k+1}|^2\leq n$ (which also implies Bieberbach's conjecture) was finally proven in de Branges's work; see the historical sketch \ref{appendix} at the end of this paper.}

\textcolor{black}{This transform has  been generalized to the \textit{$m$-fold transform}
 \begin{equation}
 \label{defmfold}
 h_t^{(m)}(z):=z(f_t(z^m)/z^m)^{1/m},\end{equation}  
 defined for $m\in\bb N,\,m\geq 1$ (see below).}

 We find the following results:
\begin{theo} \label{theolle0} Let $(f_t)_{t\geq 0}$ be the Loewner whole-plane process driven by the L\'evy process $L_t$ with L\'evy symbol $\eta$. We write
\vskip -.2cm
$$f_t(z)=e^t\big(z+\sum_{n\geq 2}a_n(t) z^n\big),$$
\vskip -.2cm
 We also consider the oddification of $f_t$,
\vskip -.1cm
$$h_t(z)=z\sqrt{f_t(z^2)/z^2}=e^{t/2}\big(z+\sum_{n\geq 1}b_{2n+1}(t)z^{2n+1}\big)$$
\vskip -.1cm
Then the {\it conjugate} whole-plane L\'evy--Loewner evolution $e^{-iL_t} f_t\big(e^{iL_t}z\big)$ has the {\it same law} as $f_0(z)$, i.e., $e^{i(n-1)L_t}a_n(t)\stackrel{\rm (law)}{=}a_n(0)$.
 Similarly, the  {\it conjugate} oddified whole-plane L\'evy--Loewner evolution $e^{-(i/2)L_t} h_t\big(e^{(i/2)L_t}z\big)$ has the {\it same law} as $h_0(z)$, i.e., $e^{inL_t}b_n(t)\stackrel{\rm (law)}{=}b_n(0)$.\\

Setting  $a_n:=a_n(0)$ and $b_{2n+1}:=b_{2n+1}(0)$, we have
\begin{eqnarray*}
\mathbb E(a_n)&=&\prod_{k=0}^{n-2}\frac{\eta_{k}-k-2}{\eta_{k+1}+k+1},\,\,\, n\geq 2,\\
\mathbb E(b_{2n+1})&=&\prod_{k=0}^{n-1}\frac{\eta_k-k-1}{\eta_{k+1}+k+1},\,\,\,n\geq 1.
\end{eqnarray*}
\end{theo}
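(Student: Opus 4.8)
The plan is to reduce both assertions to a single structural fact: a self-similar factorization of the coefficients regarded as functionals of the driving path. First I would insert the normalized expansions into the Loewner equation \eqref{loewner} with $\lambda(t)=e^{iL_t}$. Using $\frac{\lambda+z}{\lambda-z}=1+2\sum_{k\geq1}\lambda^{-k}z^{k}$ and matching powers of $z$, the single PDE becomes a triangular hierarchy of scalar equations for the coefficients: for $f_t$ one finds $\dot a_n=(n-1)a_n+2\sum_{j=1}^{n-1} j\,a_j\,\lambda^{j-n}$, and, after deriving from $h_t(z)^2=f_t(z^2)$ the Loewner equation $\partial_t h_t=\frac{z}{2}h_t'\frac{\lambda+z^2}{\lambda-z^2}$ satisfied by the oddified map, the analogous relation $\dot b_{2n+1}=n\,b_{2n+1}+\sum_{m=0}^{n-1}(2m+1)b_{2m+1}\lambda^{m-n}$. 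Since $\lambda$ is bounded and c\`adl\`ag, these are genuine pathwise integral equations whose solutions are continuous in $t$, so the jumps of the L\'evy driver create no difficulty at this stage.

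The decisive step is to integrate these equations backward from the final condition $f_{\infty}=\mathrm{id}$, i.e. $a_n(\infty)=0=b_{2n+1}(\infty)$ for $n\geq1$. Because the homogeneous solution $e^{(n-1)t}$ is the growing one, the admissible solution is $a_n(t)=-2e^{(n-1)t}\int_t^{\infty} e^{-(n-1)s}\sum_{j=1}^{n-1} j\,a_j(s)\,\lambda(s)^{j-n}\,ds$, the improper integral converging absolutely thanks to the Bieberbach--de Branges bound $|a_n(s)|\leq n$, valid since $e^{-s}f_s$ is univalent with the normalization of \eqref{defcoeff}. The change of variables $s=t+v$ together with the substitution $\lambda(s)=\lambda(t)\,\tilde\lambda_t(s)$, where $\tilde\lambda_t(s):=e^{i(L_s-L_t)}$, then yields by induction on $n$ the factorization $a_n(t)=\lambda(t)^{-(n-1)}A_n$, where $A_n:=\Phi_n\big((\tilde\lambda_t(t+v))_{v\geq0}\big)$ is the very functional $\Phi_n$ producing $a_n(0)$ out of the driving path. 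By stationarity and independence of the increments of $L_t$, the centered path $(\tilde\lambda_t(t+v))_{v\geq0}$ has the law of $(\lambda(v))_{v\geq0}$ and is independent of $L_t$; hence $A_n$ is independent of $L_t$ and $A_n\stackrel{\rm (law)}{=}a_n(0)$. This is exactly the first assertion, for $\lambda(t)^{n-1}a_n(t)=A_n$ with $\lambda(t)^{n-1}=e^{i(n-1)L_t}$; the oddified case is identical, with exponent $-m$ replacing $-(n-1)$.

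Finally I would take expectations in the backward representation at $t=0$. Writing $a_j(s)\lambda(s)^{j-n}=e^{-i(n-1)L_s}A_j$ and using independence together with \eqref{Levychar} in the form $\mathbb{E}\big(e^{-i(n-1)L_s}\big)=e^{-s\eta_{n-1}}$ (here $\eta$ even) decouples the correlation between $a_j(s)$ and $\lambda(s)$, giving $\mathbb{E}\big(a_j(s)\lambda(s)^{j-n}\big)=e^{-s\eta_{n-1}}\mathbb{E}(a_j)$. The $s$-integral is then elementary and produces the linear recursion $(n-1+\eta_{n-1})\,\mathbb{E}(a_n)=-2\sum_{j=1}^{n-1} j\,\mathbb{E}(a_j)$, with $\mathbb{E}(a_1)=1$; subtracting consecutive instances telescopes the sums and gives the ratio $\mathbb{E}(a_{n+1})/\mathbb{E}(a_n)=(\eta_{n-1}-n-1)/(\eta_n+n)$, which integrates to the announced product once one uses $\eta_0=0$. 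The identical scheme for $b_{2n+1}$ gives $(n+\eta_n)\,\mathbb{E}(b_{2n+1})=-\sum_{m=0}^{n-1}(2m+1)\mathbb{E}(b_{2m+1})$ and hence the second product.

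I expect the main obstacle to lie in the middle step rather than in the concluding algebra: rigorously justifying the backward improper-integral representation (convergence and the attainment of the final condition at $t=\infty$), the exchange of expectation and integration, and above all the self-similar factorization with the independence of $A_n$ from $L_t$. Once that factorization is in hand, both the law-invariance and the decoupling of $\mathbb{E}(a_j(s)\lambda(s)^{j-n})$ are immediate, and solving the resulting telescoping recursions is routine.
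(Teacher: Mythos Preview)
Your proof is correct and rests on the same structural insight as the paper: the backward integral representation of $a_n(t)$ combined with the factorization $\lambda(s)=\lambda(t)\tilde\lambda_t(s)$ coming from stationary independent increments, which yields $e^{i(n-1)L_t}a_n(t)=A_n^{(t)}$ as a functional of the shifted path only.

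The execution differs slightly. The paper pushes this further into an operator formalism (operators $\mathcal X$ and $\mathcal J$) that produces a \emph{fully factorized} representation of $a_n(0)$ as a product of terms involving independent copies of the L\'evy process; expectations then split into an explicit product directly. You instead stop at the factorization $a_j(s)\lambda(s)^{j-n}=e^{-i(n-1)L_s}A_j^{(s)}$, use independence once to decouple, and land on the linear recursion $(n-1+\eta_{n-1})\mathbb E(a_n)=-2\sum_{j=1}^{n-1}j\,\mathbb E(a_j)$, which telescopes to the product. This is a genuinely shorter route to the expectation formulae; what the paper's more elaborate representation buys is an explicit expression for $a_n(0)$ itself (not just its mean) as a product of independent random integrals. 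Your self-assessment of where the difficulty lies is accurate: the convergence and the independence of $A_n$ from $\mathcal F_t$ are indeed the points requiring care, and you have handled them correctly.
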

\begin{cor}\label{corfprime} In the setting of Theorem \ref{theolle0},
\begin{enumerate}[(i)]
\item if $\eta_1=3$, $\mathbb E(f'_0(z))=1-z$;
\item if $\eta_1=1$ and $\eta_2=4$, $\mathbb E(f'_0(z))=(1-z)^2$;
\item if $\eta_1=2$, $\mathbb E(\textcolor{black}{h}'_0(z))=1-z^2$.
\end{enumerate}\end{cor}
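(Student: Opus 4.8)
The plan is to substitute the explicit product formulas of Theorem~\ref{theolle0} into the term-by-term derivatives of the series for $f_0$ and $h_0$, and to observe that the hypotheses on $\eta_1$ (and $\eta_2$) force the expectations $\mathbb E(a_n)$, resp.\ $\mathbb E(b_{2n+1})$, to vanish beyond a small order, so that each expected derivative collapses to a polynomial.

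First I would record that at $t=0$ the prefactors $e^t$ and $e^{t/2}$ equal $1$, so that $f_0(z)=z+\sum_{n\geq2}a_nz^n$ and $h_0(z)=z+\sum_{n\geq1}b_{2n+1}z^{2n+1}$. Differentiating and taking expectations gives
$$\mathbb E(f_0'(z))=1+\sum_{n\geq2}n\,\mathbb E(a_n)\,z^{n-1},\qquad \mathbb E(h_0'(z))=1+\sum_{n\geq1}(2n+1)\,\mathbb E(b_{2n+1})\,z^{2n}.$$
Throughout I would use the L\'evy-symbol normalization $\eta_0=\eta(0)=0$.

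The mechanism in all three cases is the vanishing of a single numerator factor in the products of Theorem~\ref{theolle0}, which truncates the (a priori infinite) product to a finite polynomial. For (i), with $\eta_1=3$, the $k=1$ factor of $\mathbb E(a_n)$ is $\eta_1-1-2=0$, so $\mathbb E(a_n)=0$ for every $n\geq3$, while $\mathbb E(a_2)=(\eta_0-2)/(\eta_1+1)=-\tfrac12$. For (ii), with $\eta_1=1$ and $\eta_2=4$, the $k=2$ factor equals $\eta_2-2-2=0$, killing $\mathbb E(a_n)$ for $n\geq4$; one then computes $\mathbb E(a_2)=(\eta_0-2)/(\eta_1+1)=-1$ and $\mathbb E(a_3)=\mathbb E(a_2)\,(\eta_1-3)/(\eta_2+2)=\tfrac13$. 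For (iii), with $\eta_1=2$, the $k=1$ factor of $\mathbb E(b_{2n+1})$ is $\eta_1-1-1=0$, annihilating all terms with $n\geq2$, while $\mathbb E(b_3)=(\eta_0-1)/(\eta_1+1)=-\tfrac13$.

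Assembling the surviving terms then yields $1+2(-\tfrac12)z=1-z$ for (i); $1+2(-1)z+3(\tfrac13)z^2=(1-z)^2$ for (ii); and $1+3(-\tfrac13)z^2=1-z^2$ for (iii). There is no substantive obstacle beyond the bookkeeping: the only point deserving care is to check that the prescribed integer value of $\eta_1$ (or $\eta_2$) lands exactly on the zero of the relevant numerator $\eta_k-k-c$, since this is precisely what converts the infinite product into the desired polynomial.
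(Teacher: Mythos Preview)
Your proof is correct and follows essentially the same route as the paper: the paper also observes (in Corollaries~\ref{cor1} and~\ref{cor2}, and the discussion that follows) that the factored form of $\mathbb E(a_n)$ and $\mathbb E(b_{2n+1})$ from Theorem~\ref{theolle0} makes the expected map polynomial whenever some $\eta_k$ hits the zero of a numerator factor, and then specializes to $\eta_1=3$, $(\eta_1,\eta_2)=(1,4)$, and $\eta_1=2$ to recover the stated derivatives. Your direct term-by-term bookkeeping is exactly this argument, carried out without first stating the general polynomiality criterion.
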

\textcolor{black}{Theorem \ref{theolle0} will be proven in Section \ref{proofs} as the combination of Theorem \ref{theoEan} and Theorem \ref{theoEbn}.}

Direct computations of expectations $\mathbb E(\I a_n\I^2)$ are already quite involved at level $n=4$, and we have used computer assistance in symbolic calculus with {\sc matlab}  for higher coefficients. These computer experiments, briefly explained in Section \ref{computation},  lead to the following statements, explicitly checked up to $n=8$, and proven in Section \ref{proofs}:
\begin{theo} \label{theolle} 
In the same setting as in Theorem \ref{theolle0},
\begin{enumerate}[(i)]
\item If $\eta_1=3$, we have 
$$\mathbb E(\I a_n\I^2)=1,\,\forall n\geq 1;$$
this case covers SLE$_6$.
\item If $\eta_1=1,\,\eta_2=4$, we have 
$$\mathbb E(\I a_n\I^2)=n,\,\forall n\geq 1;$$ this case covers SLE$_2 $.
\item If $\eta_1=2$, we have 
$$\mathbb E(\I b_{2n+1}\I^2)=\frac{1}{2n+1},\,\forall n\geq 1;$$ this case covers \textcolor{black}{the oddified} SLE$_4$.
\end{enumerate}
\end{theo}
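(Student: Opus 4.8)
The plan is to turn the Loewner PDE \eqref{loewner} into an infinite triangular system of linear ODEs for the coefficients $a_n(t)$, driven by $L_t$, and then to convert the second-moment computation into a deterministic linear recursion whose closure is guaranteed by the stationarity of Theorem \ref{theolle0}. First I would substitute the expansion $f_t(z)=e^t(z+\sum_{n\geq 2}a_n(t)z^n)$ into \eqref{loewner}, use the expansion $\frac{\lambda(t)+z}{\lambda(t)-z}=1+2\sum_{k\geq 1}e^{-ikL_t}z^k$, and match powers of $z$. This produces, for each $n\geq 2$,
$$\dot a_n(t)=(n-1)a_n(t)+2\sum_{j=1}^{n-1}j\,a_j(t)\,e^{-i(n-j)L_t},\qquad a_1\equiv 1,$$
integrated with the whole-plane terminal condition $a_n(\infty)=0$; the attendant backward integration is what fixes the sign of the drift below. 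Passing to the conjugated coefficients $\tilde a_n(t):=e^{i(n-1)L_t}a_n(t)$, Theorem \ref{theolle0} says the whole vector $(\tilde a_n(t))_n$ has a law independent of $t$, and one checks from the integral representation that $\tilde a_n(t)$ depends only on the increments of $L$ after time $t$, hence is independent of $\mathcal F_t$.

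The engine is then to apply the infinitesimal generator to the bilinear functionals $T_{j,k}:=\mathbb E(\tilde a_j\overline{\tilde a_k})=\mathbb E\big(e^{i(j-k)L_t}a_j\bar a_k\big)$ and to use $\frac{d}{dt}T_{j,k}=0$. Since the generator acts on a phase $e^{im\theta}$ simply by multiplication by $-\eta_m$ (the defining property \eqref{Levychar} of the L\'evy symbol), combining this with the drift above yields the closed recursion
$$(j+k-2+\eta_{j-k})\,T_{j,k}=-2\sum_{l=1}^{j-1}l\,T_{l,k}-2\sum_{m=1}^{k-1}m\,T_{j,m}.$$
Every term on the right has strictly smaller index-sum, so this determines all $T_{j,k}$ by induction on $j+k$ from the seeds $T_{1,k}=\overline{\mathbb E(a_k)}$ furnished by Theorem \ref{theolle0} (the coefficient $j+k-2+\mathrm{Re}\,\eta_{j-k}$ is strictly positive away from the base point, so one never divides by zero), and the target is $\mathbb E(\I a_n\I^2)=T_{n,n}$.

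The observation that makes the three cases tractable is that each hypothesis forces the expectation product of Theorem \ref{theolle0} to vanish beyond a fixed order: $\eta_1=3$ kills the factor $\eta_1-3$, so $\mathbb E(a_n)=0$ for $n\geq 3$; the pair $\eta_1=1,\eta_2=4$ kills the factor $\eta_2-4$, so $\mathbb E(a_n)=0$ for $n\geq 4$; and the analogous vanishing holds for the seeds $\mathbb E(b_{2n+1})$ in the oddified case $\eta_1=2$. This truncation prevents any symbol beyond $\eta_1$ (resp.\ $\eta_1,\eta_2$) from ever affecting the recursion: whenever a high symbol $\eta_{j-k}$ appears in a coefficient, the associated entry is forced to $0$. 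I would then guess and verify by induction on $j+k$ a band-limited form for the full matrix $T_{j,k}$ --- in case (i) the tridiagonal ansatz $T_{j,j}=1$, $T_{j,j\pm 1}=-\tfrac12$, $T_{j,k}=0$ for $\I j-k\I\geq 2$, and in case (ii) a pentadiagonal ansatz with $T_{j,j}=j$ --- reading off the diagonal to get $\mathbb E(\I a_n\I^2)=1$ and $=n$. For (iii) the same scheme runs on $b_{2n+1}(t)$, using either the identity $h_t(z)^2=f_t(z^2)$ to transfer the recursion or the $m$-fold ($m=2$) Loewner equation for $h_t$, which produces a system of identical shape with half-integer phases and yields $\mathbb E(\I b_{2n+1}\I^2)=1/(2n+1)$.

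The main obstacle is the last step: producing the correct closed form for the \emph{off-diagonal} correlations and pushing the double induction through the two sums. The diagonal identity is not self-contained, since it feeds on the whole sub-band of off-diagonal entries, so the real content is to identify the exact band structure imposed by the truncated seeds and to verify that the band-limited ansatz reproduces itself under the recursion. A secondary but genuine subtlety is the sign of the drift induced by the whole-plane backward integration; I would pin this down once and for all by checking that the first-moment version of the same generator computation reproduces the product formula of Theorem \ref{theolle0}, and then reuse that convention throughout.
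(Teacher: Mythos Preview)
Your approach is sound and genuinely different from the paper's. The paper never writes down a recursion for the correlation matrix $T_{j,k}$; instead it establishes a Beliaev--Smirnov PDE for the generating function $F(z,\bar z)=\mathbb E(|f_0'(z)|^p)$, shows the space of formal power-series solutions is one-dimensional, and then finds an explicit factorised solution $F(z,\bar z)=(1-z)^\alpha(1-\bar z)^\alpha(1-z\bar z)^{-\kappa\alpha^2/2}$ for a special $(\kappa,p)$-curve that happens to pass through $(\kappa,p)=(6,2)$ and $(2,2)$. From that closed form the coefficient identities follow at once via Plancherel. The correspondence with your picture is transparent: the factor $(1-z)^\alpha(1-\bar z)^\alpha$ with $\alpha\in\mathbb N$ is exactly what forces the correlation matrix to be $(2\alpha+1)$-diagonal, i.e.\ tridiagonal for case~(i) ($\alpha=1$) and pentadiagonal for case~(ii) ($\alpha=2$), and for the oddified map one gets the analogous factor $(1-z^2)(1-\bar z^2)$. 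What the generating-function route buys is that one does not need to guess and verify the off-diagonal entries separately (for case~(ii) they are $T_{j,j-1}=-(2j-1)/3$ and $T_{j,j-2}=(j-1)/6$, after which your induction closes); what your route buys is that it is entirely elementary once the recursion is in hand, and bypasses the PDE machinery.

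One technical point you should not underestimate: the step ``apply the generator to the phase'' is more delicate than it looks, because $a_j(t)$ is an anticipating functional of the driving process (it depends on $(L_s)_{s\ge t}$), so the It\^o term $\int e^{i(j-k)L_s}a_j(s)\bar a_k(s)\,dB_s$ is not a priori a martingale and cannot simply be dropped in expectation. Your observation that $\tilde a_n(t)$ depends only on future increments is exactly the right ingredient --- for Brownian driving one can time-reverse, and in general one can condition on $\mathcal F_t$ and use independence of increments to justify the formal computation --- but this needs to be spelled out. Alternatively, your recursion is precisely the power-series expansion of the BS PDE, so one can borrow the paper's martingale derivation of that equation as a rigorous shortcut to the recursion itself.
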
 
\textcolor{black}{\begin{rkk}
 For $\textrm{SLE}_\kappa$, recall that Eq. \eqref{esle} gives $\eta_1=\kappa/2$,  thus case {\it (i)} includes  $\textrm{SLE}_{\kappa=6}$. Since Eq. \eqref{esle} also gives $\eta_2=2\kappa$,  $\textrm{SLE}_2$ is included in case {\it (ii)}. Case {\it (iii)} includes the oddified $\textrm{SLE}_4$. 
 \end{rkk} 
 \begin{rkk}
In the second case {\it (ii)}, we have noticed for all explicitly computed coefficients ($n\leq 8$), and for all numerically computed ones ($n\leq19$), that the condition $\eta_1=1$ in fact
 suffices for the conclusion $\mathbb E(\I a_n\I^2)=n$ to hold. This property was first conjectured to be valid for any coefficient degree $n$ in Ref.  \cite{Hal-DNNZ}. It has been revisited in Ref. \cite{2013arXiv1301.6508L}. 
\end{rkk}}
  Section \ref{proofs} is devoted to proofs and begins with the computation of $\mathbb E(f_t(z))$ and $\mathbb E(h_t(z))$. We show in particular that these expectations take a  simple, {\it polynomial} form for the two cases above, $\eta_1=3$ and $\eta_1=1,\,\eta_2=4$,  and more generally, when there exists a $k\in \mathbb N$, such that $\eta_k=2+k$. In the odd case, these special values are  $\eta_k=1+k$.  This also yields the derivative expectations $\mathbb E[f'_t(z)]$. These results are used in the remainder of the section,  devoted to proving Theorem \ref{theolle} and obtaining other identities.

\textcolor{black}{After our earlier draft  \cite{Hal-DNNZ} was posted,  cases {\it (i)} and {\it (ii)} of Theorem \ref{theolle} were obtained for SLE in  Ref. \cite{IL}. It used a differential equation obeyed by the moments of $|f'_t(z)|$, and obtained by Hastings's (heuristic) method \cite{PhysRevLett.88.055506}. A resulting double recursion then becomes solvable for $\kappa=6,2$, with some computer assistance.} 

\textcolor{black}{This differential equation appeared in a paper by Beliaev and Smirnov (BS) \cite{BS} (see also Beliaev's dissertation  \cite{BKTH}), for another variant of  whole-plane SLE, along with its extension to the LLE case. The latter allows us to prove cases {\it (i)} and {\it (ii)} for  L\'evy--Loewner evolutions.} 

\textcolor{black}{Starting from the BS equation, 
we provide an analytic method to obtain a series of explicit solutions to that equation. In the case of $\textrm{SLE}_\kappa$, closed-form expressions are obtained for the moments $\mathbb E\big[(f'_t(z))^{p/2}\big]$ and $\mathbb E \big[|f'_t(z)|^p\big]=\mathbb E\big[(f'_t(z))^{p/2}(\overline{f'_t(z)})^{p/2}\big]$, 
 for a special set of values of the parameter $p$ depending on $\kappa$, that includes $p=2$ for $\kappa=2,6$  (see also \cite{Hal-DNNZ,IL}). We next show how to  extend  SLE results directly to the LLE case. 
We  further derive modified BS equations  for the oddified version \eqref{defoddified} of SLE or LLE processes, or for their $m$-fold transforms \eqref{defmfold}. For each value of $m\geq 1$, we construct a set of exact solutions;  in the oddified $m=2$ case, this yields a proof of case {\it (iii)} of Theorem \ref{theolle} for SLE and LLE.}\\ 
 
 {\it We would like to stress  that it is only for the ``inner'' variant of whole-plane SLE or LLE that we have introduced in Ref. \cite{Hal-DNNZ} and study  here, that such explicit, closed-form properties may exist.} 
 
 \textcolor{black}{This phenomenon may have a deeper explanation.}  
\textcolor{black}{This suggests future investigations of more general driving functions. A possible class of examples is
$\lambda(t)=e^{i(L_t+\mu(t))}$, 
where $L_t$ is a L\'evy process and $\mu$ is a function of bounded variation, or perhaps, more restrictively, in the Sobolev class $H^1$. This describes  a deterministic Loewner growth process  perturbed by  random noise. One may imagine this approach yielding insights towards a probabilistic proof of Bieberbach's conjecture.}

\subsection{Integral means spectra of whole-plane SLE}\label{intspec}These results are used in Section \ref{multifractal}, to study the {\it multifractal integral means spectrum} of our whole-plane processes.  Plancherel's theorem yields the easy corollary of Theorem \ref{theolle}:
\textcolor{black}{\begin{cor} \label{corint}For a L\'evy--Loewner evolution with $\eta_1=0$, $\eta_1=1$ and $\eta_2=4$, $\eta_1=3$ (thus including SLE for $\kappa=0,2,6$), and for an oddified LLE with $\eta_1=2$ (thus \textcolor{black}{oddified} SLE for $\kappa=4$), one has, respectively:
$$\mathbb E\left(\frac{1}{2\pi}\int_0^{2\pi}\I f'(re^{i\theta})\I^2d\theta\right)=\frac{1+11r^2+11r^4+r^6}{(1-r^2)^5};\frac{1+4r^2+r^4}{(1-r^2)^4} ; \frac{1+r^2}{(1-r^2)^3}; \frac{1+r^4}{(1-r^4)^2}.$$
\end{cor}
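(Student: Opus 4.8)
The plan is to reduce everything to Plancherel's identity and then to the known values of $\mathbb E(\vert a_n\vert^2)$ and $\mathbb E(\vert b_{2n+1}\vert^2)$ supplied by Theorem \ref{theolle}. Recall that for a holomorphic $g(z)=\sum_{m\geq 0}c_m z^m$ on $\bb D$, Plancherel's theorem gives $\frac{1}{2\pi}\int_0^{2\pi}\vert g(re^{i\theta})\vert^2 d\theta=\sum_{m\geq 0}\vert c_m\vert^2 r^{2m}$. Applying this to $g=f'$, with $f(z)=z+\sum_{n\geq 2}a_n z^n$ so that $f'(z)=\sum_{n\geq 1}n a_n z^{n-1}$ and $a_1=1$, yields $\frac{1}{2\pi}\int_0^{2\pi}\vert f'(re^{i\theta})\vert^2 d\theta=\sum_{n\geq 1}n^2\vert a_n\vert^2 r^{2(n-1)}$. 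Since every summand is nonnegative, Tonelli's theorem (equivalently, monotone convergence) allows me to exchange the expectation with the sum, giving $\mathbb E\big(\frac{1}{2\pi}\int_0^{2\pi}\vert f'(re^{i\theta})\vert^2 d\theta\big)=\sum_{n\geq 1}n^2\mathbb E(\vert a_n\vert^2)\, r^{2(n-1)}$. The analogous step for $h'(z)=\sum_{n\geq 0}(2n+1)b_{2n+1}z^{2n}$ (with $b_1=1$) produces $\sum_{n\geq 0}(2n+1)^2\mathbb E(\vert b_{2n+1}\vert^2)\, r^{4n}$.

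For the three cases carried by Theorem \ref{theolle} I would substitute directly. Case $\eta_1=3$ gives $\mathbb E(\vert a_n\vert^2)=1$, so I must sum $\sum_{n\geq 1}n^2 x^{n-1}$ at $x=r^2$; case $\eta_1=1,\,\eta_2=4$ gives $\mathbb E(\vert a_n\vert^2)=n$, so I sum $\sum_{n\geq 1}n^3 x^{n-1}$; and the oddified case $\eta_1=2$ gives $\mathbb E(\vert b_{2n+1}\vert^2)=1/(2n+1)$ (which is consistently $=1$ at $n=0$, matching $b_1=1$), so I sum $\sum_{n\geq 0}(2n+1)x^{n}$ at $x=r^4$. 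Each is a standard rational generating function obtained by repeated differentiation of the geometric series: $\sum_{n\geq 1}n^2 x^{n-1}=\frac{1+x}{(1-x)^3}$, $\sum_{n\geq 1}n^3 x^{n-1}=\frac{1+4x+x^2}{(1-x)^4}$, and $\sum_{n\geq 0}(2n+1)x^n=\frac{1+x}{(1-x)^2}$. Substituting $x=r^2$ (respectively $x=r^4$) reproduces the third, second, and fourth expressions in the statement.

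The one case not directly covered by Theorem \ref{theolle} is $\eta_1=0$, i.e. SLE at $\kappa=0$, and this is where I would take the most care. Here $\eta\equiv 0$, so the driving function is constant and the process is deterministic: by Theorem \ref{theolle0} the formula $\mathbb E(a_n)=\prod_{k=0}^{n-2}\frac{\eta_k-k-2}{\eta_{k+1}+k+1}$ with all $\eta_k=0$ telescopes to $a_n=(-1)^{n-1}n$, identifying $f_0$ with the Koebe function \eqref{koebe}. Being deterministic, $\mathbb E(\vert a_n\vert^2)=n^2$, so the relevant series becomes $\sum_{n\geq 1}n^4 x^{n-1}=\frac{1+11x+11x^2+x^3}{(1-x)^5}$ at $x=r^2$, which gives the first expression.

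The main obstacle is thus conceptual rather than computational: recognizing that $\eta_1=0$ falls outside Theorem \ref{theolle} and must be treated separately through the deterministic Koebe identity furnished by Theorem \ref{theolle0}, and, secondarily, justifying the interchange of $\mathbb E$ with the infinite sum. The latter is immediate here, since each term $n^2\vert a_n\vert^2 r^{2(n-1)}$ (respectively $(2n+1)^2\vert b_{2n+1}\vert^2 r^{4n}$) is nonnegative, so Tonelli applies without any prior convergence estimate; the radius of convergence $r<1$ ensures the resulting rational functions are finite. The remainder is the routine evaluation of the four power series, each reducing to differentiating $1/(1-x)$.
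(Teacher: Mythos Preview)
Your argument is correct and is essentially the paper's own: it reduces the corollary to Plancherel's identity \eqref{expansionbis}--\eqref{expansionoddbis} together with the values of $\mathbb E(\vert a_n\vert^2)$ and $\mathbb E(\vert b_{2n+1}\vert^2)$ from Theorem~\ref{theolle}, while the $\eta_1=0$ case is handled exactly as the paper does, by identifying it with the deterministic Koebe function (the whole-plane map at $\kappa=0$). The only cosmetic remark is that ``$\eta_1=0$'' in the statement really means the $\kappa=0$ (hence $\eta\equiv 0$) case, as you correctly inferred; this is needed since $\eta_1=0$ alone would not force $\mathbb E(\vert a_n\vert^2)=n^2$.
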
}
\noindent The first case is obtained directly from the Koebe function \eqref{koebe}, which coincides with the whole-plane SLE map for $\kappa=0$. We can rephrase these results in terms of the following:  
\begin{df}The integral means spectrum of a conformal mapping $f$ is the function defined on $\bb R$ by
\begin{equation} \label{betadef} \beta(p):=\overline{\lim}_{r\to 1}\frac{\log(\int_{\partial \bb D}\I f'(rz)\I^p \I dz\I)}{\log(\frac{1}{1-r})}.\end{equation}
\end{df}
\noi
In the \textit{stochastic} setting, we define the {\it average}  integral means spectrum 
\begin{df}  \begin{equation}\label{betavdef}\beta(p):=\overline{\lim}_{r\to 1}\frac{\log( \int_{\partial \bb D} \mathbb E\, \I f'(rz)\I^p\, \I dz\I)}{\log(\frac{1}{1-r})}.\end{equation}
\end{df}

\textcolor{black}{The preceding results show that,  {\it in the expectation sense} of definition \eqref{betavdef}, these exponents can be read off as 
$ \beta(2)=5,4,3$ for whole-plane LLE with  $\eta_1=0$, $\eta_1=1$ and $\eta_2=4$, $\eta_1=3$ (thus whole-plane SLE with $\kappa=0,2,6$), respectively. For the oddified LLE with $\eta_1=2$ (thus the oddified whole-plane SLE$_4$), $\beta_2(2)=2$.} 

\begin{rkk}
Define the functions  \begin{eqnarray}\label{gamma00} \gamma_0(p,\kappa)&:=&\frac{1}{2\kappa}\left(4+\kappa-\sqrt{(4+\kappa)^2-8\kappa p}\right),\\ \nonumber \beta_0(p,\kappa)&:=&\frac{\kappa}{2}\gamma_0^2=-p+\frac{4+\kappa}{2}\gamma_0\\ \label{beta000}&=&-p+\frac{4+\kappa}{4\kappa}\left(4+\kappa-\sqrt{(4+\kappa)^2-8\kappa p}\right),\\ \label{tildebeta00}
\hat \beta_0(p,\kappa)&:=&p-\frac{(4+\kappa)^2}{16\kappa}.
\end{eqnarray}
They yield the average integral means spectrum $\bar \beta_0(p,\kappa)$ of the bulk of the outer whole-plane version of SLE$_\kappa$, as given by Eqs. (11) (12) and (14) in Beliaev and Smirnov (BS) \cite{BS}:
 \begin{eqnarray}\label{beta00bar} \bar\beta_0(p,\kappa)&=&\begin{cases}\beta_0(p,\kappa),\,\,\,0\leq p\leq p_0^*(\kappa),\\ 
\hat\beta_0(p,\kappa),\,\,\,p\geq p_0^*(\kappa),\end{cases}\\
\label{p00star}
p_0^*(\kappa)&:=&\frac{3(4+\kappa)^2}{32\kappa}.
\end{eqnarray}
 \end{rkk}
\textcolor{black}{\begin{rkk}
 The above values $\beta(2,\kappa)=5,4,3$ for whole-plane SLE$_{\kappa=0,2,6}$, or $\beta_2(2,\kappa=4)=2$ for oddified SLE$_4$, do not agree with the BS spectrum: they are greater than $1$ while $\beta(2)<1$ for bounded maps (see the discussion after Remark \ref{pneg}). This illustrates the fact that the inner version of the whole-plane SLE  is unbounded with positive probablility.  
 \end{rkk}}
 Motivated by this observation, we determine the multifractal integral means spectrum of our inner version of whole-plane SLE$_\kappa$. To this aim, we perform the singularity analysis near the unit circle of the corresponding BS equation. \textcolor{black}{ The same question for oddified or $m$-fold  symmetrized whole-plane SLE is also natural, since it illustrates how the previously unnoticed part of the multifractal spectrum depends on the role of the point at infinity.  The consideration of the $m$-fold version is further motivated by the work by Makarov \cite{Makanaliz} on the universal spectra, showing very similar phenomena.}

The unbounded whole-plane SLE spectra are given in the following (non-rigorous) statement:
\begin{stat} 
\label{theoMF}   In the unbounded case of the inner whole-plane SLE$_\kappa$ process, $f_{t=0}(z), z\in \mathbb D$, as defined by the Schramm--Loewner equation \eqref{loewner}, and of its $m$-fold transforms, $h_0^{(m)}(z):=z\big[f_0(z^m)/z^m\big]^{1/m}, m\geq 1$,   the respective average integral means spectra $\beta(p,\kappa)$ and $\beta_m(p,\kappa)$  all exhibit a phase transition and are given, for $p\geq 0$, and for $1\leq m\leq 3$, by
\begin{eqnarray}\label{beta}
\beta(p,\kappa):=\beta_1(p,\kappa)&=&\max\left\{\beta_0(p,\kappa),3p-\frac{1}{2}-\frac{1}{2}\sqrt{1+2\kappa p}\right \},\\ 
\label{betam}
\beta_m(p,\kappa)&=& \max \left \{\beta_0(p,\kappa),B_m(p,\kappa)\right \},
\end{eqnarray}
\textcolor{black}{where 
\begin{equation}
\label{Bm}
B_m(p,\kappa):=\left(1+\frac{2}{m}\right)p-\frac{1}{2}-\frac{1}{2}\sqrt{1+\frac{2\kappa p}{m}}
\end{equation} 
is the multifractal spectrum corresponding to the unbounded part of the $m$-fold whole-plane SLE path.}

The first spectrum $\beta_1$ has its transition point, where the second term supersedes the first one, at 
\begin{eqnarray}
\nonumber p^*(\kappa)=p_1^*(\kappa)&:=&
  \frac{1}{16\kappa}\left((4+\kappa)^2-4-2\sqrt{2(4+\kappa)^2+4}\right)\\ \label{pstar} &=&\frac{1}{32\kappa}\left(\sqrt{2(4+\kappa)^2+4}-6\right) \left(\sqrt{2(4+\kappa)^2+4}+2\right),\end{eqnarray} while 
in general:
\begin{eqnarray}\nonumber p_m^*(\kappa)&:=& \frac{m}{8\kappa(m+1)^2}\left( (m+1)(4+\kappa)^2-8m
-4\sqrt{(m+1)(4+\kappa)^2+4m^2}\right)\\ \nonumber
&=&\frac{m}{8\kappa(m+1)^2}\left(\sqrt{(m+1)(4+\kappa)^2+4m^2}-2m-4\right)\\ \label{pmstar}
&&\times\left(\sqrt{(m+1)(4+\kappa)^2+4m^2}+2m\right).\end{eqnarray} 
For $1\leq m\leq 3$, one has $\forall \kappa\geq 0, p_m^*(\kappa)\leq p_0^*(\kappa)$ so that $\beta_0=\bar \beta_0$ in \eqref{betam}.

For $m\geq 4$, the   average integral means spectrum of the unbounded inner whole-plane SLE$_\kappa$ is given by 
\begin{eqnarray}
 \label{betamtilde}
\beta_m(p,\kappa)= \max \left \{\bar \beta_0(p,\kappa),B_m(p,\kappa)\right\},
\end{eqnarray}
with $\bar \beta_0$ defined as in \eqref{beta00bar}-\eqref{tildebeta00}. For $m\geq 4$, the order of the two critical points $p^*_0(\kappa)$ and $p^*_m(\kappa)$ depends on $\kappa$, and is given by \begin{eqnarray}\label{km0}
p_m^*(\kappa)\lesseqqgtr p_0^*(\kappa),\,\,\,\,\, \kappa\lesseqqgtr \kappa_m,\,\,\,\,\kappa_m:=4\frac{m+3}{m-3},\,\,\,\,\, m\geq 4,
\end{eqnarray}
such that for $\kappa\leq \kappa_m$, 
\textcolor{black}{\begin{eqnarray}\label{betamm00}
\beta_m(p,\kappa)=\begin{cases}\beta_0(p,\kappa),\,\,\,\,0\leq p\leq p_m^*(\kappa),\,\,\,
\\ B_m(p,\kappa),\,\,\,\, p_m^*(\kappa)\leq p,\end{cases}
\end{eqnarray}
whereas  for $\kappa\geq \kappa_m$,
\begin{eqnarray}\label{in0}\beta_m(p,\kappa)=\begin{cases}\beta_0(p,\kappa),\,\,\,\,0\leq p\leq  p_0^*(\kappa),\,\,\,
\\ 
\hat \beta_0(p,\kappa),\,\,\,\,p_0^*(\kappa)\leq p\leq p^{**}_m(\kappa),
\\ 
 B_m(p,\kappa),\,\,\,\,p_m^{**}(\kappa)\leq p,\end{cases}
\end{eqnarray}}
where $p_m^{**}(\kappa)$ is the second critical point
\begin{equation}\label{pdoublestar0}
p_m^{**}(\kappa):=m\frac{\kappa^2-16}{32\kappa},\end{equation}  where the last spectrum in \eqref{in0}  supersedes the linear spectrum \eqref{tildebeta00}.

 For $p\leq 0$, the average integral means spectrum, common to all $m$-fold versions of the inner or outer whole-plane SLE, is given, as in Eq. (14) of \cite{BS},  for $-1-{3\kappa}/{8}<p\leq 0$ by the bulk spectrum $\beta_0(p,\kappa)$ \eqref{beta000}, and for $p\leq -1-{3\kappa}/{8}$ by the so-called tip-spectrum \cite{BS,PhysRevLett.88.055506,0911.3983}:
\begin{eqnarray}\label{tip}
\beta_{\textrm{tip}}(p,\kappa)&:=&\beta_0(p,\kappa)-2\gamma_0(p,\kappa)-1=-p-1+\frac{\kappa}{2}\gamma_0(p,\kappa)\\ \label{beta0tip}
&=&-p-1+\frac{1}{4}\left(4+\kappa-\sqrt{(4+\kappa)^2-8\kappa p}\right),\,\,p\leq -1-\frac{3\kappa}{8}.
\end{eqnarray} 
\end{stat}

 For the second order moment case $p=2$, and for the special cases $m=1$, $\kappa=0,2,6$ or $m=2$, $\kappa=4$, the expressions \eqref{beta} and \eqref{betam} above agree with the results stated in Corollary \ref{corint}. \textcolor{black}{The rightmost expression in \eqref{beta}, i.e., $B_{m=1}(p,\kappa)$ in \eqref{Bm},  was conjectured  in Ref. \cite{IL} (see also \cite{2012arXiv1203.2756L,2013JSMTE..04..007L}); as we shall show in Sections \ref{DerivSec} and \ref{derivative}, it is directly related to the radial SLE derivative exponents introduced in Ref. \cite{MR2002m:60159b}, and to the (non-standard) multifractal tip exponents obtained in Ref. \cite{MR2112128}.} 

As mentioned above, there exists a special point \cite{BS} \begin{equation}\label{pkappa1}p=p(\kappa)=p_1(\kappa):=\frac{(6+\kappa)(2+\kappa)}{8\kappa},\end{equation} where an exact expression can be found for $\mathbb E\big[|f_0(z)|^p\big]$ (Theorem \ref{main1});  more generally there exists a series of special points 
\begin{equation}\label{pkappam}
p=p_m(\kappa):=\frac{m(2m+4+\kappa)(2+\kappa)}{2(m+1)^2\kappa},\,\,\, m\geq 1,\end{equation}
 where the $p$-th moment of the $m$-fold transform, $\mathbb E\big[|h_0^{(m)}(z)|^p\big]$, is found in an exact form (Theorems \ref{main3} and \ref{main4}). Note that $p_m^*(\kappa)\leq p_m(\kappa), \forall \kappa \geq 0$ and $p_m^{**}(\kappa)\leq p_m(\kappa), \forall \kappa \geq \kappa_m$.

In this setting, we rigorously prove  the following 
\begin{theo}\label{theoMFc}
The average integral means spectrum $\beta(p,\kappa):=\beta_1(p,\kappa)$ of the unbounded inner whole-plane SLE$_\kappa$ \eqref{loewner} \eqref{SLE} has a \textit{phase transition} at $p^*(\kappa)$ \eqref{pstar} and a special point at $p(\kappa)$ \eqref{pkappa1}
\textcolor{black}{ \begin{eqnarray}\nonumber
\beta(p,\kappa)\begin{cases}=\beta_0(p,\kappa),\,\,\,0\leq p\leq p^*(\kappa);\\ 
\geq 3p-\frac{1}{2}-\frac{1}{2}\sqrt{1+2\kappa p}>\beta_0(p,\kappa),\,\,\,p^*(\kappa) <p<p(\kappa);\\ 
= 3p-\frac{1}{2}-\frac{1}{2}\sqrt{1+2\kappa p}=\frac{(6+\kappa)^2}{8\kappa},\,\,\,p=p(\kappa);\\ 
\leq 3p-\frac{1}{2}-\frac{1}{2}\sqrt{1+2\kappa p},\,\,\,p(\kappa)<p.
\end{cases}
\end{eqnarray}}
\end{theo}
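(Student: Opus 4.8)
The plan is to reduce $\beta(p,\kappa)$ to the radial growth of a single expectation and to control that growth through the Beliaev--Smirnov (BS) equation. First I would invoke the rotational invariance of whole-plane $\mathrm{SLE}_\kappa$: since $e^{i\alpha}f_0(e^{-i\alpha}z)$ has the same law as $f_0(z)$, the quantity $\mathbb E\,|f_0'(re^{i\theta})|^p$ does not depend on $\theta$, and \eqref{betavdef} becomes
$$\beta(p,\kappa)=\overline{\lim}_{r\to 1}\frac{\log\mathbb E\,|f_0'(r)|^p}{\log\frac1{1-r}}.$$
Applying It\^o's formula to the Loewner flow \eqref{loewner}--\eqref{SLE} and using this symmetry turns $F(z,\bar z):=\mathbb E\,|f_0'(z)|^p$ into a solution of a linear second-order ODE in $r=|z|$ — the BS equation — with a regular singular point at $r=1$. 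Its local (Frobenius) analysis produces the \emph{bulk} exponent $\beta_0(p,\kappa)$ of \eqref{beta000}, exactly as in \cite{BS}; this is the contribution of typical boundary points and, being always present, yields the unconditional lower bound $\beta\ge\beta_0$.

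The novelty of the inner, \emph{unbounded} map is a second, competing contribution coming from the neighbourhood of the prime end mapped to $\infty$ (the starting tip of the curve), whose local derivative statistics are those of radial $\mathrm{SLE}_\kappa$ and are governed by the Lawler--Schramm--Werner exponents \cite{MR2002m:60159b}. This motivates trying $(1-r)^{-B_1(p,\kappa)}$, with $B_1=3p-\tfrac12-\tfrac12\sqrt{1+2\kappa p}$ the $m=1$ instance of \eqref{Bm}, as a \emph{subsolution} of the BS equation. Verifying the corresponding differential inequality, which I expect to hold for $p\le p(\kappa)$, gives $F(r)\ge c\,(1-r)^{-B_1}$, hence $\beta\ge B_1$ on $p^*(\kappa)<p\le p(\kappa)$. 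Because $B_1>\beta_0$ precisely when $p>p^*(\kappa)$, this already establishes the strict inequality $\beta\ge B_1>\beta_0$ asserted for $p^*(\kappa)<p<p(\kappa)$, as well as $\beta\ge\max\{\beta_0,B_1\}$ throughout.

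For the matching upper bounds I would compare $F$ with explicit supersolutions. On $0\le p\le p^*(\kappa)$, where $\beta_0\ge B_1$, the ansatz $(1-r)^{-\beta_0}$ is a supersolution, so $F(r)\le C(1-r)^{-\beta_0}$ and hence $\beta=\beta_0$ there. The anchor for the large-$p$ regime is the special value $p=p(\kappa)$ of \eqref{pkappa1}: there Theorem \ref{main1} gives the closed form of $\mathbb E\,|f_0(z)|^p$, whose $r\to1$ asymptotics, combined with the Koebe comparison $|f_0'(r)|\asymp|f_0(r)|/(1-r)$ valid where the unbounded part dominates, yield the exact value
$$\beta(p(\kappa),\kappa)=\frac{(6+\kappa)^2}{8\kappa}=B_1(p(\kappa),\kappa),$$
the last identity following from $1+2\kappa\,p(\kappa)=(4+\kappa)^2/4$. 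Finally, for $p>p(\kappa)$ I would use that the same $B_1$-ansatz $(1-r)^{-B_1(p,\kappa)}$, suitably normalised, is now a \emph{supersolution} of the BS equation, giving $F(r)\le C(1-r)^{-B_1}$ and hence $\beta\le B_1$.

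The main difficulty is that the $B_1$-ansatz changes role exactly at $p(\kappa)$: plugging it into the BS operator should leave a residual whose sign is that of $p-p(\kappa)$, so it is a genuine supersolution only for $p\ge p(\kappa)$ and a genuine subsolution only for $p\le p(\kappa)$, coinciding with the true solution precisely at $p(\kappa)$ (which is why only there do the bulk and tip mechanisms admit a closed form). Consequently the comparison method produces two-sided control only on $p\le p^*(\kappa)$ and at $p=p(\kappa)$, and a single inequality on each of the open intervals $\big(p^*(\kappa),p(\kappa)\big)$ and $\big(p(\kappa),\infty\big)$ — exactly the shape of the statement. Upgrading these to equalities would require solving the global connection problem for the BS ODE, i.e.\ showing that in the physical solution the coefficient of the $B_1$-mode is nonzero and dominant throughout $\big(p^*(\kappa),p(\kappa)\big)$; since $\beta_0$ and $B_1$ carry two \emph{different} algebraic branch cuts, $\sqrt{(4+\kappa)^2-8\kappa p}$ versus $\sqrt{1+2\kappa p}$, they cannot be the two indicial exponents of one rational ODE, and no closed-form connection formula is available away from $p(\kappa)$. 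This is the genuine obstacle that confines the rigorous result to the one-sided bounds stated.
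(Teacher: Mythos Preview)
Your very first step is incorrect, and the entire approach collapses with it. You claim that $e^{i\alpha}f_0(e^{-i\alpha}z)\stackrel{(\mathrm{law})}{=}f_0(z)$, hence that $\mathbb E\,|f_0'(re^{i\theta})|^p$ is independent of~$\theta$ and the BS equation becomes an ODE in~$r$. This is false: the driving function satisfies $\lambda(0)=1$, so $z=1$ is a distinguished boundary point (the pre-image of the tip $\gamma(0)$). The explicit solution of Theorem~\ref{main1},
\[
\mathbb E\,|f_0'(z)|^p=\frac{(1-z)^\alpha(1-\bar z)^\alpha}{(1-z\bar z)^\beta},
\]
visibly depends on $\arg z$, and the BS equation \eqref{BSsig} is a genuine PDE in $(r,\theta)$ with $\theta$-dependent coefficients coming from the pole at $z=1$. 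Integrating over~$\theta$ does not close to an ODE for $\int F\,d\theta$ either, because the coefficients of $F$ and $F_r$ in \eqref{BSsig} depend on~$\theta$.

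Consequently your purely radial ansatz $(1-r)^{-B_1(p,\kappa)}$ cannot be a sub/super\-solution: plugging it into \eqref{BSsig} leaves $\theta$-dependent residuals of no definite sign. The paper's proof instead uses the two-variable comparison function
\[
\psi_+(z,\bar z)=(1-z\bar z)^{-\beta_+}\,|1-z|^{2\gamma_+},
\]
with $(\gamma_+,\beta_+)$ determined by $A(\gamma_+)=0$, $\beta_+=\beta(\gamma_+)$ in \eqref{gammafin}--\eqref{betapgammafin}. One computes $\mathcal P(D)[\psi_+]$ exactly (Eq.~\eqref{Ppsi}); its sign flips across the internal circle $\partial\mathbb D_{1/2}=\{1-z\bar z=x\}$ tangent to $\partial\mathbb D$ at $z=1$, and the coefficient $\tfrac{\kappa}{2}\gamma_+^2-\beta_+=C(\gamma_+)$ changes sign precisely at $p=p(\kappa)$. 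A logarithmic correction $\ell_\delta(z\bar z)=[-\log(1-z\bar z)]^\delta$ with a suitably chosen sign of $\delta$ then forces $\mathcal P(D)[\psi_+\ell_\delta]$ to have a \emph{uniform} sign on an entire annulus $\{r_0<|z|<1\}$; the parabolic maximum/minimum principle (in $(r,\theta)$, with $r$ as time) gives Proposition~\ref{annineq}, and integrating in~$\theta$ yields the stated bounds. The identification $\beta=\beta_0$ on $[0,p^*(\kappa)]$ is obtained by checking that the original BS argument (hypergeometric $g_0$, Lemma~5 of \cite{BS}) still produces a positive bounded solution precisely up to $p^*(\kappa)$, where $\tfrac12-b_0=0$ (Proposition~\ref{BScease}).

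Two smaller points. At $p=p(\kappa)$ you appeal to $\mathbb E|f_0(z)|^p$ and a Koebe comparison; Theorem~\ref{main1} already gives $\mathbb E|f_0'(z)|^p$ in closed form, so $\psi_+$ \emph{is} the exact solution there and no comparison is needed. And your heuristic that the new branch comes from the pre-image of~$\infty$ is indeed the content of Section~\ref{derivative}, but it plays no role in the rigorous proof, which is purely analytic and anchored at $z=1$.
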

\textcolor{black}{Actually, using a duality method explained in Section \ref{secondsolrig}, we prove a stronger result in the domain $[p^*(\kappa), p(\kappa)]$:} 
 \begin{theo}\label{theoMFrigorous}
The average integral means spectrum $\beta(p,\kappa):=\beta_1(p,\kappa)$ of the unbounded inner whole-plane SLE$_\kappa$ \eqref{loewner} \eqref{SLE} has a \textit{phase transition} at $p^*(\kappa)$ \eqref{pstar} and a special point at $p(\kappa)$ \eqref{pkappa1}, such that (with $\hat p(\kappa):=1+\kappa/2$)
\textcolor{black}{\begin{eqnarray}\nonumber
\beta(p,\kappa)\begin{cases}=\beta_0(p,\kappa),\,\,\,0\leq p\leq p^*(\kappa);\\ 
 =3p-\frac{1}{2}-\frac{1}{2}\sqrt{1+2\kappa p},\,\,\,p^*(\kappa) \leq p \leq \min\{p(\kappa),\hat p(\kappa)\};\\ 
 \geq 3p-\frac{1}{2}-\frac{1}{2}\sqrt{1+2\kappa p},\,\,\, \min\{p(\kappa),\hat p(\kappa)\}\leq  p\leq p(\kappa);\\ 
= 3p-\frac{1}{2}-\frac{1}{2}\sqrt{1+2\kappa p}=\frac{(6+\kappa)^2}{8\kappa},\,\,\,p=p(\kappa);\\ 
\leq 3p-\frac{1}{2}-\frac{1}{2}\sqrt{1+2\kappa p},\,\,\,p(\kappa)<p.
\end{cases}
\end{eqnarray}}
\end{theo}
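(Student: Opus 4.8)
The plan is to take Theorem~\ref{theoMFc} as already established, so that the only genuinely new content is the \emph{matching upper bound} $\beta(p,\kappa)\leq 3p-\tfrac12-\tfrac12\sqrt{1+2\kappa p}=:B_1(p,\kappa)$ (the second argument of the maximum in \eqref{beta}, i.e.\ \eqref{Bm} at $m=1$) on the interval $p^*(\kappa)\leq p\leq\min\{p(\kappa),\hat p(\kappa)\}$. Combined with the lower bound $\beta\geq B_1$ supplied by Theorem~\ref{theoMFc} for $p^*(\kappa)<p<p(\kappa)$, this upgrades the inequality there to an equality, while the first, third, fourth and fifth lines of the displayed formula are copied verbatim from Theorem~\ref{theoMFc}. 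First I would use the rotational invariance in law of the whole-plane SLE$_\kappa$ map to remove the angular integration: since $\mathbb E|f'(re^{i\theta})|^p$ is independent of $\theta$, one has $\int_{\partial\mathbb D}\mathbb E|f'(rz)|^p\,|dz|=2\pi\,\mathbb E|f'(r)|^p$, so $\beta(p,\kappa)=\overline{\lim}_{r\to1}\log\mathbb E|f'(r)|^p/\log\tfrac{1}{1-r}$ and the whole problem reduces to the radial boundary behaviour of the single moment $\mathbb E|f'(r)|^p$ as $r\to1^-$.

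Next I would bring in the Beliaev--Smirnov equation \cite{BS}, in the LLE/SLE form derived earlier in the paper: after the rotational reduction it becomes a linear second-order ODE in $r$ for $G(r)=\mathbb E|f'(r)|^p$ (or for the moment dressed by a power of $f(z)/z$ chosen to regularise the unbounded map), with a regular singular point at $r=1$. Its indicial equation there carries two Frobenius exponents, and a singularity analysis identifies the two admissible leading behaviours $(1-r)^{-\beta_0(p,\kappa)}$ and $(1-r)^{-B_1(p,\kappa)}$, with exponents \eqref{beta000} and $B_1$. The bulk branch $\beta_0$ reproduces the BS computation, while the second branch $B_1$ is the one governed by the preimage of $\infty$; as noted in Sections~\ref{DerivSec}--\ref{derivative}, $B_1$ coincides with the spectrum read off from the radial SLE derivative exponents of \cite{MR2002m:60159b} and the tip exponents of \cite{MR2112128}. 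For the lower bound I would invoke the exact closed-form solution at the special point $p=p(\kappa)$ of \eqref{pkappa1} (Theorem~\ref{main1}), which realises the branch $B_1$ exactly and yields $\beta(p(\kappa),\kappa)=(6+\kappa)^2/(8\kappa)$; together with the fact that the infinity-preimage always contributes at least $B_1$, this gives $\beta\geq\max\{\beta_0,B_1\}$, exactly as recorded in Theorem~\ref{theoMFc}.

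The crux is the upper bound, and here I would deploy the duality method of Section~\ref{secondsolrig}. The idea is to exploit a symmetry of the BS ODE under which the two Frobenius exponents at $r=1$ are exchanged by a parameter involution $p\mapsto\tilde p$, so that the \emph{second} (a priori subdominant) solution at the dual parameter controls the first solution at $p$. Anchoring this at $p(\kappa)$, where Theorem~\ref{main1} furnishes a global, positive, closed-form solution, one propagates a \emph{supersolution} of the BS operator down the branch and invokes the maximum principle for this second-order operator to dominate the true moment $\mathbb E|f'(r)|^p$ by a constant multiple of $(1-r)^{-B_1(p,\kappa)}$, giving $\beta(p,\kappa)\leq B_1(p,\kappa)$. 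The comparison closes only while the second branch stays subdominant and the dual point $\tilde p$ remains in the controlled regime, which is precisely the condition $p\leq\hat p(\kappa)=1+\kappa/2$. Past $\hat p(\kappa)$ the supersolution is lost, which is why on $\min\{p(\kappa),\hat p(\kappa)\}\leq p\leq p(\kappa)$ only the lower bound $\beta\geq B_1$ survives, and for $p>p(\kappa)$ the exact solution is no longer globally admissible, leaving only $\beta\leq B_1$; these reproduce the remaining lines of the displayed formula.

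The main obstacle is exactly this two-sided control. Exhibiting contributions (the lower bounds) is comparatively soft, whereas ruling out a \emph{larger} growth exponent requires dominating the entire moment $\mathbb E|f'(r)|^p$, not merely one contribution, uniformly as $r\to1$. The hard part will be to make the duality/maximum-principle comparison rigorous: checking that the supersolution built from Theorem~\ref{main1} is genuinely nonnegative, integrable and correctly ordered against $G$ up to the boundary circle, and pinning down the exact window $p\leq\hat p(\kappa)$ in which it persists. Everything else is singularity analysis of a known linear ODE and the algebraic verifications (such as $B_1(p(\kappa),\kappa)=(6+\kappa)^2/(8\kappa)$ and $B_1(\hat p(\kappa),\kappa)=2+\kappa$) that locate the transition points $p^*(\kappa)$ \eqref{pstar} and $p(\kappa)$ \eqref{pkappa1}.
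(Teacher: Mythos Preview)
Your proposal has a fatal error at the very first step: the claim that $\mathbb E|f_0'(re^{i\theta})|^p$ is independent of $\theta$ is false. The inner whole-plane SLE map is \emph{not} rotationally invariant in law; the point $z=1$ is the pre-image of the SLE tip $\gamma(0)$ and plays a distinguished role. This is visible already in the explicit formula of Theorem~\ref{main1}, e.g.\ $\mathbb E|f_0'(z)|^2=(1-z)(1-\bar z)/(1-z\bar z)^3$ for $\kappa=6$, which manifestly depends on $\theta$. Consequently there is no reduction of the BS equation to an ODE in $r$, no ``Frobenius exponents at $r=1$'', and the whole framework you describe collapses. The BS equation \eqref{zz1} is a genuine PDE in $(z,\bar z)$ (equivalently in $(r,\theta)$), parabolic with $\theta$ the spatial variable and $r$ the time variable; the singularity analysis must be done in two variables, near the unit circle and simultaneously near $z=1$.

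The paper's duality is also quite different from what you describe. It is not an involution $p\mapsto\tilde p$ but, at \emph{fixed} $p$, a duality $\gamma\leftrightarrow\gamma'$ between the two roots of $\beta(\gamma)=\beta(\gamma')$, namely $\gamma+\gamma'=2/\kappa+1/2$ (see \eqref{bb'}--\eqref{gg'}). One takes $\gamma=\gamma_++a$ with $a>0$ small, so that $\gamma'=\gamma_+'-a<\gamma_0^-$ and hence $C(\gamma')<0$; the full hypergeometric solution $g(x)$ of the boundary ODE \eqref{ghyperbis} is then positive on $[0,4]$ and behaves like $x^{\gamma'}$ as $x\to0$. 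With a logarithmic correction $\ell_\delta$ one obtains a \emph{subsolution} (not supersolution: you have the direction reversed), so the maximum principle gives $F\leq c\,\psi_0\ell_\delta$, i.e.\ an upper bound on $\beta(p,\kappa)$. The threshold $\hat p(\kappa)=1+\kappa/2$ arises for a concrete reason you do not identify: it is where $2\gamma_+'+1=0$, so that for $p>\hat p(\kappa)$ the singularity $x^{\gamma'}$ of the subsolution at $z=1$ is no longer integrable in $\theta$, and the circle integral of $\psi_0$ picks up the extra exponent $-(2\gamma'+1)$, destroying the matching with $\beta(\gamma_+)$. Letting $a\to0^+$ then gives $\beta(p,\kappa)\leq\beta(\gamma_+)=B_1(p,\kappa)$ only on $[p^*(\kappa),\hat p(\kappa)]$.
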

\begin{rkk}
\textcolor{black}{The duality method we use works only in the domain $p\leq p(\kappa)$. The  presence of the further quantity $\min\{p(\kappa),\hat p(\kappa)\}$ is linked to the possible occurrence of a tip spectrum at $\hat p(\kappa)=1+\kappa/2$ in this duality method. Note that $p(\kappa)=\frac{6+\kappa}{4\kappa}\times \frac{2+\kappa}{2}$ so that 
$\min\{p(\kappa),\hat p(\kappa)\}=\hat p(\kappa)$ for $\kappa\leq 2$, whereas $\min\{p(\kappa),\hat p(\kappa)\}=p(\kappa)$ for $\kappa\geq 2$.}
\end{rkk}
\begin{theo}\label{theoMFcm}
Similarly, the average integral means spectrum of the $m$-fold transform of the unbounded whole-plane SLE map has a phase transition at $p^*_m(\kappa)$ \eqref{pmstar} and a special point at $p_m(\kappa)$  \eqref{pkappam}, such that for $1\leq m\leq 3,\forall \kappa$, or $m\geq 4, \kappa\leq \kappa_m=4(m+3)/(m-3)$, \textcolor{black}{\begin{eqnarray}\nonumber
\beta_m(p,\kappa)\begin{cases}=\beta_0(p,\kappa),\,\,\,0\leq p\leq p_m^*(\kappa);\\ 
\geq B_m(p,\kappa)=\left(1+\frac{2}{m}\right)p-\frac{1}{2}-\frac{1}{2}\sqrt{1+\frac{2\kappa p}{m}} >\beta_0(p,\kappa),\,\,\,p_m^*(\kappa)< p<p_m(\kappa);\\ 
=B_m(p,\kappa)=\frac{(2m+4+\kappa)^2}{2(m+1)^2\kappa},\,\,\,p= p_m(\kappa);\\ 
\leq B_m(p,\kappa),\,\,\, p_m(\kappa)<p.
\end{cases}\end{eqnarray}}
For $m\geq 4$ and $\kappa \geq \kappa_m$, the average integral means spectrum of the $m$-fold transform of the unbounded whole-plane SLE$_\kappa$ map has a phase transition at $p^{**}_m(\kappa)$ \eqref{pdoublestar0}
\textcolor{black}{\begin{eqnarray}\nonumber
\beta_m(p,\kappa)\begin{cases}=\bar \beta_0(p,\kappa),\,\,\,0\leq p\leq p_m^{**}(\kappa);\\ 
\geq B_m(p,\kappa) >\bar\beta_0(p,\kappa),\,\,\,p_m^{**}(\kappa)< p<p_m(\kappa);\\ 
= B_m(p,\kappa)=\frac{(2m+4+\kappa)^2}{2(m+1)^2\kappa},\,\,\,p= p_m(\kappa);\\ 
\leq B_m(p,\kappa),\,\,\, p_m(\kappa)<p.
\end{cases}\end{eqnarray}}
\end{theo}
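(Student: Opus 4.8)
The plan is to carry the singularity-analysis-plus-comparison strategy that proves the $m=1$ case (Theorem \ref{theoMFc}) through for general $m$, now starting from the modified BS equation obeyed by the $m$-fold transform \eqref{defmfold} and derived in the earlier sections. Concretely, I would set
$$G_p(z,\bar z):=\mathbb E\big[(\partial_z h_0^{(m)}(z))^{p/2}(\overline{\partial_z h_0^{(m)}(z)})^{p/2}\big],$$
which is the quantity whose boundary singularity controls $\beta_m(p,\kappa)$, and use that it solves the $m$-fold BS equation $\mathcal L[G_p]=0$ for a linear second-order operator $\mathcal L$. Radially averaging over $\theta$ turns this into an ODE in $r=|z|$ with a regular singular point at $r=1$; the usual characteristic analysis there produces the bulk exponent $\gamma_0$ (root of $\tfrac{\kappa}{2}\gamma^2-\tfrac{4+\kappa}{2}\gamma+p=0$) and hence the bulk spectrum $\beta_0(p,\kappa)$ of \eqref{beta000}. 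The second, competing spectrum $B_m(p,\kappa)$ of \eqref{Bm} is realized not by this bulk branch but by an explicit exact solution attached to the single boundary point mapping to infinity, so that the true spectrum is the larger of the two, $\beta_m=\max\{\beta_0,B_m\}$ (respectively $\max\{\bar\beta_0,B_m\}$ once the outer linear branch \eqref{tildebeta00} has taken over).

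The heart of the argument is a sub/supersolution comparison pinned at the special point $p_m(\kappa)$ of \eqref{pkappam}. I would introduce an explicit one-parameter family $\Phi_p$ of candidate functions built from the factors $(1-z^m)$, $(1-\bar z^m)$ and $(1-|z|^{2m})$ suggested by the closed forms of Theorems \ref{main3} and \ref{main4}, normalised so that the boundary singularity of $\int_0^{2\pi}\Phi_p\,d\theta$ as $r\to1$ has exactly the exponent $B_m(p,\kappa)$. Plugging $\Phi_p$ into $\mathcal L$ and computing the residual $\mathcal L[\Phi_p]$, I expect its leading boundary behaviour to reduce to a single scalar factor vanishing precisely at $p=p_m(\kappa)$, so that $\Phi_p$ is a subsolution ($\mathcal L[\Phi_p]\geq0$) for $p\leq p_m(\kappa)$, a supersolution ($\mathcal L[\Phi_p]\leq0$) for $p\geq p_m(\kappa)$, and the exact solution of Theorems \ref{main3} and \ref{main4} at $p=p_m(\kappa)$. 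A maximum-principle comparison for $\mathcal L$ near $\partial\mathbb D$ then yields $\beta_m\geq B_m$ on $(p_m^*,p_m]$ from the subsolution side, $\beta_m\leq B_m$ on $[p_m,\infty)$ from the supersolution side, and the exact value $\beta_m(p_m)=B_m(p_m)=(2m+4+\kappa)^2/(2(m+1)^2\kappa)$ at the special point.

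It remains to fix the bulk regime and to assemble the phase diagram. For small $p$ the $m$-fold transform leaves the bulk contribution (carried by all boundary points other than the one sent to infinity) unchanged, so the Beliaev--Smirnov analysis gives the two-sided bound $\beta_m=\beta_0$ (respectively $\bar\beta_0$) as long as $B_m\leq\beta_0$ (respectively $\bar\beta_0$); the direct comparison $B_m(p,\kappa)\gtrless\beta_0(p,\kappa)$ locates the crossover exactly at $p_m^*(\kappa)$ of \eqref{pmstar} (respectively at $p_m^{**}(\kappa)$ of \eqref{pdoublestar0}). The elementary but lengthy ordering of the critical values $p_m^*,p_0^*,p_m^{**}$ then produces the threshold $\kappa_m=4(m+3)/(m-3)$ of \eqref{km0} and separates the two piecewise formulas of the statement: for $1\leq m\leq3$ (all $\kappa$) or $m\geq4$ with $\kappa\leq\kappa_m$ the bulk transition never appears before $B_m$ dominates, whereas for $m\geq4$ with $\kappa\geq\kappa_m$ the full BS spectrum $\bar\beta_0$, including its own transition at $p_0^*$, is visible until $B_m$ supersedes $\hat\beta_0$ at $p_m^{**}$.

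I expect the main obstacle to be the residual computation of the second step: the ansatz $\Phi_p$ must be chosen so that, after differentiation and $\theta$-averaging, the leading boundary behaviour of $\mathcal L[\Phi_p]$ collapses to a \emph{single} factor whose unique zero is $p_m(\kappa)$ and which has a definite sign on either side. Obtaining this clean sign change---rather than a residual of indefinite sign, which would only give a one-sided bound---is exactly what makes the exact value at $p_m(\kappa)$ propagate into matching inequalities on both sides, and is the delicate point on which the whole comparison rests. A secondary technical difficulty is to make the comparison principle rigorous uniformly up to $\partial\mathbb D$, controlling the subleading terms in the singular expansion so that the inequalities between $\Phi_p$ and $G_p$ translate faithfully into the asserted inequalities between $B_m(p,\kappa)$ and $\beta_m(p,\kappa)$.
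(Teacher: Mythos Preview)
Your strategy is essentially the paper's: build a test function $\psi_+(\zeta,\bar\zeta)=(1-\zeta\bar\zeta)^{-\beta}|1-\zeta|^{2\gamma}$ in the variable $\zeta=z^m$, with $\gamma=\gamma_m^+$ chosen so that $A_m(\gamma)=0$ and $\beta=B_m^+(p,\kappa)$, and use a parabolic comparison principle near the unit circle. There is one point, however, where your expectation is off and would block the argument as you describe it.

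When you compute the residual for this $\psi_+$ (your $\Phi_p$), it does \emph{not} reduce to a single scalar factor. The paper's computation (the $m$-fold analogue of \eqref{Ppsi}) gives
\[
\mathcal P_m(D)[\psi_+]=\psi_+\,\Big(\tfrac{\kappa}{2}\gamma^2-\beta\Big)\,(1-\zeta\bar\zeta)\,(1-\zeta\bar\zeta-x)\,x^{-2},\qquad x=|1-\zeta|^2.
\]
The scalar $\tfrac{\kappa}{2}\gamma^2-\beta=C(\gamma_m^+)$ does vanish exactly at $p=p_m(\kappa)$ and changes sign there, as you want. But the geometric factor $(1-\zeta\bar\zeta-x)$ \emph{also} changes sign, across the internal circle $\partial\mathbb D_{1/2}=\{\zeta:1-\zeta\bar\zeta=x\}$, so the residual has no definite sign in any annulus touching $\partial\mathbb D$. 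The resolution is not a cleverer ansatz that avoids this; the paper instead multiplies $\psi_+$ by the soft logarithmic factor $\ell_\delta(\zeta\bar\zeta)=[-\log(1-\zeta\bar\zeta)]^\delta$. In the ``bad'' lune $(\mathbb D\setminus\mathbb D_{1/2})$ near the boundary the original residual is $O(1-r^2)$ while the extra term coming from $\ell_\delta$ is $O\big(1/|\log(1-r^2)|\big)$, so for $r$ close enough to $1$ the latter dominates and fixes the sign (with $\delta$ of the appropriate sign). This is what produces a genuine super/subsolution on a thin annulus, after which the maximum/minimum principle gives exactly the one-sided inequalities you state.

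Two smaller points. First, there is no ``radial averaging to an ODE in $r$'' in the paper's treatment of the bulk piece: the identification $\beta_m=\beta_0$ for small $p$ follows from noting that $C(\gamma)$ in \eqref{Calpham} is $m$-independent, so the Beliaev--Smirnov hypergeometric construction carries over verbatim and yields the bulk spectrum up to the point where that boundary solution loses positivity---which is precisely $p_m^*(\kappa)$. Second, your sub/supersolution labels are swapped relative to the paper's conventions (there a supersolution has $\mathcal P_m(D)[\cdot]>0$ and produces the \emph{lower} bound on $\beta_m$), though the inequalities you reach are the right ones.
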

\begin{rkk}
The phase transition point $p_m^*(\kappa)$ \eqref{pstar} is lower, for $1\leq m\leq 3, \forall \kappa$, or for $ 4\leq m, \kappa\leq \kappa_m$, than the special value $p_0^*(\kappa)=3(4+\kappa)^2/32\kappa$ after which the BS spectrum becomes  linear in $p$ \cite{BS}. The phase transition specific to the unbounded whole-plane SLE$_\kappa$ then supersedes the usual phase transition towards a linear behavior. For $m\geq 4, \kappa\geq \kappa_m$, the situation is reversed, and the linear transition at $p_0^*(\kappa)$ happens before the one specific to the unboundedness of the inner whole-plane SLE$_\kappa$, which thus takes place at the higher value $p_m^{**}(\kappa)$ \eqref{pdoublestar0}.
\end{rkk}
\begin{rkk}\label{pneg}
 As mentioned above, for $p\leq 0$, all the average spectra $\beta(p,\kappa):=\beta_1(p,\kappa),\beta_m(p,\kappa)$, $m\geq 2$, co\"incide with the one derived by Beliaev and Smirnov \cite{BS}, which equals $\beta_0(p,\kappa)$ down to the  phase transition for $p\leq -1-3\kappa/8$ to the  \textit{tip spectrum} \eqref{tip}-\eqref{beta0tip}, as predicted in the multifractal formalism in \cite{PhysRevLett.88.055506} and  proven in an almost sure sense in \cite{0911.3983}. 
\end{rkk}
 In the  $\kappa\to 0$ limit, one has $\lim_{\kappa\to 0}\beta_0(p,\kappa)=0$, and the  spectra:
 \begin{eqnarray}\label{beta0}
\beta(p,\kappa=0)&=&\max\{0,3p-1\},\\ \label{betatilde0}
\beta_2(p,\kappa=0)&=& \max \{0,2p-1 \},\\ \label{betam0}
\beta_m(p,\kappa=0)&=&\max \{0,(1+2/m)p-1\},\end{eqnarray} 
co\"{\i}ncide  with those directly derived (for $p\geq 0$) for the Koebe function \eqref{koebe} and its $m$-fold transforms. 

\textcolor{black}{
The above results are reminiscent of the difference between universal integral means spectra for bounded or unbounded conformal maps \cite{Pommerenke}. Makarov \cite{Makanaliz} has indeed shown that \eqref{beta0}, \eqref{betatilde0}, and \eqref{betam0} give the  universal spectra for general conformal maps (for $p$ large enough). Theorems \ref{theoMFc} and \ref{theoMFcm} show that very similar expressions appear in the whole-plane SLE case.}  

Note also that these integral means spectra at $p=2$ give the asymptotic behaviors of the coefficient second moments: $\mathbb E |a_n|^2 \asymp n^{\beta(2)-3}$ and  $\mathbb E |b_{2n+1}|^2 \asymp n^{\beta_2(2)-3}$ for $n\to \infty$, with $\beta(2)=(11-\sqrt{1+4\kappa})/2$ [for $\kappa\leq 30$] and $\beta_2(2)=(7-\sqrt{1+2\kappa})/2)/2$ [for $\kappa \leq 24$].

Another interesting random variable is the {\it area} of the image of the unit disk
$$\int\int_{\bb D}\I f'(z)\I^2 dxdy=\pi\sum_{n=1}^\infty n\I a_n\I^2.$$
The expectation of this quantity  thus converges for $\beta(2)<1$, i.e., only for $\kappa > 20$, even though  the SLE trace is no longer a simple curve as soon as  $\kappa >4$. Similarly, for the odd case, convergence of the area is obtained for $\beta_2(2) < 1$, hence for $\kappa > 12$. 

\subsection{Derivative exponents}\label{DerivSec}
In the so-called multifractal formalism \cite{FP,1986PhRvA..33.1141H,hentschelP,Mand}, the integral means spectrum  \eqref{betadef}, or its in expectation version $\beta(p)$ \eqref{betavdef}, are related by various (Legendre) transforms to other multifractal spectra, such as the so-called packing spectrum,  the moment spectrum, often written $\tau(q)$ or $\tau(n)$, the generalized dimension spectrum $D(n):=\tau(n)/(n-1)$, and the celebrated multifractal spectrum $f(\alpha)$ (see, e.g., Refs. \cite{BS,MR2112128,1986PhRvA..33.1141H,Makanaliz}). 
 Of particular interest here is the \textit{packing spectrum} \cite{Makanaliz}, defined as
\begin{equation}\label{packing}
s(p):=\beta(p)-p+1. 
\end{equation}
For our unbounded whole-plane SLE$_\kappa$, we have for $p\geq p^*(\kappa)$ \eqref{beta}, \eqref{pstar}:
\begin{eqnarray}\label{betaunb}
\beta(p,\kappa)&=& 3p-\frac{1}{2}-\frac{1}{2}\sqrt{1+2\kappa p},\\ \label{sp}
s(p,\kappa)&=&\beta(p,\kappa)-p+1\\ \label{sunb}&=&2p+\frac{1}{2}-\frac{1}{2}\sqrt{1+2\kappa p}.
\end{eqnarray}
It is then particularly interesting to consider, \textcolor{black}{for each fixed $\kappa$,} the \textit{inverse function} of $s(p,\kappa)$: $p(s,\kappa)=s^{-1}(s,\kappa)$. It has two branches,\begin{eqnarray}\nonumber\label{ppm(s)}
p_{\pm}(s,\kappa):=\frac{s}{2} +\frac{1}{16}\left(\kappa-4\pm\sqrt{(4-\kappa)^2+16 \kappa s}\right),
\end{eqnarray}
which are both defined for $s\geq s_{\textrm{min}}(\kappa):=-(4-\kappa)^2/16\kappa$, where they share the common value $p_{\textrm{min}}(\kappa):=p_{\pm}(s_{\textrm{min}}(\kappa),\kappa)=(\kappa-4)(\kappa+4)/32\kappa$. One has  $p_-(s) \leq p_{\textrm{min}}$, whereas $p_+(s) \geq p_{\textrm{min}}$. Since $p^*(\kappa)>p_{\textrm{min}}(\kappa)$,  the determination that contains 
the ``physical'' branch $p\in[p^*(\kappa),+\infty)$ is  $p_+(s,\kappa)$, hence we retain for $s\geq s(p^*(\kappa),\kappa)$
\begin{eqnarray}\label{ps}
p(s,\kappa)=s^{-1}(s,\kappa)&=&\frac{s}{2} +\frac{1}{16}\left(\kappa-4+\sqrt{(4-\kappa)^2+16 \kappa s}\right)\\ \nonumber
&=&\frac{s}{2} +\frac{\kappa}{8}\mathcal U^{-1}_\kappa(s),\\ \label{Un}
\mathcal U^{-1}_\kappa(s)&:=&\frac{1}{2\kappa}\left(\kappa-4+\sqrt{(4-\kappa)^2+16 \kappa s}\right).
\end{eqnarray}
These expressions then lead  to the following striking observation:
\begin{rkk}\label{tipexponents}The same expression \eqref{ps} appeared earlier  in the set of tip multifractal exponents $x(1\wedge n)$ in Ref. \cite{MR2112128} [Eq. (12.19)], and is identical  (for $n= s$) to  $\lambda_\kappa(1\wedge n):=x(1\wedge n)-x_1$,  $x_1:=(6-\kappa)(\kappa-2)/8\kappa$ [\cite{MR2112128}, Eq. (12.37)]. The bulk critical exponent $x(1\wedge n)$ corresponds geometrically to the extremity of an SLE$_\kappa$ path avoiding a packet of $n$ independent Brownian motions diffusing away from its tip, while $x_1$ is the bulk exponent of the SLE$_\kappa$ single extremity. These local tip exponents differ from the ones associated to the SLE tip multifractal spectrum \eqref{tip}-\eqref{beta0tip} of Refs. \cite{BS,PhysRevLett.88.055506,0911.3983}. Eq. \eqref{ps}  for $p(s,\kappa)$ is also identical to the so-called \textit{derivative exponent} $\nu(b,\kappa)$ (for $b=s$), obtained for radial SLE$_\kappa$ in Ref. \cite{MR2002m:60159b}, Eq. (3.1).  
 \end{rkk}
 The exponents $x(1\wedge n)$ were calculated using the so-called \textit{quantum gravity} method in \cite{2000PhRvL..84.1363D,MR1964687,MR2112128}. The function $\mathcal U^{-1}_\kappa$   \eqref{Un} appears there as the inverse of the so-called Kniznik-Polyakov-Zamolodchikov (KPZ) relation \cite{MR947880} (see also Refs. \cite{MR981529,MR1005268}), which was recently proven rigorously in a probabilistic framework \cite{springerlink:10.1007/s00222-010-0308-1bis,2009arXiv0901.0277D,PhysRevLett.107.131305}. (See also Ref. \cite{rhodes-2008}.) Here it maps 
   a critical exponent in the complex (half-)plane $\mathbb H$, $n(=s)$, corresponding to the boundary scaling behavior of a packet of $n$ independent Brownian motions, to its quantum gravity counterpart  on a random surface coupled to SLE$_\kappa$, $\mathcal U^{-1}_\kappa(s)$.
 
 The \textit{derivative exponents} $\nu(b,\kappa)=x(1\wedge b)-x_1$ also describe the scaling behavior  of the moments of order $b(=s)$ of the modulus of the derivative of the forward \textit{radial} SLE$_\kappa$ map $g_t$  in $\mathbb D$ at large time $t$ \cite{MR2002m:60159b}. (See also Refs. \cite{MR2002m:60159a,MR2153402}.) In Section \ref{derivative}, we give a heuristic explanation of why  \textit{the inverse function $p(s,\kappa)=s^{-1}(s,\kappa)$ of the packing spectrum $s(p,\kappa)$ \eqref{sunb} of the \textit{unbounded} whole-plane SLE  co\"{\i}ncides with the derivative exponents $\nu(s,\kappa)$ of radial SLE}.\\


Recall that $p(s,\kappa)$ is analytically defined only for $s\geq s_{\textrm{min}}(\kappa):=-(4-\kappa)^2/16\kappa$. For $\kappa\leq 4$, one has $p(s,\kappa)\geq 0$ for $s\geq 0$. For $\kappa\geq 4$,  $p\in [p_{\textrm{min}}(\kappa),+\infty]$, with $p_{\textrm{min}}(\kappa)=p(s_{\textrm{min}}(\kappa),\kappa)=(\kappa^2-16)/32\kappa$.   For $s=0$, $p(0,\kappa)=(\kappa-4+|\kappa-4|)/16$, hence $p(0,\kappa)=0$ for $\kappa\leq 4$, and $p(0,\kappa)=(\kappa-4)/8$ for $\kappa\geq 4$.\\

Consider now the $m$-fold version \eqref{defmfold} of the unbounded inner whole-plane SLE. For $1\leq m\leq 3, \forall \kappa$, or for $m\geq 4$ and $\kappa\leq \kappa_m$ \eqref{km0}, we have for $p\geq p_m^*(\kappa)$ \eqref{pmstar} the average  integral means and packing spectra
\begin{eqnarray}\label{betamunb}
\beta_m(p,\kappa)&=& B_m(p,\kappa)=\left(2+\frac{1}{m}\right)p-\frac{1}{2}-\frac{1}{2}\sqrt{1+2\kappa \frac{p}{m}},\\ \nonumber
s_m(p,\kappa)&:=&\beta_m(p,\kappa)-p+1\\ \label{smunb}&=&2\frac{p}{m}+\frac{1}{2}-\frac{1}{2}\sqrt{1+2\kappa \frac{p}{m}}\\ \label{sms1}
&=&s\left(\frac{p}{m},\kappa\right).
\end{eqnarray}
The inverse function, $p_m(s,\kappa):=s_m^{-1}(s,\kappa)$, is therefore simply 
\begin{eqnarray}
p_m(s,\kappa)=s_m^{-1}(s,\kappa)=m\, p(s,\kappa),\,\,\,s\in[s(p_m^*(\kappa)),\infty),\end{eqnarray}
where $p(s,\kappa)$ is the inverse function \eqref{ps} of $s(p,\kappa)$ for $m=1$.

For $m\geq 4$ and $\kappa\geq \kappa_m$, we have the successive integral means and packing spectra
\begin{eqnarray}\label{betamsucunb}
\beta_m(p,\kappa)&=&p-\frac{(4+\kappa)^2}{16\kappa},\,\,\,p_0^*(\kappa)=\frac{3(4+\kappa)^2}{32\kappa}\leq p\leq m\frac{\kappa^2-4}{32\kappa}=p_m^{**}(\kappa),\\ \nonumber
s_m(p,\kappa)&=&\beta_m(p,\kappa)-p+1\\ \label{smin}
&=&-\frac{(4-\kappa)^2}{16\kappa}=s_{\textrm{min}}(\kappa);\\
\beta_m(p,\kappa)&=& B_m(p,\kappa),\,\,\,p\geq m\frac{\kappa^2-4}{32\kappa},\\ \nonumber
s_m(p,\kappa)&=&\beta_m(p,\kappa)-p+1\\ \label{smsucunb}&=&2\frac{p}{m}+\frac{1}{2}-\frac{1}{2}\sqrt{1+2\kappa \frac{p}{m}}
=s\left(\frac{p}{m},\kappa\right).
\end{eqnarray}
Observe that $p_m^{**}(\kappa)=m\, p_{\textrm{min}}(\kappa)=m\, p(s_{\textrm{min}}(\kappa),\kappa)$, so that the inverse function of $s_m(p,\kappa)$, $s_m^{-1}(s,\kappa)$, is now defined in the whole range  $s\geq s_{\textrm{min}}(\kappa)$, and is given  by
\begin{eqnarray}
&&p_m(s,\kappa)=s_m^{-1}(s,\kappa)=m\, p(s,\kappa)\in [m p_{\textrm{min}}(\kappa),\infty), \,\,\,s\in[s_{\textrm{min}}(\kappa),\infty).
\end{eqnarray}
\textcolor{black}{\subsection{Organization}
This article is organized as follows:\\$\bullet$ Section \ref{analytic} deals with the computation at low orders of the coefficients $a_n$ \eqref{defcoeff} of the whole-plane SLE or LLE  maps, or of the coefficients $b_{2n+1}$   \eqref{defcoeffb} of their oddified versions \eqref{defoddified}. This is followed by the evaluation of the single or square expectations of these coefficients. Computer experiments,  symbolic up to order $n=8$, and numerical up to order $n=19$, complete this study.\\
$\bullet$ Section \ref{proofs} deals with the proofs of Theorems \ref{theolle0} and \ref{theolle}. Subsection \ref{subsecexp} establishes Theorems \ref{theoEan} and \ref{theoEbn}, which together constitute Theorem \ref{theolle0}. Subsection \ref{subsecDeriv} deals with the moments of the derivative of the whole-plane SLE map, and establishes the corresponding  Beliaev--Smirnov equation. Special solutions are given by Theorem \ref{main1} and its Corollary \ref{F6-2}, thereby establishing in the  SLE$_{\kappa=6,2}$ case results \textit{(i)} and \textit{(ii)} of Theorem \ref{theolle}, while the same results are extended to the LLE case through Theorem \ref{levy} followed by Remark \ref{theo3.3}.  In Subsection  \ref{subsecodd}, similar  results are proved for the oddified whole-plane Loewner map \eqref{defoddified}.  The proof of result \textit{(iii)} of Theorem \ref{theolle} is obtained in Corollary \ref{F06-2} of Theorem \ref{main3} in the SLE$_{\kappa=4}$  case, and in Proposition \ref{oddeta1} in the LLE case. All these results, namely the existence of a Beliaev--Smirnov-like equation and of special solutions thereof, which yield specific moments in a closed form, are generalized to the $m$-fold Loewner maps \eqref{defmfold} in Subsection \ref{subsecm-fold} .\\ 
$\bullet$ Section \ref{multifractal} deals with the multifractal integral means spectrum of SLE. Subsection \ref{introductionI} describes the general properties of the SLE's harmonic measure spectra, as well as the corresponding universal  spectra. The integral means spectrum of whole-plane SLE is studied in great detail in Subsection  \ref{IMS1}, leading to the proof of Theorem \ref{theoMFc}. The general Theorem \ref{theoMFcm} for $m$-fold whole-plane SLE maps is established in Subsection \ref{subsecSpecm}. The relationship of the novel spectrum for unbounded whole-plane SLE to the so-called derivative exponents of radial SLE is explained in Subsection \ref{derivative}.\\
$\bullet$  Finally, Section  \ref{Appendices}  is comprised of several appendices. The  history of the Bieberbach conjecture is briefly recalled in Subsection \ref{appendix}; some coefficient computations are given in  Subsection \ref{appendcoeff}; a proof of Makarov's Theorem \ref{FMcGo} for the universal spectrum of oddified maps, that parallels that of the Feng-MacGregor Theorem \ref{FMcG}, is given in Subsection \ref{McGo}.}
 
\section*{Acknowledgements}
It is a pleasure to thank Dmitry Beliaev, Michael Benedicks and Steffen Rohde for discussions, and David Kosower for a reading of the manuscript. BD also wishes to thank the MSRI at Berkeley for its kind hospitality during the Program ``Random Spatial Processes'' (January 9, 2012 to May 18, 2012). \textcolor{black}{We thank the referee for a thorough and critical reading of the manuscript, and for numerous suggestions. After completing this work, we learned of related work in Refs. \cite{2012arXiv1203.2756L,2013JSMTE..04..007L,2013arXiv1301.6508L}, which use a different approach.}

\section{Coefficient estimates}\label{analytic}
\subsection{Computation of $a_n$ and $\mathbb E(\vert a_n\vert^2)$ for small $n$}\label{an2}
\subsubsection{Loewner's method}In this paragraph we perform computations for  general Loewner-L\'evy processes. 
 Let us recall that   
\begin{equation}\label{ftexpansion}
f_t(z)=e^t\big(z+\sum_{n\geq 2}a_n(t)z^n\big).
\end{equation}
By expanding both sides of Loewner's equation \eqref{loewner} as power series,  and identifying coefficients, leads one to the set of equations 
\begin{eqnarray}\label{dotaa}
\dot{a}_n(t)-(n-1)a_n(t)=2\sum_{p=1}^{n-1}(n-p)a_{n-p}(t) \bar \lambda^p(t)=2\sum_{k=1}^{n-1}ka_k(t) \bar \lambda^{n-k}(t), \, n\geq 2; 
 \end{eqnarray}where $a_1=1$; the dot means a $t$-derivative, and  $\bar \lambda(t)=1/\lambda(t)$. Specifying for $n=2,3$ gives
\begin{eqnarray}\label{a2}\dot{a}_2-a_2&=&2\bar\lambda ,  \\ \label{a3}
\dot{a}_3-2a_3&=&4a_2\bar{\lambda}+2\bar{\lambda}^2.
\end{eqnarray}
The first differential equation \eqref{a2} (together with the uniform bound, $\forall t \geq 0, |a_2(t)|\leq C_2<+\infty$; see Remark \ref{weakbiber})  yields 
\begin{equation}\label{a2bis} a_2(t)=-2e^t\int_t^{+\infty}e^{-s}\bar{\lambda}(s)ds.
\end{equation}
In a similar way, the second one \eqref{a3}  leads to
\begin{equation}\nonumber\label{a3bis}a_3(t)=-4e^{2t}\int_t^{+\infty} e^{-2s}a_2(s)\bar{\lambda}(s)ds-2e^{2t}\int_t^{+\infty} e^{-2s}\bar{\lambda}^2(s)ds,
\end{equation}
The first integral invoves $\int_t^\infty u_2(s){\dot u}_2(s)ds=-u_2^2(t)/2$,  where $u_2(s):=e^{-s}a_2(s)$. The formula for $a_3$ then reduces to 
\begin{equation}\label{a3ter}a_3(t)=4e^{2t}\left(\int_t^{+\infty} e^{-s}\bar{\lambda}(s)ds\right)^2-2e^{2t}\int_t^{+\infty} e^{-2s}\bar{\lambda}^2(s)ds.
\end{equation}
\subsubsection{Quadratic coefficients}\label{quad}
\begin{prop} \label{theo-a2} For L\'evy--Loewner processes, we have, setting here $a_2:=a_2(0)$, $$\mathbb E(| a_2|^2)=\Re\left(\frac{4}{1+\eta_1}\right).$$
\end{prop}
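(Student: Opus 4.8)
The plan is to compute $\mathbb{E}(|a_2|^2)$ directly from the explicit formula \eqref{a2bis} for $a_2(t)$, specialized at $t=0$. Setting $a_2 := a_2(0) = -2\int_0^{+\infty} e^{-s}\bar\lambda(s)\,ds$, where $\bar\lambda(s) = e^{-iL_s}$, I would write
\begin{equation*}
|a_2|^2 = a_2 \overline{a_2} = 4\int_0^{+\infty}\!\!\int_0^{+\infty} e^{-s-u}\,\mathbb{E}\big(e^{-iL_s}e^{iL_u}\big)\,ds\,du,
\end{equation*}
after taking the expectation and applying Fubini (justified by the uniform bound on $|a_2(t)|$ recalled in the statement). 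The whole computation thus reduces to evaluating the two-point correlation $\mathbb{E}\big(e^{-iL_s}\overline{e^{-iL_u}}\big) = \mathbb{E}\big(e^{i(L_u - L_s)}\big)$.

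The key step is to exploit the independent-increments (Markov) and stationarity properties of the L\'evy process $L_t$ together with the characteristic function \eqref{Levychar}, $\mathbb{E}(e^{i\xi L_t}) = e^{-t\eta(\xi)}$. By symmetry of the double integral I would split into the regions $u \geq s$ and $u < s$. For $u \geq s$, writing $L_u - L_s$ as an increment independent of $\mathcal{F}_s$ and stationary, one gets $\mathbb{E}(e^{i(L_u-L_s)}) = \mathbb{E}(e^{iL_{u-s}}) = e^{-(u-s)\eta(1)} = e^{-(u-s)\eta_1}$; for $u < s$ one gets the conjugate quantity $e^{-(s-u)\overline{\eta_1}} = e^{-(s-u)\eta_{-1}}$, using $\eta(-\xi) = \overline{\eta(\xi)}$. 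Carrying out the elementary exponential integrals over each triangular region then yields a contribution proportional to $1/(1+\eta_1)$ from one region and its conjugate from the other, so that their sum produces the real part $\Re\big(4/(1+\eta_1)\big)$.

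Concretely, for the region $u \geq s$ the integral $4\int_0^\infty\!\int_s^\infty e^{-s-u}e^{-(u-s)\eta_1}\,du\,ds = 4\int_0^\infty e^{-2s}\,ds\cdot \frac{1}{1+\eta_1} = \frac{2}{1+\eta_1}$, and the region $u<s$ gives the complex conjugate $\frac{2}{1+\eta_{-1}} = \overline{\left(\frac{2}{1+\eta_1}\right)}$ since $\eta_{-1} = \overline{\eta_1}$; adding these gives exactly $\Re\big(4/(1+\eta_1)\big)$.

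The only genuine subtlety — really the single point requiring care — is the correct bookkeeping of the increment direction when conditioning: one must track which of $s,u$ is larger so that the increment $L_{\max} - L_{\min}$ is the forward increment whose law is governed by $\eta(1)$, while the ``backward'' contribution carries $\eta(-1) = \overline{\eta_1}$. This is precisely what forces the final answer to be a real part rather than simply $4/(1+\eta_1)$. Everything else is a routine pair of exponential integrals, and no further analytic input beyond \eqref{Levychar}, the L\'evy property, and the Fubini justification from the uniform bound is needed.
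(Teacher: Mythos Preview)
Your proof is correct and follows essentially the same route as the paper: write $|a_2|^2$ as a double integral over $[0,\infty)^2$, split according to the sign of $u-s$, use stationarity and independence of increments to get $e^{-|u-s|\eta_1}$ or its conjugate, and integrate. You have in fact spelled out the final integration step that the paper leaves implicit.
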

\textcolor{black}{\demo
Using \eqref{a2bis}, we write 
$$\I a_2(0)\I^2=4\int_0^\infty\int_0^\infty ds du\, e^{-(s+u)} e^{i(L_{u}-L_{s})}.$$
Taking care of the relative order of $s$ and $u$, the characteristic function \eqref{Levychar} of $L_u-L_{s}$ is 
$$ \mathbb E\left[e^{i(L_u-L_s)}\right]=\vartheta(u-s)e^{-\eta_1(u-s)} +\vartheta(s-u)e^{-\bar\eta_1(s-u)},$$
where $\vartheta$ is the Heaviside step distribution; the result follows by integration. \findemo}

 \medskip
For calculations involving the third order term $a_3$ as given by \eqref{a3ter}, and in order to avoid repetitions, we have computed at once $\mathbb E (\I a_3-\mu a_2^2\I^2)$, where $\mu$ is a real constant. The detail of the calculation is given in Appendix  in Section \ref{Appthird}. 
Let us simply state the result here.
\begin{prop} \label{theo-a3mu}If $\mu$ is a real coefficient, then
\begin{align*}
&\mathbb E(\I a_3-\mu a_2^2\I^2)\,=\\
&\Re\left(
\frac{16(1-\mu)^2(4+\eta_2)}{(1+\eta_1)(2+\eta_2)(3+\eta_1)}-\frac{16(1-\mu)(2+\eta_1)}{(1+\eta_1)(2+\eta_2)(3+\eta_1)}+\frac{2}{2+\eta_2}+\frac{8(1-\mu)(1-2\mu)}{(\overline{\eta_1}+1)({\eta_1}+3)}
\right).
\end{align*}
In the real $\eta$ case:
$$ \mathbb E(\I a_3-\mu a_2^2\I^2)=\frac{32(1-\mu)^2(3+\eta_2)-8(1-\mu)(6+2\eta_1+\eta_2)+2(1+\eta_1)(3+\eta_1)}{(1+\eta_1)(2+\eta_2)(3+\eta_1)}.$$
In the SLE case (i.e., for $\eta_\ell=\frac{\kappa}{2}\ell^2$):
$$ \mathbb E(\I a_3-\mu a_2^2\I^2)=\frac{(108-288\mu+192\mu^2)+(88-208\mu+128\mu^2)\kappa+\kappa^2}{(1+\kappa)(2+\kappa)(6+\kappa)}.$$
\end{prop}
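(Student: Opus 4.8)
The plan is to reduce everything to the two stochastic integrals obtained by setting $t=0$ and $\lambda(s)=e^{iL_s}$ in \eqref{a2bis} and \eqref{a3ter},
$$A:=\int_0^\infty e^{-s}e^{-iL_s}\,ds,\qquad B:=\int_0^\infty e^{-2s}e^{-2iL_s}\,ds,$$
since then $a_2=-2A$ and $a_3=4A^2-2B$, whence $a_3-\mu a_2^2=4(1-\mu)A^2-2B$. As this is a real linear combination, expanding its squared modulus and using $\mathbb E(A^2\bar B+\bar A^2B)=2\Re\,\mathbb E(A^2\bar B)$ splits the problem into three building blocks,
$$\mathbb E\big(|a_3-\mu a_2^2|^2\big)=16(1-\mu)^2\,\mathbb E(|A|^4)-16(1-\mu)\,\Re\,\mathbb E(A^2\bar B)+4\,\mathbb E(|B|^2).$$
Each block is a multiple integral over $[0,\infty)^k$ of the characteristic function of a linear combination $\sum_j\xi_jL_{t_j}$ with integer ``charges'' $\xi_j$ (here $\pm1$ from $A,\bar A$ and $\pm2$ from $B,\bar B$) and exponential weights $w_j$ (here $1$ or $2$).

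The key computational device I would isolate first is a time-ordering formula. For ordered times $0\le t_1\le\cdots\le t_k$, writing $L_{t_j}$ as a sum of independent increments and invoking stationarity together with \eqref{Levychar} gives
$$\mathbb E\Big(e^{\,i\sum_j\xi_jL_{t_j}}\Big)=\exp\Big(-\sum_{\ell=1}^k(t_\ell-t_{\ell-1})\,\eta(Q_\ell)\Big),\qquad Q_\ell:=\sum_{j\ge\ell}\xi_j,\quad t_0:=0.$$
Passing to the gap variables $g_\ell=t_\ell-t_{\ell-1}\ge0$, the weight $e^{-\sum_jw_jt_j}$ also becomes $e^{-\sum_\ell W_\ell g_\ell}$ with cumulative weight $W_\ell:=\sum_{j\ge\ell}w_j$, so that integrating each $g_\ell$ over $[0,\infty)$ makes a single time-ordering contribute the clean product $\prod_{\ell}\big(W_\ell+\eta(Q_\ell)\big)^{-1}$. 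In every block the total charge $Q_1$ vanishes, so the top factor is just $1/W_1$ and the integral converges; moreover $\eta(-1)=\bar\eta_1$, $\eta(-2)=\bar\eta_2$, and $\eta(0)=0$ kills any intermediate factor with $Q_\ell=0$.

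I would then apply this to each block. For $\mathbb E(|B|^2)$ the single conjugate pair has only two orderings, giving directly $4\,\mathbb E(|B|^2)=\Re\,\tfrac{2}{2+\eta_2}$. For $\mathbb E(A^2\bar B)$ there are three ordered times (two of charge $-1$, weight $1$, one of charge $+2$, weight $2$), yielding three position-patterns, each with multiplicity $2!$ from permuting the two like charges. For $\mathbb E(|A|^4)$ the four times carry two charges $-1$ and two charges $+1$, all of weight $1$; what matters is the sequence of signs, so there are $\binom{4}{2}=6$ sign-patterns, each with multiplicity $2!\,2!=4$. The essential bookkeeping is that, pattern by pattern, the partial charges $Q_2,Q_3,Q_4\in\{0,\pm1,\pm2\}$ dictate which factors carry $\eta_1$, $\eta_2$, $\bar\eta_1$, $\bar\eta_2$, or vanish; reversing all signs conjugates $\eta$, so patterns pair up and each block can be written as $\Re$ of a short sum.

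The main obstacle — and the only genuinely laborious step — is this $\mathbb E(|A|^4)$ enumeration and the final regrouping. Once the three blocks are assembled, I would bring them over the common denominator $(1+\eta_1)(2+\eta_2)(3+\eta_1)$, together with the ``mixed'' denominator $(1+\bar\eta_1)(3+\eta_1)$ produced by the patterns whose partial charge changes sign. Collecting the coefficient of $(1+\bar\eta_1)(3+\eta_1)^{-1}$ gives $16(1-\mu)^2-8(1-\mu)=8(1-\mu)(1-2\mu)$ from $|A|^4$ and the cross term, i.e.\ exactly the last summand of the stated formula. The remaining terms combine into the first two summands, the numerator factor $4+\eta_2=(2+\eta_2)+2$ arising precisely when the $(2+\eta_2)$-free patterns are merged back over the common denominator, while $4\,\mathbb E(|B|^2)$ supplies $\tfrac{2}{2+\eta_2}$. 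The real-$\eta$ case follows by setting $\bar\eta_k=\eta_k$ and clearing denominators, and the SLE case by further substituting $\eta_\ell=\tfrac{\kappa}{2}\ell^2$ and simplifying.
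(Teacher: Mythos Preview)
Your proposal is correct and follows essentially the same approach as the paper: the paper's integrals $I_1,I_2,I_3$ are precisely your $\mathbb E(|A|^4)$, $\mathbb E(A^2\bar B)$, $\mathbb E(|B|^2)$, and the paper evaluates each by the same time-ordering decomposition and Markov/stationarity factorization you describe, before regrouping over the common denominator. Your explicit product formula $\prod_\ell\big(W_\ell+\eta(Q_\ell)\big)^{-1}$ for a single ordering is exactly the encoding the paper uses (and later formalizes for its computer experiments), and your identification of the $8(1-\mu)(1-2\mu)$ and $16(1-\mu)^2(4+\eta_2)$ coefficients matches the paper's final combination.
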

\subsubsection{Some corollaries}
The first one gives, for $\mu=0$, the analogue of Loewner's estimate. 
\begin{cor} \label{theo-a3}For L\'evy--Loewner processes with $\eta$ real, we have
 \begin{equation}\label{a3eta}\mathbb E(\I a_3\I^2)=\frac{1}{(1+\eta_1)(3+\eta_1)}\left[24+2\frac{(\eta_1-1)(\eta_1-3)}{2+\eta_2}\right].\end{equation}
 In the SLE case: 
 $$\mathbb E(\I a_3\I^2)=\frac{108+88\kappa+\kappa^2}{(1+\kappa)(2+\kappa)(6+\kappa)}.$$
\end{cor}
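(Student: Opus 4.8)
The plan is to derive this corollary directly from Proposition~\ref{theo-a3mu} by specializing to $\mu=0$; no new probabilistic input is needed, and the remaining work is purely algebraic. First I would set $\mu=0$ in the real-$\eta$ formula of Proposition~\ref{theo-a3mu}, which kills every $\mu$-dependent contribution and leaves
$$\mathbb E(\I a_3\I^2)=\frac{32(3+\eta_2)-8(6+2\eta_1+\eta_2)+2(1+\eta_1)(3+\eta_1)}{(1+\eta_1)(2+\eta_2)(3+\eta_1)}.$$
Expanding the numerator yields $2\eta_1^2-8\eta_1+54+24\eta_2$, and the only genuine step is to regroup this as $24(2+\eta_2)+2(\eta_1-1)(\eta_1-3)$, after which factoring $(2+\eta_2)$ out of the denominator gives exactly the bracketed form
$$\frac{1}{(1+\eta_1)(3+\eta_1)}\left[24+2\frac{(\eta_1-1)(\eta_1-3)}{2+\eta_2}\right]$$
claimed in \eqref{a3eta}.

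For the SLE subcase I would substitute the values $\eta_1=\kappa/2$ and $\eta_2=2\kappa$, read off from $\eta_\ell=\tfrac{\kappa}{2}\ell^2$ in \eqref{esle}, into the numerator $2\eta_1^2-8\eta_1+54+24\eta_2=\tfrac12\kappa^2+44\kappa+54$ and into the denominator $(1+\eta_1)(2+\eta_2)(3+\eta_1)=\tfrac12(1+\kappa)(2+\kappa)(6+\kappa)$; the common factor $\tfrac12$ cancels, giving $(108+88\kappa+\kappa^2)/[(1+\kappa)(2+\kappa)(6+\kappa)]$. As a consistency check, the same expression is obtained by setting $\mu=0$ directly in the SLE-specific formula of Proposition~\ref{theo-a3mu}.

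Given Proposition~\ref{theo-a3mu}, there is essentially no obstacle to this argument. The difficulty would reappear only in an independent derivation that bypasses that proposition: starting from the closed form \eqref{a3ter} one writes $a_3(0)=4A^2-2B$ with $A:=\int_0^\infty e^{-s}e^{-iL_s}\,ds$ and $B:=\int_0^\infty e^{-2s}e^{-2iL_s}\,ds$, so that $\mathbb E(\I a_3\I^2)=16\,\mathbb E(\I A\I^4)-16\,\Re\,\mathbb E(A^2\bar B)+4\,\mathbb E(\I B\I^2)$. Each term is a multiple time-integral whose integrand, after taking expectations, factorizes through the L\'evy characteristic function \eqref{Levychar}; the genuinely laborious part — exactly as in the proof of Proposition~\ref{theo-a2} and in the appendix computation of Section~\ref{Appthird} — is the combinatorial bookkeeping of the relative orderings of the (up to four) integration times and the attendant use of independence of increments. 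The specialization route above avoids all of this.
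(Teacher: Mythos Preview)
Your proposal is correct and follows exactly the paper's approach: the corollary is obtained by setting $\mu=0$ in Proposition~\ref{theo-a3mu} and simplifying, and the SLE case follows by substituting $\eta_\ell=\kappa\ell^2/2$. The algebraic regrouping $2\eta_1^2-8\eta_1+54+24\eta_2=24(2+\eta_2)+2(\eta_1-1)(\eta_1-3)$ is precisely what the paper relies on (implicitly) to arrive at the bracketed form \eqref{a3eta}.
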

\noi Notice the special role played by $\eta_1=1,3$, corresponding to $\kappa=2,6$:  the result no longer depends on $\eta_2$, and equals  $3$ and $1$ respectively.

The second corollary shows that there is no Fekete--Szeg\H o counter-example in the SLE family \textcolor{black}{in the expectation sense.} 
To $f:=f_0$, an $\textrm{SLE}_\kappa$ whole-plane map, we associate its oddified function  as above, that is $h_0(z)=z\sqrt{{f(z^2)}/{z^2}}=z+b_3z^3+b_5z^5+\cdots$  
An easy computation gives $b_5=\frac{1}{2}\left(a_3-\frac{1}{4}a_2^2\right).$ Setting $\mu=\frac{1}{4}$ in the above proposition gives 
$$\mathbb E(\I b_5\I^2)=\Re\left(
\frac{18+9\eta_2-4\eta_1+2\eta_1^2}{4(1+\eta_1)(2+\eta_2)(3+\eta_1)}+\frac{3}{4}\frac{1}{(1+\overline{\eta_1})(3+\eta_1)}
\right).$$ 
In the case of a real $\eta$,  
$$\mathbb E(\I b_5\I^2)=\frac{6+3\eta_2-\eta_1+\eta^2_1/2}{(1+\eta_1)(3+\eta_1)(2+\eta_2)} \left[=
\frac{12+44\kappa+\kappa^2}{(1+\kappa)(2+\kappa)(6+\kappa)}\right],$$ where the last expression has been specified for the SLE case, and  is always less than or equal to $1$ (equality holding only for $\kappa=0$).

Consider the {\it Schwarzian derivative}, $S(z):={f^{[3]}(z)}/{f'(z)}-({3}/{2})({f'^{[2]}(z)}/{f'(z)})^2.$ One obtains $S(0)=6(a_3-a_2^2)$, corresponding to $\mu=1$,  and giving the expected values
$ \mathbb E [\I S(0)\I^2]={72}/({2+\eta_2}),$ and for SLE, 
 $ \mathbb E [\I S(0)\I^2]={36}/({1+\kappa}).$\\
 
\noi A few comments are in order here:\\
-- We noticed that $\mathbb E(\I a_2\I^2)=\mathbb E(\I a_3\I^2)=1$ for $\kappa=6$. We return to this in the next sections after performing some computer experiments.\\
-- For all values of $\kappa$,  $\mathbb E(\I b_5\I^2)\leq 1$: \textcolor{black}{therefore, in this expectation sense,} there is no Fekete--Szeg\H o counterexample in the SLE-family. Using the Schoenberg   property of the L\'evy symbol $\eta$ \cite{applebaum}, it can also be seen that there is no counterexample in expectation for a general L\'evy--Loewner process with real $\eta$. The question remains open for higher order terms or higher moments; this will be studied elsewhere.\\ 
-- It is known that $\I S(0)\I\leq 6$ whenever $f$ is injective. Conversely, if $(1-\I z\I^2)^2\I S(z)\I\leq 2,$ then $f$ is injective; \textcolor{black}{here, the corresponding inequality $|S(0)|\leq 2$ holds  in the sense that $\mathbb E(|S(0)|^2)\leq 4$ for $\kappa\geq 8$.}

\subsubsection{Next order}
The  quadratic expectation of the next order coefficient, $\mathbb E(\left|a_4^2\right|)$, can still be computed by hand, which yields
\begin{eqnarray}\nonumber\mathbb E\left(\left|a_4\right|^2\right)&=&\,\frac{4!2^3}{(\eta_1+1)(\eta_1+3)(\eta_1+5)}\\ \label{a4eta} &&+\frac{4(\eta_1-1)(\eta_1-3)\eta_2(\eta_2-4)(\eta_1+3)}{3(\eta_1+1)(\eta_1+3)(\eta_1+5)(\eta_2+2)(\eta_2+4)(\eta_3+3)},
\end{eqnarray}
and for SLE,
\begin{eqnarray*}\mathbb E\left(\left|a_4\right|^2\right)=\,\frac{8}{9}\frac{\kappa^5 + 104 \kappa ^4 + 4576 \kappa ^3 + 18288 \kappa ^2 + 22896 \kappa + 8640}{(\kappa + 10)(3 \kappa + 2)(\kappa + 6)(\kappa + 1)(\kappa + 2)^2}.
\end{eqnarray*}
 
 The results \eqref{a3eta} and \eqref{a4eta}  obtained so far for $\mathbb E (|a_n|^2)$, $n=3,4$, call for the following observations.
 \begin{rkkk}\label{mainobs}
 --After the first term in the expression for $\mathbb E (|a_n|^2)$, one notes the presence in numerators of the common factors $(\eta_1-1)(\eta_1-3)$, thus vanishing for $\eta_1=1$ or $3$. For $\eta_1=3$ (or $\kappa=6$), the first term, thus $\mathbb E (|a_n|^2)$ itself, equals  $1$; for $\eta_1=1$ (or $\kappa=2$) it equals $n$. \textcolor{black}{We checked explicitly that this holds in symbolic computations up to $n=8$, and in numerical ones up to $n=19$.  (see Appendix \ref{Apphigh} and Eq. \eqref{a5square}); the validity of these observations  for all $n$ was first conjectured in \cite{Hal-DNNZ}.} \\
 --Somehow surprisingly, all the coefficients of the polynomial expansions in $\kappa$ are positive.\\
--For $\kappa \to \infty$ (or $\eta\to\infty$), these expectations vanish as $\kappa^{-1}$.
\end{rkkk}
 {\it All these patterns will be confirmed at higher orders}, to which we now turn. 

\subsection{Computational experiments}\label{computation}
\begin{figure}[tb]
\begin{center}
\includegraphics[angle=0,width=.83290\linewidth]{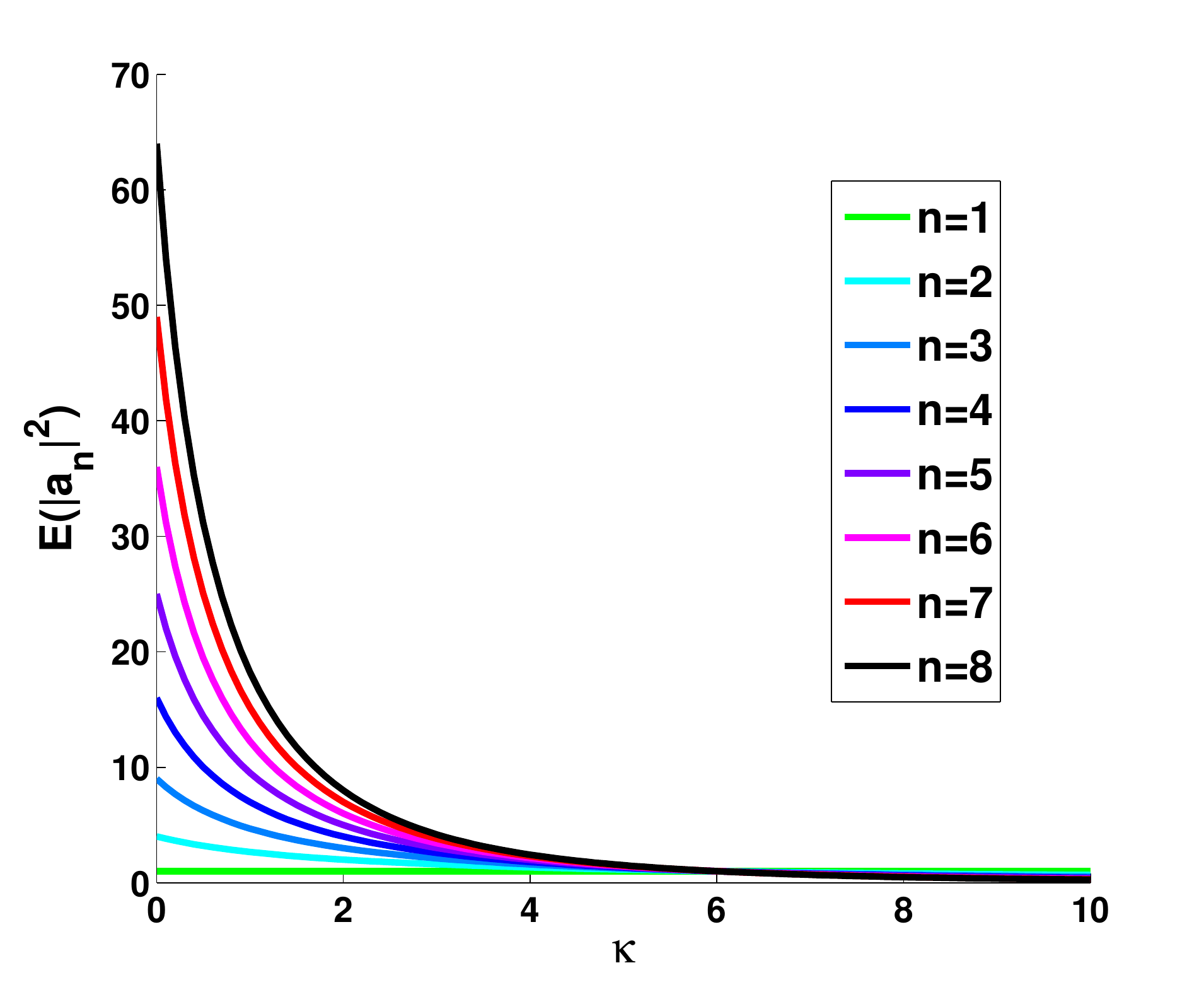}
\caption{{\it Graphs of the $\textrm{SLE}_\kappa$ map $\kappa\mapsto \mathbb E(\I a_n\I^2)$ for $n=1,\cdots,8$.}}
\label{fig1}
\end{center}
\end{figure}
\begin{figure}[tb]
\begin{center}
\includegraphics[angle=0,width=.93290\linewidth]{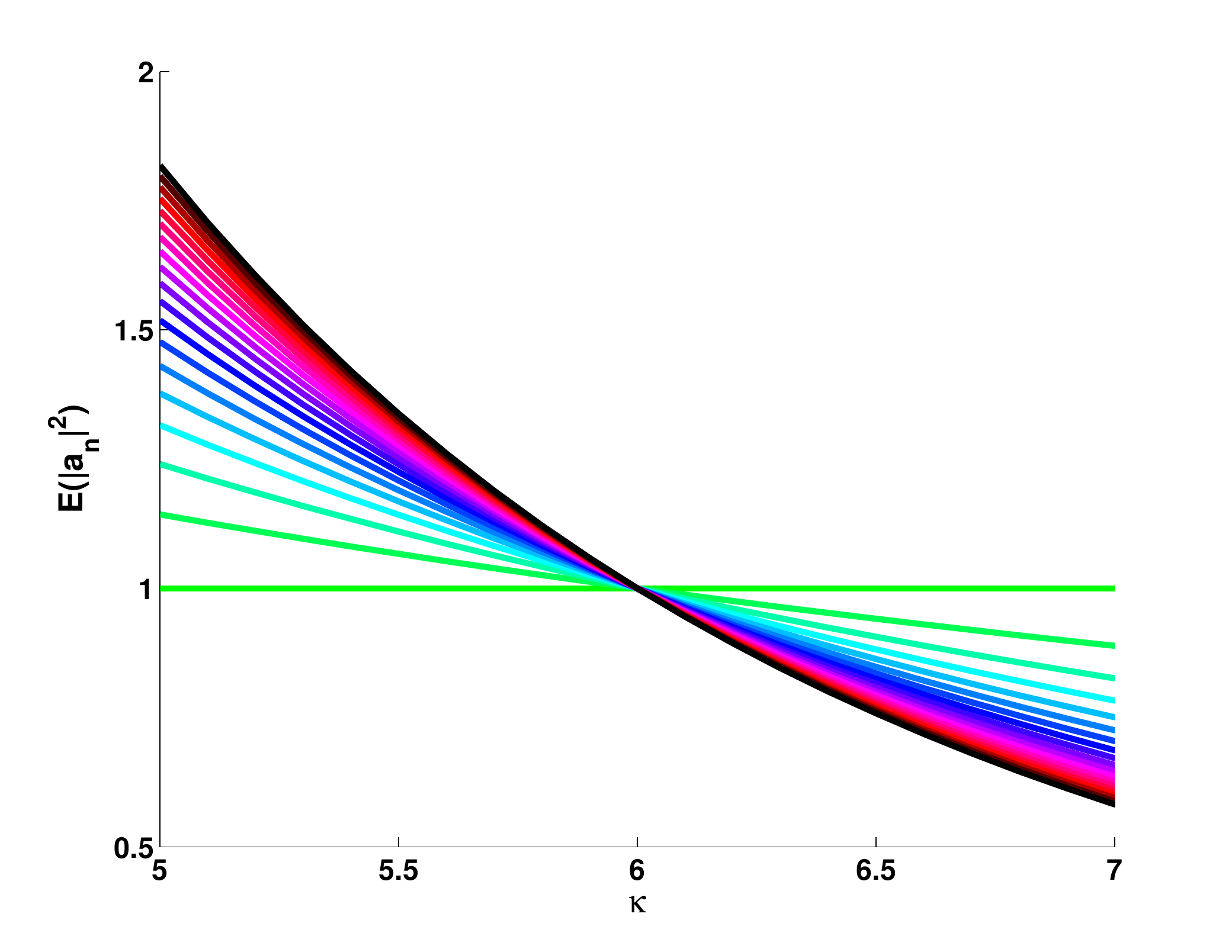}
\caption{{\it Graphs of the $\textrm{SLE}_\kappa$ map $\kappa\mapsto \mathbb E(\I a_n\I^2)$, for $n=1,\cdots,19$, with a zoom near $\kappa=6$.}}
\label{fig1zoom}
\end{center}
\end{figure}As one may see, these computations become more and more involved. 
Moreover, its seems difficult to find a closed formula for all terms. This section is devoted to the description of an algorithm that we have implemented on {\sc matlab} to compute $\mathbb E(\I a_n\I^2)$. This algorithm is divided into two parts: the first encodes the computation of $a_n$, while the second uses it to compute $\mathbb E(\I a_n\I^2)$. Since the important cases of SLE and $\alpha$-stable processes both have real L\'evy symbols $\eta$, we restrict the study to the latter case.

For the encoding of $a_n$, we observe that they are linear combinations of  successive integrals of the form 
 \begin{equation}\label{int}\int_t^\infty ds_1\,e^{-i\alpha_1L_{s_1}-\beta_1s_1}\int_{s_1}^\infty ds_2\,e^{-i\alpha_2L_{s_2}-\beta_2s_2}\ldots\int_{s_{k-1}}^\infty ds_k\, e^{-i\alpha_kL_{s_k}-\beta_ks_k}.\end{equation}
Their expectations  are encoded as
\begin{equation}\label{ab}
(\alpha_1,\beta_1)\ldots(\alpha_k,\beta_k)\quad(1\le k\le n),
\end{equation}
and are explicitly computed by using as above the strong Markov property and the L\'evy characteristic function \eqref{Levychar}:
$$ (\alpha_1,\beta_1)\ldots(\alpha_k,\beta_k)=\prod _{j=0}^{k-1}\left[\beta_k+\beta_{k-1}+\ldots+\beta_{k-j}+\eta(\alpha_k+\alpha_{k-1}+\ldots+\alpha_{k-j})\right]^{-1}.$$
Next, in order to compute $\I a_n\I^2$, we need to evaluate the expectation of products of integrals such as \eqref{int} with complex conjugate of others, that we symbolically denote by
\begin{equation}
\left[(\alpha_1,\beta_1)\ldots(\alpha_k,\beta_k);(-\alpha'_1,\beta'_1)\ldots(-\alpha'_\ell,\beta'_\ell)\right]\quad(1\le k,\ell\le n).\label{form}
\end{equation}
The product integrals may be written as a sum of $(\begin{smallmatrix} k+\ell\\k\end{smallmatrix})$ ordered integrals with $k+\ell$ variables: the $k$ first ones and the $\ell$ last ones are ordered and the number of ordered integrals corresponds to the number of ways of shuffling $k$ cards in the left hand with $\ell$ cards in the right hand. This sum is quite large and, in order to systematically compute it, we write its expectation as the sum of expectations of integrals of the form \eqref{ab}  that begin with a term of type $(\alpha_1,\beta_1)$ or with a term of type $(-\alpha'_1,\beta'_1)$, thus reducing the work to a computation at lower order. 

Using dynamic programing, we performed computations (formal up to $n=8$ and numerical up to $n=19$) on a usual computer. The results are reported in Appendix B, Section \ref{Apphigh}. They fully confirm the validity of Remarks \ref{mainobs}.   

The graphs given in Figure \ref{fig1} for the $\textrm{SLE}_\kappa$ map $\kappa\mapsto \mathbb E(\I a_n\I^2)$, for $n=1,\cdots,8$, illustrate the phenomena described above;  in particular a zoom in Fig. \ref{fig1zoom} for values of $n=1,\cdots,19$ shows the striking constant value $\mathbb E(|a_n^2|)=1$ for $\kappa=6$. Similarly, Fig. \ref{fig2} illustrates the $\textrm{SLE}_\kappa$ map $\kappa\mapsto \mathbb E(\I a_n\I^2)/n$, for $n=1,\cdots,8$, with a zoom  in Fig \ref{fig2zoom} near $\kappa=2$ where $\mathbb E(|a_n^2|)=n$, here for $n=1,\cdots,19$.  
\begin{figure}[tb]
\begin{center}
\includegraphics[angle=0,width=.93290\linewidth]{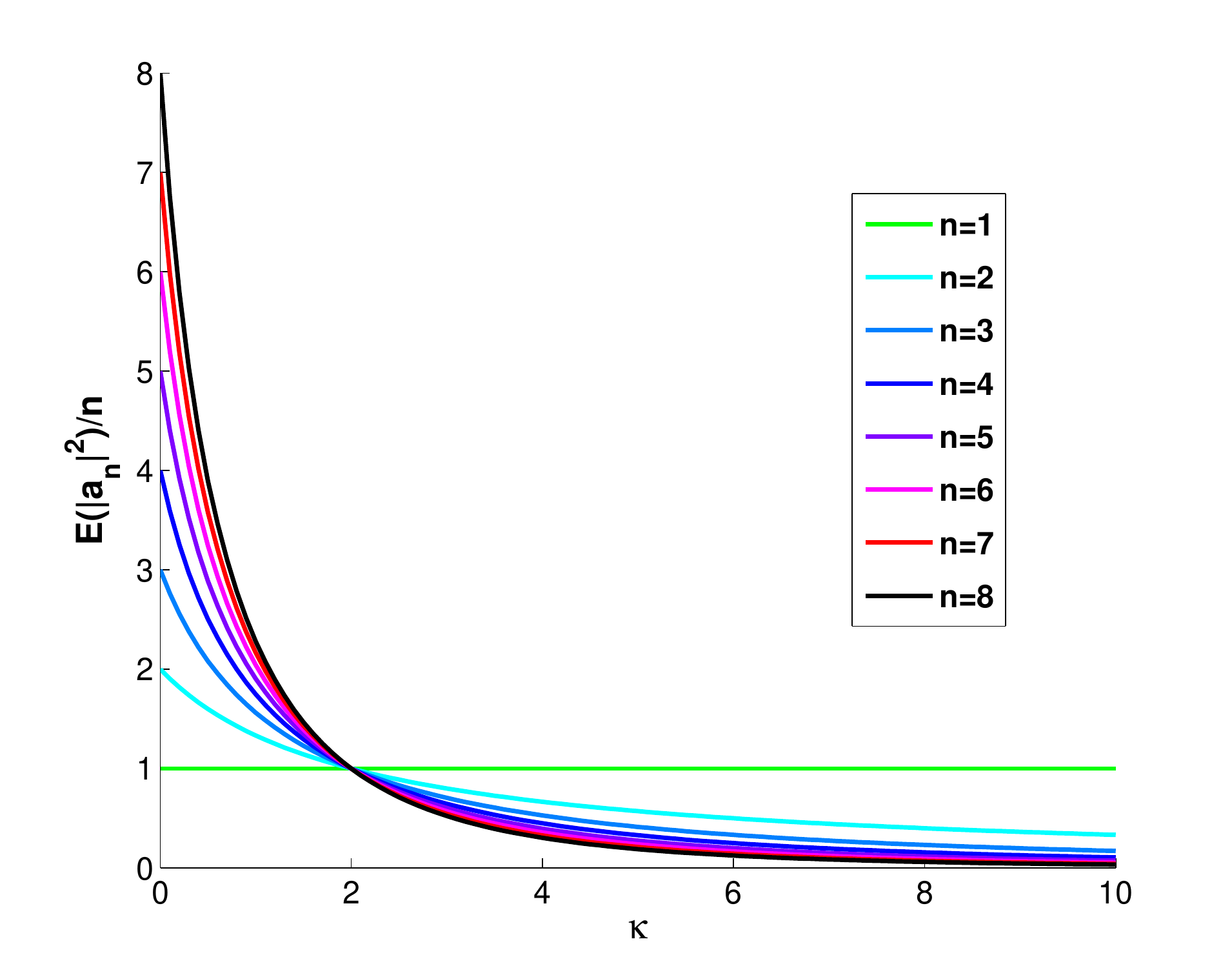}
\caption{{\it Graphs of the $\textrm{SLE}_\kappa$ map $\kappa\mapsto \mathbb E(\I a_n\I^2)/n$, for $n=1,\cdots,8$.}}
\label{fig2}
\end{center}
\end{figure}
\begin{figure}[tb]
\begin{center}
\includegraphics[angle=0,width=.93290\linewidth]{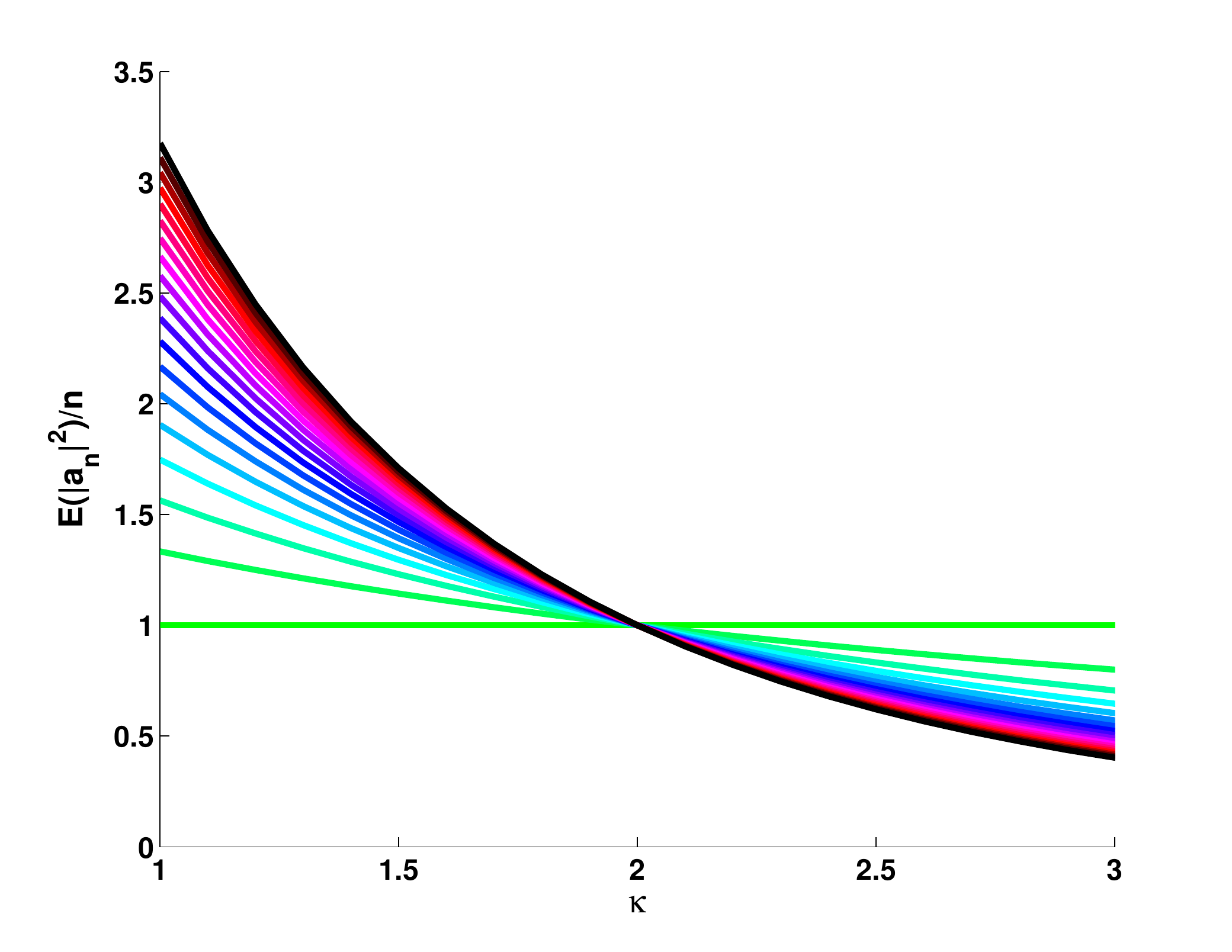}
\caption{{\it Graphs of the $\textrm{SLE}_\kappa$ map $\kappa\mapsto \mathbb E(\I a_n\I^2)/n$, for $n=1,\cdots,19$, with a zoom near $\kappa=2$.}}
\label{fig2zoom}\end{center}
\end{figure} 
\section{Theorems and Proofs}\label{proofs}
\subsection{ Expected conformal maps for L\'evy--Loewner evolutions}\label{subsecexp}
\subsubsection{Expectation of $f_0(z)$}
In this first section, we give an explicit expression for the expectations of the coefficients $a_n(t)$ of the expansion \eqref{ftexpansion} in the L\'evy--Loewner setting, thereby  obtaining the expectation of the map, $\mathbb E[f_0(z)]$, and of its derivative.

 The differential recursion \eqref{dotaa} in Section \ref{analytic} then becomes, for $\lambda(t):=e^{iL_t}$, and in terms of the auxiliary function 
$u_n(t)$, 
\begin{eqnarray}
\label{ua} u_n(t)&:=&a_n(t)e^{-(n-1)t}\\ \label{ind}
 \dot{u}_n(t)&=&2\sum_{k=1}^{n-1}k X_t^{n-k}u_k(t),
\end{eqnarray}
where $X_t$ is defined as
 \begin{equation}\label{Xt}
 X_t:=e^{-t-iL_t}.
 \end{equation}
The recursion \eqref{ind}
 can be rewritten under the simpler form:
\begin{equation}\label{indter}
\dot{u}_n=X_t[\dot{u}_{n-1}+2(n-1)u_{n-1}]. 
\end{equation}
Recall that $u_1=a_1=1$, while the next term of this recursion, as already seen in Eqs. \eqref{a2bis}, is  
\begin{equation}\label{u2}
u_2(t)=-2\int_t^{+\infty} ds X_s.
\end{equation}
Similarly,  
  we can write the general solution $u_n$, for $n\geq 2$, under the form 
\begin{equation}\label{uv}
u_n(t)=-2\int_t^{+\infty} ds X_s v_n(s),
\end{equation}
with $v_2(s)=1$, and rewrite the differential equation \eqref{indter} as an {\it integral} equation
\begin{equation}\label{inteq}
v_n(t)=X_t v_{n-1}(t)-2(n-1)\int_t^{+\infty}ds X_s v_{n-1}(s).
\end{equation}
Define then the multiplicative and integral operators $\mathcal X$ and $\mathcal J$ such that
\begin{eqnarray}\label{X}
\mathcal X v (t)&:=&X_t v(t),\\ \label{J}
\mathcal J v (t)&:=&-2\int_t^{+\infty} ds X_s v(s).
\end{eqnarray}
The solution to \eqref{u2}, \eqref{uv} and \eqref{inteq} can then be written as the operator product 
\begin{eqnarray}\nonumber
u_n&=&  \mathcal J  \circ [\mathcal X + (n-1) \mathcal J]\circ \cdots \circ (\mathcal X + 2\mathcal J )\mathds{1}\\ \label{uop}
&=&\mathcal J \prod_{k=1}^{n-2} \circ (\mathcal X + (k+1)\mathcal J ) \mathds{1},
\end{eqnarray}
 where $\mathds{1}(=v_2)$ is the constant function equal to $1$ on $\mathbb R^+$.
 
Next, recall the strong Markov property of the L\'evy process, which implies the {\it identity in law:} $\forall s\geq t, L_s \stackrel{\rm(law)}{=}L_t +\tilde L_{s-t}$, where $\tilde L_{s'}$ is an independent copy of the L\'evy process, also started at $\tilde L_0=0$. Therefore, the process $X_t$ \eqref{Xt} is, in law,
\begin{equation}\label{XXtilde}
X_s \stackrel{\rm (law)}{=}X_t \tilde X_{s-t}, \forall s\geq t
\end{equation}
where $\tilde X_{s'}:=e^{-s'-\tilde L_{s'}}, s'\geq 0$, is an independent copy of that process, with $\tilde X_{0}=1$.
The operator $\mathcal J$ \eqref{J} can then be written as 
\begin{eqnarray}\label{Jinlaw}
\mathcal J v (t)&\stackrel{\rm (law)}{=}&-2 X_t\int_0^{+\infty} ds \tilde X_{s} v(s+t)\\
&=&\mathcal X\circ \tilde {\mathcal J} v (t),
\end{eqnarray}
with $\tilde {\mathcal J} v (t):=-2\int_0^{+\infty} ds\tilde X_{s} v(s+t)$. By iteration of  the use of the Markov property,  Eq. \eqref{uop} can be rewritten as
\begin{eqnarray}\nonumber
u_n&\stackrel{\rm (law)}{=}&  \mathcal J  \circ \big[\mathcal X \big(1+ (n-2) \tilde{\mathcal J}^{[n-1]}\big)\big]\circ \cdots \circ \big[\mathcal X \big(1+ 2\tilde {\mathcal J}^{[1]}\big)\big]\mathds{1}\\ \label{uoptilde}
&\stackrel{\rm (law)}{=}&\mathcal J \prod_{k=1}^{n-2} \circ [\mathcal X \big(1+ (k+1)\tilde {\mathcal J}^{[k]}\big) \big] \mathds{1},
\end{eqnarray}
where the integral operators $\tilde {\mathcal J}^{[k]}, k=1,\cdots, n-2$, involve successive {\it independent} copies, $\tilde X^{[k]}_{s_k}, k=1,\cdots, n-2$, of the original exponential L\'evy process $X_s$. We therefore arrive at the following explicit representation of the solution \eqref{uop}
\begin{equation}\label{unXn}
u_n(t)\stackrel{\rm (law)}{=}-2\int_t^{+\infty} ds X_s^{n-1}\prod_{k=1}^{n-2}\left(1-2(k+1)\int_0^{+\infty}ds_k \big(\tilde X^{[k]}_{s_k}\big)^{k} \right).
\end{equation}

As mentioned in the introduction, the {\it conjugate} whole-plane L\'evy--Loewner evolution $e^{-iL_t} f_t\big(e^{iL_t}z\big)$ should have the {\it same law} as $f_0(z)$. At order $n$, we are thus interested in the stochastically rotated coefficients: $$e^{i(n-1)L_t} a_n(t)=(X_t)^{-(n-1)} u_n(t).$$ Using again the identity in law \eqref{XXtilde}
in \eqref{unXn}, we arrive at 
\begin{eqnarray}\label{anXn}
e^{i(n-1)L_t}a_n(t)&\stackrel{\rm (law)}{=}&-2\int_0^{+\infty} ds {\tilde X}_s^{n-1}\prod_{k=1}^{n-2}\left(1-2(k+1)\int_0^{+\infty}ds_k \big(\tilde X^{[k]}_{s_k}\big)^{k} \right)\\ \nonumber
&\stackrel{\rm (law)}{=}& a_n(0),
\end{eqnarray}
which, as it must, no longer depends of $t$. 

All factors in \eqref{anXn} involve successive independent copies of the L\'evy process, and their expectations can now be taken {\it independently}. 
Recalling the form  \eqref{Levychar} of the L\'evy characteristic function, we have 
$\mathbb E [({\tilde X}_s)^k]=e^{-(\eta_k+k)s}$. Thus 
\begin{eqnarray}\nonumber
\mathbb E [a_n(0)]&=&-2\int_0^{+\infty} ds\, \mathbb E[{\tilde X}_s^{n-1}]\prod_{k=1}^{n-2}\left(1-2(k+1)\int_0^{+\infty}ds_k \big(\mathbb E\big [\big(\tilde X^{[k]}_{s_k}\big)^{k}\big]\right)\\\label{Ean0}
&=&-2\, \frac{1}{\eta_{n-1}+n-1}\prod_{k=1}^{n-2}\left(1-\frac{2(k+1)}{\eta_{k}+k}\right).\end{eqnarray}
We finally obtain:
\begin{theo}\label{theoEan}
For $n\geq 2$, setting $a_n:=a_n(0)$, 
\begin{eqnarray} \label{Eant}
a_n(0)&\stackrel{\rm (law)}{=}&e^{i(n-1)L_t}a_n(t),\\ \label{Ean}
\mathbb E(a_n)
&=&-2\frac{\prod_{k=1}^{n-2}(\eta_{k}-k-2)}{\prod_{k=1}^{n-1}(\eta_k+k)}=\prod_{k=0}^{n-2}\frac{\eta_{k}-k-2}{\eta_{k+1}+k+1}.
\end{eqnarray}
\end{theo}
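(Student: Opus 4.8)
The plan is to establish the two assertions of the theorem separately, reusing the machinery assembled above, since the representation \eqref{unXn}--\eqref{anXn} essentially already contains the proof. The first assertion, the identity in law \eqref{Eant}, is the stationarity statement that the conjugate coefficient $e^{i(n-1)L_t}a_n(t)=X_t^{-(n-1)}u_n(t)$ has a law independent of $t$, equal to that of $a_n(0)$. The key tool is the strong Markov property of the L\'evy process, encoded in \eqref{XXtilde}, namely $X_s\stackrel{\rm(law)}{=}X_t\tilde X_{s-t}$ for $s\geq t$ with $\tilde X$ an independent copy. Starting from the operator representation \eqref{uop} of $u_n$, I would apply \eqref{XXtilde} iteratively to extract a single overall factor $X_t^{n-1}$ from $u_n(t)$, each remaining integral operator being rewritten in terms of fresh independent copies $\tilde X^{[k]}$; this is exactly the passage from \eqref{uop} to \eqref{uoptilde} and \eqref{unXn}. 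Multiplying by $X_t^{-(n-1)}$ then cancels the prefactor, yielding the $t$-independent representation \eqref{anXn}, which is the desired identity in law.

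For the expectation \eqref{Ean}, I would take expectations directly in \eqref{anXn}. The crucial structural point is that, after iterating the Markov property, each of the nested integrals involves a \emph{distinct} independent copy of the exponential L\'evy process, so the expectation factorizes across the copies. Evaluating each factor uses only the characteristic function \eqref{Levychar}, giving $\mathbb E[(\tilde X_s)^k]=e^{-(\eta_k+k)s}$, whence $\int_0^{+\infty}\mathbb E\big[\big(\tilde X^{[k]}_{s_k}\big)^{k}\big]\,ds_k=1/(\eta_k+k)$. The $k$-th factor therefore becomes $1-2(k+1)/(\eta_k+k)=(\eta_k-k-2)/(\eta_k+k)$, while the leading integral over $X_s^{n-1}$ contributes $-2/(\eta_{n-1}+n-1)$. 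Assembling these products yields \eqref{Ean0}, and a shift of the index $k\mapsto k+1$ in the denominators recasts it in the telescoped form $\prod_{k=0}^{n-2}(\eta_k-k-2)/(\eta_{k+1}+k+1)$ claimed in \eqref{Ean}.

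The main obstacle is not the algebra but the analytic justification of these formal steps. First, one must guarantee that all the improper integrals at $+\infty$ converge, which requires $\Re(\eta_k+k)>0$ for $1\leq k\leq n-1$; this follows from the Bochner/L\'evy--Khintchine conditions, which force $\Re\eta_k\geq 0$, so that the decay $e^{-\Re(\eta_k+k)s}$ is genuine. Second, the interchange of expectation with the iterated integrals must be justified by Fubini--Tonelli, using integrability that propagates from the base case: the uniform bound $|a_2(t)|\leq C_2$ of Remark \ref{weakbiber}, which underlies \eqref{a2bis}, must be carried up through the recursion \eqref{indter} so that the operator product \eqref{uop} is well defined and expectations may be taken termwise. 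Finally, the iteration of \eqref{XXtilde} across the nested operators must be applied in the correct order, since the copies introduced at each stage are independent of the preceding ones while the nesting of the integration domains must be respected; making this bookkeeping precise is the one genuinely delicate point.
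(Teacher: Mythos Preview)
Your proposal is correct and follows essentially the same approach as the paper: the paper's proof is precisely the passage \eqref{uop}--\eqref{Ean0} preceding the theorem statement, which you have accurately summarized (operator representation, iterated Markov property yielding independent copies, $t$-independence of \eqref{anXn}, and factorized expectation). Your additional remarks on the analytic justification (convergence via $\Re\eta_k\geq 0$, Fubini, and the bookkeeping of nested independent copies) supply rigor that the paper leaves implicit but does not alter the route.
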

\begin{cor}\label{cor1}
The expected  conformal map $\mathbb E [f_0(z)]$ of the whole-plane L\'evy--Loewner evolution, in the setting of Theorem \ref{theolle0}, is  polynomial of degree $k+1$ if there exists a positive $k$ such that $\eta_k=k+2$, has radius of convergence $1$ for an $\alpha$-stable L\'evy process of symbol  $\eta_n={\kappa}n^\alpha/2$, $\alpha\in (0,2]$, except for the Cauchy process $\alpha=1,\kappa=2$, where $\mathbb E[f_0(z)]=ze^{-z}$.
\end{cor}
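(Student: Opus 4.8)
The plan is to read all three assertions off the closed product formula of Theorem~\ref{theoEan}. Since $f_0(z)=z+\sum_{n\geq2}a_n(0)z^n$, taking expectations term by term gives
\[
\mathbb E[f_0(z)]=z+\sum_{n\geq2}\mathbb E(a_n)\,z^n,\qquad \mathbb E(a_n)=\prod_{k=0}^{n-2}c_k,\quad c_k:=\frac{\eta_k-k-2}{\eta_{k+1}+k+1},
\]
so that the ratio of consecutive coefficients is $\mathbb E(a_{n+1})/\mathbb E(a_n)=c_{n-1}=(\eta_{n-1}-n-1)/(\eta_n+n)$. Everything below is an analysis of this product and this ratio.

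For the polynomial statement I would use the hypothesis $\eta_{k}=k+2$ to make the numerator of $c_{k}$ vanish. That zero factor enters the product defining $\mathbb E(a_n)$ exactly when $k\leq n-2$, i.e.\ for every $n\geq k+2$; hence $\mathbb E(a_n)=0$ for all such $n$, and $\mathbb E[f_0(z)]$ is a polynomial of degree $\leq k+1$. Taking $k$ minimal among the solutions of $\eta_k=k+2$, all numerator factors $\eta_j-j-2$ with $j<k$ are nonzero and, for $\eta\geq0$ (e.g.\ the Schoenberg/stable case), the denominators satisfy $\eta_{j+1}+j+1\geq j+1>0$; hence $\mathbb E(a_{k+1})=\prod_{j=0}^{k-1}c_j\neq0$ and the degree is exactly $k+1$. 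This matches Corollary~\ref{corfprime}: $\eta_1=3$ gives $k=1$ (so $\mathbb E f_0'=1-z$, degree $2$) and $\eta_2=4$ gives $k=2$ (degree $3$).

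For the radius of convergence I would invoke Cauchy--Hadamard, $R=1/\limsup_n|\mathbb E(a_n)|^{1/n}$, together with the elementary fact that if $c_k\to L$ then a Ces\`aro average of $\log|c_k|$ yields $|\mathbb E(a_n)|^{1/n}\to|L|$, hence $R=1/|L|$ (and $R=\infty$ if $L=0$). For the $\alpha$-stable symbol $\eta_n=\tfrac{\kappa}{2}n^\alpha$ it remains to compare $\eta_n$ with the explicit linear terms in $c_{n-1}$: for $\alpha\in(1,2]$ both numerator and denominator behave like $\tfrac{\kappa}{2}n^\alpha$, so $c_{n-1}\to1$ and $R=1$; for $\alpha\in(0,1)$ the linear terms dominate, $c_{n-1}\to-1$, and again $|L|=1$, $R=1$. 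This settles the radius-$1$ claim cleanly off the critical line $\alpha=1$ (whenever $\mathbb E[f_0]$ is not already a polynomial).

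I expect the only real work to lie in the borderline $\alpha=1$, which is exactly why the Cauchy process must be singled out: there $\eta_n\sim n$ competes with the explicit linear terms and the constants must be tracked, giving $c_{n-1}\to(\kappa-2)/(\kappa+2)$. This limit vanishes precisely at the Cauchy value $\kappa=2$, where $\eta_k=k$ makes $c_k=-1/(k+1)$ \emph{exactly}, so $\mathbb E(a_n)=(-1)^{n-1}/(n-1)!$ and the series resums to the entire function $\mathbb E[f_0(z)]=z\sum_{m\geq0}(-z)^m/m!=ze^{-z}$, which I would confirm by matching Taylor coefficients. One should note that for $\alpha=1$ with $\kappa\neq2$ (and away from the polynomial values) the limit $(\kappa-2)/(\kappa+2)$ is nonzero, so that $R>1$ there; thus the radius-$1$ assertion is really the generic $\alpha\neq1$ behavior, with the Cauchy process providing the distinguished entire exception on the critical line.
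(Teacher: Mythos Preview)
Your argument follows essentially the same route as the paper's: read everything off the product formula of Theorem~\ref{theoEan}, observe that a vanishing numerator $\eta_k-k-2$ kills all higher coefficients, and otherwise apply a ratio/root test to $c_{n-1}=(\eta_{n-1}-n-1)/(\eta_n+n)$. The paper uses D'Alembert's criterion directly; your Cauchy--Hadamard plus Ces\`aro reduction is equivalent here since the ratio has a limit.

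Where you differ is that you are more careful on the line $\alpha=1$. The paper's proof simply asserts that the ratio limit equals $1$ ``for all $\alpha\in(0,2]$ except if $\alpha=1$ and $\kappa=2$''. You correctly compute that for $\alpha=1$ the ratio tends to $(\kappa-2)/(\kappa+2)$, so away from the polynomial values the radius of convergence is $(\kappa+2)/|\kappa-2|>1$, not $1$. Thus the radius-$1$ claim in the corollary is really an $\alpha\neq1$ statement, with the Cauchy point $\kappa=2$ the entire-function exception on the critical line; your final paragraph catches an imprecision that the paper's proof glosses over. Otherwise the two proofs coincide.
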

\begin{proof}
From Theorem \ref{theoEan}, $\mathbb  E [f_0(z)]$ is polynomial if there exists $k \in \mathbb N$ such that $\eta_k=k+2$, as all $\mathbb E(a_n)$ then vanish for $n\geq k+2$. Otherwise, use D'Alembert's criterion, applied here to $$\lim_{n\to \infty} \frac{\left\vert\mathbb E(a_{n+1})\right\vert}{\left\vert\mathbb E(a_{n})\right\vert}
=\lim_{n\to \infty} \frac{\left\vert\eta_{n-1}-n-1\right\vert}{\left\vert\eta_n+n\right\vert}=1,
$$
for an $\alpha$-stable symbol, $\eta_n={\kappa}|n|^\alpha/2, \forall \alpha\in (0,2]$, except if $\alpha=1$ and $\kappa=2$, for which the limit vanishes. In that case, Eq. \eqref{Ean} gives $\mathbb E(a_n)=(-1)^{n-1}/{(n-1)}!$ for $n\geq 2$, thus $\mathbb E [f_0(z)]=ze^{-z}$ and $\mathbb E[f_0'(z)]=(1-z)e^{-z}$.
\end{proof}

\subsubsection{Expectations for the odd map $h_0(z)$}
The oddified  map $h_t(z):=z\sqrt{f_t(z^2)/z^2}$ obeys the Loewner equation
\begin{equation}\label{OLL} \dot{h_t}(z)=\frac{z}{2}h_t'(z)\frac{\lambda(t)+z^2}{\lambda(t)-z^2}.
\end{equation}
Its series expansion 
\begin{equation}e^{-t/2}h_t(z)=z+\sum_{n\geq 1}b_{2n+1}(t)z^{2n+1}
\end{equation}
gives the recursion:
$\dot{b}_{2n+1}=n b_{2n+1}+\sum_{k=0}^{n-1} \bar\lambda^{n-k} (2k+1)b_{2k+1},$ with $b_1=1$.
This is transformed into the set of equations
\begin{eqnarray}
\label{ub} w_n(t)&:=&b_{2n+1}(t)e^{-nt},\,\,\, w_0=1,\\ \nonumber
 \dot{w}_n(t)&=&\sum_{k=0}^{n-1}(2k+1) X_t^{n-k}w_k(t)\\ \label{indb} 
 &=&(2n-1)X_t w_{n-1}(t)+X_t \dot{w}_{n-1}(t).
\end{eqnarray}
The last equation is similar to Eq. \eqref{indter}, except for $2n-1$ here replacing $2n-2$ there, and an index  shifted boundary condition $w_0=1$ replacing $u_1=1$. Its solution can thus be written, as in \eqref{uop}, as the operator product 
\begin{eqnarray}\nonumber
w_n&=& \frac{1}{2} \mathcal J  \circ \big[\mathcal X + \big(n-\frac{1}{2}\big) \mathcal J\big]\circ \cdots \circ \big(\mathcal X + \frac{3}{2}\mathcal J \big)\mathds{1}\\ \label{wop}
&=&\frac{1}{2}\mathcal J \prod_{k=1}^{n-1} \circ \left[\mathcal X + \big(k+\frac{1}{2}\big)\mathcal J \right] \mathds{1},
\end{eqnarray}
 where $\mathds{1}(=w_0)$ is the constant function equal to $1$ on $\mathbb R^+$.
 
As mentioned in the introduction, the {\it conjugate odd} whole-plane L\'evy--Loewner evolution $e^{-(i/2)L_t} h_t\big(e^{(i/2)L_t}z\big)$ should have the {\it same law} as $h_0(z)$. At order $n$, we are thus interested in the stochastically rotated coefficients: $$e^{inL_t} b_{2n+1}(t)=(X_t)^{-n} w_n(t).$$ Comparing \eqref{wop} here to \eqref{uop} above, and adapting from the general formula \eqref{anXn}, we arrive directly at the final identity in law for the odd coefficients
\begin{eqnarray}\label{bnXn}
e^{inL_t}b_{2n+1}(t)&\stackrel{\rm (law)}{=}&-\int_0^{+\infty} ds {\tilde X}_s^{n}\prod_{k=1}^{n-1}\left(1-(2k+1)\int_0^{+\infty}ds_k \big(\tilde X^{[k]}_{s_k}\big)^{k} \right)\\ \nonumber
&\stackrel{\rm (law)}{=}& b_{2n+1}(0),
\end{eqnarray}
which, as it must, no longer depends of $t$. 
 Again, all factors in \eqref{bnXn} involve successive independent copies of the L\'evy process, whose expectations can be taken independently. 
Thus 
\begin{eqnarray}\nonumber
\mathbb E [b_{2n+1}(0)]&=&-\int_0^{+\infty} ds\, \mathbb E[{\tilde X}_s^{n}]\prod_{k=1}^{n-1}\left(1-2(k+1)\int_0^{+\infty}ds_k \big(\mathbb E\big [\big(\tilde X^{[k]}_{s_k}\big)^{k}\big]\right)\\\label{Ebn0}
&=&-\, \frac{1}{\eta_{n}+n}\prod_{k=1}^{n-1}\left(1-\frac{2k+1)}{\eta_{k}+k}\right).\end{eqnarray}
We finally obtain for the odd whole-plane L\'evy--Loewner evolution:
\begin{theo}\label{theoEbn}
For $n\geq 1$, setting $b_{2n+1}:=b_{2n+1}(0)$, 
\begin{eqnarray} \label{Ebnt}
b_{2n+1}(0)&\stackrel{\rm (law)}{=}&e^{inL_t}b_{2n+1}(t),\\ \label{Ebn}
\mathbb E(b_{2n+1})
&=&-\frac{\prod_{k=1}^{n-1}(\eta_{k}-k-1)}{\prod_{k=1}^{n}(\eta_k+k)}=\prod_{k=0}^{n-1}\frac{\eta_k-k-1}{\eta_{k+1}+k+1}.
\end{eqnarray}
\end{theo}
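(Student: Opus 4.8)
The plan is to mirror, step for step, the derivation already carried out for the even coefficients in Theorem \ref{theoEan}, exploiting the close structural analogy between the two cases. First I would expand the oddified Loewner equation \eqref{OLL} as a power series in $z$ and identify coefficients, which produces the recursion $\dot b_{2n+1}=n\,b_{2n+1}+\sum_{k=0}^{n-1}\bar\lambda^{\,n-k}(2k+1)b_{2k+1}$ with $b_1=1$. The substitution $w_n(t):=b_{2n+1}(t)e^{-nt}$ removes the diagonal term and yields the telescoped form \eqref{indb}, $\dot w_n=(2n-1)X_t w_{n-1}+X_t\dot w_{n-1}$, whose only differences from the even recursion \eqref{indter} are the replacement of $2n-2$ by $2n-1$ and the shifted initial datum $w_0=1$ (in place of $u_1=1$).

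Next I would solve \eqref{indb} as an integral equation and iterate exactly as in \eqref{uop}, obtaining the operator product \eqref{wop}, $w_n=\tfrac12\,\mathcal J\prod_{k=1}^{n-1}\circ[\mathcal X+(k+\tfrac12)\mathcal J]\mathds 1$, with $\mathcal X,\mathcal J$ the multiplication and integral operators \eqref{X}--\eqref{J}. The crucial analytic step is then the strong Markov decoupling: applying the identity in law \eqref{XXtilde}, $X_s\stackrel{\rm(law)}{=}X_t\tilde X_{s-t}$, repeatedly, one rewrites each nested integral in terms of successive \emph{independent} copies $\tilde X^{[k]}$ of the exponential L\'evy process. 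Applied to the rotated coefficient $e^{inL_t}b_{2n+1}(t)=X_t^{-n}w_n(t)$, exactly as \eqref{anXn} was obtained from \eqref{uop}, this simultaneously delivers the stationarity statement $e^{inL_t}b_{2n+1}(t)\stackrel{\rm(law)}{=}b_{2n+1}(0)$ and the factorized representation \eqref{bnXn} as a product of mutually independent integrals.

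Once independence is in place, taking expectations factorizes completely. Using the L\'evy characteristic function \eqref{Levychar} in the form $\mathbb E[\tilde X_s^k]=e^{-(\eta_k+k)s}$ and integrating each factor gives $\int_0^\infty\mathbb E[\tilde X_s^k]\,ds=1/(\eta_k+k)$, so that \eqref{bnXn} collapses to $\mathbb E(b_{2n+1})=-\tfrac{1}{\eta_n+n}\prod_{k=1}^{n-1}\bigl(1-\tfrac{2k+1}{\eta_k+k}\bigr)$, which is \eqref{Ebn0}. The final step is the purely algebraic rearrangement $1-\tfrac{2k+1}{\eta_k+k}=\tfrac{\eta_k-k-1}{\eta_k+k}$; absorbing the prefactor $-1/(\eta_n+n)$ into the product and reindexing (using $\eta_0=\eta(0)=0$, so that the $k=0$ factor $\eta_0-1=-1$ supplies the overall sign) yields the two equivalent product forms $-\prod_{k=1}^{n-1}(\eta_k-k-1)/\prod_{k=1}^{n}(\eta_k+k)=\prod_{k=0}^{n-1}\tfrac{\eta_k-k-1}{\eta_{k+1}+k+1}$ claimed in the theorem.

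I expect the main obstacle to be the bookkeeping in the Markov decoupling rather than any genuinely new difficulty: one must verify that iterating \eqref{XXtilde} across the $n-1$ nested operators $\mathcal X+(k+\tfrac12)\mathcal J$ really produces \emph{jointly} independent copies $\tilde X^{[1]},\dots,\tilde X^{[n-1]}$, so that the subsequent factorization of expectations is legitimate; this is where the shifted weights $2k+1$ (versus $2k$ in the even case) must be tracked with care. A secondary point requiring attention is the justification, as in Remark \ref{weakbiber} for $a_2$, that the coefficient processes $b_{2n+1}(t)$ are uniformly bounded, so that the improper integrals at $+\infty$ converge (which also needs $\Re(\eta_k+k)>0$, automatic for a real, nonnegative even symbol $\eta$) and the integral-equation solution is the correct one. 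Granting this, the remainder is a direct transcription of the even-case argument with the index shift $2k-2\mapsto 2k-1$.
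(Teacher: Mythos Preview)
Your proposal is correct and follows essentially the same approach as the paper: the paper likewise derives the recursion \eqref{indb} from the odd Loewner equation, writes the solution as the operator product \eqref{wop}, applies the Markov decoupling \eqref{XXtilde} to obtain the factorized representation \eqref{bnXn}, and then computes expectations factor by factor to arrive at \eqref{Ebn0} and \eqref{Ebn}. Your remarks about tracking the shifted weights $2k+1$ and the use of $\eta_0=0$ to absorb the sign are exactly the bookkeeping the paper carries out.
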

\begin{cor}\label{cor2}
The expected  conformal map $\mathbb E [h_0(z)]$ of the oddified whole-plane L\'evy--Loewner evolution, in the setting of Theorem \ref{theolle0}, is  polynomial of degree $2k+1$ if there exists a positive $k$ such that $\eta_k=k+1$, has radius of convergence $1$ for an $\alpha$-stable L\'evy process of symbol  $\eta_n={\kappa}n^\alpha/2$, $\alpha\in (0,2]$, except for the Cauchy process $\alpha=1,\kappa=2$, where $\mathbb E[h_0(z)]=ze^{-z^2/2}$.
\end{cor}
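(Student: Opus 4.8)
The plan is to follow verbatim the structure of the proof of Corollary~\ref{cor1}, replacing the expectation formula of Theorem~\ref{theoEan} by that of Theorem~\ref{theoEbn}. Writing the expected oddified map as
$$\mathbb E[h_0(z)]=z+\sum_{n\geq 1}\mathbb E(b_{2n+1})\,z^{2n+1},\qquad \mathbb E(b_{2n+1})=\prod_{k=0}^{n-1}\frac{\eta_k-k-1}{\eta_{k+1}+k+1},$$
reduces the entire statement to an analysis of this explicit product: detecting when it vanishes identically beyond some order (the polynomial case), and otherwise estimating its growth (the radius of convergence).

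First I would treat the polynomial case. If some positive integer $k$ satisfies $\eta_k=k+1$, then the factor $\eta_k-k-1$ appears in the product defining $\mathbb E(b_{2m+1})$ precisely when $k\leq m-1$, i.e. for every $m\geq k+1$, so $\mathbb E(b_{2m+1})=0$ for all such $m$. The last non-vanishing coefficient is then $\mathbb E(b_{2k+1})$, carried by $z^{2k+1}$ (noting that for the minimal such $k$ none of the earlier factors, in particular $\eta_0-1=-1$, vanishes). Hence $\mathbb E[h_0(z)]$ is a polynomial of degree $2k+1$, as claimed.

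For the analytic (non-polynomial) case I would invoke D'Alembert's criterion on the ratio of consecutive coefficients,
$$\frac{\mathbb E(b_{2n+3})}{\mathbb E(b_{2n+1})}=\frac{\eta_n-n-1}{\eta_{n+1}+n+1}=\frac{\tfrac{\kappa}{2}n^\alpha-n-1}{\tfrac{\kappa}{2}(n+1)^\alpha+n+1}.$$
Because only odd powers occur, I would write $\mathbb E[h_0(z)]=z\,g(z^2)$ with $g(w)=1+\sum_{n\geq 1}\mathbb E(b_{2n+1})\,w^n$, read off the radius of $g$ in $w$ from this ratio, and recover the radius in $z$ as its square root. For an $\alpha$-stable symbol with $\alpha\in(0,2]$ the limiting modulus of this ratio equals $1$, exactly as in Corollary~\ref{cor1}, away from the exceptional Cauchy value, so both radii equal $1$. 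The single exception is $\alpha=1,\kappa=2$, where $\eta_n=n$ turns every numerator into $-1$ and every denominator into $2(n+1)$; the product then telescopes to $\mathbb E(b_{2n+1})=(-1)^n/(2^n n!)$, whence $g(w)=e^{-w/2}$ and $\mathbb E[h_0(z)]=z\,e^{-z^2/2}$, an entire function.

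The main obstacle is not the algebra but the bookkeeping imposed by the odd-power structure: unlike Corollary~\ref{cor1}, the ratio test here governs the coefficients of $z^{2n+1}$, so it must be transported through the substitution $w=z^2$ before it yields the radius in $z$. The one genuinely new computation is the Cauchy telescoping and its summation, which produces the Gaussian-type factor $e^{-z^2/2}$ in place of the simple exponential $e^{-z}$ of the even case.
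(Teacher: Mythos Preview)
Your proof is correct and follows exactly the approach the paper itself takes: the paper gives no separate argument for Corollary~\ref{cor2}, relying implicitly on the parallel with the proof of Corollary~\ref{cor1}, and your write-up carries out precisely that parallel using the product formula of Theorem~\ref{theoEbn}. Your explicit passage through the substitution $w=z^2$ to apply D'Alembert's criterion to the odd-power series, and your telescoping computation $\mathbb E(b_{2n+1})=(-1)^n/(2^n n!)$ in the Cauchy case, are the only extra bookkeeping needed beyond Corollary~\ref{cor1}, and you handle both cleanly.
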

\subsubsection{Some results for L\'evy--Loewner maps}
In the general case (or in the specific case of $\textrm{SLE}_\kappa$), the formula \eqref{Ean} gives for the first terms:
\begin{eqnarray*}
\mathbb E(a_2)&=&-\displaystyle\frac{2}{\eta_1+1}=-\displaystyle\frac{4}{2+\kappa}\\
\mathbb E(a_3)&=&-\displaystyle\frac{\eta_1-3}{\eta_2+2}\displaystyle\frac{2}{\eta_1+1}=-\displaystyle\frac{\kappa-6}{(1+\kappa)(2+\kappa)}\\
\mathbb E(a_4)&=&-\displaystyle\frac{\eta_2-4}{\eta_3+3}\displaystyle\frac{\eta_1-3}{\eta_2+2}\displaystyle\frac{2}{\eta_1+1}=-\displaystyle\frac{4(\kappa-2)(\kappa-6)}{(6+9\kappa)(1+\kappa)(2+\kappa)}.
\end{eqnarray*}
For the oddified map, \eqref{Ebn} gives
\begin{eqnarray*}
\mathbb E(b_3)&=&-\frac{1}{\eta_1+1}=-\frac{2}{\kappa+2}\\
\mathbb E(b_5)&=&-\frac{\eta_1-2}{(\eta_1+1)(\eta_2+2)}=-\frac{\kappa-4}{2(\kappa+1)(\kappa+2)}\\
\mathbb E(b_7)&=&-\frac{(\eta_1-2)(\eta_2-3)}{(\eta_1+1)(\eta_2+2)(\eta_3+3)}=-\frac{(\kappa-4)(2\kappa-3)}{3(\kappa+1)(\kappa+2)(3\kappa+2)}.
\end{eqnarray*}\\

An interesting further identity, valid for $n\geq 1$, gives  the truncated series
\begin{eqnarray}\nonumber
S_n:=1+2\, \mathbb E(a_2)+\cdots+n\,\mathbb E(a_n)&=&-\frac{1}{2} \big[\eta_{n-1}-(n+1)\big]\,\mathbb E (a_n)\\ \label{sn}
&=&-\frac{1}{2} \big[\eta_{n}+n\big]\,\mathbb E (a_{n+1}).
\end{eqnarray}
Due to the peculiar factorized and recursive form of $\mathbb E(a_n)$ in Theorem \ref{theoEan} (respectively, of $\mathbb E(b_n)$ in Theorem \ref{theoEbn}),  we have seen in Corollary \ref{cor1} (respectively,  \ref{cor2}) that if there exits 
an integer $N$ such that $\eta_N=N+2$ (respectively, $\eta_N=N+1$), $\mathbb E[f_0(z)]$ is polynomial of degree $N+1$ (respectively, $\mathbb E[h_0(z)]$ is polynomial of degree $2N+1$). 

In the first case, $\mathbb E(a_{N+\ell})=0, \forall \ell\geq 2$, and  $\mathbb E[f_0'(z=1)]=S_{N+1}=0$, therefore the derivative $\mathbb E[f'_0(z)]$ necessarily contains the monomial $(1-z)$ as a factor. 

The first such case, $N=1$, gives a L\'evy symbol $\eta_1=3$. This includes in particular the SLE$_\kappa$ process  for $\kappa=6$ (recall then that   $\eta_n=\kappa n^2/2$), for which 
\begin{eqnarray}\nonumber
\mathbb E[f_0(z)]&=&z-z^2/2=\frac{1}{2}\big(1-(1-z)^2\big),\\ \label{fp1}
\mathbb E[f_0'(z)]&=&1-z.
\end{eqnarray}
The $N=2$ case gives $\eta_2=4$.  This includes in particular the SLE$_\kappa$ process for $\kappa=2$, for which 
\begin{eqnarray}\nonumber
\mathbb E[f_0(z)]&=&z-z^2+z^3/3=\frac{1}{3}\big(1-(1-z)^3\big),\\ \label{fp2}
\mathbb E[f_0'(z)]&=&(1-z)^2.
\end{eqnarray}
More generally, the SLE$_\kappa$ expected map, $z\mapsto\mathbb E [f_t(z)]$, is polynomial for the decreasing sequence of values  
$\kappa=\kappa_N:=\frac{2(N+2)}{N^2},\, N\geq 1.$\\

For the  oddified L\'evy--Loewner evolution, the $N=1$ first case gives $\eta_1=2$. This includes in particular the SLE$_\kappa$ for $\kappa=4$, for which 
\begin{eqnarray}\label{phip}
\mathbb E[h_0'(z)]=1-z^2.
\end{eqnarray}
The odd SLE$_\kappa$ expected map $\mathbb E [h_t(z)]$ is polynomial for $\kappa=\tilde \kappa_N:=\frac{2(N+1)}{N^2},\, N\geq 1.$
\subsection{Derivative moments}\label{subsecDeriv}
In this section,  motivated by the observations made in Sections \ref{an2} and \ref{computation}, completed in appendix \ref{Apphigh}, we prove the first part of Theorem \ref{theolle0}, which we recall here:\\

\textcolor{black}{\noindent {\bf Theorem 1.1.} \textit{Cases (i) (ii). Let $(f_t)_{t\geq 0}$ be the Loewner whole-plane process driven by the L\'evy process $L_t$ with L\'evy symbol $\eta$. We write
\begin{eqnarray}
\nonumber
f_t(z)&=&e^t\big(z+\sum_{n\geq 2}a_n(t) z^n\big),\\ \label{f0coeff}
f_0(z)&=&z+\sum_{n\geq 2}a_n(0) z^n,
\end{eqnarray}
The {\it conjugate} whole-plane L\'evy--Loewner evolution $e^{-iL_t} f_t\big(e^{iL_t}z\big)$ has the {\it same law} as $f_0(z)$, i.e., $e^{i(n-1)L_t}a_n(t)\stackrel{\rm (law)}{=}a_n(0)=:a_n$, and $\mathbb E(\I a_n(t)\I^2)=\mathbb E(\I a_n\I^2)$.  
Then:
\begin{enumerate}[(i)]
\item If $\eta_1=3$, we have 
$$\mathbb E(\I a_n\I^2)=1,\,\forall n\geq 1;$$
this case covers SLE$_6$.
\item If $\eta_1=1,\,\eta_2=4$, we have 
$$\mathbb E(\I a_n\I^2)=n,\,\forall n\geq 1;$$ this case covers SLE$_2 $.
\end{enumerate}
}}

\textcolor{black}{This will be proven in several steps, namely for SLE through Theorem \ref{main1} and its Corollary \ref{F6-2}, and for LLE through  Theorem \ref{levy} followed by Remark \ref{theo3.3}. These results will be a by-product of a thorough study of the derivative moments
$F(z):=\mathbb E [|f'_0(z)|^p]$, $p\in \mathbb R$,  of the inner whole-plane SLE or LLE maps. Using \eqref{f0coeff} for $p=2$, one gets the derivative's quadratic moment 
\begin{equation}\label{expansion}
\mathbb E [|f'_0(z)|^2]=\sum_{n,m=1}^{\infty}\mathbb E[a_n\overline{a_m}] nmz^{m-1}\bar z^{m-1},\,\,\,z\in \mathbb D,
\end{equation}
so that its integral means, 
\begin{equation}\label{expansionbis}
\frac{1}{2\pi}\int_0^{2\pi}\mathbb E [|f'_0(|z|e^{i\theta})|^2]=1+\sum_{n\geq 2}\mathbb E(|a_n|^2) n^2 (z\bar z)^{n-1},
\end{equation}
is a generating function for the coefficients' quadratic moments. Our study uses a  partial differential equation satisfied by $F(z)$ above, which is an extension of that derived by Beliaev and Smirnov (BS) in their study of the harmonic measure for SLE \cite{BS}. We follow it by studying  the space of its analytic solutions in the unit disk, among which some special factorized solutions exist. We then develop the same formalism, \textit{i.e.}, the martingale derivation of a BS-like equation for the derivative moments and the construction of special explicit solutions, for the oddified whole-plane SLE and LLE processes \eqref{defoddified}, and for the higher $m$-fold transforms \eqref{defmfold}.}
  \subsubsection{The Beliaev--Smirnov equation} \label{BSderiv} In Ref. \cite{BS}, Beliaev and Smirnov first consider a standard radial {\it (outer)} SLE process ($g_t(z), t\geq 0$), from $\mathbb C\setminus K_t$ to the complement $\mathbb D_{-}$ of the unit disk, where, as usual, $K_t$ denotes the SLE hull at time $t$. This SLE process satisfies a standard ODE, which can be continued to negative times (via a two-sided Brownian motion $B_t$ in the Schramm--Loewner source term $\lambda(t)=e^{i\sqrt{\kappa}B_t}$). The harmonic spectrum is best studied via the inverse map, $g_t^{-1}$, which satisfies a Loewner-type PDE (as the whole-plane evolution considered in this article).  \textcolor{black}{Since the processes $g_t^{-1}$ and $g_{-t}$  have the same law (up to conjugation by $e^{i\sqrt{\kappa}B_t}$), BS \textit{redefine in Ref. \cite{BS} a radial SLE (denoted there by $f_t$) as 
  \begin{equation}
  \tilde f_t(z):=g_{-t}(z)\stackrel{\rm (law)}{=}e^{-i\sqrt{\kappa}B_t}g_t^{-1}(e^{i\sqrt{\kappa}B_t}z),\,t\in \mathbb R,
  \end{equation} thus mapping  $\mathbb D_{-}$ to $\mathbb C\setminus K_{-t}$}.} Then they show that the expectation  
 \begin{equation}\label{tildeF}\tilde{F}(z,t):=\mathbb E(\I \tilde f_t'(z)\I^p),\end{equation}
where $p$ is real, is solution to the differential equation
\begin{eqnarray}\nonumber&&p\frac{r^4+4r^2(1-r\cos \theta)-1}{(r^2-2r\cos \theta+1)^2}\tilde{F}+
\frac{r(r^2-1)}{r^2-2r\cos \theta+1}\tilde{F}_r\\ \label{BSprime}&&-\frac{2r\sin \theta}{r^2-2r\cos \theta+1}\tilde{F}_\theta+\Lambda \tilde{F}-\tilde{F}_t
=0,\end{eqnarray}
with $z=re^{i\theta}$, and where subscripts  represent partial derivatives of $\tilde F$ with respect to $r$, $\theta$ and $t$, and where $\Lambda$ stands for the infinitesimal generator of the SLE driving Brownian process, i.e., $\Lambda=(\kappa/2)\partial^2/\partial\theta^2$. 

To derive \eqref{BSprime}, they consider the {\it martingale} $\mathcal{M}_s:=\mathbb E(\I \tilde f_t(z)'\I^p\mid\mathcal{F}_s)$, where $\mathcal{F}_s$ denotes the $\sigma-$algebra generated by  $\{B_\tau\,;\tau\leq s\}$. From the SLE {\it Markov property}, they show (Lemma 2 in \cite{BS}) that 
$$\mathbb E(\I \tilde f_t'(z)\I^p\mid \mathcal{F}_s)=\I \tilde f_s'(z)\I^p\tilde{F}(z_s,t-s),$$
in terms of the conjugate variable $z_s:=\tilde f_s(z)e^{-i\sqrt{\kappa}B_s}.$   Expressing the fact that the $ds$ drift term vanishes in the  It\^o derivative of the right-hand side then gives equation \eqref{BSprime} above.

The next step in their derivation is to remark that, by {\it stationarity},  the limit  of $e^{-t}\tilde f_t(z)$ as $t\to+\infty$ exists, and has the same law as the value  $\hat f_0(z)$ at proper time zero of the {\it (outer) whole-plane} SLE (denoted by $F_0(z)$ in \cite{BS}). Rewrite  $\tilde F(z,t)$ \eqref{tildeF} above trivially as 
\begin{equation} \label{Fzt}\tilde{F}(z,t)=e^{\sigma pt}\mathbb E(\I e^{-\sigma t}f_t'(z)\I^p),\,\, (\sigma=1),\end{equation} 
to obtain
\begin{equation}\label{Fdef}F(z):=\mathbb E [ |\hat f_0'(z)|^p]=\lim_{t\to \infty} e^{-\sigma pt}\tilde F(z,t).\end{equation}
 Note that this exterior whole-plane map $\hat f_0(z)$, acting on the exterior $\mathbb D_{-}$ of the unit disk, has precisely the same law as the conjugate via $z\mapsto 1/z$ of the  interior whole-plane SLE map $f_0(z)$ that we consider in this article.  
Substituting \eqref{Fzt} into \eqref{BSprime} and taking the large $t$ limit, BS thereby obtain the following equation for $F(z)$
\begin{eqnarray}\nonumber&&p\left(\frac{r^4+4r^2(1-r\cos \theta)-1}{(r^2-2r\cos \theta+1)^2}-\sigma\right)F
+\frac{r(r^2-1)}{r^2-2r\cos \theta+1}F_r\\ \label{BS}&-&\frac{2r\sin \theta}{r^2-2r\cos \theta+1}F_\theta+\Lambda F=0.
\end{eqnarray}
 For our {\it interior} case, we  similarly introduce the function $\tilde f_t, t\geq 0$, as the continuation $g_{-t}$ of the standard {\it inner} radial SLE process $g_t$ to negative times, which has the same law as its inverse map $g_t^{-1}$.  Then the limit $e^t \tilde f_t(z)$ as $t\to +\infty$ exists, and has  the same law as the {\it inner whole-plane} process $f_0(z)$ considered in this work.  This amounts to formally taking $\sigma=-1$ in \eqref{Fzt}, in effect changing the sign of the term $-\sigma pF$ in \eqref{BS} that results from the time-derivative term in \eqref{BSprime}. This simple observation results in:    \begin{prop} \label{pBS1} 
   For the interior whole-plane Schramm--Loewner evolution, as considered here in the setting of Theorem \ref{theolle0}, the expected  moments of the derivative modulus, $F(z)=\mathbb E(|f_0'(z)|^p)$, satisfy the Beliaev--Smirnov equation \eqref{BS} with $\sigma=-1,$ and $\Lambda=(\kappa/2)\partial^2/\partial\theta^2$ the generator of the driving Brownian process.
\end{prop}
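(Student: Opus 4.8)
The plan is to reproduce, for the \emph{inner} whole-plane map, the Beliaev--Smirnov martingale derivation recalled just above, the single new input being a sign in the time-normalization. First I would realize the inner whole-plane SLE as a large-time limit of radial SLE: starting from the standard inner radial process $g_t$, continued to negative times through a two-sided Brownian driving function $B_t$, $t\in\mathbb R$, I set $\tilde f_t:=g_{-t}$, which by the reversal property noted above has the same law as $g_t^{-1}$. The object to analyze is $\tilde F(z,t):=\mathbb E(|\tilde f_t'(z)|^p)$.

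The core step is the drift computation. Following \cite{BS}, consider the martingale $\mathcal M_s:=\mathbb E(|\tilde f_t'(z)|^p\mid\mathcal F_s)$. The SLE Markov property gives, exactly as in Lemma 2 of \cite{BS},
\begin{equation*}
\mathbb E(|\tilde f_t'(z)|^p\mid\mathcal F_s)=|\tilde f_s'(z)|^p\,\tilde F(z_s,t-s),\qquad z_s:=\tilde f_s(z)e^{-i\sqrt{\kappa}B_s}.
\end{equation*}
Since $s\mapsto\mathcal M_s$ is a martingale, the $ds$-drift in its It\^o differential must vanish. I would compute this differential using the radial Loewner ODE satisfied by $\tilde f_s$, the induced dynamics of the conjugate variable $z_s$, and the generator $\Lambda=(\kappa/2)\partial_\theta^2$ of the driving process; setting the drift to zero produces the time-dependent equation \eqref{BSprime} for $\tilde F(z,t)$. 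This calculation is identical, term by term, to the exterior one, since neither the Markov property nor the local It\^o expansion sees which whole-plane limit will eventually be taken.

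The sign enters only at the last step, the passage to the whole-plane limit. By stationarity of the increments of $B$, $e^{t}\tilde f_t(z)$ converges in law as $t\to+\infty$ to the inner whole-plane map $f_0(z)$ of Theorem \ref{theolle0}; this is the interior analogue of the exterior statement that $e^{-t}\tilde f_t\to\hat f_0$. Writing $\tilde F(z,t)=e^{\sigma pt}\,\mathbb E(|e^{-\sigma t}\tilde f_t'(z)|^p)$ as in \eqref{Fzt}, the correct interior normalization is $e^{-\sigma t}=e^{t}$, i.e. $\sigma=-1$, whence $F(z)=\mathbb E(|f_0'(z)|^p)=\lim_{t\to\infty}e^{-\sigma pt}\tilde F(z,t)$ as in \eqref{Fdef}. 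Substituting this into \eqref{BSprime} and letting $t\to\infty$ removes the $\tilde F_t$ term and converts the coefficient $p$ multiplying $\tilde F$ into the $p(\,\cdots-\sigma)$ structure of \eqref{BS} with $\sigma=-1$, which is the claimed equation.

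The genuine work, and the main obstacle, lies in justifying the limit rigorously: one must show that $\tilde F(\cdot,t)$ together with its first and second spatial derivatives converge, locally uniformly on $\mathbb D$, to $F$ and its derivatives, so that the $t\to\infty$ limit may be taken \emph{inside} the PDE. This requires a priori regularity and uniform bounds on $\tilde F$ on compact subsets of $\mathbb D$, together with the existence of the stationary limit $e^{t}\tilde f_t\to f_0$ in a sufficiently strong (derivative) sense. Once those analytic facts are in hand, the proposition reduces to the sign bookkeeping above, and the identification $\sigma=-1$ --- forced by the interior normalization $f_t'(0)=e^{t}$ rather than the exterior $e^{-t}$ --- is the only substantive difference from the Beliaev--Smirnov result.
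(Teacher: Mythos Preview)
Your proposal is correct and follows essentially the same route as the paper: realize the inner whole-plane map as the $t\to+\infty$ limit of $e^t\tilde f_t$ with $\tilde f_t=g_{-t}$, repeat the Beliaev--Smirnov martingale/It\^o computation verbatim to obtain \eqref{BSprime}, and then observe that the interior normalization $f_t'(0)=e^t$ forces $\sigma=-1$ in \eqref{Fzt}--\eqref{BS}. The paper treats the limit passage at the same level of formality you flag in your final paragraph, so your caution about justifying convergence of derivatives is appropriate but not an omission relative to the source.
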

Finally, note that the BS derivation for SLE, as recalled above, is also valid for the L\'evy--Loewner evolution, which possesses the same Markov property,  together with the existence of similar whole-plane stationary limits.  Stochastic calculus (and It\^o formula) can be generalized to L\'evy processes \cite{applebaum}, resulting in the same martingale argument. As mentioned by Beliaev and Smirnov in \cite{BS}, one simply has to take  for $\Lambda$ in \eqref{BS} the generator of the driving L\'evy process. 
 We therefore state:  
   \begin{prop} \label{pBS2} For the interior whole-plane L\'evy--Loewner evolution, as considered here in the setting of Theorem \ref{theolle0}, the expected moments of the derivative modulus, $F(z)=\mathbb E(|f_0'(z)|^p)$, satisfy the Beliaev--Smirnov equation \eqref{BS} with $\sigma=-1,$ and $\Lambda$ the generator of the driving L\'evy process.
\end{prop}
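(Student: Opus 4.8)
The plan is to repeat verbatim the Beliaev--Smirnov martingale derivation recalled above for Proposition~\ref{pBS1}, now driving the Loewner flow by the L\'evy process $L_t$ of \eqref{LLE} in place of $\sqrt{\kappa}B_t$, and to isolate the single place where the law of the driver actually enters. First I would introduce the inner continuation $\tilde f_t:=g_{-t}$ of the radial L\'evy--Loewner process and form the martingale $\mathcal{M}_s:=\mathbb E\big(|\tilde f_t'(z)|^p\mid \mathcal{F}_s\big)$, where $\mathcal{F}_s$ is now the natural filtration of $L$. Since a L\'evy process enjoys exactly the stationarity and (strong) Markov properties (a)--(b) used in the proof of Lemma~2 of \cite{BS}---Gaussianity (c) playing no role there---the same factorization holds,
\begin{equation*}
\mathbb E\big(|\tilde f_t'(z)|^p\mid \mathcal{F}_s\big)=|\tilde f_s'(z)|^p\,\tilde F(z_s,t-s),\qquad z_s:=\tilde f_s(z)\,e^{-iL_s},
\end{equation*}
with $\tilde F$ as in \eqref{tildeF}, so that $\mathcal M_s$ is by construction a martingale.

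The second step is to write the semimartingale decomposition of $\mathcal M_s$ through the L\'evy--It\^o formula and to impose that its finite-variation (drift) part vanish identically. Here I would emphasize that the terms arising from the deterministic Loewner dynamics---the evolution of $|\tilde f_s'(z)|^p$ and the advection of the point $z_s$ under \eqref{loewner}---are \emph{identical} to the Brownian computation, because between jumps the flow obeys the same ODE and does not see the law of the driver. The only change is that the second-order Brownian generator $(\kappa/2)\partial^2/\partial\theta^2$ is replaced by the full infinitesimal generator $\Lambda$ of the driving process. Indeed, since $z_s=\tilde f_s(z)e^{-iL_s}$ is merely rotated by $e^{-iL_s}$, the randomness of $L$ acts on $\tilde F$ solely through the angular variable $\theta=\arg z_s$; hence $\Lambda$ is exactly the L\'evy generator in $\theta$, i.e.\ the Fourier multiplier sending $e^{ik\theta}$ to $-\eta_k e^{ik\theta}$ in the notation of \eqref{Levychar} (which for $\eta(\xi)=\kappa\xi^2/2$ recovers $(\kappa/2)\partial^2/\partial\theta^2$). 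Setting the drift to zero then reproduces the analogue of \eqref{BSprime} with this $\Lambda$, and letting $t\to\infty$ while using the stationary whole-plane limit---which for the inner map forces $\sigma=-1$, exactly as in the passage from \eqref{BSprime} to \eqref{BS} in Proposition~\ref{pBS1}---yields the claimed equation \eqref{BS} for $F(z)=\mathbb E(|f_0'(z)|^p)$.

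The hard part will be the analytic justification of the jump calculus rather than any new algebra, since the algebraic skeleton is dictated by the SLE case. One must check that $|\tilde f_s'(z)|^p$ is integrable and that $\tilde F(\cdot,t-s)$ is regular enough in $\theta$ for the nonlocal term
\begin{equation*}
\int_{\mathbb R\setminus\{0\}}\big[\tilde F(\theta+y)-\tilde F(\theta)-y\,\partial_\theta\tilde F(\theta)\,1_{[-1,1]}(y)\big]\,d\nu(y)
\end{equation*}
to converge---this is precisely where the L\'evy--Khintchine integrability $\int(1\wedge y^2)\,d\nu<\infty$ is needed---and that the compensated Poisson part is a genuine martingale, so that the entire jump contribution splits cleanly into a martingale plus this nonlocal piece of $\Lambda$. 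One should also observe that, since $\lambda(t)=e^{iL_t}$ remains on the unit circle for every c\`adl\`ag path, each $\tilde f_s$ stays a conformal map and the associated family of hulls stays increasing, so the geometric framework underlying the BS argument survives across jumps. Granted these points, the drift structure is word for word that of \cite{BS} with $\Lambda$ the L\'evy generator, and Proposition~\ref{pBS2} follows.
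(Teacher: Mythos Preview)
Your proposal is correct and follows the same approach as the paper. The paper's own justification is in fact much briefer than yours: it simply notes that the BS derivation carries over because the L\'evy--Loewner evolution enjoys the same Markov property and stationary whole-plane limit, that It\^o calculus generalizes to L\'evy processes (citing \cite{applebaum}), and that---as already remarked by Beliaev and Smirnov---one need only replace the Brownian generator by the L\'evy generator $\Lambda$ in \eqref{BS}. Your write-up fleshes out exactly this sketch, correctly identifying that properties (a)--(b) suffice for the factorization, that the randomness enters only through the angular variable, and that the analytic care lies in the jump calculus rather than any new algebra.
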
 


\subsubsection{Whole-plane SLE solutions} \label{wpsle}We first study the SLE$_\kappa$ case. 
Let us switch to $z,\bar z$ variables, instead of polar coordinates, and write $F(z)$ above as 
where \textcolor{black}{\begin{eqnarray} \label{FZZ} F(z)&=&\mathbb E(|f_0'(z)|^p)=F(z,\bar z)\\ \label{z1z2}
F(z_1,z_2)&:=&\mathbb E[(f_0'(z_1))^{p/2}(\bar f_0'(z_2))^{p/2}]\\ \label{zbarz} \bar f_0'(z)&:=&\overline{f_0'(\bar z)}. \end{eqnarray}
Note that the function $F(z_1,z_2)$ is holomorphic in the bi-disk $\mathbb D\times \mathbb D$, or in its inverse $\mathbb D_-\times \mathbb D_-$ for the exterior case (expectation and derivation can be interchanged). This allows one to consider hereafter the variables $z$ and $\bar z$ as formally independent in $F(z,\bar z)=\mathbb E[(f_0'(z))^{p/2}(\bar f_0'(\bar z))^{p/2}]$.} 
Using  $\partial:=\partial_z$, $\overline{\partial}:=\partial_{\bar z}$, Eq. \eqref{BS} then becomes 
\textcolor{black}{\begin{eqnarray}\mathcal P(D)[F(z,\bar z)]&=&0,\label{zz1}\\ \nonumber 	\mathcal P(D)&:=&-\frac{\kappa}{2}(z\partial-\bar z\overline{\partial})^2 +\frac{z+1}{z-1}z\partial +\frac{\bar z+1}{\bar z-1}\bar z\overline{\partial} \\  \label{PD} &&-p\left[\frac{1}{(z-1)^2}+\frac{1}{(\bar z-1)^2}+\sigma- 1\right].\end{eqnarray}}
To study this equation for the interior case $z\in \mathbb D$ and $\sigma=-1$, we shall need the three lemmas below.
\begin{lemma} \label{lem0}The space of formal series \textcolor{black}{$F(z,\bar z)=\sum_{k,\ell\in\mathbb N}a_{k,\ell}z^k\bar z^{\ell}$ in non-negative integer powers of $z,\bar{z}$ with complex coefficients} that are solutions of ($\ref{zz1}$) is one-dimensional.
\end{lemma}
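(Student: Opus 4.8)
The plan is to substitute the formal series $F(z,\bar z)=\sum_{k,\ell\geq 0}a_{k,\ell}z^k\bar z^\ell$ into $\mathcal P(D)F=0$ and read off a recursion for the coefficients, graded by the total degree $k+\ell$. First I would record that on monomials the Euler operators act diagonally, $z\partial(z^k\bar z^\ell)=k\,z^k\bar z^\ell$ and $\bar z\bar\partial(z^k\bar z^\ell)=\ell\,z^k\bar z^\ell$, so that $(z\partial-\bar z\bar\partial)^2(z^k\bar z^\ell)=(k-\ell)^2z^k\bar z^\ell$. The two remaining pieces of $\mathcal P(D)$ carry the rational functions $\frac{z+1}{z-1}$ and $\frac{1}{(z-1)^2}$ (and their conjugates), which are holomorphic in $\mathbb D$ and which I expand as the formal power series $\frac{z+1}{z-1}=-1-2\sum_{m\geq 1}z^m$ and $\frac{1}{(z-1)^2}=\sum_{m\geq 0}(m+1)z^m$. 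Multiplication by these series shifts powers of $z$ (resp.\ $\bar z$) upward by $m$, so $\mathcal P(D)$ is a well-defined operator on $\mathbb C[[z,\bar z]]$ whose degree-raising part only increases the $z$-power at fixed $\bar z$-power, or the $\bar z$-power at fixed $z$-power.

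The main computation is to isolate the degree-preserving (diagonal) part of $\mathcal P(D)$. Collecting the $m=0$ terms of the two expansions together with the constant $\sigma-1$, the operator $\mathcal P(D)$ acts on $z^K\bar z^L$, modulo monomials of strictly higher total degree, as multiplication by
$$D_{K,L}:=-\tfrac{\kappa}{2}(K-L)^2-K-L-p-p-p(\sigma-1).$$
Here the key point is that for the interior value $\sigma=-1$ the three $p$-contributions cancel, $-p-p+2p=0$, leaving $D_{K,L}=-\tfrac{\kappa}{2}(K-L)^2-(K+L)$, which is independent of $p$. Since $\kappa\geq 0$ and $K+L>0$ for $(K,L)\neq(0,0)$, one has $D_{K,L}<0$, hence $D_{K,L}\neq 0$, for every $(K,L)\neq(0,0)$, while $D_{0,0}=0$.

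I would then conclude by the triangular structure. The coefficient of $z^K\bar z^L$ in $\mathcal P(D)F$ equals $D_{K,L}a_{K,L}$ plus a linear combination of coefficients $a_{k,L}$ with $k<K$ and $a_{K,\ell}$ with $\ell<L$, all of total degree strictly below $K+L$; in particular there is no coupling among coefficients of the same total degree. Thus $\mathcal P(D)F=0$ becomes, degree by degree, $D_{K,L}a_{K,L}=(\text{explicit function of strictly lower-degree coefficients})$. At $(0,0)$ both sides vanish, so $a_{0,0}$ is a free parameter; for every other $(K,L)$, since $D_{K,L}\neq 0$, the coefficient $a_{K,L}$ is uniquely determined. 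Ordering the unknowns by $k+\ell$, the recursion has no obstruction and produces, for each value of $a_{0,0}$, exactly one formal solution. Hence $F\mapsto a_{0,0}$ is a linear isomorphism from the solution space onto $\mathbb C$, and that space is one-dimensional.

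The step I expect to be the main obstacle is the bookkeeping in the diagonal computation: one must expand the two rational coefficients correctly as power series about $0$, keep track of which contributions are genuinely degree-preserving, and verify the cancellation of the $p$-terms that is special to $\sigma=-1$. (For the exterior case $\sigma=+1$ this cancellation fails, $D_{K,L}$ acquires an extra $-2p$, and the clean one-dimensionality can break at the exceptional values of $p$ where $D_{K,L}$ vanishes off the origin --- which is precisely the mechanism behind the phase transitions studied later.) Once the diagonal eigenvalue is pinned down, the triangularity and the conclusion are routine.
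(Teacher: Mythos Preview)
Your proof is correct and follows essentially the same approach as the paper: both compute the degree-preserving part of $\mathcal P(D)$ on $z^k\bar z^\ell$ as $-\big[\tfrac{\kappa}{2}(k-\ell)^2+k+\ell\big]$, note that it vanishes only at $(0,0)$, and conclude by triangularity (the paper phrases this as a minimal-nonzero-term contradiction, you as an explicit recursion). One small correction to your aside: for $\sigma=+1$ the first failure is already at $D_{0,0}=-2p\neq 0$, which forces $a_{0,0}=0$ rather than leaving it free, and the integral-means-spectrum phase transitions in the paper are not driven by this formal-series mechanism but by a separate boundary-layer analysis near $\partial\mathbb D$.
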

\demo The Lemma is an easy consequence of the two following observations:\\
First,  the differential operator, $\mathcal P(D)$, involved in \eqref{zz1},  is polynomial in $z\partial$ and $\bar{z}\overline{\partial}$, and the monomials $z^k\bar z^{\ell}$ are eigenvectors of the latter two operators.\\
Second, the \textcolor{black}{non-differential term in $\mathcal P(D)$}, may be written as $A(z)+A(\bar{z})$, with $A(0)=0$.\\
Now, assuming that $G$ is a solution of ($\ref{zz1}$) with $G(0,0)=0$, it suffices to prove that, necessarily, $G= 0$. We argue by contradiction: If not,  consider the minimal (necessarily non constant) term $a_{k,\ell} z^k{\bar z}^{\ell}$ in the series, with $a_{k,\ell}\neq 0$ and $k+\ell$ minimal (and non vanishing). Then $\mathcal P(D)[F]$ will have a minimal non-vanishing term of the form
$$-a_{k,\ell}\big[\frac{\kappa}{2}(k-\ell)^2+k+\ell\big]z^k\bar{z}^{\ell},$$
contradicting the fact that $\mathcal P(D)[F]=0$.\\
\begin{lemma}\label{lem1}
The quantity 
$F(z,\bar z=0)
$
satisfies the boundary equation obtained by setting $\bar z=0$ in \eqref{zz1} (here $\sigma=-1$):
\begin{eqnarray} \mathcal P(\partial)[F(z,0)]:=\left\{-\frac{\kappa}{2}(z\partial)^2 +\frac{z+1}{z-1}z\partial -p\left[\frac{1}{(z-1)^2}- 1\right]\right\}F(z,0)=0.
\label{zz0}
\end{eqnarray}
The complex conjugate equation also holds for $F(z=0,\bar z).$
\end{lemma}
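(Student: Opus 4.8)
The plan is to read off the boundary equation \eqref{zz0} directly from the full equation $\mathcal P(D)[F]=0$ of \eqref{zz1}--\eqref{PD} by substituting $\bar z=0$ and tracking, term by term, which contributions survive. The structural fact I would exploit is the holomorphy of $F(z,\bar z)$ in the bidisk $\mathbb D\times\mathbb D$, noted after \eqref{z1z2}: this lets me write $F(z,\bar z)=\sum_{k,\ell\in\mathbb N}a_{k,\ell}z^k\bar z^{\ell}$ as a convergent power series, so that $F(z,0)=\sum_k a_{k,0}z^k$ is simply its $\ell=0$ part, and evaluation at $\bar z=0$ commutes with all $z$-operations (which act on an independent variable). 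The specialization to the interior case is governed by $\sigma=-1$ from Proposition~\ref{pBS1}.

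First I would treat the second-order term. Since $z\partial$ and $\bar z\overline{\partial}$ act on independent variables they commute, so
\[
(z\partial-\bar z\overline{\partial})^2=(z\partial)^2-2(z\partial)(\bar z\overline{\partial})+(\bar z\overline{\partial})^2.
\]
The operator $\bar z\overline{\partial}$ multiplies the monomial $z^k\bar z^{\ell}$ by $\ell$, so both $(z\partial)(\bar z\overline{\partial})F$ and $(\bar z\overline{\partial})^2F$ have vanishing $\bar z^0$-component; equivalently, each carries an explicit prefactor $\bar z$ and $\overline{\partial}F$ is bounded near $\bar z=0$ by holomorphy, so both drop at $\bar z=0$. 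Only $(z\partial)^2$ survives, yielding $-\tfrac{\kappa}{2}(z\partial)^2F(z,0)$.

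Next I would handle the remaining terms. The term $\frac{z+1}{z-1}z\partial F$ involves only $z$-operations and survives unchanged as $\frac{z+1}{z-1}(z\partial)F(z,0)$. The term $\frac{\bar z+1}{\bar z-1}\bar z\overline{\partial}F$ vanishes at $\bar z=0$: its coefficient is regular there, with $\frac{\bar z+1}{\bar z-1}\big|_{\bar z=0}=-1$ finite, while $\bar z\overline{\partial}F$ starts at order $\bar z^1$, so the product contributes nothing to the $\bar z^0$ part. For the non-differential term I would use $\frac{1}{(\bar z-1)^2}\big|_{\bar z=0}=1$, so the bracket $\frac{1}{(z-1)^2}+\frac{1}{(\bar z-1)^2}+\sigma-1$ collapses at $\bar z=0$ to $\frac{1}{(z-1)^2}+\sigma=\frac{1}{(z-1)^2}-1$ upon inserting $\sigma=-1$. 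Collecting the surviving pieces reproduces \eqref{zz0} exactly, and the conjugate equation for $F(0,\bar z)$ follows from the manifest $z\leftrightarrow\bar z$ symmetry of $\mathcal P(D)$.

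I do not expect a genuine obstacle here; the proof is essentially an exercise in extracting the $\bar z^0$ coefficient of the identity $\mathcal P(D)[F]=0$. The one point I would state explicitly is the justification that the $\bar z$-prefactored terms drop out at $\bar z=0$, which rests on the holomorphy in the bidisk (so that $\overline{\partial}F$ is bounded, equivalently the series carries no negative powers), exactly as invoked in Lemma~\ref{lem0}. It is worth noting that the coefficient $\frac{\bar z+1}{\bar z-1}$ stays finite at $\bar z=0$, so no delicate cancellation of a singularity against the vanishing factor $\bar z$ is required, and the reduction is entirely elementary.
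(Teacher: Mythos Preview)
Your proof is correct and follows essentially the same approach as the paper's own argument: both exploit the bi-holomorphy of $F(z,\bar z)$ in $\mathbb D\times\mathbb D$ to treat $z$ and $\bar z$ as independent variables, so that the identity $\mathcal P(D)[F]=0$ can be specialized to $\bar z=0$. The paper states this more tersely (``the variables $z$ and $\bar z$ can be considered as two independent complex variables''), whereas you carry out the term-by-term reduction explicitly, but the substance is the same.
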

\begin{demo} The bi-analytic function $F(z,\bar z)$ 
 has a double series expansion of the form $F(z,\bar z)=\sum_{k\geq 0,\ell\geq 0} a_{k,l}z^k{\bar z}^{\ell}.$ When acting on it with a differential operator $\mathcal P (D)$ as in \eqref{zz1}, the resulting double series must vanish identically, hence all its coefficients $a_{k,\ell}$ must as well. This implies that the variables $z$ and $\bar z$ can be considered as two {\it independent} complex variables in \eqref{zz1}. By symmetry, the complex conjugate equation also holds for $F(z=0,\bar z)$. 
\end{demo}
\begin{lemma} \label{lem2}The action of the operator $\mathcal P(D)$ $(\ref{PD})$ on a function of the factorized form $F(z,\bar{z})=\vphi(z)\overline{\vphi}(\bar{z}) P(z,\bar{z})$, \textcolor{black}{with the definition $\bar\varphi(\bar z):=\overline{\varphi(z)}$,} is by Leibniz's rule given by
\begin{eqnarray}\nonumber\mathcal P(D)[\varphi\bar\varphi P]=&-&\frac{\kappa}{2}\vphi\overline{\vphi}(z\partial-\bar z\bar{\partial})^2 P-\kappa(z\partial-\bar z\overline{\partial})(\vphi\overline{\vphi})(z\partial-\bar z\overline{\partial})P\\ \nonumber
&+&\kappa (z\partial\vphi)(\bar z\overline{\partial}\overline{\vphi})P
+\vphi\overline{\vphi}\frac{z+1}{z-1}z\partial P+\vphi\overline{\vphi}\frac{\bar z+1}{\bar z-1}\bar z\overline{\partial}P\\ \nonumber
&+&\left[-\frac{\kappa}{2}\overline{\vphi}(z\partial)^2\vphi
-\frac{\kappa}{2}\vphi (\bar z\overline{\partial})^2 \overline{\vphi}
+\overline{\vphi}\frac{z+1}{z-1}z\partial\vphi
+\vphi\frac{\bar z+1}{\bar z-1}\bar z\overline{\partial}\,\overline{\vphi}\right]P\\&-&p\left[\frac{1}{(z-1)^2}+\frac{1}{(\bar z-1)^2}+\sigma-1\right]\vphi\overline{\vphi}P.\label{zzb}
\end{eqnarray}
\end{lemma}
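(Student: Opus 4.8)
The plan is to treat \eqref{zzb} as a pure Leibniz-rule computation, exploiting that the two Euler operators $z\partial$ and $\bar z\bar\partial$ occurring in $\mathcal P(D)$ \eqref{PD} are first order, hence \emph{derivations}, and that they act on disjoint variables. Concretely, I would first record the three structural facts that make everything collapse: $z\partial$ annihilates any function of $\bar z$ alone (in particular $z\partial\bar\varphi=0$), $\bar z\bar\partial$ annihilates any function of $z$ alone (in particular $\bar z\bar\partial\varphi=0$), and the two operators commute. These mixed-annihilation identities are exactly what kill the large number of cross terms that would otherwise appear.

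Next I would dispose of the elementary pieces of $\mathcal P(D)$ applied to $\varphi\bar\varphi P$. The non-differential multiplier $-p[(z-1)^{-2}+(\bar z-1)^{-2}+\sigma-1]$ simply reproduces itself times $\varphi\bar\varphi P$, giving the last term of \eqref{zzb}. For the first-order term $\tfrac{z+1}{z-1}z\partial$, the Leibniz rule splits it into $\varphi\bar\varphi\tfrac{z+1}{z-1}z\partial P$, a transport term appearing in \eqref{zzb}, plus $\bar\varphi\tfrac{z+1}{z-1}(z\partial\varphi)P$, which I would set aside for the coefficient bracket; the conjugate term $\tfrac{\bar z+1}{\bar z-1}\bar z\bar\partial$ is treated symmetrically, yielding the companion transport term together with $\varphi\tfrac{\bar z+1}{\bar z-1}(\bar z\bar\partial\bar\varphi)P$.

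The only genuine work is the second-order term $-\tfrac{\kappa}{2}D^2$ with $D:=z\partial-\bar z\bar\partial$. Setting $A:=\varphi\bar\varphi$ and using that $D$ is a derivation, $D^2(AP)=(D^2A)P+2(DA)(DP)+A\,D^2P$. Because $z\partial$ kills $\bar\varphi$ and $\bar z\bar\partial$ kills $\varphi$, one computes $DA=(z\partial\varphi)\bar\varphi-\varphi(\bar z\bar\partial\bar\varphi)=D(\varphi\bar\varphi)$ and then, applying $D$ once more and again discarding the vanishing mixed actions, $D^2A=\bar\varphi(z\partial)^2\varphi-2(z\partial\varphi)(\bar z\bar\partial\bar\varphi)+\varphi(\bar z\bar\partial)^2\bar\varphi$. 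After multiplying by $-\tfrac{\kappa}{2}$, the $A\,D^2P$ piece becomes the leading term $-\tfrac{\kappa}{2}\varphi\bar\varphi D^2P$ of \eqref{zzb}, the $2(DA)(DP)$ piece becomes $-\kappa\,D(\varphi\bar\varphi)\,DP$, and the $-2(z\partial\varphi)(\bar z\bar\partial\bar\varphi)$ cross term inside $D^2A$ produces precisely the isolated term $+\kappa(z\partial\varphi)(\bar z\bar\partial\bar\varphi)P$.

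Finally I would gather the leftover pieces — namely $-\tfrac{\kappa}{2}\bar\varphi(z\partial)^2\varphi$ and $-\tfrac{\kappa}{2}\varphi(\bar z\bar\partial)^2\bar\varphi$ coming from $D^2A$, together with the two first-order remainders set aside above — into the single bracket multiplying $P$ in \eqref{zzb}. The main obstacle is purely organizational: keeping straight which of the several cross terms generated by $D^2$ belong to the explicit middle terms of \eqref{zzb} and which must be collected into the $P$-coefficient bracket. Since the mixed derivatives all vanish, this bookkeeping terminates cleanly, so I would state the annihilation identities $z\partial\bar\varphi=\bar z\bar\partial\varphi=0$ explicitly at the outset and invoke them at each application of Leibniz's rule.
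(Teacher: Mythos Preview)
Your proposal is correct and is exactly what the paper intends: the lemma is stated as following ``by Leibniz's rule'' and no separate proof is given, so you are simply filling in the straightforward bookkeeping the authors left implicit. Your identification of the key simplifying facts ($z\partial\bar\varphi=0$, $\bar z\bar\partial\varphi=0$, and commutativity of the Euler operators) and your decomposition $D^2(AP)=(D^2A)P+2(DA)(DP)+A\,D^2P$ with the explicit computation of $D^2A$ match precisely how the terms in \eqref{zzb} are organized.
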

\begin{cor} \label{Prad} For the the particular choice: $P(z,\bar z):=P(z\bar z)$, the first line \textcolor{black}{of the r.h.s. of Eq.} \eqref{zzb}  vanishes identically.
\end{cor}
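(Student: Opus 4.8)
The plan is to exploit the fact that the operator $z\partial - \bar z\overline{\partial}$ is, up to a factor of $i$, the angular derivative $\partial_\theta$ in polar coordinates $z = re^{i\theta}$, and hence annihilates any rotation-invariant function, i.e., any function of the modulus $r^2 = z\bar z$ alone. Since both terms in the first line of \eqref{zzb} carry at least one factor of $(z\partial - \bar z\overline{\partial})$ acting on $P$, the whole line will collapse once I verify that $P(z\bar z)$ lies in the kernel of this operator.

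First I would set $u := z\bar z$ and, treating $z$ and $\bar z$ as independent variables (as already justified by the bi-analyticity argument of Lemma \ref{lem1}), apply the chain rule to $P = P(u)$:
$$z\partial P = z\bar z\, P'(u) = u\,P'(u), \qquad \bar z\overline{\partial} P = \bar z z\, P'(u) = u\,P'(u).$$
Subtracting these yields the key identity $(z\partial - \bar z\overline{\partial})P = 0$.

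With this identity established, the two terms comprising the first line of \eqref{zzb} vanish for complementary reasons. The first, $-\frac{\kappa}{2}\varphi\overline{\varphi}\,(z\partial - \bar z\overline{\partial})^2 P$, is zero because one application of $z\partial - \bar z\overline{\partial}$ already annihilates $P$, so a second application of the same operator produces nothing. The second, $-\kappa\,(z\partial - \bar z\overline{\partial})(\varphi\overline{\varphi})\,(z\partial - \bar z\overline{\partial})P$, contains the explicit factor $(z\partial - \bar z\overline{\partial})P = 0$, and therefore vanishes irrespective of how $z\partial - \bar z\overline{\partial}$ acts on $\varphi\overline{\varphi}$. This establishes the corollary.

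There is really no serious obstacle here: the entire content is the single observation that the rotation generator annihilates functions of $z\bar z$. The only point deserving a word of care is the legitimacy of formally differentiating $P(z\bar z)$ with $z$ and $\bar z$ independent, which is exactly the device employed throughout this section and recorded in Lemma \ref{lem1}.
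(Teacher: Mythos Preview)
Your proof is correct and follows the same approach as the paper: both rely on the observation that $z\partial-\bar z\overline{\partial}$ is the angular derivative, hence annihilates any radial function $P(z\bar z)$. The paper's own proof is the one-line remark that ``$P$ is then radial, and the differential operator in the first line acts on the angular variable only,'' of which your argument is simply a detailed unpacking.
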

\begin{demo}
$P$ is then {\it radial}, and the differential operator in the first line acts on the angular variable only.  
\end{demo}
\begin{cor} \label{pdelta}In the interior case ($\sigma=-1$), and for the particular choice:
\begin{eqnarray}
\varphi(z)=F(z,\bar z=0),\,\,\,\bar{\varphi}(\bar z)=F(z=0,\bar z),\end{eqnarray}
the last two lines \textcolor{black}{of the r.h.s. of Eq.} \eqref{zzb} in Lemma \ref{lem2} 
 vanish identically.
\end{cor}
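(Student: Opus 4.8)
The plan is to reorganize the last two lines of \eqref{zzb} by separating the purely holomorphic contributions (those acting on $\vphi$ in the variable $z$) from the antiholomorphic ones (those acting on $\overline{\vphi}$ in $\bar z$), and then to recognize each group as the single-variable boundary operator $\mathcal P(\partial)$ of Lemma \ref{lem1}. The factor $P$ plays no active role, since the last two lines multiply it as a whole, so it suffices to show that the bracketed operator acting on $\vphi\overline{\vphi}$ vanishes.

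First I would collect the last two lines of \eqref{zzb}, as the operator multiplying $P$, into the compact form
\begin{eqnarray*}
&&\overline{\vphi}\left[-\frac{\kappa}{2}(z\partial)^2\vphi+\frac{z+1}{z-1}z\partial\vphi\right]+\vphi\left[-\frac{\kappa}{2}(\bar z\overline{\partial})^2\overline{\vphi}+\frac{\bar z+1}{\bar z-1}\bar z\overline{\partial}\,\overline{\vphi}\right]\\
&&\qquad -p\left[\frac{1}{(z-1)^2}+\frac{1}{(\bar z-1)^2}+\sigma-1\right]\vphi\overline{\vphi},
\end{eqnarray*}
which just regroups the four $\kappa$- and convection-type terms of line three according to whether the derivative falls on $\vphi$ or on $\overline{\vphi}$.

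The crucial step is to use the interior value $\sigma=-1$: the constant becomes $\sigma-1=-2$, which splits symmetrically as $(-1)+(-1)$, one unit being assigned to each sector. This lets me rewrite the displayed quantity as
\begin{eqnarray*}
&&\overline{\vphi}\left\{-\frac{\kappa}{2}(z\partial)^2+\frac{z+1}{z-1}z\partial-p\left[\frac{1}{(z-1)^2}-1\right]\right\}\vphi\\
&&\qquad +\vphi\left\{-\frac{\kappa}{2}(\bar z\overline{\partial})^2+\frac{\bar z+1}{\bar z-1}\bar z\overline{\partial}-p\left[\frac{1}{(\bar z-1)^2}-1\right]\right\}\overline{\vphi}.
\end{eqnarray*}
The first brace is exactly the operator $\mathcal P(\partial)$ of \eqref{zz0}, applied to $\vphi=F(z,0)$, so it vanishes by Lemma \ref{lem1}; the second brace is its complex conjugate, applied to $\overline{\vphi}=F(0,\bar z)$, and vanishes by the conjugate statement of the same lemma. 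Hence the whole expression is zero, and so is its product with $P$, which is the claim.

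I do not expect a genuine obstacle here; the only delicate point is the bookkeeping of the constant term. The cancellation hinges specifically on $\sigma=-1$, since only then does $\sigma-1=-2$ distribute as a single $-1$ per sector, matching precisely the $-1$ that appears inside $\mathcal P(\partial)$ in \eqref{zz0} (which itself arose from putting $\bar z=0$, i.e. $1/(\bar z-1)^2=1$, in \eqref{PD}). For $\sigma\neq-1$ a residual constant proportional to $(\sigma+1)\,\vphi\overline{\vphi}$ would survive and the corollary would fail, so this is the one point worth stressing.
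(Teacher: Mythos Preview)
Your proof is correct and follows exactly the same approach as the paper: the paper simply observes that the last two lines of \eqref{zzb} equal $P\times[\bar\varphi\,\mathcal P(\partial)\varphi+\varphi\,\mathcal P(\overline\partial)\bar\varphi]$ and invokes Lemma \ref{lem1}. Your more explicit bookkeeping of the constant term (splitting $\sigma-1=-2$ as $-1$ per sector) and your remark that this is precisely where $\sigma=-1$ is needed are both accurate and useful clarifications of what the paper leaves implicit.
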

\begin{demo} Use Lemma \ref{lem1}. The last two lines of \eqref{zzb} are precisely of the form
$$P\times \left[\bar{\varphi}\mathcal P(\partial)\varphi   +\varphi\,\mathcal P(\overline{\partial}) \bar{\varphi}\right] =0.
$$
 \end{demo} 
\begin{cor}\label{corfond} The function  $$F(z,\bar z)=\mathbb E\big[|f'_0(z)|^p\big]=\mathbb E\big[(f'_0(z))^{p/2}(\overline{f'_0}(\bar z))^{p/2}\big]$$
  is  the unique solution of (\ref{zz1}) such that $F(0,0)=1$. 
   The function  $$\varphi(z)=F(z,0)=\mathbb E\big[(f'_0(z))^{p/2}\big]$$ is the unique solution of  \eqref{zz0} such that $F(z=0,0)=1$. \textcolor{black}{Corollary \ref{pdelta} applies to it.}
  \end{cor}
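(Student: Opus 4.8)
The plan is to assemble the statement directly from the three lemmas already in place, treating the two assertions separately. For the first, I would begin by recording that $F(z,\bar z)=\mathbb E[(f'_0(z))^{p/2}(\overline{f'_0}(\bar z))^{p/2}]$ is genuinely a solution of \eqref{zz1}: this is precisely Proposition \ref{pBS1} (the Beliaev--Smirnov equation with $\sigma=-1$) rewritten in the $z,\bar z$ variables through \eqref{FZZ}--\eqref{PD}. Since $F(z_1,z_2)$ is holomorphic on the bi-disk $\mathbb D\times\mathbb D$ (expectation and differentiation commute there, as noted after \eqref{zbarz}), it admits a power-series expansion $\sum_{k,\ell\ge 0}a_{k,\ell}z^k\bar z^\ell$, so it lies in the class covered by Lemma \ref{lem0}. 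The normalization is immediate: $f'_0(0)=1$ forces $F(0,0)=\mathbb E[1]=1$.

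Uniqueness then follows from Lemma \ref{lem0}. The clean phrasing: if $F_1,F_2$ are two power-series solutions of \eqref{zz1} with $F_i(0,0)=1$, then $G:=F_1-F_2$ is a solution with $G(0,0)=0$, and the minimal-term contradiction argument used in the proof of Lemma \ref{lem0} forces $G\equiv 0$. Hence the solution normalized to $1$ at the origin is unique, and $F$ is that solution.

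For the second assertion I would first identify $\varphi(z)=F(z,0)$. Setting $\bar z=0$ in the definition and using $\overline{f'_0}(0)=\overline{f'_0(0)}=1$ gives $\varphi(z)=\mathbb E[(f'_0(z))^{p/2}]$, with $\varphi(0)=F(0,0)=1$. By Lemma \ref{lem1}, $\varphi$ solves the one-variable boundary equation \eqref{zz0}. Uniqueness is the one-variable analogue of Lemma \ref{lem0}, which I would establish by the same minimal-term argument: acting with $\mathcal P(\partial)$ on a monomial $z^k$, the differential part contributes $-[\tfrac{\kappa}{2}k^2+k]z^k$, while the non-differential potential $-p[(z-1)^{-2}-1]$ starts only at order $z^{k+1}$, since $(z-1)^{-2}-1=2z+\cdots$. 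For $k\ge 1$ the factor $\tfrac{\kappa}{2}k^2+k$ is strictly positive, so a solution vanishing at the origin cannot carry a nonzero minimal term; it must vanish, and the space of power-series solutions of \eqref{zz0} is one-dimensional. Normalizing by $\varphi(0)=1$ then pins down $\varphi$. That Corollary \ref{pdelta} applies is automatic, since $\varphi=F(\cdot,0)$ and $\bar\varphi=F(0,\cdot)$ are exactly the functions for which that corollary was formulated.

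The only point requiring real care — the main, and rather mild, obstacle — is the one-variable uniqueness: one must verify that the potential term does not interfere at the minimal order, i.e. that $(z-1)^{-2}-1$ vanishes at $z=0$, so that the leading-order balance is governed purely by the eigenvalue $\tfrac{\kappa}{2}k^2+k$ of the differential part. Everything else is bookkeeping that quotes Proposition \ref{pBS1} and Lemmas \ref{lem0} and \ref{lem1}.
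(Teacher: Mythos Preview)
Your proof is correct and is precisely the assembly the paper intends: the corollary is stated without an explicit proof there, as an immediate consequence of Proposition~\ref{pBS1} and Lemmas~\ref{lem0}--\ref{lem1}. Your only addition beyond direct citation is the one-variable uniqueness for \eqref{zz0}, and your minimal-term argument (checking that the potential $(z-1)^{-2}-1$ vanishes at $z=0$ so the leading balance is $-[\tfrac{\kappa}{2}k^2+k]z^k$) is exactly the right one-variable analogue of Lemma~\ref{lem0}'s proof.
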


In the particular case $p=2$, and for SLE$_\kappa$, with $\kappa=6$ or $2$, we have obtained above the derivative expectations \eqref{fp1} and \eqref{fp2}:
\begin{eqnarray}\label{kp6-2}
\varphi(z)=\mathbb E\big[f'_0(z)\big]=(1-z)^\alpha,\,\,\, \alpha=1, \kappa=6;\,\,\,\alpha=2,\kappa=2.
\end{eqnarray}
From Corollary \ref{corfond}, we know that they are annihilated by the boundary operator $\mathcal P(\partial)$ of Lemma \ref{lem1} (with a similar result for  the conjugate quantities), and that  the  two last lines of \eqref{zzb}, equal to $\bar\varphi\mathcal P(\partial)[\varphi]  P+\varphi \mathcal P(\bar\partial)[\bar \varphi]  P$, identically vanish.  

Denote then by $\mathcal P_{\textrm{sing}}(D)$ the {\it singular} operator made of the second line of \eqref{zzb}, which contains the  pole at $z=1,\bar z=1$: 
$$\mathcal P_{\textrm{sing}}(D)[\varphi\bar\varphi P]:=\kappa (z\partial\vphi)(\bar z\overline{\partial}\overline{\vphi})P
+\vphi\overline{\vphi}\frac{z+1}{z-1}z\partial P+\vphi\overline{\vphi}\frac{\bar z+1}{\bar z-1}\bar z\overline{\partial}P.$$
For $\varphi_\alpha(z):=(1-z)^\alpha$, its action gives the factorized form
\begin{eqnarray}\label{Psing}
\mathcal P_{\textrm{sing}}(D)[\varphi_\alpha\bar\varphi_\alpha P]=\varphi_\alpha\bar\varphi_\alpha\left[\frac{\kappa\alpha^2z\bar z}{(1-z)(1-\bar z)}+\frac{z+1}{z-1}z\partial +\frac{\bar z+1}{\bar z-1}\bar z\overline{\partial}\right]P.\end{eqnarray}

Thanks to Lemma \ref{Prad}, it is now natural to look for {\it radial} solutions, $P(z,\bar z)=P(z\bar z)$, that make \eqref{Psing} vanish.
 The resulting equation is simply
 \begin{equation}\label{PP'}
 \frac{P'(z\bar z)}{P(z\bar z)}=\frac{\kappa \alpha^2}{2}\frac{1}{1-z\bar z},
 \end{equation}
 which is immediately solved, for $P(0)=1$, into
 \begin{equation}\label{Pzzbar}
 P(z\bar z)=(1-z\bar z)^{-\kappa \alpha^2/2}.
 \end{equation}
From Corollaries  \ref{Prad} and \ref{pdelta}, we obtain that 
\begin{equation}\label{Fzz}
F(z,\bar z)=\frac{(1-z)^\alpha(1-\bar z)^\alpha}{(1-z\bar z)^{\beta}},\,\,\,\,\beta=\frac{\kappa}{2} \alpha^2,\end{equation}
is the unique solution to the differential equation \eqref{zz1}, such  that $F(0,0)=0$, {\it if and only if ${\varphi}_\alpha(z)=(1-z)^\alpha$ is a solution to the boundary equation} \eqref{zz0} of Lemma \ref{lem1}. 
 From \eqref{kp6-2}, we already know this to hold true for $p=2$, in the two cases $\alpha=1,\kappa=6$, or $\alpha=2,\kappa=2$. 
 
 In the general case, we obtain: 
\begin{eqnarray}\label{Pphialpha}
\mathcal P(\partial)[\vphi_\alpha]&=&A(\alpha) \vphi_\alpha+B(\alpha)\vphi_{\alpha-1}+ C(\alpha) \vphi_{\alpha-2}\\
\label{Aalpha}A(\alpha)&:=&- \frac{\kappa}{2}\alpha^2+\alpha+p,\\ \label{Balpha2}
B(\alpha)&:=&\frac{\kappa}{2}\alpha(2\alpha-1)-3\alpha,\\ \label{Calpha}
C(\alpha)&:=& -\frac{\kappa}{2}\alpha(\alpha-1)+2\alpha-p.  
\end{eqnarray}
Notice that $A+B+C=0$.  The boundary equation \eqref{zz0} $\mathcal P(\partial)[\vphi_\alpha]=0$ thus reduces to the set of equations:
$A(\alpha)=0,B(\alpha)=0$, which is solved into
\begin{eqnarray}\label{kpgen}
\alpha=\alpha(\kappa):=\frac{6+\kappa}{2\kappa},\,\,p=p(\kappa):=\frac{(6+\kappa)(2+\kappa)}{8\kappa}.
\end{eqnarray}
This set of values naturally  includes the above cases \eqref{kp6-2} for $\kappa=6$ and  $\kappa=2$. 
From Corollary \ref{corfond}, we therefore obtain the general result:
\begin{theo} \label{main1}The whole-plane SLE$_{\kappa}$ map $f_0(z)$ has derivative moments  
\begin{eqnarray*}\mathbb E\big[(f'_0(z))^{p/2}\big]&=&(1-z)^\alpha,\\
\mathbb E\big[|f'_0(z)|^p\big]&=&\frac{(1-z)^\alpha(1-\bar z)^\alpha}{(1-z\bar z)^\beta},
\end{eqnarray*}
for the special set of exponents $p=\kappa\alpha(\alpha+1)/6={(6+\kappa)(2+\kappa)}/{8\kappa}$, with $\alpha=(6+\kappa)/2\kappa$ and 
$\beta=\kappa \alpha^2/2=(6+\kappa)^2/8\kappa$.
\end{theo}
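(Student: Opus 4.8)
The plan is to assemble the structural results already established in this subsection---Proposition \ref{pBS1}, Lemmas \ref{lem0}--\ref{lem2}, and Corollaries \ref{Prad}--\ref{corfond}---into a single verification, so that the proof amounts to checking one factorized ansatz. By Proposition \ref{pBS1}, the function $F(z,\bar z)=\mathbb E[|f_0'(z)|^p]$ solves the Beliaev--Smirnov equation $\mathcal P(D)[F]=0$ with $\sigma=-1$, and by Lemma \ref{lem0} together with Corollary \ref{corfond} this is the \emph{unique} formal power series solution normalized by $F(0,0)=1$. Hence it suffices to exhibit one solution of the claimed factorized form, with the correct normalization, and to identify the admissible exponents; uniqueness then forces equality.

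First I would substitute the ansatz $F(z,\bar z)=\varphi_\alpha(z)\bar\varphi_\alpha(\bar z)P(z\bar z)$, with $\varphi_\alpha(z)=(1-z)^\alpha$ and $P$ radial, into the Leibniz expansion of Lemma \ref{lem2}. Corollary \ref{Prad} annihilates the first line of \eqref{zzb} because $P$ depends only on the product $z\bar z$, and---anticipating the last step---Corollary \ref{pdelta} annihilates the last two lines \emph{provided} $\varphi_\alpha$ solves the boundary equation \eqref{zz0}. What remains is the singular middle line $\mathcal P_{\textrm{sing}}(D)[\varphi_\alpha\bar\varphi_\alpha P]$, which by the explicit computation \eqref{Psing} reduces, for a radial $P$, to the first-order ODE \eqref{PP'}. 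Solving this with $P(0)=1$ yields $P(z\bar z)=(1-z\bar z)^{-\kappa\alpha^2/2}$, that is $\beta=\kappa\alpha^2/2$, and hence the closed form \eqref{Fzz}.

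The remaining task---and the only genuinely arithmetic step---is to determine for which pairs $(\alpha,p)$ the boundary hypothesis $\mathcal P(\partial)[\varphi_\alpha]=0$ actually holds. I would compute $\mathcal P(\partial)[\varphi_\alpha]$ directly, obtaining the three-term combination \eqref{Pphialpha}, namely $A(\alpha)\varphi_\alpha+B(\alpha)\varphi_{\alpha-1}+C(\alpha)\varphi_{\alpha-2}$, with coefficients \eqref{Aalpha}--\eqref{Calpha}. Since $\varphi_\alpha,\varphi_{\alpha-1},\varphi_{\alpha-2}$ are powers of $(1-z)$ of distinct exponents, they are linearly independent, so the boundary equation forces $A=B=C=0$; but the identity $A+B+C=0$ (noted after \eqref{Calpha}) collapses this to the two conditions $B(\alpha)=0$ and $A(\alpha)=0$. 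The vanishing of $B$ fixes $\alpha=(6+\kappa)/2\kappa$ (the spurious root $\alpha=0$ being discarded), after which the vanishing of $A$ fixes $p=\tfrac{\kappa}{2}\alpha^2-\alpha=(6+\kappa)(2+\kappa)/8\kappa$, which is exactly $\kappa\alpha(\alpha+1)/6$. Substituting these values into \eqref{Fzz} gives the stated formula for $\mathbb E[|f_0'(z)|^p]$, while the boundary restriction $F(z,0)=\varphi_\alpha(z)$ (Corollary \ref{corfond}) gives $\mathbb E[(f_0'(z))^{p/2}]=(1-z)^\alpha$.

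The main point---more a conceptual mechanism than an obstacle, since the preparatory lemmas do the heavy lifting---is the decoupling effected by the factorization: the single PDE splits into an \emph{angular/radial} part, handled by choosing $P$ radial via Corollary \ref{Prad}, and a \emph{boundary} part, handled by the choice $\varphi=F(\cdot,0)$ via Corollary \ref{pdelta}, leaving only the singular pole term to be balanced by $P$. The delicate compatibility is that the \emph{same} function $\varphi_\alpha$ must simultaneously satisfy the boundary equation \eqref{zz0} and produce, through \eqref{Psing}, a source term of precisely radial type $z\bar z/[(1-z)(1-\bar z)]$ that a radial $P$ can absorb; it is this double role that rigidly constrains $(\alpha,p)$ to the one-parameter family \eqref{kpgen} and thereby selects the discrete special points at which a closed-form moment exists.
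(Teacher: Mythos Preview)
Your proof is correct and follows essentially the same approach as the paper: factorize as $\varphi_\alpha\bar\varphi_\alpha P$ with $P$ radial, use Corollaries \ref{Prad} and \ref{pdelta} to kill all but the singular line of \eqref{zzb}, solve \eqref{PP'} for $P$, and then fix $(\alpha,p)$ by imposing the boundary equation \eqref{zz0} via the three-term identity \eqref{Pphialpha} together with $A+B+C=0$. The only cosmetic difference is the order of presentation---you determine $P$ first and then the admissible $(\alpha,p)$, whereas the paper interleaves these---and your explicit remark on the linear independence of $\varphi_\alpha,\varphi_{\alpha-1},\varphi_{\alpha-2}$, which the paper leaves implicit.
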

\begin{cor}\label{F6-2}
The whole-plane SLE$_{\kappa}$ map $f_0(z)$ has first and second derivative moments, for $\kappa=6$:
\begin{eqnarray*}\mathbb E(f'_0(z))=1-z,\,\,\,\,
\mathbb E(|f'_0(z)|^2)=\frac{(1-z)(1-\bar z)}{(1-z\bar z)^3};
\end{eqnarray*}
for $\kappa=2$: 
\begin{eqnarray*}\mathbb E(f'_0(z))=(1-z)^2,\,\,\,\,
\mathbb E(|f'_0(z)|^2)=\frac{(1-z)^2(1-\bar z)^2}{(1-z\bar z)^4}.
\end{eqnarray*}
\end{cor}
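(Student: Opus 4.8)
The plan is simply to specialize Theorem \ref{main1} to the value $p=2$. The corollary requires the first derivative moment $\mathbb{E}(f'_0(z))$, which is the $p/2=1$ instance of Theorem \ref{main1}, together with the second derivative moment $\mathbb{E}(|f'_0(z)|^2)$, which is the $p=2$ instance; both are thus governed by the single exponent choice $p=2$. The first step is therefore to decide for which diffusivities $\kappa$ the admissible exponent $p(\kappa)=(6+\kappa)(2+\kappa)/(8\kappa)$ of \eqref{kpgen} takes the value $2$.

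Solving $p(\kappa)=2$ amounts to $(6+\kappa)(2+\kappa)=16\kappa$, that is $\kappa^2-8\kappa+12=0$, whose only roots are $\kappa=2$ and $\kappa=6$. Hence the admissible pair $(\alpha(\kappa),p(\kappa))$ of \eqref{kpgen} meets the required value $p=2$ exactly for these two cases, which is precisely what makes the closed-form moments of Theorem \ref{main1} applicable here.

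Next I would read off the companion exponents from $\alpha=\alpha(\kappa)=(6+\kappa)/(2\kappa)$ and $\beta=\kappa\alpha^2/2=(6+\kappa)^2/(8\kappa)$. For $\kappa=6$ this gives $\alpha=1$ and $\beta=3$, and for $\kappa=2$ it gives $\alpha=2$ and $\beta=4$. Substituting $p/2=1$ into the first formula of Theorem \ref{main1} yields $\mathbb{E}(f'_0(z))=(1-z)^\alpha$, i.e.\ $1-z$ when $\kappa=6$ and $(1-z)^2$ when $\kappa=2$; substituting $p=2$ into the second formula yields $\mathbb{E}(|f'_0(z)|^2)=(1-z)^\alpha(1-\bar z)^\alpha/(1-z\bar z)^\beta$, which produces the two displayed expressions for $(\alpha,\beta)=(1,3)$ and $(\alpha,\beta)=(2,4)$ respectively.

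Since Theorem \ref{main1} already identifies its closed forms with the genuine expectations (through the uniqueness furnished by Corollary \ref{corfond}), no further justification is needed. As an internal check, the resulting first moments $1-z$ and $(1-z)^2$ coincide with the polynomial expected maps obtained directly from the coefficient expectations in \eqref{fp1} and \eqref{fp2}. There is no real obstacle here: the entire content lies in Theorem \ref{main1}, and the only genuine verification is the elementary quadratic that pins down $\kappa\in\{2,6\}$ as the unique solutions of $p(\kappa)=2$.
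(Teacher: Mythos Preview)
Your proof is correct and follows essentially the same approach as the paper: Corollary \ref{F6-2} is obtained by specializing Theorem \ref{main1} to $p=2$, which forces $\kappa\in\{2,6\}$ via the quadratic $(6+\kappa)(2+\kappa)=16\kappa$, and then reading off $(\alpha,\beta)=(1,3)$ and $(2,4)$. The paper likewise remarks that the set of values \eqref{kpgen} ``naturally includes the above cases \eqref{kp6-2} for $\kappa=6$ and $\kappa=2$,'' and your cross-check against \eqref{fp1}--\eqref{fp2} mirrors this.
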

In the setting of Theorem \ref{theolle},  the coefficient $n^2\mathbb E (|a_n|^2)$ is that of the term of order $(z\bar z)^{n-1}$ in the expansion \eqref{expansion} of $\mathbb E(|f'_0(z)|^2)$. It can be obtained directly from the explicit expressions in Corollary \ref{F6-2}, as $\mathbb E (|a_n|^2)=1$ for $\kappa=6$, and $\mathbb E (|a_n|^2)=n$ for $\kappa=2$. Equivalently, we can evaluate the respective integral means \eqref{expansionbis}
$$\frac{1}{2\pi}\int_{\partial\mathbb D}\mathbb E (\I f_0'(zu)\I^2)\, |d u|=\frac{1+z\bar z}{(1-z\bar z)^3};\,\,\,\frac{1+4z\bar z+(z\bar z)^2}{(1-z\bar z)^4},$$
which establishes the equivalent Corollary \ref{corint}. This achieves for SLE$_\kappa$ the proof of cases {\it (i)} and {\it (ii)}  of Theorem \ref{theolle}. 
\textcolor{black}{As mentioned earlier, the expressions for the second moments in Corollary \ref{F6-2} appeared in Ref. \cite{IL},  as computer-assisted solutions to a  double recursion; 
the set  \eqref{kpgen} was also mentioned there, and was further studied in Refs. \cite{2012arXiv1203.2756L,2013JSMTE..04..007L}.}
\subsubsection{L\'evy--Loewner evolution}\label{llesle}
\begin{theo}\label{levy}
If a L\'evy process has its first $m$ symbols ($m\geq 1$) given by $\eta_j=\kappa j^2/2,  1\leq j\leq m,$ with 
$\kappa=6/(2m-1)$, then the associated L\'evy--Loewner map $f_0(z)$ has the same derivative moments of order $p$ as SLE$_\kappa$, for the particular value of the exponent $p=m(m+1)/(2m-1)$, as given in Theorem \ref{main1} with $\alpha=m, \beta=3m^2/(2m-1).$   
\end{theo}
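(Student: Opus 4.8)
The plan is to show that the closed-form SLE$_\kappa$ solution furnished by Theorem \ref{main1} is \emph{also} an admissible solution of the Beliaev--Smirnov equation for the Lévy--Loewner evolution with the prescribed symbols, and then to conclude by a uniqueness argument. First I would record that the parameters are mutually consistent: substituting $\kappa=6/(2m-1)$ into the SLE relations $\alpha=(6+\kappa)/2\kappa$, $p=\kappa\alpha(\alpha+1)/6$ and $\beta=\kappa\alpha^2/2$ of Theorem \ref{main1} yields exactly $\alpha=m$, $p=m(m+1)/(2m-1)$ and $\beta=3m^2/(2m-1)$, so the candidate function is
$$F(z,\bar z)=\frac{(1-z)^m(1-\bar z)^m}{(1-z\bar z)^{\beta}},\qquad \beta=\frac{3m^2}{2m-1}.$$

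Next I would make explicit the only difference between the SLE and the LLE versions of the operator $\mathcal P(D)$ of \eqref{PD}. By Proposition \ref{pBS2}, the quantity $F(z)=\mathbb E(|f_0'(z)|^p)$ for the Lévy--Loewner map solves \eqref{zz1} with the Gaussian generator $-\frac{\kappa}{2}(z\partial-\bar z\overline{\partial})^2$ replaced by the generator $\Lambda$ of the driving Lévy process. Since the angular part of the monomial $z^k\bar z^\ell$ is $e^{i(k-\ell)\theta}$, and $\Lambda$ acts on the Fourier mode $e^{ij\theta}$ by multiplication by $-\eta_j$, one has $\Lambda[z^k\bar z^\ell]=-\eta_{k-\ell}\,z^k\bar z^\ell$, whereas the Gaussian generator gives $-\frac{\kappa}{2}(k-\ell)^2 z^k\bar z^\ell$. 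Thus the two operators agree on every monomial with $|k-\ell|\le m$, because by hypothesis $\eta_j=\kappa j^2/2$ for $|j|\le m$ (using that $\eta$ is even, so the relation extends to $-m\le j\le m$).

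The decisive step is the observation that \emph{all} monomials occurring in $F$ obey $|k-\ell|\le m$. Indeed, expanding $(1-z)^m(1-\bar z)^m=\sum_{0\le a,b\le m}\binom{m}{a}\binom{m}{b}(-1)^{a+b}z^a\bar z^b$ and multiplying by the purely radial factor $(1-z\bar z)^{-\beta}=\sum_{j\ge 0}d_j\,(z\bar z)^j$, a general term is $z^{a+j}\bar z^{b+j}$, whose mode index is $(a+j)-(b+j)=a-b$ with $|a-b|\le m$. Consequently $\Lambda[F]=-\frac{\kappa}{2}(z\partial-\bar z\overline{\partial})^2[F]$ identically, so $F$ satisfies the LLE Beliaev--Smirnov equation if and only if it satisfies the SLE one — which it does by Theorem \ref{main1}. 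Setting $\bar z=0$ (recall $\overline{f_0'}(0)=1$) then also yields $\mathbb E[(f_0'(z))^{p/2}]=(1-z)^m$ as the associated boundary solution of Lemma \ref{lem1}.

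It remains to identify $F$ with the sought expectation through uniqueness, and this is where I expect the only genuine care to be needed. I would re-run the minimal-term argument proving Lemma \ref{lem0}, now for the Lévy operator: if $G$ is a formal power-series solution with $G(0,0)=0$, its lowest-degree term $a_{k,\ell}z^k\bar z^\ell$ (with $k+\ell\ge 1$) produces in $\mathcal P(D)[G]$ the lowest-degree contribution $-a_{k,\ell}\big[\eta_{k-\ell}+k+\ell\big]z^k\bar z^\ell$, exactly as in the Gaussian case but with $\tfrac{\kappa}{2}(k-\ell)^2$ replaced by $\eta_{k-\ell}$. Since $\eta_{k-\ell}\ge 0$ for a real even Lévy symbol and $k+\ell\ge 1$, the bracket is strictly positive, forcing $a_{k,\ell}=0$, a contradiction; hence the solution space is one-dimensional. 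Normalising $F(0,0)=1$ then pins down $F=\mathbb E(|f_0'(z)|^p)$ for the Lévy--Loewner map. The main obstacle is thus not the singular Beliaev--Smirnov part of the operator — which is unchanged from the SLE case — but rather the verification that the mode support of the explicit solution stays within the band $|k-\ell|\le m$ on which the Lévy and Gaussian generators are forced to coincide; once that is in place, the result follows from Theorem \ref{main1} and the adapted uniqueness lemma.
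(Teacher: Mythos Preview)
Your proof is correct and follows essentially the same route as the paper's: both note that $\Lambda$ and the Gaussian generator $-\tfrac{\kappa}{2}(z\partial-\bar z\overline{\partial})^2$ act identically on the monomials $z^k\bar z^\ell$ with $|k-\ell|\le m$, and that the explicit solution $F(z,\bar z)=(1-z)^m(1-\bar z)^m(1-z\bar z)^{-\beta}$ from Theorem~\ref{main1} has its full mode support in this band (the radial factor $(1-z\bar z)^{-\beta}$ being annihilated by $\Lambda$). Your write-up is in fact slightly more careful than the paper's, in that you explicitly re-run the minimal-term argument of Lemma~\ref{lem0} with $\tfrac{\kappa}{2}(k-\ell)^2$ replaced by $\eta_{k-\ell}\ge 0$ to secure uniqueness in the L\'evy setting, whereas the paper leaves this step implicit.
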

\begin{proof}
For the  whole-plane {L\'evy--Loewner} evolution, the Beliaev--Smirnov  equation \eqref{zz1} becomes \cite{BS}
\begin{equation}\label{LLee}\Lambda F+\frac{z+1}{z-1}z\partial F+\frac{\bar z+1}{\bar z-1}\bar z\overline{\partial} F-p\left[\frac{1}{(z-1)^2}+\frac{1}{(\bar z-1)^2}- 2\right]F=0.\end{equation}
The action of the L\'evy infinitesimal generator $\Lambda$ on a term 
$z^k {\bar z}^\ell$ is $$\Lambda\, [z^k {\bar z}^\ell]=-\eta(k-\ell)z^k {\bar z}^\ell,$$ where $\eta(\cdot)$ here real and even. It is such that $\eta(0)=0$, therefore for any $n\in \mathbb Z$,$$\Lambda\, [(z\bar z)^n]=0.$$
For the set of solutions $F(z,\bar z)$ \eqref{Fzz} of \eqref{zz1}, as  given in Theorem \ref{main1}, we thus have 
\begin{equation} \label{lambdaalpha}\Lambda F(z,\bar z)=\frac{1}{(1-z\bar z)^\beta}\Lambda [(1-z)^\alpha(1-\bar z)^{\alpha}].\end{equation}
If the exponent $\alpha={(6+\kappa)}/{2\kappa}$ equals an {\it integer} $m\geq 1$,  $\vphi_\alpha=(1-z)^\alpha$ is polynomial of order $m$, and $\Lambda [\vphi_\alpha{\bar \vphi}_\alpha]$ contains only the finite set of L\'evy symbols $\{\eta_1,\ldots,\eta_m\}$. If this set co\"{i}ncides with the set of values $(\kappa/2) \ell^2$ for $\ell=\{1,\cdots,m\}$, the action of the L\'evy generator $\Lambda$ on $\vphi_\alpha\bar{\vphi}_\alpha$ in \eqref{lambdaalpha} co\"{i}ncides with that of the Brownian generator $-\frac{\kappa}{2}(z\partial-\bar z\overline{\partial})^2$.  In this case, $F(z,\bar z)$ \eqref{Fzz}, solution of the SLE$_\kappa$ equation \eqref{zz1}, is also a solution of the L\'evy--Loewner equation \eqref{LLee}, and Theorem \ref{main1} is also valid for a L\'evy--Loewner evolution with such symbols.\end{proof}
\begin{rkk}\label{theo3.3}
The cases $m=1$ and $m=2$ give respectively the condition $\eta_1=3$ with an equivalent SLE$_\kappa$ parameter  $\kappa=6$, with $ p=2, \alpha=1, \beta=3$, and the conditions $\eta_1=1, \eta_2=4$, corresponding to $\kappa=2, p=2, \alpha=2, \beta=4$, as in Corollary  \ref{F6-2}. Cases {\it (i)} and {\it (ii)} of Theorem \ref{theolle} thus follow.
\end{rkk}
\textcolor{black}{\begin{rkk}
 Theorem \ref{levy} is a generalization of Theorem \ref{main1}:  there  exist L\'evy processes satisfying the hypotheses of the Theorem which are not  Brownian motions. A simple example is as follows:  take the sum $L_t=\sqrt{\kappa} B_{t}+M_t$, where $B_t$ is standard Brownian motion,   and $M_t$ is an independent compound Poisson process with L\'evy symbol
$$\eta(\xi)=2\lambda \sin^2\pi \xi.$$
In other words, 
$$M_t=2\pi(Z_1+\cdots + Z_{N_t}),$$
where $N_t$ is a Poisson process of intensity $\lambda>0$ and $\{Z_j\}$ a collection of independent Bernoulli variables taking values $\pm 1$ with probability $1/2$. By additivity, the L\'evy symbol of $L_t$ then coincides with that of $\sqrt{\kappa}B_t$ for every integer.
\end{rkk}}

 \subsection{Odd whole-plane SLE}\label{subsecodd} 
In this section, we study the oddified whole-plane SLE$_\kappa$ map, $h_t(z)=z\sqrt{f_t(z^2)/z^2}$, and derive the analogue of the Beliaev--Smirnov equation for its derivative moments, $\mathbb E[|h_0'(z)|^p]$, before proceeding along lines similar to those in Section \ref{wpsle}, in order to find special solutions to that equation. \textcolor{black}{This will lead us to the proof of the second part of Theorem \ref{theolle}, which we recall here:}\\

\textcolor{black}{\noindent {\bf Theorem 1.2.} \textit{Case (iii). Let $(f_t)_{t\geq 0}$ be the Loewner whole-plane process driven by the L\'evy process $L_t$ with L\'evy symbol $\eta$. We write
 for the oddification of $f_t$
$$h_t(z)=z\sqrt{f_t(z^2)/z^2}=e^{t/2}\big(z+\sum_{n\geq 1}b_{2n+1}(t)z^{2n+1}\big).$$
The 
  {\it conjugate} oddified whole-plane L\'evy--Loewner evolution $e^{-(i/2)L_t} h_t\big(e^{(i/2)L_t}z\big)$ has the {\it same law} as $h_0(z)$, i.e., $e^{inL_t}b_n(t)\stackrel{\rm (law)}{=}b_n(0)=:b_n$, and $\mathbb E(|b_{2n+1}(t)|^2)=\mathbb E(|b_{2n+1}|^2)$.\\
 Then,  if $\eta_1=2$, we have 
$$\mathbb E(\I b_{2n+1}\I^2)=\frac{1}{2n+1},\,\forall n\geq 1;$$ this case covers the oddified SLE$_4$.}} 

 \textcolor{black}{For $p=2$, one has the derivative's quadratic  moment
\begin{equation}\label{expansionodd}
\mathbb E [|h'_0(z)|^2]=\sum_{n,m=0}^{\infty}\mathbb E[b_{2n+1}\overline{b}_{2m+1}] (2n+1)(2m+1)z^{2n}\bar z^{2m},\,\,\,z\in \mathbb D,
\end{equation}
so that its integral means, 
\begin{equation}\label{expansionoddbis}
\frac{1}{2\pi}\int_0^{2\pi}\mathbb E [|h'_0(|z|e^{i\theta})|^2]=1+\sum_{n\geq 1}\mathbb E(|b_{2n+1}|^2) (2n+1)^2 (z\bar z)^{2n},
\end{equation}
is a generating function for the  coefficients' quadratic moments. The proof of case \textit{(iii)} of Theorem \ref{theolle} will be obtained in Corollary \ref{F06-2} of Theorem \ref{main3} in the SLE$_{\kappa=4}$  case, and in Proposition \ref{oddeta1} in the LLE case.}
\subsubsection{Martingale argument} \label{martarg}The Loewner equation for $h_t(z)$ is easily derived from the one \eqref{loewner} governing $f_t(z)$, with a driving function $\lambda(t)$, as 
\begin{equation}\label{loewnerodd} \partial_t{h}_t(z)=\frac{z}{2}h_t'(z)\frac{\lambda(t)+z^2}{\lambda(t)-z^2}.\end{equation}
To avoid cumbersome factors of $2$, it is convenient in this section to 
 work with $(t,z)\mapsto h_{2t}(z)$, which is a normalized whole-plane  Loewner process with 
\begin{equation}\label{loewneroddbis}\partial_t{h}_{2t}(z)=zh_{2t}'(z)\frac{\lambda(2t)+z^2}{\lambda(2t)-z^2}.\end{equation}
Note that for this oddified whole-plane process, the underlying probability measure is no longer a single Dirac measure, but the barycenter of two Dirac masses at two diametrically opposite points, $\lambda(2t)$ and $-\lambda(2t)$.

In the case where $(f_t)$ is the SLE$_\kappa$ Loewner chain, we can write,  instead of $\lambda(2t)=e^{i\sqrt{\kappa}B_{2t}}$, $\xi_t:=e^{i\sqrt{2\kappa}B_t}$ which has the same law. We then follow the same method as in $\cite{BS}$, as recalled above, to find an equation satisfied by 
\begin{equation}\label{Fh0}F(z):=\mathbb E(\I h_0'(z)\I^p).\end{equation} 
To this aim, we consider the odd whole-plane map at time $0$, $h_0(z)$, as a particular large-time limit, $\lim_{t\to\infty}e^t\tilde{f}_t(z)$, where $\tilde{f}_t$ is now the (inner) radial Loewner process satisfying 
\begin{equation}\label{slezetat}\partial_t \tilde{f}_t(z)=z\tilde{f}_t'(z)\frac{z^2+\xi_t}{z^2-\xi_t}.\end{equation}
  It is easy to see that the Markov Lemma 2 in $\cite{BS}$ goes through for $\tilde f$ in this new setting; we can then argue as in Lemma 4 therein, namely by using a similar martingale argument. More precisely, we consider the martingale with respect to the Brownian filtration $\mathcal F_s, s\leq t$, together with the traduction of the Markov property,
\begin{equation}
\label{Nmart}\mathcal{N}_s:=\mathbb{E}(\I \tilde f_t'(z)\I^p \mid \mathcal{F}_s)\,=\I \tilde{f}_s'(z)\I^p \tilde{F}(z_s,t-s),\end{equation}
where
\begin{equation}\label{zs}z_s:=\tilde{f}_s(z)/\sqrt{\xi_s},\end{equation}
and
\begin{equation}\label{tildeh} \tilde{F}(z,t):=\mathbb{E}(\I \tilde f_t'(z)\I^p).\end{equation}
 Following step by step the argument therein, we write in our new setting
$$z_s=re^{i\theta},\;d\log{z_s}=d\log{r}+id\theta =d\log\tilde{f}_s-i\sqrt{\frac{\kappa}{2}}dB_s,$$
where
$$d\log \tilde{f}_s=\frac{d\tilde{f}_s}{\tilde{f}_s}=\frac{z_s^2+1}{z_s^2-1}ds.$$
Using here a somehow redundant notation in terms of $r,\,\theta$ and $\rho:=r^2$, $\alpha:=2\theta$, we get
\begin{eqnarray*}d\log{r}+id\theta&=&\frac{z_s^2+1}{z_s^2-1}ds-i\sqrt{\frac{\kappa}{2}}dB_s,\\
\partial_s\log{\I \tilde{f}_s'(z)\I}&=&\frac{\rho^4+8\rho^2-6\rho^3\cos{\alpha}-2\rho\cos{\alpha}-1}{(\rho^2-2\rho\cos{\alpha}+1)^2},\\
d\theta&=&-\frac{2\rho\sin{\alpha}}{(\rho^2-2\rho\cos{\alpha}+1)}ds-\sqrt{\kappa/2}dB_s,\\
\frac{dr}{r}&=&\frac{(\rho^2-1)}{(\rho^2-2\rho\cos{\alpha}+1)}ds.\end{eqnarray*}
Writing $\tilde F(z,t)$, as defined in \eqref{tildeh}, as $\tilde F(r,\theta,t)$, the vanishing of the $ds$ term in the It\^o derivative of $\mathcal N_s$ gives a PDE in $(r,\theta,t)$ satisfied by $\tilde{F}$, similar to Eq. \eqref{BSprime}. To finish, a large $t$ limit argument, entirely similar to \eqref{Fzt}-\eqref{Fdef} in Section \ref{BSderiv}, leads to the following PDE satisfied by  $F(z)$ \eqref{Fh0}, still using at this moment the mixed notation in $r$, $\theta$, and  $\rho:=r^2$, $\alpha:=2\theta$:
\begin{eqnarray}\nonumber&&p\left(\frac{\rho^4+8\rho^2-6\rho^3\cos{\alpha}-2\rho\cos{\alpha}-1}{(\rho^2-2\rho\cos{\alpha}+1)^2}+1\right)F+\frac{(\rho^2-1)}{\rho^2-2\rho\cos{\alpha}+1}rF_r\\ \label{oddBS}&&-\frac{2\rho\sin{\alpha}}{\rho^2-2\rho\cos{\alpha}+1}F_\theta+\frac{\kappa}{4} F_{\theta\theta}=0.\end{eqnarray}
Retaining $(\rho,\alpha)$ as the only variables, this finally gives:
\begin{eqnarray}\nonumber&&p\left(\frac{\rho^4+8\rho^2-6\rho^3\cos{\alpha}-2\rho\cos{\alpha}-1}{(\rho^2-2\rho\cos{\alpha}+1)^2}+1\right)F+\frac{2\rho(\rho^2-1)}{\rho^2-2\rho\cos{\alpha}+1}F_\rho \\ \label{oddBS1}&&-\frac{4\rho\sin{\alpha}}{\rho^2-2\rho\cos{\alpha}+1}F_\alpha+\kappa F_{\alpha\alpha}=0.\end{eqnarray}
In terms of the $z,\,\bar z$ variables, writing $F=F(z,\bar z)$, this equation becomes 
\begin{eqnarray}\nonumber&&-\frac{\kappa}{4}(z\partial_{z}-\bar z\partial_{\bar z})^2F+\frac{z^2+1}{z^2-1}z\partial_{z} F+\frac{\bar z^2+1}{\bar z^2-1}\bar z\partial_{\bar z}F\\ \label{zzodd}
&&+p\left[\frac{1}{1-z^2}+\frac{1}{1-\bar z^2}-\frac{2}{(1-z^2)^2}-\frac{2}{(1-\bar z^2)^2}+2\right]F=0.\end{eqnarray}
Naturally, defining $\zeta=z^2$, we can also rewrite this equation in the $\zeta,\bar{\zeta}$ variables, with now $F=F(\zeta,\bar\zeta)$, as 
\begin{eqnarray}\nonumber&&-\frac{\kappa}{2}\big(\zeta\partial_{\zeta}-\bar{\zeta}\partial_{\bar{\zeta}}\big)^2F+\frac{\zeta+1}{\zeta-1}\zeta\partial_{\zeta} F+\frac{\bar{\zeta}+1}{\bar{\zeta}-1}\bar{\zeta}\partial_{\bar{\zeta}}F\\
&&-\frac{p}{2}\left[-\frac{1}{1-\zeta}-\frac{1}{1-\bar{\zeta}}+\frac{2}{(1-\zeta)^2}+\frac{2}{(1-\bar{\zeta})^2}-2\right]F=0.\label{BSzeta}\end{eqnarray}
\begin{rkk}\label{BSBSo}In the $\zeta, \bar\zeta$ variables, the equation \eqref{BSzeta} for the oddified moment function has exactly the same differential part as the BS one \eqref{zz1}, the difference being only in the singular function $B(\zeta)+B(\bar\zeta)$ multiplying the $F$ term. Notice that this function also vanishes at $\zeta=\bar \zeta=0$, hence fom Lemma \ref{lem0}, the space of solutions which are double power series is one-dimensional. \end{rkk}
\subsubsection{Special solutions}
We can now argue as in Section \ref{wpsle}, and look for solutions of \eqref{BSzeta} of the form \begin{equation}\label{Fzeta}F(\zeta,\bar\zeta)=\vphi_\alpha(\zeta)\overline{\vphi}_\alpha(\bar\zeta) P(\zeta\bar\zeta),\end{equation} where $\vphi_\alpha(\zeta):=(1-\zeta)^\alpha$; $P$ is thus rotationally invariant. 

The restriction of Eq.  \eqref{BSzeta} to  $\bar\zeta=0$ gives the boundary operator: 
\begin{equation}\label{PBzeta}\mathcal{P}(\partial)\vphi:=-\frac{\kappa}{2}(\zeta\partial_\zeta)^2\vphi+\frac{\zeta+1}{\zeta-1}\zeta\partial_\zeta\vphi+\frac{p}{2}\left(\frac{1}{1-\zeta}-\frac{2}{(1-\zeta)^2}+1\right)\vphi,\end{equation}
resulting in the boundary equation $\mathcal{P}(\partial)(\vphi_\alpha)=0$. One easily finds  
$$\mathcal{P}(\partial)(\vphi_\alpha)=A_2(\alpha)\vphi_\alpha+B_2(\alpha)\vphi_{\alpha-1}+C_2(\alpha)\vphi_{\alpha-2},$$ with  
\begin{eqnarray}\label{A'}
A_2(\alpha)&:=&-\kappa \alpha^2/2+\alpha+p/2,\\\label{D'}
B_2(\alpha)&:=& \kappa\alpha^2-\kappa\alpha/2-3\alpha+p/2,\\\label{C'}
C(\alpha)&=&-\kappa\alpha^2/2+(\kappa/2+2)\alpha-p.
\end{eqnarray}
As before, we see that $A_2+B_2+C=0$, and solving for $A_2(\alpha)=B_2(\alpha)=0$ now gives the special set of values
\begin{equation}\label{p2k} \alpha=\alpha_2(\kappa):=\frac{8+\kappa}{3\kappa},\,\,\,p=p_2(\kappa):=\frac{(8+\kappa)(2+\kappa)}{9\kappa}.\end{equation}

For these values of $\alpha$, we look for a solution $P(\zeta\bar\zeta)$ of the singular equation analogous to Eqs. \eqref{Psing} and \eqref{PP'} in Section \ref{wpsle}. Because of Remark \ref{BSBSo}, when plugging the factorized form \eqref{Fzeta} into Eq. \eqref{BSzeta}, one obtains the same singular operator as in \eqref{zzb}:
$$ \mathcal P_{\textrm{sing}}(D)[P]=\left[-\frac{\kappa}{2}\overline{\vphi}(\zeta\partial)^2\vphi
-\frac{\kappa}{2}\vphi (\bar{\zeta}\bar{\partial})^2 \overline{\vphi}
+\overline{\vphi}\frac{\zeta+1}{\zeta-1}\zeta\partial\vphi
+\vphi\frac{\bar{\zeta}+1}{\bar{\zeta}-1}\bar{\zeta}\bar{\partial}\,\overline{\vphi}\right]P.$$
As a consequence, the singular equation $\mathcal P_{\textrm{sing}}(D)[P]=0$ is the same as in \eqref{Psing}, with the same solution \eqref{Pzzbar}, now in the $\zeta,\bar\zeta$ variables:
\begin{equation}\label{Pzetazetabar}P(\zeta\bar \zeta)=(1-\zeta\bar{\zeta})^{-\beta},\,\,\,\beta=\frac{\kappa\alpha^2}{2}.\end{equation}

 We thus can use Remark \ref{BSBSo} and Lemma  \ref{lem0} to conclude to the unicity of the solution $F$ to \eqref{BSzeta} with value $1$ at $0$. This yields, in the original $z$ variable:
\begin{theo} \label{main3}The oddified whole-plane SLE$_{\kappa}$ map $h_0(z)$ has derivative moments  
\begin{eqnarray*}\mathbb E\big[(h'_0(z))^{p/2}\big]&=&(1-z^2)^\alpha,\\
\mathbb E\big[|h'_0(z)|^p\big]&=&\frac{(1-z^2)^\alpha(1-{\bar z}^2)^\alpha}{(1-z^2{\bar z}^2)^\beta},
\end{eqnarray*}
for the special set of exponents $p=\kappa\alpha(\alpha+1)/4={(8+\kappa)(2+\kappa)}/{9\kappa}$, with $\alpha=(8+\kappa)/3\kappa$ and 
$\beta=\kappa \alpha^2/2=(8+\kappa)^2/18\kappa$.
\end{theo}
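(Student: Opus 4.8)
The plan is to mirror, step for step, the argument of Section \ref{wpsle} that proved Theorem \ref{main1}, but now carried out in the squared variable $\zeta=z^2$. First I would note that, because $h_0$ is odd, its derivative $h_0'$ is an \emph{even} function of $z$, so that $F(z)=\mathbb E[|h_0'(z)|^p]$ is genuinely a function of $\zeta=z^2$ and $\bar\zeta=\bar z^2$; by the martingale computation of Section \ref{martarg} leading to \eqref{zzodd}--\eqref{BSzeta}, the function $F(\zeta,\bar\zeta)$ solves the oddified Beliaev--Smirnov equation \eqref{BSzeta}. The decisive structural input is Remark \ref{BSBSo}: in the $\zeta,\bar\zeta$ variables the \emph{differential} part of \eqref{BSzeta} is literally identical to that of the SLE equation \eqref{zz1}, only the non-differential (singular) term differs, and that term still vanishes at the origin. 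Hence Lemma \ref{lem0} applies verbatim and the space of double-power-series solutions is one-dimensional: it will suffice to exhibit a single solution with value $1$ at the origin.

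Next I would feed the factorized ansatz \eqref{Fzeta}, $F(\zeta,\bar\zeta)=\vphi_\alpha(\zeta)\overline{\vphi}_\alpha(\bar\zeta)P(\zeta\bar\zeta)$ with $\vphi_\alpha(\zeta)=(1-\zeta)^\alpha$ and $P$ radial, into \eqref{BSzeta}. Since the differential part is unchanged, the Leibniz expansion of Lemma \ref{lem2} carries over, splitting the action into three pieces: the purely angular first line, a singular middle line, and two boundary lines. The angular first line vanishes for radial $P$ exactly as in Corollary \ref{Prad}. The boundary lines combine into $P\big[\overline{\vphi}_\alpha\,\mathcal P(\partial)\vphi_\alpha+\vphi_\alpha\,\mathcal P(\overline{\partial})\overline{\vphi}_\alpha\big]$, where $\mathcal P(\partial)$ is now the oddified boundary operator \eqref{PBzeta} (into which the modified singular function has been absorbed); this vanishes precisely when $\vphi_\alpha$ solves $\mathcal P(\partial)\vphi_\alpha=0$. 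Using the decomposition $\mathcal P(\partial)\vphi_\alpha=A_2(\alpha)\vphi_\alpha+B_2(\alpha)\vphi_{\alpha-1}+C(\alpha)\vphi_{\alpha-2}$ with $A_2+B_2+C=0$, the boundary equation reduces to the two conditions $A_2(\alpha)=0$ and $B_2(\alpha)=0$ (which then force $C=0$), and these solve to the special values \eqref{p2k}, $\alpha=(8+\kappa)/3\kappa$ and $p=(8+\kappa)(2+\kappa)/9\kappa$.

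With $\alpha$ so fixed, the only surviving term is the singular middle operator $\mathcal P_{\textrm{sing}}(D)[P]$, which is identical to the one met in Section \ref{wpsle}; setting it to zero yields the same first-order radial ODE, solved for $P(0)=1$ by $P(\zeta\bar\zeta)=(1-\zeta\bar\zeta)^{-\beta}$ with $\beta=\kappa\alpha^2/2=(8+\kappa)^2/18\kappa$. This produces a genuine double-power-series solution of \eqref{BSzeta} taking the value $1$ at the origin (which matches $h_0'(0)=1$). Invoking Remark \ref{BSBSo} together with Lemma \ref{lem0}, I would conclude that this is the \emph{unique} such solution, hence equals $\mathbb E[|h_0'(z)|^p]$; substituting back $\zeta=z^2$ then gives the two displayed formulas.

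I expect the only real obstacle to be bookkeeping rather than anything conceptual: one must check carefully that the three-line split of Lemma \ref{lem2} stays valid once the non-differential term is the oddified one, i.e. that the modified singular function is correctly absorbed into the oddified boundary operator \eqref{PBzeta} so that the boundary lines really do collapse to $P\big[\overline{\vphi}_\alpha\,\mathcal P(\partial)\vphi_\alpha+\text{c.c.}\big]$, while the middle singular line is left untouched. The explicit computation of $A_2,B_2,C$ and the verification that $A_2(\alpha)=B_2(\alpha)=0$ share a common nonzero root are then routine.
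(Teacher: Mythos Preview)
Your proposal is correct and follows essentially the same approach as the paper's own proof: pass to the variable $\zeta=z^2$, invoke Remark~\ref{BSBSo} so that Lemma~\ref{lem0} gives one-dimensionality, feed the factorized ansatz through the Leibniz decomposition of Lemma~\ref{lem2}, kill the boundary lines via $A_2(\alpha)=B_2(\alpha)=0$ yielding \eqref{p2k}, and solve the unchanged singular equation for $P(\zeta\bar\zeta)=(1-\zeta\bar\zeta)^{-\kappa\alpha^2/2}$. The bookkeeping worry you flag is precisely what Remark~\ref{BSBSo} settles: since the differential part of \eqref{BSzeta} coincides with that of \eqref{zz1}, the middle singular line in Lemma~\ref{lem2} is literally unchanged and the modified non-differential term is entirely absorbed into the boundary operator \eqref{PBzeta}.
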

Notice that $p=2$ if and only if $\kappa=4$, in which case $\alpha=1\,,\beta=2.$ 
\begin{cor}\label{F06-2}
For $\kappa=4$, the oddified whole-plane SLE$_{\kappa}$ map $h_0(z)$ has first and second derivative moments:
\begin{eqnarray*}\mathbb E(h'_0(z))=1-z^2,\,\,\,\,
F(z,\bar z)=\mathbb E(|h'_0(z)|^2)=\frac{(1-z^2)(1-{\bar z}^2)}{(1-z^2{\bar z}^2)^2}.
\end{eqnarray*}
\end{cor}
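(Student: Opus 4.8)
The plan is to obtain Corollary~\ref{F06-2} as the immediate specialization of Theorem~\ref{main3} to the single parameter value $\kappa=4$, which is precisely the value for which the distinguished exponent $p=p_2(\kappa)$ of that theorem equals $2$. Since Theorem~\ref{main3} already provides closed forms for $\mathbb E\big[(h'_0(z))^{p/2}\big]$ and $\mathbb E\big[|h'_0(z)|^p\big]$ all along the special curve $p=p_2(\kappa)$, $\alpha=\alpha_2(\kappa)$, $\beta=\kappa\alpha^2/2$ recorded in \eqref{p2k}, the entire task reduces to locating the second-moment point $p=2$ on this curve and reading off the resulting exponents.

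First I would solve $p_2(\kappa)=2$. Using $p_2(\kappa)=(8+\kappa)(2+\kappa)/(9\kappa)$ from \eqref{p2k}, the equation $(8+\kappa)(2+\kappa)=18\kappa$ reduces to $\kappa^2-8\kappa+16=(\kappa-4)^2=0$, so $\kappa=4$ is the unique (double) root; this confirms the remark preceding the corollary that $p=2$ if and only if $\kappa=4$. I would then substitute $\kappa=4$ into the remaining exponents of \eqref{p2k} and Theorem~\ref{main3}, obtaining $\alpha=\alpha_2(4)=(8+4)/(3\cdot4)=1$ and $\beta=\kappa\alpha^2/2=(8+4)^2/(18\cdot4)=2$.

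Finally I would insert $\alpha=1$, $\beta=2$, $p=2$ into the two formulas of Theorem~\ref{main3}. The first gives $\mathbb E\big[(h'_0(z))^{p/2}\big]=\mathbb E[h'_0(z)]=(1-z^2)^{1}=1-z^2$, and the second gives
\[
\mathbb E\big[|h'_0(z)|^2\big]=\frac{(1-z^2)^\alpha(1-\bar z^2)^\alpha}{(1-z^2\bar z^2)^\beta}=\frac{(1-z^2)(1-\bar z^2)}{(1-z^2\bar z^2)^2},
\]
which is exactly the asserted pair of identities. There is no genuine obstacle here: all the analytic content already resides in Theorem~\ref{main3}, and the only point requiring care is the algebraic verification that the second-moment slice $p=2$ meets the special curve at the single root $\kappa=4$, so that the simple exponents $\alpha=1$, $\beta=2$ arise. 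As a consistency check, one may extract the coefficient of $(z\bar z)^{2n}$ in the expansion of the second expression and compare with \eqref{expansionoddbis} to recover $\mathbb E(|b_{2n+1}|^2)=1/(2n+1)$, i.e.\ case \textit{(iii)} of Theorem~\ref{theolle} for the oddified SLE$_4$.
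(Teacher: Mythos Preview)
Your proposal is correct and follows essentially the same approach as the paper: the corollary is obtained as the direct specialization of Theorem~\ref{main3} at $\kappa=4$, where $p_2(\kappa)=2$, $\alpha=1$, $\beta=2$. The paper simply notes this fact in one line before stating the corollary, and your slightly more detailed algebraic verification of $(\kappa-4)^2=0$ and the resulting exponents is exactly in that spirit.
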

\textcolor{black}{By considering the terms that are powers of $(z\bar z)^2$ in the double expansion \eqref{expansionodd} of $F(z,\bar z)$ in Corollary \ref{F06-2}, or by computing the latter's integral means \eqref{expansionoddbis},} one finally proves assertion {\it (iii)} in  Theorem \ref{theolle} for the oddified whole-plane SLE$_{\kappa=4}$, i.e., when the driving function of the whole-plane Loewner process is the exponential of a  Brownian motion. 

\subsubsection{Oddified L\'evy--Loewner Evolution} 
 In the case where the driving function in the oddified Loewner equation \eqref{loewnerodd}
is  the complex exponential of a L\'evy process $(L_t)$,
\begin{equation} \label{LLEzit}
\lambda(2t)=\xi_t:=e^{iL_{2t}},
\end{equation}
and in order to compute $F(z,\bar z)$ \eqref{Fh0}, we follow the same martingale argument as in Section \ref{martarg}  (see Eqs. \eqref{slezetat}, \eqref{Nmart}, \eqref{zs}, and \eqref{tildeh}). 

 The characteristic function of the exponential's argument,  $\frac 12L_{2t}$, associated with $\sqrt{\xi_t}$ is 
$$ \mathbb{E}\left(e^{\xi\frac{i}{2}L_{2t}}\right)=e^{-2t\eta({\xi}/{2})}.$$
The L\'evy generator $\Lambda$ is thus defined by its action on the characters 
\begin{equation} \label{Lambdaodd}\Lambda(e^{in\theta}):=-2\eta\left(\frac n2\right)e^{in\theta}, n\in\bb Z.\end{equation}
Eq.  \eqref{oddBS} in Section \ref{martarg} now becomes \begin{eqnarray*}&&p\left(\frac{\rho^4+8\rho^2-6\rho^3\cos{\alpha}-2\rho\cos{\alpha}-1}{(\rho^2-2\rho\cos{\alpha}+1)^2}+1\right)F+\frac{(\rho^2-1)}{\rho^2-2\rho\cos{\alpha}+1}rF_r\\
&&-\frac{2\rho\sin{\alpha}}{\rho^2-2\rho\cos{\alpha}+1}F_\theta+ \Lambda F=0.\end{eqnarray*}
By retaining, as in Eq. \ref{oddBS1},  $\rho=r^2$ and $\alpha=2\theta$ as the only variables, this finally gives:
\begin{eqnarray*}&&p\left(\frac{\rho^4+8\rho^2-6\rho^3\cos{\alpha}-2\rho\cos{\alpha}-1}{(\rho^2-2\rho\cos{\alpha}+1)^2}+1\right)F+\frac{2\rho(\rho^2-1)}{\rho^2-2\rho\cos{\alpha}+1}F_\rho\\
&&-\frac{4\rho\sin{\alpha}}{\rho^2-2\rho\cos{\alpha}+1}F_\alpha+\tilde{\Lambda}F=0,\label{BS2}\end{eqnarray*}
where the rescaled generator $\tilde \Lambda$ is defined  so that 
\begin{equation}\label{tildeLambdaodd}\tilde{\Lambda}(e^{in\alpha})=-2\eta(n)e^{in\alpha}, \,n\in\bb Z.
\end{equation}
In terms of the $z,\,\bar z$ variables, and writing $F=F(z, \bar z)$, this equation becomes 
\begin{eqnarray}\nonumber&&\Lambda F+\frac{z^2+1}{z^2-1}z\partial_{z} F+\frac{\bar z^2+1}{\bar z^2-1}\bar z\partial_{\bar z}F\\ \label{FLzzodd}
&&+p\left[\frac{1}{1-z^2}+\frac{1}{1-\bar z^2}-\frac{2}{(1-z^2)^2}-\frac{2}{(1-\bar z^2)^2}+2\right]F=0,\end{eqnarray}
where the original generator $\Lambda$ \eqref{Lambdaodd} now acts on monomials as $$\Lambda(z^k\bar z^{\ell})=-2\eta\left(\frac{k-l}{2}\right)z^k\bar z^{\ell}.$$
In the $p=2$ case, observe that $F(z,\bar z)$ in Corollary \ref{F06-2}
 involves only chiral terms of the form $e^{\pm2i\theta}$. For $n=\pm 2$, the L\'evy generator $\Lambda$ \eqref{Lambdaodd} acts on these terms by multiplying them by $-2\eta(1)$. If $\eta_1:=\eta(1)=2$, we see that this action is the same as that of the Brownian generator in the SLE$_{\kappa=4}$ case. Therefore, we obtain the following proposition, generalized in the following section: \begin{prop}\label{oddeta1}
 Corollary \ref{F06-2} for the derivative second moment goes through, in the oddified L\'evy setting, under the sole condition that  $\eta_1=2$. \textcolor{black}{This completes the proof of case {\it (iii)} of Theorem \ref{theolle}.} 
 \end{prop}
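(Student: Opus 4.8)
The plan is to identify the explicit second derivative moment of Corollary \ref{F06-2} as the unique double-power-series solution of the oddified L\'evy--Loewner equation \eqref{FLzzodd} (with $p=2$), under the single spectral condition $\eta_1=2$, and then to read off the coefficient moments. First I would invoke the martingale derivation of Section \ref{martarg}, run verbatim with the L\'evy generator $\Lambda$ exactly as in Proposition \ref{pBS2}: this guarantees that the true moment function $F(z,\bar z)$, namely the bi-analytic continuation of $\mathbb E(|h_0'(z)|^2)$ obtained by treating $z$ and $\bar z$ as independent as in \eqref{FZZ}--\eqref{zbarz}, is a double power series with $F(0,0)=1$ solving \eqref{FLzzodd}. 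The first-moment identity $\mathbb E(h_0'(z))=1-z^2$ is already provided by \eqref{phip} when $\eta_1=2$, so the entire new content is the formula for $\mathbb E(|h_0'(z)|^2)$.

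The crux is an elementary generator comparison. Consider the candidate
$$F_\star(z,\bar z):=\frac{(1-z^2)(1-\bar z^2)}{(1-z^2\bar z^2)^2},$$
which by Corollary \ref{F06-2} solves the oddified SLE$_{\kappa=4}$ equation \eqref{zzodd} with $p=2$. Expanding the radial denominator $(1-z^2\bar z^2)^{-2}=\sum_{n\geq 0}(n+1)(z\bar z)^{2n}$ against the numerator $1-z^2-\bar z^2+z^2\bar z^2$ shows that $F_\star$ is a sum of monomials $z^k\bar z^\ell$ with $k-\ell\in\{-2,0,2\}$ only. On such monomials the L\'evy generator acts by $\Lambda(z^k\bar z^\ell)=-2\eta\big(\tfrac{k-\ell}{2}\big)z^k\bar z^\ell$, giving $0$ for $k=\ell$ (since $\eta(0)=0$) and $-2\eta_1$ for $k-\ell=\pm2$ (using that $\eta$ is even). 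The SLE$_{\kappa=4}$ Brownian generator $-\tfrac{\kappa}{4}(z\partial_z-\bar z\partial_{\bar z})^2$ acts on the very same monomials by $0$ and $-\kappa=-4$. Hence the two generators coincide on $F_\star$ precisely when $2\eta_1=4$, i.e.\ $\eta_1=2$. Since the remaining first-order and non-differential terms of \eqref{FLzzodd} are identical to those of \eqref{zzodd}, it follows that under $\eta_1=2$ the function $F_\star$ also solves \eqref{FLzzodd}.

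It then remains to invoke uniqueness and extract the coefficients. As in Remark \ref{BSBSo} and Lemma \ref{lem0}, the space of double-power-series solutions of \eqref{FLzzodd} is one-dimensional: for a minimal nonvanishing monomial $a_{k,\ell}z^k\bar z^\ell$ the lowest-degree output of the left-hand side is $-a_{k,\ell}\big[2\eta(\tfrac{k-\ell}{2})+k+\ell\big]z^k\bar z^\ell$, and since a real even L\'evy symbol satisfies $\eta\geq 0$ (its Schoenberg/negative-definiteness property), the bracket is strictly positive unless $k=\ell=0$. Consequently the true moment and $F_\star$, both power series equal to $1$ at the origin, must coincide, so $\mathbb E(|h_0'(z)|^2)=F_\star(z,\bar z)$. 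Taking the integral means \eqref{expansionoddbis}, the angular average of $F_\star$ equals $(1+r^4)/(1-r^4)^2=\sum_{n\geq 0}(2n+1)r^{4n}$, and matching the coefficient of $(z\bar z)^{2n}$ yields $\mathbb E(|b_{2n+1}|^2)=1/(2n+1)$, which is case {\it (iii)} of Theorem \ref{theolle} for the LLE. The only mild subtlety, and hence the main obstacle, is confirming that the one-dimensionality argument of Lemma \ref{lem0} survives the replacement of the Brownian generator by the L\'evy generator $\Lambda$; this is exactly where the positivity $\eta\geq 0$ of a real even symbol is needed.
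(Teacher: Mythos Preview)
Your proof is correct and follows essentially the same approach as the paper. The paper's argument is very brief: it observes that the explicit $F_\star(z,\bar z)$ from Corollary~\ref{F06-2} contains only chiral terms of type $e^{\pm 2i\theta}$ (i.e., monomials with $k-\ell\in\{-2,0,2\}$), on which the L\'evy generator~\eqref{Lambdaodd} acts by $-2\eta_1$, matching the Brownian generator for $\kappa=4$ exactly when $\eta_1=2$; uniqueness is then taken from Remark~\ref{BSBSo}. You carry out the same generator comparison and then add one genuinely useful detail: you verify that the one-dimensionality argument of Lemma~\ref{lem0} still works with the L\'evy generator, using $\eta\ge 0$ (which holds for any real L\'evy symbol by L\'evy--Khintchine) to ensure the bracket $2\eta\big(\tfrac{k-\ell}{2}\big)+k+\ell$ is strictly positive for $(k,\ell)\neq(0,0)$. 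The paper leaves this point implicit.
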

 \subsubsection{Oddified L\'evy--Loewner moments}\label{oddllesle}
\begin{theo}\label{levyodd}
If a L\'evy process has its first $m\, (\geq 1)$ symbols given by $\eta_j=\kappa j^2/2,  1\leq j\leq m,$ with 
$\kappa=8/(3m-1)$,  then the associated odd L\'evy--Loewner map $h_0(z)=z\sqrt{f_0(z^2)/z^2}$, where $f_0(z)$ is the whole-plane LLE, has the same derivative moments of order $p$ as for SLE$_\kappa$, for the particular value of the exponent, $p=2m(m+1)/(3m-1)$, as given in Theorem \ref{main3} with $\alpha=m, \beta=4m^2/(3m-1).$   
\end{theo}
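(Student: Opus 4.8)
The plan is to transcribe, into the oddified setting, the argument that established Theorem~\ref{levy}, replacing the whole-plane Beliaev--Smirnov equation \eqref{LLee} and its explicit solution \eqref{Fzz} by their oddified analogues, namely equation \eqref{FLzzodd} and the solution furnished by Theorem~\ref{main3}. First I would observe that the oddified LLE equation \eqref{FLzzodd} differs from the oddified SLE$_\kappa$ equation \eqref{zzodd} in exactly one term: the Brownian generator $-\tfrac{\kappa}{4}(z\partial_z-\bar z\partial_{\bar z})^2$ is replaced by the L\'evy generator $\Lambda$, whose action on monomials is $\Lambda(z^k\bar z^\ell)=-2\eta\!\left(\tfrac{k-\ell}{2}\right)z^k\bar z^\ell$ by \eqref{Lambdaodd}. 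All remaining first-order and singular terms are identical.

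Next I would impose that $\alpha=m$ be a positive integer in Theorem~\ref{main3}. Solving $\alpha=(8+\kappa)/3\kappa=m$ fixes $\kappa=8/(3m-1)$, whence $p=\kappa\alpha(\alpha+1)/4=2m(m+1)/(3m-1)$ and $\beta=\kappa\alpha^2/2=4m^2/(3m-1)$, exactly the values announced. For these parameters the candidate function from Theorem~\ref{main3} reads
\begin{equation*}
F(z,\bar z)=\frac{(1-z^2)^m(1-\bar z^2)^m}{(1-z^2\bar z^2)^\beta},
\end{equation*}
and it already solves the oddified SLE$_\kappa$ equation \eqref{zzodd}. The decisive point is that $\Lambda$ is a Fourier multiplier in the angular variable, so it annihilates the radial factor $(1-z^2\bar z^2)^{-\beta}$ and commutes with multiplication by it; hence $\Lambda F=(1-z^2\bar z^2)^{-\beta}\,\Lambda\!\left[(1-z^2)^m(1-\bar z^2)^m\right]$. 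Because $m\in\mathbb{N}$, the factor $(1-z^2)^m(1-\bar z^2)^m$ is a polynomial involving only monomials $z^{2a}\bar z^{2b}$ with $0\le a,b\le m$, so $\Lambda$ calls only upon the symbols $\eta_1,\dots,\eta_m$.

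Then I would perform the factor-counting that closes the matching. On such a monomial, $\Lambda(z^{2a}\bar z^{2b})=-2\eta(a-b)\,z^{2a}\bar z^{2b}$, whereas the Brownian generator gives $-\tfrac{\kappa}{4}(2a-2b)^2 z^{2a}\bar z^{2b}=-\kappa(a-b)^2 z^{2a}\bar z^{2b}$. Under the hypothesis $\eta_j=\kappa j^2/2$ for $1\le j\le m$ one has $-2\eta(a-b)=-\kappa(a-b)^2$, so the two actions coincide term by term. Therefore $\Lambda F$ equals the action of the Brownian generator on $F$, and $F$ solves the oddified LLE equation \eqref{FLzzodd} as well.

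Finally, to upgrade this from a formal to the actual derivative moment, I would invoke uniqueness exactly as in the SLE case: by Remark~\ref{BSBSo} and Lemma~\ref{lem0}, the space of double-power-series solutions of the oddified equation is one-dimensional, so normalization at the origin selects a single solution; since the genuine moment $\mathbb{E}[|h_0'(z)|^p]$ is bi-analytic with value $1$ at $z=0$ and solves \eqref{FLzzodd}, it must coincide with the explicit $F$ above, which is precisely Theorem~\ref{main3} transported to the LLE. I expect the one genuine source of error --- and hence the main thing to get right --- to be the bookkeeping of the various factors of two introduced by the substitutions $\zeta=z^2$ and $\alpha=2\theta$ and by the definition \eqref{Lambdaodd} of $\Lambda$ through $\eta(n/2)$; once these are tracked correctly, the proof is a faithful copy of the Theorem~\ref{levy} argument.
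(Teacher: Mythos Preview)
Your proposal is correct and follows essentially the same approach as the paper. The paper works in the variable $\zeta=z^2$ with the rescaled generator $\tilde\Lambda$ acting as $\tilde\Lambda(\zeta^k\bar\zeta^\ell)=-2\eta(k-\ell)\zeta^k\bar\zeta^\ell$, whereas you stay in the $z$ variable with $\Lambda(z^{2a}\bar z^{2b})=-2\eta(a-b)z^{2a}\bar z^{2b}$; these are equivalent, and your factor-of-two bookkeeping is right. Your explicit invocation of uniqueness via Lemma~\ref{lem0} is in fact slightly more careful than the paper, which leaves that step implicit; note only that the lemma as stated is for the SLE operator, so strictly speaking you are using that its proof carries over verbatim to the L\'evy case (monomials are still eigenvectors of $\Lambda$, and the minimal-term coefficient $-2\eta(k-\ell)-2(k+\ell)$ is nonzero for $k+\ell>0$).
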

\begin{rkk}
The case $m=1$ gives the condition $\eta_1=2$ with an equivalent SLE$_\kappa$ parameter  $\kappa=4$, with $ p=2, \alpha=1, \beta=2$. \end{rkk}
\begin{proof}
For the odd case of the whole-plane {L\'evy--Loewner} evolution, the BS-like equation \eqref{FLzzodd} becomes,  when  setting $F=F(\zeta,\bar{\zeta})$ in the $\zeta=z^2$ variable,

\begin{equation}\label{LLeeodd} \tilde \Lambda F+2\frac{\zeta+1}{\zeta-1}\zeta\partial_{\zeta} F+2\frac{\bar{\zeta}+1}{\bar{\zeta}-1}\bar{\zeta}\partial_{\bar{\zeta}}F
+p\left[\frac{1}{1-\zeta}+\frac{1}{1-\bar{\zeta}}-\frac{2}{(1-\zeta)^2}-\frac{2}{(1-\bar{\zeta})^2}+2\right]F =0.\end{equation}
The action of the L\'evy infinitesimal generator $\tilde \Lambda$ \eqref{tildeLambdaodd} on a term 
$\zeta^k {\bar \zeta}^\ell$ is $$\tilde \Lambda\, [\zeta^k {\bar \zeta}^\ell]=-2\eta(k-\ell)\zeta^k {\bar \zeta}^\ell,$$ where $\eta(\cdot)$ here is real and even. It is such that $\eta(0)=0$, 
therefore for any $n\in \mathbb Z$,$$\tilde \Lambda\, [(\zeta\bar \zeta)^n]=0.$$
For the set of solutions $F(\zeta,\bar \zeta)=(1-\zeta\bar \zeta)^{-\beta}\varphi_\alpha(\zeta)\varphi_\alpha(\bar\zeta)$,  as given in Theorem \ref{main3}, we thus have 
\begin{equation} \label{lambdaalphaodd}\tilde \Lambda F(\zeta,\bar \zeta)=\frac{1}{(1-\zeta\bar \zeta)^\beta}\tilde \Lambda [(1-\zeta)^\alpha(1-\bar \zeta)^{\alpha}].\end{equation}
If the exponent $\alpha={(8+\kappa)}/{3\kappa}$ equals an {\it integer} $m\geq 1$,  $\vphi_\alpha=(1-\zeta)^\alpha$ is polynomial of order $m$, and $\tilde \Lambda [\vphi_\alpha{\bar \vphi}_\alpha]$ contains 
only the finite set of L\'evy symbols $\{\eta_1,\ldots,\eta_m\}$. If this set co\"{i}ncides with the set of values $(\kappa/2) \ell^2$ for $\ell=\{1,\cdots,m\}$, the action of the L\'evy generator 
$\tilde \Lambda$ on $\vphi_\alpha\bar{\vphi}_\alpha$  co\"{i}ncides with that of the Brownian generator. In this case, $F(\zeta,\bar \zeta)$, solution to the SLE$_\kappa$ equation \eqref{BSzeta}
 is also solution to the L\'evy--Loewner differential equation \eqref{LLeeodd}, and Theorem \ref{main3} is also valid for an oddified L\'evy--Loewner evolution with such symbols.\end{proof}
 \subsection{Generalization to processes with m-fold symmetry}\label{subsecm-fold}
The preceding results may be generalized to the case of functions with $m$-fold symmetry. These are  functions of the form
$[f(z^m)]^{1/m}$ with $f\in \mathcal{S}$ and $m\in\mathbb N,\,m\geq 1$. The case of odd functions corresponds to $m=2$; equivalently, the functions with $m$-fold symmetry are functions in $\mathcal{S}$ whose Taylor series has the form 
$f(z)=\sum_{k=0}^\infty a_{mk+1}z^{mk+1}.$ 
As for the oddification case, we can associate to $f_0$, where $(f_t)$ is a whole-plane SLE$_\kappa$, its $m$-folded version ${h}^{(m)}_0(z):=[f_0(z^m)]^{1/m}.$ 
By setting $\zeta:=z^m$, we obtain the following Beliaev--Smirnov-like equation for 
$F(\zeta,\bar \zeta):=\mathbb{E}\left(|({h}^{(m)}_0)'(z)|^p\right)$
\textcolor{black}{\begin{eqnarray}\label{BSmm} &&\mathcal P_m(D) [F(\zeta,\bar\zeta)]=0\\ \nonumber&&\mathcal P_m(D):=-\frac{\kappa}{2}(\zeta\partial_{\zeta}-\bar{\zeta}\partial_{\bar{\zeta}})^2+\frac{\zeta+1}{\zeta-1}\zeta\partial_{\zeta} +\frac{\bar{\zeta}+1}{\bar{\zeta}-1}\bar{\zeta}\partial_{\bar{\zeta}}\\ \label{BSm}&&+\frac{p}{m}\left[\frac{m-1}{1-\zeta}+\frac{m-1}{1-\bar{\zeta}}-\frac{m}{(1-\zeta)^2}-\frac{m}{(1-\bar{\zeta})^2}+2\right],\end{eqnarray}}
We then look for special solutions of the form $\vphi_\alpha(\zeta)\overline{\vphi}_\alpha(\bar \zeta) P(\zeta\bar \zeta)$ where $\vphi_\alpha(\zeta):=(1-\zeta)^\alpha=(1-z^m)^\alpha$ and \textcolor{black}{$\overline{\vphi}_\alpha(\bar \zeta):=\overline{\varphi_\alpha (\zeta)}=(1-\bar\zeta)^\alpha=(1-\bar z^m)^\alpha$}. 
For $\vphi_\alpha$, we look for solutions of the boundary equation for $\bar \zeta=0$
$$\mathcal P_m(\partial)[\varphi_\alpha]=-\frac{\kappa}{2}(\zeta\partial_\zeta)^2\vphi_\alpha+\frac{\zeta+1}{\zeta-1}\zeta\partial_\zeta\vphi_\alpha+\frac{p}{m}\left(\frac{m-1}{1-\zeta}-\frac{m}{(1-\zeta)^2}+1\right)\vphi_\alpha=0.$$ We identically have
\begin{eqnarray}\label{Pphialpham}
\mathcal P_m(\partial)[\vphi_\alpha]&=&A_m(\alpha) \vphi_\alpha+B_m(\alpha)\vphi_{\alpha-1}+ C(\alpha) \vphi_{\alpha-2},\\
\label{Aalpham}A_m(\alpha)&:=&- \frac{\kappa}{2}\alpha^2+\alpha+\frac{p}{m},\\ \label{Balpham}
B_m(\alpha)&:=&\frac{\kappa}{2}\alpha(2\alpha-1)-3\alpha+\left(1-\frac{1}{m}\right)p,\\ \label{Calpham}
C(\alpha)&=& -\frac{\kappa}{2}\alpha(\alpha-1)+2\alpha-p.  
\end{eqnarray} Notice that we again have the identity $A_m+B_m+C=0.$ Setting the conditions $A_m(\alpha)=0, C(\alpha)=0$ so that \eqref{Pphialpham} vanishes, gives the special set of values:
\begin{equation}\label{alpham}\alpha=\alpha_m(\kappa):=\frac{2m+4+\kappa}{(m+1)\kappa},\,\,\,p=p_m(\kappa):=\frac{m(2m+4+\kappa)(2+\kappa)}{2(m+1)^2\kappa}.\end{equation}
For the rotationally invariant pre-factor $P(\zeta,\bar\zeta)$, Eq. \eqref{BSm} shows that the resulting singular equation \eqref{Psing}, $\mathcal P_{\textrm{sing}}(D)[P]=0$, does not depend on $m$, so that 
$P(\zeta,\bar\zeta)=(1-\zeta\bar \zeta)^{-\beta}$, with $\beta={\kappa\alpha^2}/{2}.$ This leads to 
\begin{theo} \label{main4}The $m$-fold whole-plane SLE$_{\kappa}$ map $h^{(m)}_0(z)$ has derivative moments  
\begin{eqnarray*}\mathbb E\big[\big((h^{(m)}_0)'(z)\big)^{p/2}\big]&=&(1-z^m)^\alpha,\\
\mathbb E\big[|(h^{(m)}_0)'(z)|^p\big]&=&\frac{(1-z^m)^\alpha(1-{\bar z}^m)^\alpha}{(1-z^m{{\bar z}^m})^\beta},
\end{eqnarray*}
for the special set of exponents $p=p_m(\kappa)={m(2m+4+\kappa)(2+\kappa)}/{2(m+1)^2\kappa}$, with $\alpha=\alpha_m(\kappa)=(2m+4+\kappa)/(m+1)\kappa$ and 
$\beta=\kappa \alpha^2_m(\kappa)/2=(2m+4+\kappa)^2/2(m+1)^2\kappa$.
\end{theo}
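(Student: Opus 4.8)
The plan is to transcribe, essentially verbatim, the two special-solution arguments already carried out for whole-plane SLE (Section \ref{wpsle}, Theorem \ref{main1}) and for its oddification (Theorem \ref{main3}), the only genuinely new input being the $m$-dependent boundary coefficients. The starting point is the Beliaev--Smirnov-like equation \eqref{BSmm}--\eqref{BSm} satisfied by $F(\zeta,\bar\zeta)=\mathbb E\big[|(h^{(m)}_0)'(z)|^p\big]$ in the variable $\zeta=z^m$; as stressed in Remark \ref{BSBSo}, its \emph{differential} part $-\tfrac{\kappa}{2}(\zeta\partial_\zeta-\bar\zeta\partial_{\bar\zeta})^2+\tfrac{\zeta+1}{\zeta-1}\zeta\partial_\zeta+\tfrac{\bar\zeta+1}{\bar\zeta-1}\bar\zeta\partial_{\bar\zeta}$ is identical to that of the original operator \eqref{PD}, and in particular independent of $m$; all the $m$-dependence sits in the non-differential potential term, which still vanishes at $\zeta=\bar\zeta=0$. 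First I would invoke Lemma \ref{lem0} (whose proof rests only on this differential part together with the vanishing of the potential at the origin) to guarantee that the space of double-power-series solutions of \eqref{BSmm} is one-dimensional, so that it suffices to exhibit \emph{one} solution normalized to $1$ at the origin and to identify it with $F$.

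To produce that solution I would seek the factorized form $F(\zeta,\bar\zeta)=\varphi_\alpha(\zeta)\overline{\varphi}_\alpha(\bar\zeta)P(\zeta\bar\zeta)$ with $\varphi_\alpha(\zeta)=(1-\zeta)^\alpha=(1-z^m)^\alpha$ and $P$ rotationally invariant, and apply the Leibniz decomposition of Lemma \ref{lem2}. Because that decomposition is governed solely by the ($m$-independent) differential part, it applies verbatim in the $\zeta,\bar\zeta$ variables: the cross-derivative line vanishes for radial $P$ exactly as in Corollary \ref{Prad}, and the remaining \emph{singular} line $\mathcal P_{\mathrm{sing}}(D)[P]$ coincides with \eqref{Psing} and is again independent of $m$. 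Hence the radial factor solves the same first-order ODE \eqref{PP'} and is given, as in \eqref{Pzetazetabar}, by $P(\zeta\bar\zeta)=(1-\zeta\bar\zeta)^{-\beta}$ with $\beta=\kappa\alpha^2/2$, unchanged from the SLE case.

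The only new bookkeeping is the boundary equation obtained by setting $\bar\zeta=0$: computing $\mathcal P_m(\partial)[\varphi_\alpha]$ yields \eqref{Pphialpham}, namely $A_m(\alpha)\varphi_\alpha+B_m(\alpha)\varphi_{\alpha-1}+C(\alpha)\varphi_{\alpha-2}$ with coefficients \eqref{Aalpham}--\eqref{Calpham}, where the shift $p\mapsto p/m$ in the potential modifies $A,B$ into $A_m,B_m$ but leaves $C$ unchanged, and where $A_m+B_m+C=0$. I would then impose $A_m(\alpha)=0$ and $C(\alpha)=0$; since the three coefficients sum to zero this forces $B_m(\alpha)=0$ as well, so that $\varphi_\alpha$ annihilates the boundary operator, and solving the two equations gives precisely the special values $\alpha=\alpha_m(\kappa)$ and $p=p_m(\kappa)$ of \eqref{alpham}. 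With $\varphi_\alpha$ a boundary solution, the last lines of the Leibniz expansion vanish exactly as in Corollary \ref{pdelta}, so the factorized $F$ indeed solves \eqref{BSmm}.

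Finally I would close the argument by uniqueness: $F(\zeta,\bar\zeta)$ is bi-analytic in the bidisk (expectation and differentiation commute, as in Corollary \ref{corfond}), hence a double power series with $F(0,0)=1$, and by the one-dimensionality from Lemma \ref{lem0} it must equal the constructed factorized solution. Reverting to $z$ via $\zeta=z^m$ gives the stated formulas for $\mathbb E[((h^{(m)}_0)'(z))^{p/2}]$ and $\mathbb E[|(h^{(m)}_0)'(z)|^p]$, and substituting $\alpha_m(\kappa)$ into $\beta=\kappa\alpha^2/2$ yields $\beta=(2m+4+\kappa)^2/2(m+1)^2\kappa$. Since every structural ingredient is inherited from the already-established $m=1$ and $m=2$ cases, there is no serious obstacle; the one point deserving care is verifying that the $m$-dependence is confined to the scalar coefficients $A_m,B_m$ of the boundary equation — equivalently, that the factorization machinery of Lemmas \ref{lem0} and \ref{lem2} and Corollaries \ref{Prad}, \ref{pdelta} transfers unchanged because the differential part of $\mathcal P_m(D)$ does not depend on $m$ — together with checking that the choice $A_m=C=0$ (rather than $A_m=B_m=0$) is the consistent one determining $(\alpha,p)$.
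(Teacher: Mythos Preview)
Your proposal is correct and follows essentially the same route as the paper: the $m$-fold Beliaev--Smirnov equation \eqref{BSmm}--\eqref{BSm} has the same differential part as \eqref{zz1}, so Lemmas \ref{lem0} and \ref{lem2} and Corollaries \ref{Prad}, \ref{pdelta} carry over unchanged; the paper likewise computes $\mathcal P_m(\partial)[\varphi_\alpha]=A_m\varphi_\alpha+B_m\varphi_{\alpha-1}+C\varphi_{\alpha-2}$ with $A_m+B_m+C=0$, imposes $A_m(\alpha)=C(\alpha)=0$ to obtain \eqref{alpham}, and uses the $m$-independent singular equation to get $P(\zeta\bar\zeta)=(1-\zeta\bar\zeta)^{-\kappa\alpha^2/2}$, concluding by the one-dimensionality of the solution space.
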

The case $p=2$ is of special interest, since it allows one to find the moments $\mathbb{E}\left(\I a_{mk+1}\I^2\right)$ from Plancherel formula. Setting $p=2$ in the above, and solving for $\kappa$ yields
\begin{equation}\label{kappam}\kappa=2m,\,\,\, \textrm{or}\,\,\,\kappa=\frac{2(m+2)}{m}.\end{equation}
In the first case, we have $\alpha=2/m,\,\,\beta=4/m$,  thus
\begin{equation}\label{F1m}F(\zeta,\bar \zeta)=\left(\frac{(1-\zeta)(1-\bar\zeta)}{(1-\zeta\bar{\zeta})^2}\right)^{2/m};\,\,\,\kappa=2m.\end{equation}
In the second case, we have $\alpha=1,\,\,\beta=\frac{m+2}{m}$, and
\begin{equation}\label{F2m} F(\zeta,\bar \zeta)=\frac{(1-\zeta)(1-\bar{\zeta})}{(1-\zeta\bar{\zeta})^{\frac{m+2}{m}}};\,\,\,\kappa=\frac{2(m+2)}{m}.\end{equation}
Let us detail all possibilities with $m\leq 4$:
\begin{enumerate}[(i)]
\item $m=1$ yields the two cases $\kappa=6,\,\kappa=2$, corresponding to the two first cases of Theorem \ref{theolle}.
\item $m=2$ gives rise to the single value $\kappa=4$, and to the third case in Theorem \ref{theolle}.
\item $m=3$ corresponds to $\kappa=6$ and $\kappa=10/3$, with respective $F$-functions
$$F(\zeta,\bar \zeta)=\frac{(1-\zeta)^{2/3}(1-\bar{\zeta})^{2/3}}{(1-\zeta\bar{\zeta})^{4/3}};\,\,\,\,F(\zeta,\bar\zeta)=\frac{(1-\zeta)(1-\bar{\zeta})}{(1-\zeta\bar{\zeta})^{5/3}}.$$
\item For $m=4$, one gets $\kappa=8$ or $3$, with respective $F$-functions:$$F(\zeta,\bar \zeta)=\frac{(1-\zeta)^{1/2}(1-\bar{\zeta})^{1/2}}{1-\zeta\bar{\zeta}};\,\,\,\,F(\zeta,\bar\zeta)=\frac{(1-\zeta)(1-\bar{\zeta})}{(1-\zeta\bar{\zeta})^{3/2}}.$$
\end{enumerate}
Let us return for $p=2$ to general values of $m$, with $\kappa$ given by \eqref{kappam}, and compute $\mathbb{E}\left(\I a_{mk+1}\I^2\right)$. Write
$$ (1-x)^{\alpha}=\sum_{k=0}^{\infty}\lambda_k(\alpha) x^k,$$
with coefficients $\lambda_k(\alpha):=((-1)^k/k!)\,\alpha (\alpha-1)\cdots(\alpha-k+1)$. In the first case, $\kappa=2m$, we have from \eqref{F1m}
 \begin{equation}\mathbb{E}\left(\I a_{mk+1}\I^2\right)=\frac{\sum_{j=0}^{k}\lambda_j^2(2/m)\I \lambda_{k-j}(-4/m)\I}{(mk+1)^2}.\label{2m}\end{equation}
In the second case, $\kappa=\frac{2(m+2)}{m}$, Eq. \eqref{F2m} gives
\begin{equation}\mathbb{E}\left(\I a_{mk+1}\I^2\right)=\frac{\lambda_k(\frac{m+2}{m})+\lambda_{k-1}(\frac{m+2}{m})}{(mk+1)^2}=\frac{\prod_{j=0}^{k-1}(jm+2)}{(mk+1)m^kk!};\label{autre}\end{equation}
 for $m=1,2$ one recovers the values already computed.

In conclusion, we have found infinitely many cases where one may exactly compute the variances of the coefficients of whole-plane SLE. The following cases correspond to some physically \textcolor{black}{significant} situations:
\begin{enumerate}[(a)]
\item $m=1,\kappa=6$ with formula \eqref{autre} (percolation \cite{Schr,MR1851632});
\item $m=1,\kappa=2$ with formula \eqref{2m} (loop-erased random walk  \cite{MR2044671,Schr});
\item $m=2,\kappa=4$ with formula \eqref{2m} or \eqref{autre} (Gaussian free field contour lines \cite{MR2486487});
\item $m=3,\kappa=6$ with formula \eqref{2m} (percolation \cite{MR1851632});
\item $m=4,\kappa=8$ with formula \eqref{2m} (spanning trees \cite{Schr});
\item $m=4,\kappa=3$ with formula \eqref{autre} (critical Ising model \cite{2009arXiv0910.2045C,MR2680496});
\item $m=6,\kappa=8/3$ with formula \eqref{autre} (self-avoiding walk \cite{MR2112128,MR2112128bis}).
\end{enumerate}
 \section{Multifractal spectra for infinite whole-plane SLE}\label{multifractal} 
\textcolor{black}{The aim of this section is to give compelling arguments that support Statement \ref{theoMF} concerning the  explicit averaged  integral means spectra, as defined in \eqref{betavdef}, for the interior whole-plane SLE map $f_0(z)$, its oddified version $h_0(z)$,  or  its  $m$-fold transforms $h_0^{(m)}(z)$ (which generalize $f_0(z)=h_0^{(1)}(z)$ and $h_0(z)=h_0^{(2)}(z)$). On the rigorous side, we establish Theorem \ref{theoMFc} and Theorem \ref{theoMFcm}:  that these  spectra do have a phase transition for $p$ large enough (respectively at \eqref{pstar}, and at \eqref{pmstar} or \eqref{pdoublestar0}). Above this phase transition, they are bounded below by the multifractal spectrum $B_m(p,\kappa)$ \eqref{Bm} that appears on  the right hand-side of formulae \eqref{beta} and \eqref{betam}, up to the special points $p_m(\kappa)$ \eqref{pkappa1} and \eqref{pkappam}. Thereafter, they  are bounded above by the same expression. 
 We strongly believe these bounds to be exact as in Statement \ref{theoMF}.} Let us begin with some relevant results for integral means spectra and related multifractal spectra. 
\subsection{Integral means spectrum}  \label{introductionI}
\subsubsection{SLE's harmonic measure spectra} 
The general theory of the integral means spectra, and of the associated multifractal properties of the harmonic measure, have been the subject of  important  pioneering works, among which stand out those of L. Carleson, P. Jones  and N. Makarov 
\cite{1994ArM....32...33C,JonesMak,Makadist,Makanaliz}. The search for {\it universal} spectra, which provide universal functions as upper-bounds, have lead to well-known results and conjectures which we briefly recall below (Section \ref{universal}).

In the case of conformally invariant critical curves, i.e., SLEs, the multifractal spectrum associated with the harmonic measure near those curves  was first obtained from quantum gravity methods by the first author \cite{1999PhRvL..82..880D,1999PhRvL..82.3940D,2000PhRvL..84.1363D,MR1964687,MR2112128,2006math.ph...8053D}. 
  This was extended to the {\it mixed} multifractal spectrum  describing both the singularities and the winding of  equipotentials near a conformally invariant curve \cite{PhysRevLett.89.264101}. Another heuristic derivation of the harmonic measure spectrum was obtained from a Laplacian growth equation, similar to Eq. \eqref{BS} here, but for  chordal SLE \cite{PhysRevLett.88.055506}. 
  The corresponding SLE integral means spectrum was later rigorously established, in an expectation sense, by Beliaev and Smirnov \cite{BS}, starting from Beliaev's thesis \cite{BKTH}. These authors used the very same equation as Eq. \eqref{BS} here, that they derived precisely for that purpose, in their case for the exterior whole-plane SLE.
  
    Another method, the so-called ``Coulomb gas'' approach of conformal field theory, is also applicable \cite{PhysRevLett.95.170602,2007JPhA...40.2165R}, and was extended to the mixed multifractal spectrum \cite{2008JPhA...41B5006B,2008NuPhB.802..494D}.  Amazingly, these predictions for the fine structure of the harmonic measure were tested numerically, and successfully, for percolation and Ising clusters \cite{2008PhRvL.101n4102A}. 
   
 Let us mention that Chen and Rohde obtained in Ref. \cite{2009CMaPh.285..799C}  derivative estimates  for the (chordal) Loewner evolution driven by a {\it symmetric $\alpha$-stable process}, and showed that its hull has Hausdorff dimension $1$, thereby presenting a {\it non-multifractal} behavior. Similar results was found by Johansson and Sola \cite{Johansson2009238} for a random growth model obtained by driving the Loewner equation by a compound Poisson process. We therefore restrict our study here to the {\it interior} whole-plane SLE curve.

    In the radial setting, the integral means spectrum \eqref{betadef}-\eqref{betavdef} is associated with the divergent behavior of the moments of order $p$ of the map derivative's modulus $|f'(z)|$ near the unit circle $\partial \mathbb D$, possibly augmented by the extra singular behavior of the map at  point $z=1$, where the SLE driving function originates at time $t=0$. When the latter singular behavior starts to dominate, in fact when $p$ becomes negative enough \cite{BS}, the integral means spectrum undergoes a phase transition, after which the harmonic measure behavior is dominated by the {\it tip} of the SLE curve. This was observed in Ref. \cite{PhysRevLett.88.055506}, while the tip spectrum was later obtained rigorously, in the sense of {\it expectations}  in Ref. \cite{BS}, and  in an {\it almost sure} sense in Ref. \cite{0911.3983}.
  
\subsubsection{Universal spectra}\label{universal}
Given $f$ holomorphic and injective in the unit disk, we define, for $p\in\R$,
$$\beta_f(p)=\limsup_{r\to 1^{-}}\frac{\ln \int_0^{2\pi}\I f'(re^{i\theta})\I^p d\theta}{\ln \frac{1}{1-r}},$$
so that $\beta_f(p)$ is the smallest number $q$ such that there exists a $C>0$ 
$$ \int_0^{2\pi}\I f'(re^{i\theta})\I^p d\theta\leq \frac{C}{(1-r)^{q+\veps}}$$
as $r\to 1$, and for every $\veps>0$.

\begin{theo}\label{FMcG} If $f$ is holomorphic and injective in the unit disk, then
$$\beta_f(p)\leq 3p-1,\,\,\,2/5\leq p<\infty.$$
If moreover $f$ is bounded, then 
$$\beta_f(p)\leq p-1,\,\,\,p\geq 2.$$
Both exponents are sharp, the first one being attained for the Koebe function.
\end{theo}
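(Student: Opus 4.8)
The plan is to reduce both assertions to growth estimates for $I_p(r):=\int_0^{2\pi}|f'(re^{i\theta})|^p\,d\theta$ and to extract these from two classical tools, the Koebe distortion theorem and the area theorem. After the harmless normalization $f(0)=0,\ f'(0)=1$ (which alters neither $\beta_f$ nor injectivity), the distortion theorem supplies the pointwise bound $|f'(z)|\le 2(1-|z|)^{-3}$ together with the two-sided comparison $\tfrac14(1-|z|^2)|f'(z)|\le d(f(z))\le(1-|z|^2)|f'(z)|$, where $d(w):=\mathrm{dist}(w,\partial\Omega)$ and $\Omega:=f(\mathbb D)$. Since $1-|z|^2\asymp 1-r$ on $|z|=r$, the problem becomes one about the boundary-distance integral: $I_p(r)\asymp(1-r)^{-p}\int_0^{2\pi}d(f(re^{i\theta}))^p\,d\theta$.

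For the bounded case I would first establish $I_2(r)=O((1-r)^{-1})$. Writing $A(r)$ for the area of $f(\{|z|<r\})$, one has $A'(r)=r\,I_2(r)$, with $A$ increasing and $A(1)=\mathrm{Area}(\Omega)<\infty$; comparing $A((1+r)/2)-A(r)\le A(1)$ with the monotonicity of $\rho\mapsto I_2(\rho)$ yields $I_2(r)\le 8\,\mathrm{Area}(\Omega)/(1-r)$. Since $f$ is bounded, $d\le\mathrm{diam}(\Omega)$, so for $p\ge 2$ one has $d^p\le\mathrm{diam}(\Omega)^{p-2}d^2$; combined with $\int d^2\,d\theta\asymp(1-r)^2 I_2(r)=O(1-r)$ this gives $\int d^p\,d\theta=O(1-r)$ and hence $I_p(r)=O((1-r)^{1-p})$, i.e.\ $\beta_f(p)\le p-1$.

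The unbounded case is the substantial one. I would discretize $[0,2\pi)$ into $N\asymp(1-r)^{-1}$ arcs $I_j$ of fixed hyperbolic length, choose $z_j=re^{i\theta_j}\in I_j$, and use distortion to see that $|f'|$ is comparable to $|f'(z_j)|$ throughout $I_j$, so that $I_p(r)\asymp(1-r)\sum_j|f'(z_j)|^p\asymp(1-r)^{1-p}\sum_j d_j^p$ with $d_j:=d(f(z_j))$; everything reduces to the sharp counting estimate $\sum_j d_j^p=O((1-r)^{-2p})$. Two ingredients come for free: the pointwise bound $d_j\le C(1-r)^{-2}$, and, by conformal invariance of the hyperbolic metric together with univalence, the pairwise disjointness inside $\Omega$ of the Whitney disks $B(f(z_j),c\,d_j)$. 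For $p\ge 2$ this suffices: the disjoint-disk/area argument gives the $L^2$ bound $\sum_j d_j^2=O((1-r)^{-4})$ (equivalently $\beta_f(2)\le 5$), and the interpolation $\sum_j d_j^p\le(\max_j d_j)^{p-2}\sum_j d_j^2$ returns exactly $O((1-r)^{-2p})$. The hard part will be the range $2/5\le p<2$, where the $L^2$ bound alone is too weak — Hölder in the wrong direction only yields the non-sharp $\beta_f(p)\le 5p/2$. There one must show directly that large values of $|f'|$ are rare: with $m(s):=|\{\theta: d(f(re^{i\theta}))>s\}|$ and $\int d^p\,d\theta=p\int_0^\infty s^{p-1}m(s)\,ds$, the target $(1-r)^{1-2p}$ is exactly what the distribution estimate $m(s)\le C\big((1-r)/s\big)^{1/3}$ produces. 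Proving such a bound for an \emph{arbitrary} univalent $f$, rather than merely for the Koebe configuration, is the genuine obstacle, and it is precisely where the restriction $p\ge 2/5$ is forced: below it the aggregate of many moderate $d_j$ overtakes the few large ones and the extremal function stops dominating. I would approach it via the length--area principle applied to the level curves $f(\{|z|=r\})$, controlling the $\theta$-measure (harmonic measure seen from $0$) of the portion of the curve lying deep inside $\Omega$.

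Sharpness is then a direct computation with the Koebe function $\mathcal K$ of \eqref{koebe}: near $\theta=0$ one has $|\mathcal K'(re^{i\theta})|\asymp\big((1-r)^2+\theta^2\big)^{-3/2}$, and the substitution $\theta=(1-r)u$ gives $\int_0^{2\pi}|\mathcal K'(re^{i\theta})|^p\,d\theta\asymp(1-r)^{1-3p}\int_{\mathbb R}(1+u^2)^{-3p/2}\,du$, the last integral converging exactly for $p>1/3$. Hence $\beta_{\mathcal K}(p)=3p-1$, showing that the first exponent is attained and cannot be lowered, while the bounded exponent $p-1$ is plainly sharp on slit disks.
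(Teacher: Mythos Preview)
The paper does not itself prove Theorem~\ref{FMcG}: right after the statement it says ``This theorem is due to Feng and McGregor \cite{FengMcG}. For a proof, consult, e.g., \cite{Pommerenke}.'' The intended method is visible, however, in the paper's proof of the oddified variant (Theorem~\ref{FMcGo}) in Appendix~C, which explicitly parallels Feng--McGregor. That argument is structurally different from yours: one writes $|h'|^p$ as $|f'|^p|f|^{-p/2}$, splits by H\"older with exponents $2/p$ and $2/(2-p)$, and bounds each factor via the integral-means Lemma~\ref{tpom} (Hardy's identity plus Koebe), namely $\int|f'|^2|f|^{q-2}\,d\theta\le C(1-\rho)^{-2q-1}$ and $\int|f|^q\,d\theta\le C(1-\rho)^{1-2q}$. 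This produces the bound on an initial interval of $p$, and the full half-line is then obtained from the convexity of $p\mapsto\beta_h(p)$ together with the trivial pointwise bound $|h'|\le C(1-r)^{-2}$. The lower threshold ($2/3$ in the odd case, $2/5$ here) is exactly the compatibility condition in the H\"older splitting; it is a feature of the method, not of the extremal configuration.

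Your Whitney-decomposition route is a reasonable alternative, and your treatment of the bounded case and of the unbounded case for $p\ge 2$ is fine. The genuine gap is the range $2/5\le p<2$: you correctly isolate the distribution inequality $m(s)\le C\big((1-r)/s\big)^{1/3}$ as the target and gesture at length--area, but this step is the entire content of the theorem in that range and you do not actually carry it out. Your heuristic for the threshold is also off: if that distribution bound held, integrating would give $\beta_f(p)\le 3p-1$ for every $p>1/3$, which is strictly more than the theorem claims; the $2/5$ arises from the H\"older scheme above and does not naturally fall out of your framework (indeed the conjectural crossover of the universal spectrum is at $6-4\sqrt{2}\approx 0.343$, not at $2/5$).
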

This theorem is due to Feng and McGregor \cite{FengMcG}. For a proof, consult, e.g., \cite{Pommerenke}.  We will need below the following variant of this theorem: 
\begin{theo}\label{FMcGo} Let $f$ be a function injective and holomorphic in the unit disk, such that $f(0)=0$, and let us denote by $h(z)$ its oddification:
$$h(z):=z\sqrt{f(z^2)/z^2}.$$ Then we have $\beta_h(p)\leq 2p-1$ for $p\geq 2/3$. This bound is attained for the oddification of the Koebe function. 
\end{theo}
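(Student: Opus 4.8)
The plan is to recognize the statement as a \emph{universal integral-means bound for odd univalent functions}, and to prove it by running the Feng--McGregor argument of Theorem \ref{FMcG} through for the oddification, the only substantive change being that an odd univalent map grows half as fast near $\partial\mathbb{D}$ as a general one. First I would normalize: since $\beta_h(p)$ is invariant under $f\mapsto cf$, assume $f(0)=0$, $f'(0)=1$. Then $g(z):=f(z)/z$ is holomorphic and non-vanishing on $\mathbb{D}$ (as $f$ is univalent and $f(z)=0$ only at $z=0$), so $h(z)=z\sqrt{g(z^2)}$ is a well-defined branch with $h(z)^2=f(z^2)$ and $h(z)=z+\cdots$. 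The first key observation is that $h$ is \emph{odd and univalent}: oddness follows from $h(z)^2=f(z^2)$ with $h(0)=0$, and if $h(z_1)=h(z_2)$ then $f(z_1^2)=f(z_2^2)$, so $z_1^2=z_2^2$ by univalence of $f$; the remaining case $z_2=-z_1$ forces $h(z_1)=-h(z_1)=0$, hence $z_1=z_2=0$.

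The essential analytic input is the \emph{odd growth theorem}: squaring $h(z)^2=f(z^2)$ and using $|f(w)|\le |w|/(1-|w|)^2$ gives
$$ |h(z)|\le \frac{|z|}{1-|z|^2}=O\big((1-|z|)^{-1}\big), $$
so $h$ grows like $(1-r)^{-1}$ instead of $(1-r)^{-2}$. The mechanism is transparent in the identity obtained by argument-doubling $w=z^2$, using $h'(z)=z f'(z^2)/h(z)$ and $|h(z)|^2=|f(z^2)|$:
$$ \int_0^{2\pi}|h'(re^{i\theta})|^p\,d\theta = r^p\int_0^{2\pi}\frac{|f'(\rho e^{i\phi})|^p}{|f(\rho e^{i\phi})|^{p/2}}\,d\phi,\qquad \rho=r^2. $$
By the growth theorem the extra weight $|f|^{-p/2}$ is of size $(1-\rho)^{p}$ precisely near the radial maximum where $|f'|$ is largest, and this is exactly the saving of $(1-\rho)^p$ that turns the Feng--McGregor exponent $3p-1$ into $2p-1$; it also shifts the threshold to the value $p=2/3$ quoted in the statement, above which the branch $2p-1$ dominates.

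Concretely, writing $d(z):=\operatorname{dist}(h(z),\partial h(\mathbb{D}))$ and using Koebe distortion $d(z)\asymp (1-|z|)\,|h'(z)|$, it suffices to prove $\int_0^{2\pi} d(re^{i\theta})^p\,d\theta\le C\,(1-r)^{1-p}$ for $p\ge 2/3$. For $p>1$ this is short: $d(z)\le d(0)+|h(z)|\le C(1+|h(z)|)$, so by a Prawitz-type bound combined with the odd growth theorem, $\int_0^{2\pi}|h(re^{i\theta})|^p\,d\theta\le Cp\int_0^r M_h(\rho)^p\,d\rho/\rho\le C(1-r)^{1-p}$, and the additive constant is absorbed since $1-p<0$. \textbf{The main obstacle is the critical window $2/3\le p\le 1$}: there the crude pointwise distortion reduction is genuinely lossy (one checks on the odd Koebe map that it overestimates the exponent), and one must instead decompose the circle $|z|=r$ dyadically according to the size of $d(re^{i\theta})$ and sum the contributions using univalence (an area bound) together with the improved growth rate, exactly as in the proof of Theorem \ref{FMcG} but with the exponent bookkeeping shifted. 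This is where the two spectral branches compete and where the precise length/area balance must be carried out; the rest is routine.

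Finally I would verify sharpness on the oddification of the Koebe function $\mathcal{K}$, for which $h(z)=z/(1-z^2)$ and $h'(z)=(1+z^2)/(1-z^2)^2$. Since $|1-z^2|^2\asymp (1-r)^2+\theta^2$ near the two zeros $z=\pm1$ while the numerator stays bounded, one gets $\int_0^{2\pi}|h'(re^{i\theta})|^p\,d\theta\asymp (1-r)^{1-2p}$ for $p>1/2$, hence $\beta_h(p)=2p-1$ there. Thus the bound $2p-1$ is attained for every $p\ge 2/3$, completing the statement.
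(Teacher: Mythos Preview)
Your setup is fine (the normalization, the identity $|h'(z)|=|z|\,|f'(z^2)|/|f(z^2)|^{1/2}$, and the sharpness check on the odd Koebe map), but the route you take after the integral identity diverges from the paper and leaves a real gap.

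\medskip
\noindent\textbf{The gap.} You reduce to bounding $\int_0^{2\pi} d(re^{i\theta})^p\,d\theta$ with $d(z)\asymp(1-|z|)|h'(z)|$. Your Prawitz argument for $p>1$ is fine, but for the decisive window $2/3\le p\le 1$ you only write that ``one must instead decompose the circle dyadically \ldots\ exactly as in the proof of Theorem \ref{FMcG}'' and declare the bookkeeping routine. That is precisely the part where the threshold $p=2/3$ must emerge; nothing in your write-up shows why the length/area balance gives $(1-r)^{1-p}$ rather than, say, $(1-r)^{1-cp}$ for some other $c$. As written, the argument does not establish $\beta_h(p)\le 2p-1$ on $[2/3,1]$, so the proof is incomplete.

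\medskip
\noindent\textbf{How the paper does it.} The paper avoids any dyadic machinery. Starting from the same identity $\int_0^{2\pi}|h'(re^{i\theta})|^p\,d\theta\le\int_0^{2\pi}|f'(\rho e^{i\theta})|^p|f(\rho e^{i\theta})|^{-p/2}\,d\theta$ with $\rho=r^2$, it applies H\"older with exponents $2/p$ and $2/(2-p)$ and auxiliary parameters $a-b=1$, reducing to two classical integral-means estimates (Hardy's identity plus Koebe): $\int|f'|^2|f|^{q-2}\le C(1-\rho)^{-(2q+1)}$ for $q>0$ and $\int|f|^q\le C(1-\rho)^{-(2q-1)}$ for $q>1/2$. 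The feasibility condition on $(a,b)$ is exactly $p>2/3$, and combining the two factors gives $\beta_h(p)\le 2p-1$ for $2/3<p<2$. A separate pointwise bound $|h'(z)|\le C(1-r)^{-2}$ (hence $\beta_h(p)\le 2p$) together with the convexity of $p\mapsto\beta_h(p)$ then extends the inequality to all $p\ge 2/3$. So the entire argument hinges on a H\"older split, not on a dyadic sum, and the constant $2/3$ falls out transparently from the H\"older constraint.

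\medskip
In short: your approach via $d(z)$ and dyadic decomposition could in principle be pushed through, but you have not done so where it matters; the paper's H\"older route is both complete and shorter.
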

 This variant is originally due to Makarov \cite{Makanaliz}; a proof that parallels that of Feng and McGregor  is given below in Appendix C \ref{McGo}. 

In the sequel we will need some facts about {\it three different universal spectra}, respectively for the schlicht class $\mathcal{S}$, the subclass $\mathcal{S}_b$ of bounded functions, and the subclass $\mathcal{S}_o$ of odd functions. For $p\in\bb R$,  define:
\begin{eqnarray*}B_{\mathcal{S}}(p)&:=&\sup\{\beta_f(p),\,f\in\mathcal{S}\};\,\,\,B_{\mathcal{S}_b}(p):=\sup\{\beta_f(p),\,f\in\mathcal{S}_b\};\\
B_{\mathcal{S}_o}(p)&:=&\sup\{\beta_f(p),\,f\in\mathcal{S}_o\}.
\end{eqnarray*}
The $B_{\mathcal{S}_b}$ spectrum is the most studied one in the literature. 
By Theorems \ref{FMcG} and  \ref{FMcGo}, respectively:
\begin{eqnarray*}B_{\mathcal{S}}(p)&=&3p-1 \,\,\mathrm{for}\,\,p\geq 2/5;\,\,\,\,B_{\mathcal{S}_b}(p)= p-1\,\,\mathrm{for}\,\,p\geq 2;\\
B_{\mathcal{S}_o}(p)&=&2p-1\,\,\mathrm{for}\,\,p\geq 2/3.\end{eqnarray*}
For the sake of completeness, let us briefly recall some known or conjectured results about these universal spectra. The main one is {\it Brennan conjecture}, which reads: $B(-2)=1.$ 
If true, this conjecture would imply that
$ B_{\mathcal{S}_b}(p)=\I p\I-1,$
for $p\leq -2$. This is not known, but Carleson and Makarov \cite{1994ArM....32...33C} have shown that there exists $p_0\leq -2$ such that $B_{\mathcal{S}_b}(p)=\I p\I-1$ for $p\leq p_0$. Notice that $B_{\mathcal{S}_b}(p)=\I p\I-1$ for $p\geq 2$. A rather speculative conjecture, named after Kraetzer, asserts that
$$B_{\mathcal{S}_b}(p)=\frac{p^2}{4},\,\,\,-2\leq p\leq 2.$$
Makarov \cite{Makanaliz} has proven that 
$$ B_{\mathcal{S}}(p)=\max(B_{\mathcal{S}_b}(p),3p-1),$$
so that if both Kraetzer and Brennan conjectures are true, then for $\I p\I\leq -2,\,B_{\mathcal{S}}(p)=\I p\I-1$, for $-2\leq p\leq 6-4\sqrt{2},\,B_{\mathcal{S}}(p)=\frac{p^2}{4}$, and for
 $p\geq 6-4\sqrt{2},\,B_{\mathcal{S}}(p)=3p-1.$
 
In our study, the {\it unbounded} character of the whole-plane maps under consideration plays a crucial role for the spectrum. This can already be seen  in the limit  $\kappa \to 0$, where the spectra should converge to that of the Koebe function \eqref{koebe}, ${\mathcal K}(z)=z/(1+z)^2$, hence to $\beta_{\mathcal K}(p)=3p-1$, or to $\beta_{{\mathcal K}_o}(p)=2p-1$ for its oddified version ${\mathcal K_o}(z)=z/(1+z^2)^2$.

The Theorem \ref{FMcGo} can be generalized to the class $\mathcal{S}_m$ defined for nonzero $m\in\mathbb{N}$ as the set of functions of the form
$$ h^{(m)}(z):=z[f(z^m)/z^m]^{1/m}\,,f\in\mathcal{S}.$$
The case of odd functions above corresponds to $m=2$. We may define as above
$$B_{\mathcal{S}_m}(p):=\sup\{\beta_h(p)\,,\,h\in\mathcal{S}_m\},$$
and a straightforward adaptation of the proof of Theorem \ref{FMcGo} (as given in Appendix C \ref{McGo}) gives \cite{Makanaliz} the
\begin{theo} \label{mtheo}For $p\geq \frac{2m}{m+4}$ we have 
$$  B_{\mathcal{S}_m}(p)=\frac{m+2}{m}p-1.$$
\end{theo}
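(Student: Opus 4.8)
The plan is to reduce the statement to a weighted integral-means estimate for ordinary univalent functions and then establish sharpness with the $m$-fold Koebe function. Writing $h=h^{(m)}$ with $f\in\mathcal{S}$, the defining relation $h(z)=z[f(z^m)/z^m]^{1/m}$ is equivalent to the clean algebraic identity $h(z)^m=f(z^m)$. Differentiating and using $h^{m-1}=f(z^m)^{(m-1)/m}$ gives
$$h'(z)=\frac{z^{m-1}f'(z^m)}{f(z^m)^{(m-1)/m}},\qquad |h'(z)|=|z|^{m-1}\,|f'(w)|\,|f(w)|^{-(m-1)/m},\quad w=z^m.$$
Setting $\rho=r^m$ and substituting $\phi=m\theta$ (the integrand is $2\pi$-periodic in $\phi$, so the factor $m$ from the $m$-fold covering cancels the Jacobian $1/m$) yields the exact identity
$$\int_0^{2\pi}|h'(re^{i\theta})|^p\,d\theta=r^{(m-1)p}\int_0^{2\pi}|f'(\rho e^{i\phi})|^{p}\,|f(\rho e^{i\phi})|^{q}\,d\phi,\qquad q:=-\frac{(m-1)p}{m}.$$
Thus $\beta_h(p)$ is controlled by the growth of the weighted integral mean $\int|f'|^{p}|f|^{q}\,d\phi$ of $f\in\mathcal{S}$. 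Since the Koebe-type singularity contributes $|f'|^{p}|f|^{q}\asymp|1+w|^{-(3p+2q)}$ with $3p+2q=(m+2)p/m$, the target rate is $(1-\rho)^{1-(m+2)p/m}$, and as $1-\rho\asymp m(1-r)$ this produces exactly $\beta_h(p)\le \tfrac{m+2}{m}p-1$.

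First I would prove the upper bound, namely the weighted estimate
$$\int_0^{2\pi}|f'(\rho e^{i\phi})|^{p}\,|f(\rho e^{i\phi})|^{q}\,d\phi=O\!\left((1-\rho)^{1-(m+2)p/m}\right),\qquad q=-\tfrac{(m-1)p}{m},$$
valid for $p\ge 2m/(m+4)$, by running the Feng--McGregor argument of Theorem \ref{FMcGo} (Appendix \ref{McGo}) with the extra factor $|f|^{q}$ carried through. The negative weight is what makes this work: on $\{|z|=\rho\}$ one has $|f|\gtrsim 1$, while near the singularity $|f|\asymp(1-\rho)^{-2}$, so $|f|^{q}$ (with $q<0$) suppresses precisely the dangerous region. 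This interpolates between the unbounded class ($q=0$, $m=1$, threshold $2/5$) and the bounded class (formally $q=-p$, $m\to\infty$, threshold $2$), consistently with $2m/(m+4)\in[2/5,2)$.

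For the lower bound it suffices to test $B_{\mathcal{S}_m}$ on the $m$-fold transform of the Koebe function, $\mathcal{K}^{(m)}(z)=z(1+z^m)^{-2/m}\in\mathcal{S}_m$. A direct computation gives $(\mathcal{K}^{(m)})'(z)=(1-z^m)(1+z^m)^{-(m+2)/m}$, so $|(\mathcal{K}^{(m)})'|=|1-z^m|/|1+z^m|^{(m+2)/m}$. The $m$ singularities sit at $z^m=-1$; localizing near one of them and rescaling the angular variable by $1-r$ shows $\int_0^{2\pi}|(\mathcal{K}^{(m)})'(re^{i\theta})|^p\,d\theta\asymp(1-r)^{1-(m+2)p/m}$ whenever $(m+2)p/m>1$, i.e. $p>m/(m+2)$. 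Since $2m/(m+4)>m/(m+2)$, this holds throughout the claimed range, giving $\beta_{\mathcal{K}^{(m)}}(p)=\tfrac{m+2}{m}p-1$; hence the upper bound is attained and $B_{\mathcal{S}_m}(p)=\tfrac{m+2}{m}p-1$.

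The main obstacle is the sharp threshold $p\ge 2m/(m+4)$ in the weighted estimate. The naive route---bounding $|f'|\le \tfrac{1+\rho}{\rho(1-\rho)}|f|$ from the distortion inequality for $zf'/f$ and then invoking the Koebe-extremal bound $\int|f|^{p/m}\,d\phi=O((1-\rho)^{1-2p/m})$---does reproduce the correct exponent $\tfrac{m+2}{m}p-1$, but only for $p/m\ge 1/2$, i.e. $p\ge m/2$, which is strictly more restrictive than $2m/(m+4)$. Extending the range down to the sharp threshold therefore genuinely requires the Feng--McGregor mechanism (splitting the circle according to the size of $|f'|$ and exploiting univalence through length--area control rather than pointwise distortion alone), now carried out with the weight $|f|^{q}$ present. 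This is the one step that is more than bookkeeping; the paper's description of it as a ``straightforward adaptation'' reflects the fact that the weight enters the Feng--McGregor estimates as an explicit, benign power that merely shifts the exponents from $3p-1$ to $3p+2q-1=\tfrac{m+2}{m}p-1$.
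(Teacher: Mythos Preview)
Your approach is correct and coincides with the paper's: both reduce to the weighted integral mean $\int|f'|^p|f|^{-(m-1)p/m}$ via $w=z^m$, then adapt the $m=2$ argument of Appendix~\ref{McGo}, and both use the $m$-fold Koebe function for sharpness.

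One small correction to your last paragraph: the mechanism in Appendix~\ref{McGo} is not a length--area splitting of the circle. It is H\"older's inequality followed by Lemma~\ref{tpom} (Hardy's identity plus Koebe distortion). Concretely, for $0<p<2$ one writes
\[
\int\frac{|f'|^p}{|f|^{(m-1)p/m}}\le\left(\int\frac{|f'|^2}{|f|^a}\right)^{p/2}\left(\int|f|^{\frac{bp}{2-p}}\right)^{(2-p)/2},
\]
with $a-b=\tfrac{2(m-1)}{m}$. Lemma~\ref{tpom} applies provided $2-a>0$ and $\tfrac{bp}{2-p}>\tfrac12$, i.e.\ $\tfrac1p-\tfrac12<b<\tfrac{2}{m}$, which is possible exactly when $p>\tfrac{2m}{m+4}$; the exponents then combine to $(m+2)p/m-1$. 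Extension to $p\ge2$ goes by convexity of $\beta_h$ and the pointwise bound $|h'|\le C(1-r)^{-(m+2)/m}$, exactly as in the $m=2$ case. So the sharp threshold comes from the H\"older compatibility window, not from a length--area argument.
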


\subsection{Integral means spectrum for unbounded whole-plane SLE} \label{IMS1}
  \subsubsection{Restriction to the unit circle} \label{restriction} The method introduced in \cite{BS} consists in finding approximate solutions to equation 
  \eqref{BS} in the vicinity of the unit circle for $|z|>1$, and near $z=1$.  We generalize it here to the {\it inner} whole-plane equation \eqref{BS}, for which $\sigma =-1$ and $|z|< 1$, 
  \begin{eqnarray}\nonumber&&p\left(\frac{r^4+4r^2(1-r\cos \theta)-1}{(r^2-2r\cos \theta+1)^2}-\sigma\right)F
+\frac{r(r^2-1)}{r^2-2r\cos \theta+1}F_r\\ \label{BSsig}&-&\frac{2r\sin \theta}{r^2-2r\cos \theta+1}F_\theta+\frac{\kappa}{2} F_{\theta\theta}=0.
\end{eqnarray}  
It is convenient to write this equation as 
 \begin{eqnarray}\label{BSsigD} p\left(\frac{N(r,\theta)}{D^2(r,\theta)}-\sigma\right)F
+\frac{r(r^2-1)}{D(r,\theta)}F_r-\frac{2r\sin \theta}{D(r,\theta)}F_\theta+\frac{\kappa}{2} F_{\theta\theta}=0,
\end{eqnarray} 
with 
\begin{eqnarray} 
\label{D}D(r,\theta)&:=&r^2-2r\cos\theta+1=|1-z|^2,\\ 
 \label{N}
N(r,\theta)&:=&r^4+4r^2(1-r\cos \theta)-1\\\nonumber
&=&2r^2D(r,\theta)+(r-1)(r^3-r^2+3r+1).
\end{eqnarray} 

We then look for approximate (but possibly exact) solutions  of the form
  \begin{eqnarray}\label{psi}
 \psi(r,\theta)
 =[-\sigma(1-r^2)]^{-\beta}g(r^2-2r\cos \theta+1)=[-\sigma(1-z\bar z)]^{-\beta} g(|1-z|^2).
 \end{eqnarray}
 Let us first remark that for any given value of $\kappa$, and for the special values  $p=p(\kappa)$ and  $\alpha=\alpha(\kappa)$ given in \eqref{kpgen},   the {\it exact solution found in Theorem \ref{main1} is precisely of the form} \eqref{psi}, with $\beta=(\kappa/2)\alpha^2$ and $g(x)=x^\alpha$. 
 It is thus necessarily a solution, for $p=p(\kappa)$, to the following explicit equation, obtained from \eqref{BSsigD},  $[-\sigma(1-r^2)]^{-\beta}$ being  further factored out:
 \begin{eqnarray} \nonumber p\left(\frac{N(r,\theta)}{D^2(r,\theta)}-\sigma\right) g 
-\frac{2r^2}{D(r,\theta)}\beta\,g--\frac{r(1-r^2)}{D(r,\theta)}(2r-2\cos\theta)g'\\ \label{BSsigDg}
-\frac{4r^2\sin^2 \theta}{D(r,\theta)}g'
+\frac{\kappa}{2}\big(2r\cos\theta \,g'+4r^2\sin^2\theta \,g''\big)=0,
\end{eqnarray} 
where $g= g(r^2-2r\cos\theta+1)$.

When $p$ is not equal to the special value $p(\kappa)$ of Eq. \eqref{kpgen}, the trial exponent $\beta$ and the function $g$ are determined from the {\it restriction} of Eq. \eqref{BSsigDg} to the unit circle $\partial \mathbb D$.  
One observes that on the unit circle \begin{eqnarray}\label{Dg}D(r=1,\theta)=2-2\cos\theta;\,\,\,\,N(1,\theta)=2D(1,\theta);\,\,\,\,g=g(2-2\cos\theta).\end{eqnarray}
Setting $r=1$ in Eq. \eqref{BSsigDg} and factoring out $[D(1,\theta)]^{-1}$, we therefore arrive at 
 \begin{eqnarray} \label{BSsigDg1} p\left[2-\sigma D(1,\theta)\right] g 
-2\beta\,g
-{4\sin^2 \theta}\,g'
+\frac{\kappa}{2}D(1,\theta)\big(2\cos\theta \,g'+4\sin^2\theta \,g''\big)=0.
\end{eqnarray} 
Define now $x:=2-2\cos\theta=D(1,\theta)$, such that $0\leq x\leq 4$; the equation on the unit circle simply becomes
 \begin{eqnarray} \label{BSsigDgx} 
 [p(2-\sigma x) -2\beta]\,g(x) +\left[\frac{\kappa}{2}(2-x)-(4-x)\right]x\,g'(x)+\frac{\kappa}{2} (4-x) x^2\,g''(x)=0.
\end{eqnarray} 
By homogeneity, for a function of the \emph{power law} form $g(x)=x^\gamma$, the left-hand side of \eqref{BSsigDgx} becomes 
$(c+dx)g(x)$, with
\begin{eqnarray*}
c&=&2p-2\beta-(\kappa+4)\gamma+2\kappa \gamma^2,\\
d&=&-\sigma p +\gamma-\frac{\kappa}{2}\gamma^2.
\end{eqnarray*}
We thus get a power law solution to \eqref{BSsigDgx} if and only if $c=0$ and $d=0$, i.e.,
\begin{eqnarray}\label{betapgammasig}
&&\beta=p-(\kappa+4)\frac{\gamma}{2} +\kappa\gamma^2,\\ \label{gammasig}
&&\gamma^2 \frac{\kappa}{2}-\gamma +\sigma p=0.
\end{eqnarray}
\textcolor{black}{Upon substituting $g(x)=x^\gamma g_0(x)$ into  \eqref{BSsigDgx}, we obtain
\begin{eqnarray} \nonumber
&& \big[2\beta(\gamma)-2\beta+x A^{\sigma}(\gamma)\big]\,x^\gamma g_0(x) \\ \label{BSsigDg0x}  &&+\left[\frac{\kappa}{2}(2-x)+(\kappa\gamma-1)(4-x)\right]\,x^{\gamma+1}g_0'(x)+\frac{\kappa}{2} (4-x)\, x^{\gamma+2}g_0''(x)=0,\\ \label{Bsigma}&& \beta(\gamma):=\kappa\gamma^2/2-C(\gamma),\\ \label{Asigma}
&&A^{\sigma}(\gamma):=A(\gamma)-(1+\sigma)p,\end{eqnarray}}\textcolor{black}{
where we recall definitions \eqref{Aalpha} and \eqref{Calpha} for $A$ and $C$:
\begin{eqnarray}\label{AC}
A(\gamma)=- \frac{\kappa}{2}\gamma^2+\gamma+p,\,\,\,\,\,
C(\gamma)= -\frac{\kappa}{2}\gamma^2+\big(\frac{\kappa}{2}+2\big)\gamma-p.
\end{eqnarray}
 We thus have
\begin{eqnarray}\label{betaCgamma}
&&\beta(\gamma)=\kappa\gamma^2/2-C(\gamma)=\kappa\gamma^2-(\kappa/2+2)\gamma -p,\\ \label{Asig}
&&A^{\sigma}(\gamma)=A(\gamma)-(1+\sigma)p=- \frac{\kappa}{2}\gamma^2+\gamma -\sigma p.
\end{eqnarray}
A power law solution, $g(x)=x^\gamma$ to Eq. \eqref{BSsigDgx}, i.e., $g_0$ constant, is obtained if the first line of Eq. \eqref{BSsigDg0x} vanishes, so  that 
\begin{eqnarray}\label{betapgamma}
&&\beta=\beta(\gamma)=\kappa\gamma^2/2-C(\gamma),\\ \label{gamma}
&&A^{\sigma}(\gamma)=0,
\end{eqnarray}
which is equivalent to equations \eqref{betapgammasig} and \eqref{gammasig}. 
Recall that for the interior whole-plane SLE considered here, we have $\sigma=-1$, hence $A^{(-1)}(\gamma)=A(\gamma)$, while in the exterior case considered by BS  \cite{BS} one has $\sigma=+1$, hence $A^{(+1)}(\gamma)=A(\gamma)-2p$.}  

\textcolor{black}{The solutions to Eqs. \eqref{betapgamma} and \eqref{gamma} are
\begin{eqnarray}\label{gammasigma}
&&\gamma^\sigma_{\pm}(p)=\frac{1}{\kappa}\big(1\pm \sqrt{1-2\sigma\kappa p}\big),\\
\label{betapgammasigma}
&&\beta^\sigma_{\pm}(p)=(1-2\sigma)p-\frac{\kappa}{2}\gamma^\sigma_{\pm}(p)=(1-2\sigma)p -\frac{1}{2}\big(1\pm \sqrt{1-2\sigma\kappa p}\big).
\end{eqnarray}
We have thus obtained a pair of power law solutions, 
\begin{eqnarray}\label{psisigma}
\psi^\sigma_{\pm}(z,\bar z):=[-\sigma(1-z\bar z)]^{-\beta^\sigma_{\pm}}g(|1-z|^2),\,\, g(x)=x^{\gamma^\sigma_{\pm}},
\end{eqnarray} to the boundary equation \eqref{BSsigDgx}.} 

\textcolor{black}{For the interior case $\sigma=-1$, we shall use hereafter the simplified notation
\begin{eqnarray}\label{gammafin}&&\gamma_{\pm}(p):=\gamma_{\pm}^{(-1)}(p)=\frac{1}{\kappa}\big(1\pm \sqrt{1+2\kappa p}\big),\\ \label{betapgammafin}
&&\beta_{\pm}(p):=\beta_{\pm}^{(-1)}(p)=3p-\frac{\kappa}{2}\gamma_{\pm}(p)=3p -\frac{1}{2}\big(1\pm \sqrt{1+2\kappa p}\big). 
\end{eqnarray}}
\textcolor{black}{Besides the power law solutions obtained here for the particular values \eqref{gammasigma} of $\gamma$, the second order differential equation \eqref{BSsigDg0x} for $g_0$ has a general class of  solutions which depends on the continuous parameter $\gamma$. Observe in particular that, for a given $\gamma$, the choice of parameter $\beta=\beta(\gamma)$ reduces the equation \eqref{BSsigDg0x} to the following  hypergeometric equation, which will be studied in Sections \ref{BSapproach} and \ref{secondsolrig}:
\begin{eqnarray} \label{hypergeom}  A^{\sigma}(\gamma)\, g_0(x)   +\left[\frac{\kappa}{2}(2-x)+(\kappa\gamma-1)(4-x)\right]\,g_0'(x)+\frac{\kappa}{2} (4-x)\, x g_0''(x)=0.\end{eqnarray}}  
\subsubsection{Action of the differential operator}\label{action}
\textcolor{black}{Let us consider a general function of the type
\begin{equation}\label{psior}
\psi(r,\theta)=[-\sigma(1-r^2)]^{-\beta}(r^2-2r\cos \theta+1)^\gamma.
\end{equation}
This function is of the form 
\begin{eqnarray}\label{psiphi}
\psi(z,\bar z)&=&[-\sigma(1-z\bar z)]^{-\beta} x^\gamma=[-\sigma(1-z\bar z)]^{-\beta} |1-z|^{2\gamma}\\ \nonumber
&=&[-\sigma(1-z\bar z)]^{-\beta} \varphi_\gamma(z)\varphi_\gamma(\bar z),
\end{eqnarray}
\textcolor{black}{where}
\begin{equation}\label{x}x=r^2-2r\cos \theta+1=|1-z|^2=1-(z+\bar z)+z\bar z.
\end{equation}}

It will prove useful to evaluate the action of the differential operator $\mathcal P(D)$ \eqref{PD} on the function \eqref{psiphi} for \textit{general values} of $\beta$ and $\gamma$ by using \eqref{zzb}, \eqref{Psing} and \eqref{Pphialpha}. 
\textcolor{black}{The general result, using the identity $A+B+C=0$ in \eqref{Pphialpha}, \eqref{Aalpha}, \eqref{Balpha2}, and  \eqref{Calpha}, is} 
\begin{eqnarray} \nonumber\frac{\mathcal P(D)[\psi(z,\bar z)]}{\psi(z,\bar z)}= (\kappa \gamma^2-2\beta)\frac{z\bar z}{x}+C(\gamma)\left[\frac{1-z\bar z}{x}\left(\frac{1-z\bar z}{x}+1\right)-\frac{2}{x}\right]\\\label{genres}-A(\gamma) \left(\frac{1-z\bar z}{x}-1\right)-(1+\sigma)p,\end{eqnarray} 
\textcolor{black}{where $A$ and $C$ are given by \eqref{AC}.} 
\textcolor{black}{Using \eqref{Bsigma} and \eqref{Asigma}, we can recast the above equation as 
\begin{eqnarray} \nonumber\frac{\mathcal P(D)[\psi(z,\bar z)]}{\psi(z,\bar z)}&=& \big(\beta(\gamma)-\beta\big)\frac{2}{x}+C(\gamma)\left(\frac{1-z\bar z}{x}\right)^2\\ \nonumber&&+\big(2\beta-2\beta(\gamma) -A(\gamma)-C(\gamma)\big)\frac{1-z\bar z}{x}\\ 
&&+A^\sigma(\gamma).\label{genresbis}\end{eqnarray} 
Note that 
\begin{eqnarray}\label{zzx0}
&&1-z\bar z=2\Re (1-z)-|1-z|^2=2x^{1/2}\cos\varphi -x,\\ \nonumber&&\varphi:=\arg (1-z).
\end{eqnarray}
Hence in the  $z\to 1$, $x\to 0$ limit, one has 
\begin{equation}\label{zbarzx}
1-z\bar z\sim 2 x^{1/2}\cos \varphi
\end{equation}
 along any ray passing through $1$, except if $\varphi=\pm\pi/2$, which corresponds to $z$ reaching $1$ tangentially to the unit circle $z\bar z=1$.}

\textcolor{black}{From the equivalence \eqref{zbarzx} for $x\to 0$, we conclude that the most singular terms in the action \eqref{genresbis} of the differential operator $\mathcal P(D)$ are the two terms on the r.h.s. of the first line, scaling like $x^{-1}$ and $(1-z\bar z)^2 x^{-2}$. They are furthermore independent of each other because the second one is parameterized by the angle $\varphi$.}
\subsubsection{\textcolor{black}{General action of the operator $\mathcal P(D)$}}
\textcolor{black}{Consider in this section the function \begin{eqnarray}\label{generalpsi}
&&\psi_0(z,\bar z):=P(z\bar z) g(x),\\ \nonumber
&&P(z\bar z) =[-\sigma(1-z\bar z)]^{-\beta},\\ \nonumber
&&g(x)=x^{\gamma}g_0(x),\,\,\, x=(1-z)(1-\bar z),\end{eqnarray}
where $g$ satisfies the boundary equation \eqref{BSsigDgx}, or, equivalently, $g_0$ satisfies \eqref{BSsigDg0x}, $\beta$ and $\gamma$ being considered here as parameters. After some calculation, we obtain:
\begin{eqnarray} \nonumber &&\frac{\mathcal P(D)[\psi_0(z,\bar z)]}{\psi_0(z,\bar z)}=(1-z\bar z)\left[\frac{1}{x}(\beta-p-\gamma)+\frac{1}{4-x}\big(-\beta +p(1-2\sigma)-\frac{\kappa}{2}\gamma\big)\right]\\ \nonumber &+&(1-z\bar z)\left[\left(\frac{\kappa}{2}-1-\frac{2\kappa}{4-x}\right)\frac{g'_0}{g_0}\right] \\ 
\nonumber &+&\frac{(1-z\bar z)^2}{x^2} \left[\frac{1}{4-x}\left(2p(1-2\sigma)-2\beta-\kappa x\frac{g'}{g}\right)+(\sigma-1)p+\big(\frac{\kappa}{2}+1\big)x\frac{g'}{g}\right],\\ \label{genrester}&&
\end{eqnarray} 
where 
$
x{g'}/{g}=\gamma+x{g_0'}/{g_0}.$}

\textcolor{black}{Substituting the particular value \eqref{betapgamma} $\beta=\beta(\gamma)=\kappa\gamma^2/2-C(\gamma)$ gives
\begin{eqnarray} \nonumber&&\frac{\mathcal P(D)[\psi_0(z,\bar z)]}{\psi_0(z,\bar z)}=(1-z\bar z)\left[-\frac{1}{x}\big[C(\gamma)+A(\gamma)\big]+\frac{1}{4-x}2A^\sigma(\gamma)+\left(\frac{\kappa}{2}-1-\frac{2\kappa}{4-x}\right)\frac{g'_0}{g_0}\right] \\ \nonumber &+&\frac{(1-z\bar z)^2}{x^2} \left\{\frac{1}{4-x}\left[2p(1-2\sigma)-2\beta(\gamma)-\kappa \big(\gamma+x\frac{g_0'}{g_0}\big)\right]+(\sigma-1)p+\big(\frac{\kappa}{2}+1\big)\big(\gamma+x\frac{g_0'}{g_0}\big)\right\}.\\ 
&& \label{genresquater0}
\end{eqnarray}
Using the identity:
$$
4A^\sigma(\gamma)=2p(1-2\sigma)-2\beta(\gamma)-\kappa\gamma, 
$$
we obtain 
\begin{eqnarray} \nonumber&&\frac{\mathcal P(D)[\psi_0(z,\bar z)]}{\psi_0(z,\bar z)}=(1-z\bar z)\left[-\frac{1}{x}\big[C(\gamma)+A(\gamma)\big]+\frac{1}{4-x}2A^\sigma(\gamma)+\left(\frac{\kappa}{2}-1-\frac{2\kappa}{4-x}\right)\frac{g'_0}{g_0}\right] \\ \nonumber &+&\frac{(1-z\bar z)^2}{x^2} \left\{\frac{1}{4-x}\left[4A^\sigma(\gamma)-\kappa x\frac{g_0'}{g_0}\right]+(\sigma-1)p+\big(\frac{\kappa}{2}+1\big)\big(\gamma+x\frac{g_0'}{g_0}\big)\right\}.\\ 
&& \label{genresquater}
\end{eqnarray}In the $x \to 0$ limit, assuming that $xg'_0(x)/g_0(x)=o(1)$, the second line is equivalent to 
\begin{eqnarray}\label{Cxto0}&&\frac{(1-z\bar z)^2}{x^2} \left\{A^\sigma(\gamma)+(\sigma-1)p+\left(\frac{\kappa}{2}+1\right)\gamma\right\}\\ \nonumber&&=\frac{(1-z\bar z)^2}{x^2} \left\{A(\gamma)-2p+\left(\frac{\kappa}{2}+1\right)\gamma\right\}  = \frac{(1-z\bar z)^2}{x^2} C(\gamma).\end{eqnarray}}
\subsubsection{\textcolor{black}{The Beliaev-Smirnov  approach}}\label{BSapproach}
\textcolor{black}{In this section we discuss the Beliaev-Smirnov approach of Ref. \cite{BS} to the standard BS spectrum \eqref{beta000}, and compare it to the formulation here.  
\begin{rkk}\label{BSCA0} $\bullet$ The study carried out in Ref. \cite{BS}  by Beliaev and Smirnov consists first in selecting  a particular function $\psi_0$ of the form \eqref{generalpsi}, with the choice $\beta=\beta(\gamma)$ \eqref{Bsigma}.   The corresponding  solution to the differential equation \eqref{BSsigDgx} for $g$, or, equivalently, to the hypergeometric equation \eqref{BSsigDg0x}, \eqref{hypergeom} for $g_0$, then involves a combination of two hypergeometric functions. While obtained in Ref. \cite{BS} for the exterior case $\sigma=+1$, it can be readily generalized to the interior whole-plane case with $\sigma=-1$, and is written for general $\sigma$ as:
\begin{eqnarray}\label{ghyper}
g(x)=\big(\frac{x}{4}\big)^{\gamma}g_0(x),\,\,\,g_0(x)= {}_{2}F_{1}\big(a,b,c,\frac{x}{4}\big) -C_0 \big(\frac{x}{4}\big)^{1/2-a-b} {}_2F_{1}\big(a',b',c',\frac{x}{4}\big)\end{eqnarray}
with
\begin{eqnarray}
&&a=a(\gamma):=\gamma-\gamma^\sigma_+,\,\,
b=b(\gamma):=\gamma-\gamma^\sigma_-,\,\, c=\frac{1}{2}+a+b, \label{abcBS}\\ 
&&a'=\frac{1}{2}-b,\,\,
b'=\frac{1}{2}-a,\,c'=\frac{1}{2}+a'+b'=\frac{3}{2}-a-b,\end{eqnarray}
where $\gamma^\sigma_{\pm}$ is defined in \eqref{gammasigma}. The constant
\begin{eqnarray}\label{C0} 
C_0=\frac{\Gamma(c)}{\Gamma(a)\Gamma(b)}\frac{\Gamma(a')\Gamma(b')}{\Gamma(c')}
\end{eqnarray}
is chosen such that $g_0(x)$ is singularity-free at $x=4$, i.e., at the point $z=-1$ on the unit circle (see pp. 590-591 in \cite{BS}). 
\end{rkk}
\begin{rkk}\label{singularity4}
In the action \eqref{genresquater} of the differential operator, one notices the existence of apparently singular terms involving $(4-x)^{-1}$. In fact, the choice of the constant $C_0$ \eqref{C0} in \eqref{ghyper}, made to insure that $g_0(x)$ is regular at $x=4$,  yields in turn the particular identity 
\begin{equation}\label{Asigmag0}
 \frac{g'_0(4)}{g_0(4)}=-\frac{ab}{2}=-\frac{1}{2}(\gamma-\gamma^\sigma_+)(\gamma-\gamma^\sigma_-)=\frac{1}{\kappa}A^\sigma(\gamma).
\end{equation}
This resolves the apparent singularities at $x=4$ in \eqref{ghyper}.
\end{rkk}
\begin{rkk} \label{remark46}$\bullet$ The  
 parameter $\gamma=\gamma_0$, hence $\beta=\beta_0:=\beta(\gamma_0)$, (corresponding to Eqs. (11)  and (12) in Ref. \cite{BS}) is chosen such that the leading singularities in the action  \eqref{genresbis} of the differential operator $\mathcal P(D)$ on the truncated $\psi$ function \eqref{psiphi} vanish: 
\begin{eqnarray}\label{BScondition1}
C(\gamma_0)=0,;\,\,\,\,\beta_0=\beta(\gamma_0)=\kappa\gamma^2_0/2. \end{eqnarray} 
When considering the action \eqref{genresquater} of the operator $\mathcal P(D)$ onto the full $\psi_0$ function \eqref{generalpsi} including $g_0(x)$, the leading singularity \eqref{Cxto0} vanishes.    
 Because  the functions  $C(\gamma)$  \eqref{AC} and $\beta(\gamma)$ \eqref{betaCgamma} are independent of $\sigma$,  the BS exponents $\gamma_0$ and $\beta_0$ stay the same for the interior problem.   
The solutions to \eqref{BScondition1} are 
\begin{eqnarray}
\label{gammazero}
\gamma_0^{\pm}(p)&:=&\frac{1}{2\kappa}\left(4+\kappa\pm\sqrt{(4+\kappa)^2-8\kappa p}\right),\\ \label{BS0pm}
\beta_0^{\pm}(p)&:=&\frac{1}{2}\kappa{\gamma_0^{\pm}}(p)^2,\\ \label{BS0}
\beta_0(p)&:=&\beta_0^-(p)=\frac{1}{2}\kappa\gamma_0^2,\,\,\, \gamma_0:=\gamma_0^-,\end{eqnarray} 
where the lower branch $\gamma_0:=\gamma_0^{-}$ is the one selected among the two solutions $\gamma_0^{\pm}$ (see Eq. (11) in Ref. \cite{BS} and Eqs. \eqref{gamma00} and \eqref{beta000} here). \\ 
\end{rkk}}
\textcolor{black}{The method of proof in Ref. \cite{BS}, that \eqref{BS0} yields the integral means spectrum of the whole-plane SLE in the exterior case, requires the BS solution \eqref{ghyper}, \eqref{gammazero}, \eqref{BS0} to the boundary equation \eqref{hypergeom} to be bounded and positive. When checking these conditions, one finds the following results for the two cases $\sigma=\pm 1$.}  
\begin{prop}\label{BScease}
\textcolor{black}{For $\sigma=+1$, the BS average integral means spectrum \eqref{gammazero}, \eqref{BS0},  while analytic up to $p=(4+\kappa)^2/8\kappa$,  holds only up to $p_0(\kappa):=3(4+\kappa)^2/32\kappa$, after which it stays linear \cite{BS}. 
For $\sigma=-1$, the BS spectrum only holds for $p\leq p^*(\kappa)$, where $p^*(\kappa)$ is given by \eqref{pstar}:
\begin{eqnarray}\label{pstarbis}p^*(\kappa)&=&
\frac{1}{16\kappa}\left((4+\kappa)^2-4-2\sqrt{4+2(4+\kappa)^2}\right),\end{eqnarray} and corresponds to the intersection of spectra \eqref{BS0} and \eqref{betapgammafin} $\beta_0(p^*)=\beta_+(p^*)$. Note that $\forall \kappa\geq 0, p^*(\kappa)<p_0(\kappa)$.} 
\end{prop}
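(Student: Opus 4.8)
The plan is to determine, in each case $\sigma=\pm1$, the exact range of $p$ for which the Beliaev--Smirnov solution built from $\gamma_0$ and $\beta_0$ (equations \eqref{gammazero}--\eqref{BS0}) remains bounded and positive on $[0,4]$, since this is precisely the criterion under which it represents the true average integral means spectrum. I would organize the argument around the two indicial exponents of the boundary equation \eqref{hypergeom} at the regular singular point $x=0$, together with the competition between the $C(\gamma)=0$ family (yielding $\beta_0$) and the $A^\sigma(\gamma)=0$ family (yielding $\beta_\pm^\sigma$, equation \eqref{betapgammasigma}). First I would record the indicial exponents: inserting $g\sim x^\delta$ into \eqref{BSsigDgx} gives $2\kappa\delta^2-(\kappa+4)\delta+2(p-\beta)=0$, and with $\beta=\beta_0=\tfrac\kappa2\gamma_0^2$ and $C(\gamma_0)=0$ one checks that $\delta=\gamma_0$ is a root, the companion root being $\delta'=\tfrac12+\tfrac2\kappa-\gamma_0$ (their sum is $\tfrac{\kappa+4}{2\kappa}$). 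Equivalently, in the normalization \eqref{ghyper}--\eqref{abcBS} the two small-$x$ powers are $x^{\gamma_0}$ and $x^{\gamma_0+\frac12-a-b}$ with $\tfrac12-a-b=\tfrac12-2\gamma_0+\tfrac2\kappa$ (using $\gamma_+^\sigma+\gamma_-^\sigma=2/\kappa$, which is independent of $\sigma$). Because $g_0$ is chosen regular at $x=4$ via \eqref{C0}, boundedness and positivity can only fail through the small-$x$ behavior or through a sign change inside $(0,4)$.

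For $\sigma=+1$ I would argue as in \cite{BS}. The companion exponent $\delta'$ coincides with the leading one exactly when $\gamma_0=\tfrac{4+\kappa}{4\kappa}$; solving $\gamma_0(p)=\tfrac{4+\kappa}{4\kappa}$ with \eqref{gammazero} gives $\sqrt{(4+\kappa)^2-8\kappa p}=\tfrac{4+\kappa}{2}$, hence $p=p_0(\kappa)=3(4+\kappa)^2/32\kappa$, equation \eqref{p00star}, which sits strictly below the analyticity threshold $p=(4+\kappa)^2/8\kappa$ at which $\gamma_0$ becomes complex. For $p<p_0$ one has $\tfrac12-a-b>0$, so $g_0\to1$ and $g\sim x^{\gamma_0}$ is the genuine leading behavior; for $p>p_0$ the companion power overtakes, the leading exponent switches to $\delta'$, and reading off the associated prefactor reproduces the linear spectrum $\hat\beta_0(p)=p-(4+\kappa)^2/16\kappa$ of \eqref{tildebeta00}. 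I would confirm continuity at $p_0$ by checking $\beta_0(p_0)=\hat\beta_0(p_0)=(4+\kappa)^2/32\kappa$, and for the linearity above $p_0$ simply invoke \cite{BS}.

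For $\sigma=-1$ the indicial coincidence still sits at $p_0$, but an earlier obstruction appears. Here the $A^{(-1)}(\gamma)=0$ branch supplies a genuine power-law solution $\psi_+=(1-z\bar z)^{-\beta_+}|1-z|^{2\gamma_+}$ with $\gamma_+=\tfrac1\kappa(1+\sqrt{1+2\kappa p})>0$ and $\beta_+(p)=3p-\tfrac12-\tfrac12\sqrt{1+2\kappa p}$, equation \eqref{betapgammafin}; whereas for $\sigma=+1$ the analogous exponent $\beta_+^{(+1)}=-p-\tfrac12(1+\sqrt{1-2\kappa p})$ stays below $\beta_0$ wherever it is real and never competes. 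Since both $\psi_+$ and the BS solution are admissible local solutions near $\partial\mathbb D$ while their exponents are real and the corresponding $g$ stays positive, the spectrum is the larger of the two exponents, so $\beta_0$ is valid precisely while $\beta_0\ge\beta_+$. As $\beta_0(0)=0>-1=\beta_+(0)$ but $\beta_+$ has the larger slope ($\sim3p$), the two cross at a unique $p^*>0$. To obtain \eqref{pstar} I would set $\beta_0(p^*)=\beta_+(p^*)$ using the form $\beta_0=-p+\tfrac{4+\kappa}{2}\gamma_0$ from \eqref{beta000}: writing $S_1=\sqrt{(4+\kappa)^2-8\kappa p}$ and $S_2=\sqrt{1+2\kappa p}$ this reads $(4+\kappa)(4+\kappa-S_1)+2\kappa(1+S_2)=16\kappa p$, and the identity $S_1^2+4S_2^2=(4+\kappa)^2+4$ eliminates one radical; squaring to clear the other produces a polynomial whose relevant root is $p^*=\tfrac{1}{16\kappa}\big((4+\kappa)^2-4-2\sqrt{4+2(4+\kappa)^2}\big)$. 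I would then verify $p^*<p_0$ directly from \eqref{pstar} and \eqref{p00star}.

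The main obstacle is conceptual rather than computational: rigorously justifying that, for $\sigma=-1$, the boundedness and positivity of the BS solution genuinely break down exactly at the crossing $\beta_0=\beta_+$ — equivalently, that beyond $p^*$ the power-law solution $\psi_+$, and not the $\beta_0$-hypergeometric solution, governs the leading boundary behavior of $F=\mathbb E[|f_0'|^p]$. This crossing is not a parameter degeneracy of the hypergeometric data (one checks $\beta_0=\beta_+$ does not force $a=0$), so pinning down the switch requires the finer solution analysis of Sections \ref{BSapproach} and \ref{secondsolrig} together with the matching lower bound of Theorem \ref{theoMFrigorous}; the elimination of the two nested radicals to reach the exact form \eqref{pstar} is the secondary, purely algebraic, difficulty.
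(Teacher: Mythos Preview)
Your approach differs from the paper's in a way that leaves the central point unproved. The proposition asserts that the Beliaev--Smirnov \emph{method of proof} ceases to apply at $p^*(\kappa)$; concretely, this means that the explicit boundary solution \eqref{ghyper}--\eqref{C0} (with $\gamma=\gamma_0$, $\beta=\beta_0$) fails to be bounded and positive on $[0,4]$ beyond $p^*$. Your argument instead assumes that ``the spectrum is the larger of the two exponents'' $\beta_0$ and $\beta_+$, and then locates the crossing. But that maximum principle is the \emph{conclusion} one is trying to justify, not an input; invoking it here is circular.

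The paper's proof is direct and does not pass through any comparison of competing spectra. It examines the two conditions that make the BS super/subsolution work: (i) boundedness near $x=0$, equivalent to $\tfrac12-a_0-b_0\ge0$, which gives $p\le p_0(\kappa)$ for either sign of $\sigma$; and (ii) positivity of $g_0$ on $[0,4]$, which reduces to $g_0(4)>0$, i.e.\ $\Gamma(\tfrac12-a_0)\Gamma(\tfrac12-b_0)>0$. For $\sigma=+1$ this inequality holds throughout $[0,p_0]$ (as in \cite{BS}). For $\sigma=-1$, one computes $\tfrac12-b_0=\tfrac12+\tfrac2\kappa-\gamma_+(p)-\gamma_0(p)$, and the key observation is that since $(\gamma_0,\beta_0)$ and $(\gamma_+,\beta_+)$ both lie on the parabola $\beta(\gamma)$ \eqref{betaCgamma}, the condition $\beta_0(p^*)=\beta_+(p^*)$ is \emph{equivalent} to $\gamma_0+\gamma_+=\tfrac2\kappa+\tfrac12$, hence to $\tfrac12-b_0(p^*)=0$. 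Thus $g_0(4)$ changes sign precisely at $p^*$, and the positive BS solution ceases to exist for $p>p^*$.

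Note also that your parenthetical ``one checks $\beta_0=\beta_+$ does not force $a=0$'' is literally true but misses the point: the crossing forces $b_0=\tfrac12$, i.e.\ $a'=\tfrac12-b_0=0$, which \emph{is} a degeneracy of the second hypergeometric function in \eqref{ghyper} and is exactly the mechanism through which positivity fails. This algebraic link---parabola symmetry $\Leftrightarrow$ $b_0=\tfrac12$ $\Leftrightarrow$ sign change of $\Gamma(\tfrac12-b_0)$---is what your proposal lacks.
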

\begin{proof} {Following \bf Lemma 5} in \cite{BS}, let us recall the values of the \textcolor{black}{parameters \eqref{abcBS} of the hypergeometric functions, in the case $\gamma=\gamma_0$:} 
\begin{eqnarray}\label{aprime}
a_0(p)&:=&a(\gamma_0)=\gamma_0(p)-\gamma^\sigma_{+}(p)=\gamma_0(p) -\frac{1}{\kappa}-\frac{1}{\kappa}\sqrt{1-2\sigma\kappa p}\\
\label{bprime}
b_0(p)&:=&b(\gamma_0)=\gamma_0(p)-\gamma^\sigma_{-}(p)=\gamma_0(p) -\frac{1}{\kappa}+\frac{1}{\kappa}\sqrt{1-2\sigma\kappa p},
\end{eqnarray} 
where $\gamma_0=\gamma_0^{-}$ is the lower BS parameter in  \eqref{gammazero}, and where $\gamma^\sigma_{\pm}$ is defined in \eqref{gammasigma}.

A first condition \cite{BS} for the existence of a \textcolor{black}{bounded BS solution $g_0(x)$ when $x\to 0$}, i.e., $z\to 1$ on the circle $\partial \mathbb D$, is the condition $1/2-a_0-b_0\geq 0$, which insures that the second term in \eqref{ghyper} is non-diverging, and gives $p\leq 3(4+\kappa)^2/32\kappa$, independently of $\sigma$. 

\textcolor{black}{Then a second condition \cite{BS} concerns the positivity of $g_0(x)$ on the interval $x\in [0,4]$, which is shown to amount to $g_0(4) >0$}, or explicitly  $$\Gamma(1/2-a_0)\Gamma(1/2-b_0)>0.$$ For $\sigma =+1$ and for $0\leq p\leq 1/2\kappa$, BS show that   $1/2-a_0 \geq 1/2-b_0> 0$, whereas for $p\geq 1/2\kappa$, the inequality is fulfilled since $a_0$ and $b_0$ are complex conjugate, thus $\Gamma(1/2-b_0)=\overline{\Gamma(1/2-a_0)}.$

For $\sigma =-1$,  the situation turns out to be different. The parameters $a_0$ \eqref{aprime} and $b_0$ \eqref{bprime} are real for $p\geq 0$, but the inequality $1/2-b_0> 0$ is no longer necessarily satisfied. One has indeed from \eqref{gammafin}
\begin{eqnarray}\nonumber \frac{1}{2}-b_0(p)&=&\frac{1}{2}+\frac{1}{\kappa} -\frac{1}{\kappa}\sqrt{1+2\kappa p}-\gamma_0(p)\\ \label{1/2-b}
&=&\frac{1}{2}+\frac{2}{\kappa} -\gamma_+(p)-\gamma_0(p).\end{eqnarray} 
Since $\gamma_+$ and $\gamma_0$ are both increasing functions of $p$, there may be a point where $1/2-b_0=0$, after which it becomes negative and the  \textcolor{black}{positive BS solution $g_0$} \eqref{ghyper}, \eqref{C0}  ceases to exist for $\sigma =-1$. Recall that the four pairs $({\gamma_0}^{\pm},{\beta_0}^{\pm})$ \eqref{gammazero}-\eqref{BS0pm}, and $(\gamma_{\pm},\beta_{\pm})$ \eqref{gammasigma}-\eqref{betapgammasigma} all belong to the curve $(\gamma,\beta(\gamma))$ \eqref{betaCgamma} (Fig. \ref{BSfig1}).  
\textcolor{black}{In these notations,  the transition point $p^*(\kappa)$ \eqref{pstar} is defined by the intersection of the two spectra $\beta_0(p^*)=\beta_+(p^*)$. Thus the corresponding parameters $\gamma_0=\gamma_0^-(p^*)$ and $\gamma_+=\gamma_+(p^*)$ are such that $\beta(\gamma_0)=\beta_0(p^*)=\beta_+(p^*)=\beta(\gamma_+)$ (see Fig. \ref{BSfig1}, top figure). Because  $\beta(\gamma)$ is the quadratic form \eqref{betaCgamma},   
    $\gamma_0+\gamma_+=2/\kappa+1/2$.  
Thus $p^*$ is precisely the point where $1/2-b_0$ \eqref{1/2-b} vanishes. For $p> p^*$, $1/2-b_0<0$ and a positive BS solution $g_0$ to the boundary equation \eqref{hypergeom} no longer exists.}  
\end{proof}
\textcolor{black}{\begin{rkk}
In the original Beliaev-Smirnov case $\sigma=+1$, one has 
\begin{eqnarray}\nonumber \frac{1}{2}-b_0(p)&=&\frac{1}{2}+\frac{1}{\kappa} -\frac{1}{\kappa}\sqrt{1-2\kappa p}-\gamma_0(p)\\ \label{1/2-b+}
&=&\frac{1}{2}+\frac{2}{\kappa} -\gamma_+^{(+1)}(p)-\gamma_0(p).
\end{eqnarray} In the negative range of moments, there exists a value of $p$ where $1/2-b_0(p)$ \eqref{1/2-b+} vanishes,  $p^{**}(\kappa)=-(4+\kappa)^2(8+\kappa)/128$, and below which $1/2-b_0(p)$ is negative. This signals the possible onset of a phase transition, similar to the one  studied here in the $\sigma=-1$ case and occurring at $p^*(\kappa)$. This will be further studied in a separate publication with Dmitry Beliaev \cite{BDZ}. It has also been noticed in Ref. \cite{2013JSMTE..04..007L}.  
\end{rkk}}
\begin{figure}
\begin{center}
\includegraphics[angle=0,width=.53029\linewidth]{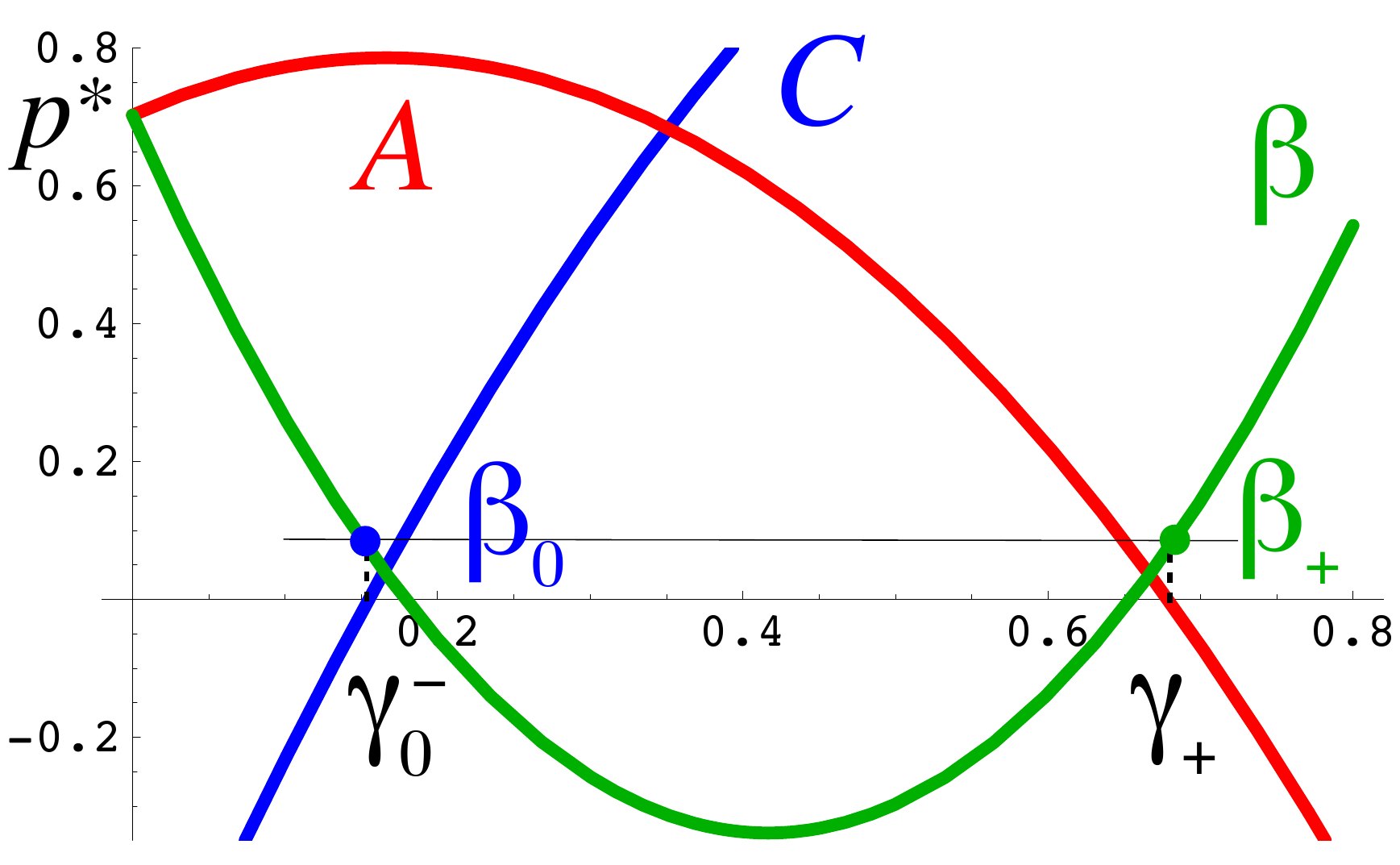}
\vskip.7cm
\includegraphics[angle=0,width=.53029\linewidth]{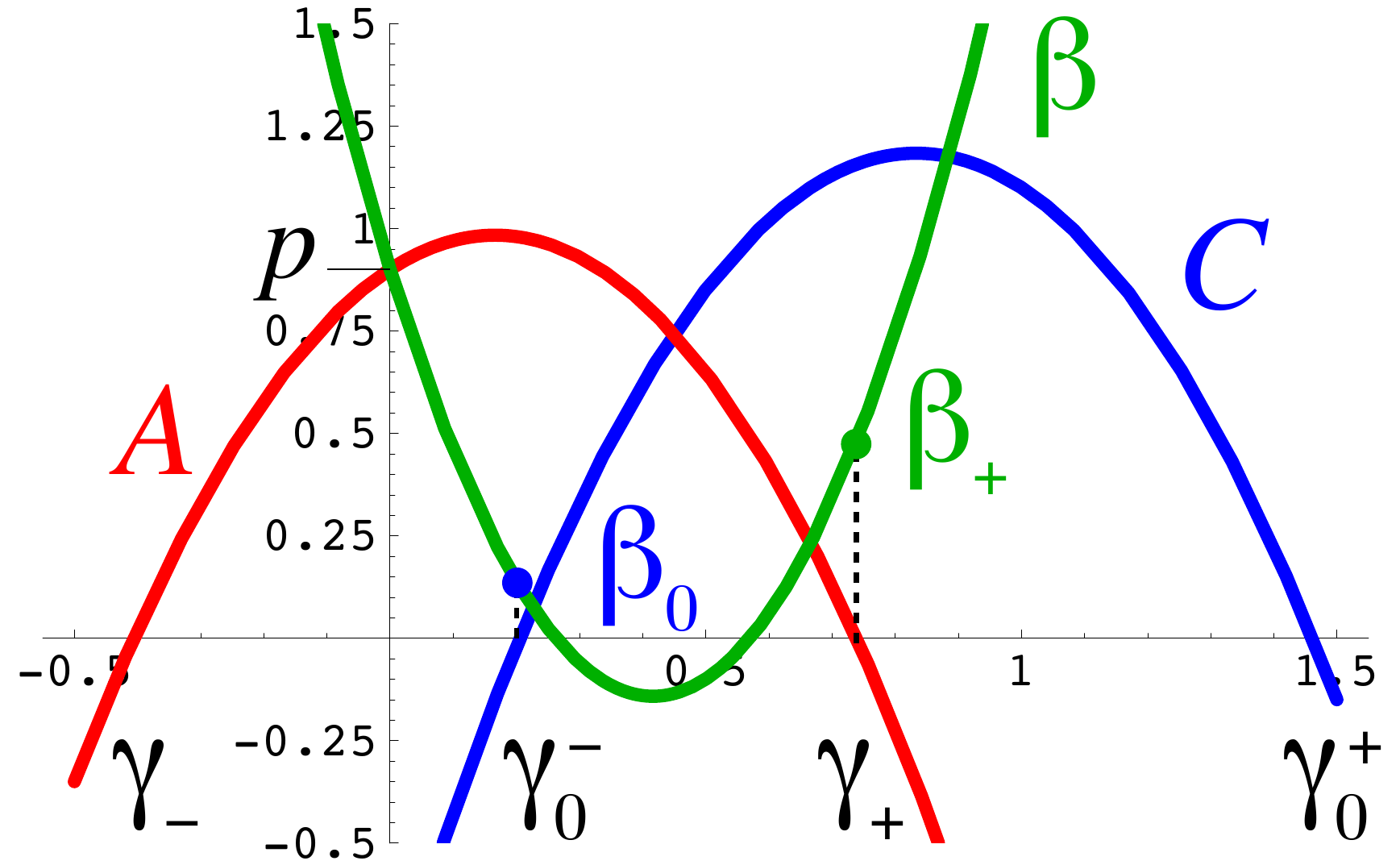}
\vskip.7cm
\includegraphics[angle=0,width=.53029\linewidth]{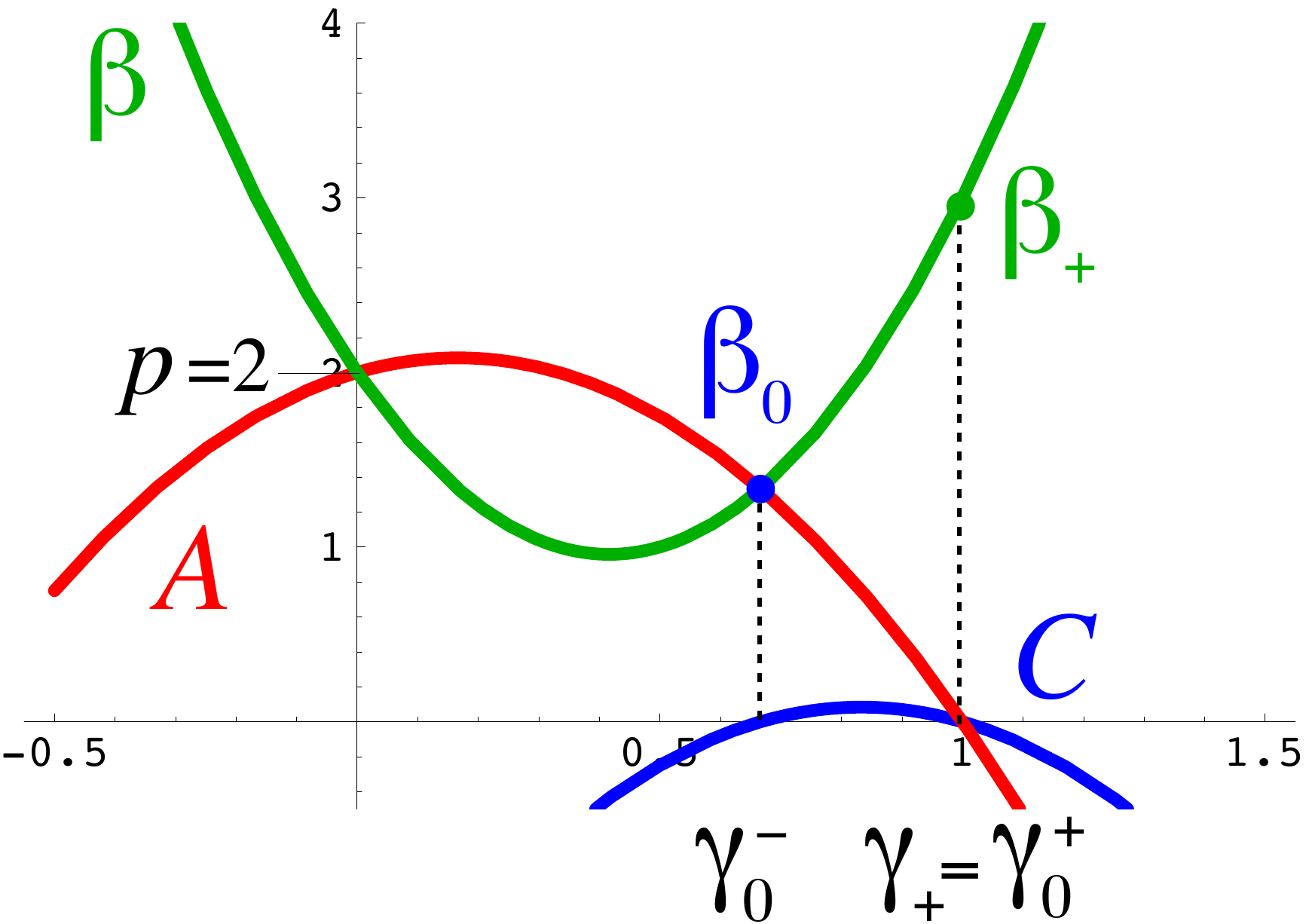}
\caption{\textcolor{black}{Curves $A(\gamma)$, $\beta(\gamma)=\kappa\gamma^2/2-C(\gamma)$, and  $C(\gamma)$, for the value $\kappa=6$ of the SLE parameter. They are displayed  for three different values of the moment order $p$: respectively for the critical value $p^*(\kappa=6)$ where the two spectra $\beta_0:=\beta(\gamma_0^-)$ and $\beta_+:=\beta(\gamma_+)$ co\"{\i}ncide; for a generic $p=0.9\in \big(p^*(6),p(6)\big)$, for which $\beta(\gamma_0^-)<\beta(\gamma_+)$; and  for the special point $p(\kappa=6)=2$, where $\gamma_+=\gamma_0^+=1$ and $\beta(\gamma_+)=3$.}}
\label{BSfig1}
\end{center}
\end{figure}

 Recall now that the special point \eqref{kpgen} $(p(\kappa), \alpha(\kappa)>0)$ of Theorem \ref{main1} was obtained as obeying both conditions \eqref{Aalpha} $A(\alpha)=0$ and \eqref{Calpha} $C(\alpha)=0$, together with $\beta=\kappa \alpha^2/2$ (see Fig. \ref{BSfig1}, bottom figure, where $\alpha=\gamma_+=\gamma_0^+$). This leads to the following remark. \begin{rkk}The special point $(p(\kappa), \alpha(\kappa))$ \eqref{kpgen} is such that  
\begin{eqnarray}\label{+++}\beta_+=\beta_0^+=\kappa \alpha^2/2,\,\,\, \alpha=\gamma_+=\gamma_0^+,
\end{eqnarray}
and lies  at the intersection of the curve $\gamma_+(p)$ \eqref{gammafin} and of the complementary BS curve $\gamma_0^+(p)$ \eqref{gammazero}.This is illustrated in Fig. \ref{BSfig} further below.
 \end{rkk} 

\textcolor{black}{$\bullet$ We therefore conclude that Beliaev and Smirnov's method of proof  works up to $p^*(\kappa)$ in the interior case $\sigma=-1$, in such a way that the integral means spectrum is given by $\beta_0=\beta(\gamma_0)$, with $\gamma_0=\gamma_0^-$ such that $C(\gamma_0^-)=0$ (Fig. \ref{BSfig1}). Above the transition point $p^*(\kappa)$, we will argue in the next sections that the integral means spectrum is  given by  $ \beta_+=\beta(\gamma_+)$ \eqref{betapgammafin}, where $\gamma_+$ \eqref{gammafin} satisfies $A(\gamma_+)=0$ (see Fig. \ref{BSfig1}, middle figure).}  

\textcolor{black}{To study the integral means spectrum of the inner whole-plane SLE  above the transition point $p^*(\kappa)$, we shall use as a first step in the next Section \ref{firstsol} the truncated function \eqref{psiphi}, $\psi(z,\bar z)=(1-z\bar z)^{-\beta}x^\gamma$, where  $\gamma$  and $\beta$ belong to the curve $(\gamma,\beta(\gamma))$, as given by the relation \eqref{betapgamma} (see Fig. \ref{BSfig1}, middle figure). The action of the differential operator $\mathcal P(D)$ on this function $\psi$ is given by  equation \eqref{genresbis}, which can be written as
\begin{equation}\label{PpsiA}\mathcal P(D)\psi(z,\bar z)=\psi(z,\bar z) x^{-1}\left[C(\gamma)\frac{1-z\bar z}{x}-A(\gamma)\right] (1-z\bar z-x).\end{equation}
For the BS parameter $\gamma=\gamma_0^-$,  the first term inside the brackets  (the most singular term for $x\to 0$, i.e., $z\to 1$)  vanishes, while in our case, $\gamma=\gamma_+$,  the second term $A$ vanishes.} 

\subsubsection{\textcolor{black}{Beyond the transition point: $p\geq p^*(\kappa)$.}} \label{firstsol} 
In that range, we now look for the interior case $\sigma=-1$ at the properties of the function $\psi(z,\bar z)=(1-z\bar z)^{-\beta}x^\gamma$ \eqref{psiphi}, where  $\gamma$  and $\beta$ are assumed throughout this section to satisfy the relation \eqref{betapgamma}  $\beta=\beta(\gamma)$, and $\gamma$ is such that $A(\gamma)=A^{(-1)}(\gamma)=0$ \eqref{gamma}; they are given by the pair of solutions $\gamma_{\pm}=\gamma_{\pm}(p)$ \eqref{gammafin} and $\beta_{\pm}=\beta_{\pm}(p)$ \eqref{betapgammafin}. Eq. \eqref{PpsiA} then yields the explicit result:
\begin{eqnarray}
\mathcal P(D)[\psi(z,\bar z)]
\label{Ppsi}
=\psi(z,\bar z)\left(\frac{\kappa}{2}\gamma^2-\beta\right) (1-z\bar z)(1-z\bar z-x)x^{-2},\end{eqnarray}
where we recall that $\frac{\kappa}{2}\gamma^2-\beta(\gamma)=C(\gamma)$. 

The quantity in factor of $\psi(z,\bar z)$ in \eqref{Ppsi} vanishes both on the unit circle $\partial \mathbb D$ and on the circle $\partial \mathbb D_{1/2}:=\{z: 1-z\bar z -x=0\}$, centered at $(1/2,0)$ and of  radius $1/2$, which passes through $z=0$ and $z=1$, and is tangent to the unit circle $\partial \mathbb D$ at $z=1$ (see Fig. \ref{figlune}). The overall sign of \eqref{Ppsi} crucially depends on the position of $z$ with respect to the circle $\partial \mathbb D_{1/2}$:  For $z$ inside the disk \begin{equation}\label{D1/2}\mathbb D_{1/2}:=\{z: 1-z\bar z -x>0\},\end{equation} \eqref{Ppsi}  has the same sign as the coefficient $\kappa\gamma^2/2-\beta$, and the opposite sign when $z$ lies outside of that disk.  
\begin{figure}[htb]\begin{center}
\includegraphics[angle=0,width=.53290\linewidth]{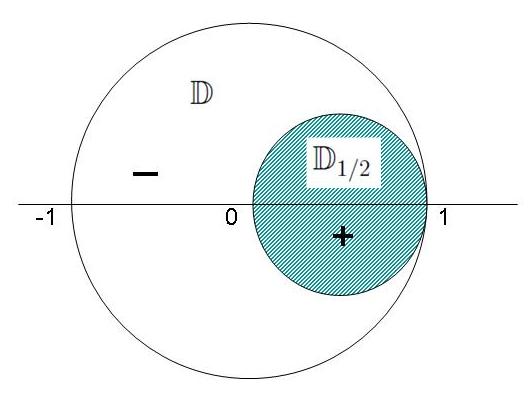}
\caption{{\it The unit disk $\mathbb D$ and the disk $\mathbb D_{1/2}$ \eqref{D1/2}. The signs indicated are those \eqref{signpsi+_} of $\mathcal P(D)[\psi_+]$ for $p\leq p(\kappa)$, which vanishes on $\partial\mathbb D$ and $\partial\mathbb D_{1/2}$.}}
\label{figlune}\end{center}
\end{figure}
The sign of the coefficient $\kappa\gamma^2/2-\beta$ itself depends on which branch is chosen in \eqref{gammafin} and \eqref{betapgammafin}. One easily finds that 
$$C(\gamma_{\pm}(p))=\frac{\kappa}{2}\gamma^2_{\pm}(p)-\beta_{\pm}(p)=\left(\frac{1}{\kappa}+\frac{1}{2}\right)\left(1\pm \sqrt{1+2\kappa p}\right)-2p.$$
For the negative branch, and for $p\geq 0$, it is clear that 
\begin{equation}\kappa\gamma_{-}^2/2-\beta_{-}\leq 0.\end{equation} The positive branch $\frac{\kappa}{2}\gamma_+^2-\beta_+$, on the other hand, has a zero for   $p=p(\kappa)=(6+\kappa)(2+\kappa)/8\kappa$ with  $\gamma_+=\alpha=(6+\kappa)/2\kappa$, which naturally corresponds to the \textit{special set of values} \eqref{kpgen} where there exits the exact solution \eqref{Fzz} with  $\beta_+=\beta=\kappa \alpha^2/2$. One therefore  has 
\begin{eqnarray}\nonumber\kappa\gamma_{+}^2/2-\beta_{+} > 0,&& p < p(\kappa),\\ \nonumber \kappa\gamma_{+}^2/2-\beta_{+}= 0,&& p = p(\kappa),\\ \nonumber \kappa\gamma_{+}^2/2-\beta_{+}< 0,&& p> p(\kappa).
\end{eqnarray}
We therefore arrive at the various domain inequalities for $p\neq p(\kappa)$, with the obvious notation $\psi_{\pm}(z,\bar z):=(1-z\bar z)^{-\beta_{\pm}}x^{\gamma_{\pm}}$, 
\begin{eqnarray}\label{signpsi-}
\mathcal P(D)[\psi_-(z,\bar z)] < 0, z\in \mathbb D_{1/2}&,&
\mathcal P(D)[\psi_-] > 0, z\in \mathbb D\setminus  \overline{\mathbb D}_{1/2} \label{Ppsi-};\\ \label{signpsi+_}
p\leq p(\kappa):\mathcal P(D)[\psi_+(z,\bar z)]> 0, z\in \mathbb D_{1/2}&,&
\mathcal P(D)[\psi_+]< 0, z\in \mathbb D\setminus\overline{\mathbb D}_{1/2} \label{Ppsi+};\\ \label{signpsi++}
p\geq p(\kappa):\mathcal P(D)[\psi_+(z,\bar z)]< 0, z\in \mathbb D_{1/2}&,&
\mathcal P(D)[\psi_+] > 0, z\in \mathbb D\setminus\overline{\mathbb D}_{1/2};\label{Ppsi++}\end{eqnarray}
 the resulting ratio $\mathcal P(D)[\psi_{\pm}]/\psi_{\pm}$ vanishes at the boundaries of the above domains, i.e., on $\partial \mathbb D$ and $\partial \mathbb D_{1/2}$. For $p=p(\kappa)$, $\psi_+$ is an exact solution such that  $\mathcal P(D)[\psi_+] =0$.\\ 

\textit{Logarithmic modification.} Following Ref. \cite{BS}, let us consider now the action of the differential operator on the modified function $\psi(z,\bar z)\ell_\delta(z\bar z)$, where here $\psi:=\psi_{\pm}$ and where the factor \begin{equation}\label{ell}\ell_\delta(z\bar z):= [-\log (1-z\bar z)]^\delta,\,\, \delta \in \mathbb R ,\end{equation}  brings in a (soft) \emph{logarithmic singularity}. From Eq. \eqref{PD} one finds the simple result
\begin{eqnarray}\nonumber \mathcal P(D)[\psi(z,\bar z)\ell_\delta(z\bar z)]&=&\ell_\delta(z\bar z)\mathcal P(D)[\psi(z,\bar z)]-\psi(z,\bar z)2z\bar z(1-z\bar z)x^{-1}\ell_\delta^{\,\prime}(z\bar z) \\ \label{Ppsiell}
&=&\ell_\delta(z\bar z)\left\{\mathcal P(D)[\psi(z,\bar z)]-\psi(z,\bar z) \frac{2\delta z\bar z x^{-1}}{[-\log(1-z\bar z)]}\right\},\end{eqnarray}
where the derivative $\ell_\delta^{\,\prime}(z\bar z)$ is taken with respect to $z\bar z$. Using Eq. \eqref{Ppsi}  yields (here $\gamma:=\gamma_{\pm}$ and $\beta:=\beta_{\pm}$): 
\begin{eqnarray}\label{Ppsi0} \mathcal P(D)[\psi(z,\bar z)\ell_\delta(z\bar z)]&&=\ell_\delta(z\bar z)\psi(z,\bar z)x^{-1}\\ \nonumber &&\times \left[\left(\frac{\kappa}{2}\gamma^2-\beta\right) (1-z\bar z)(1-z\bar z-x)x^{-1} -\frac{2z\bar z \delta}{[-\log(1-z\bar z)]}\right].\end{eqnarray}\\
$\bullet$ Consider first the domain $\mathbb D_{1/2}$ (Fig. \ref{figlune}). 
  The sign of $\mathcal P(D)[\psi_{\pm}(z,\bar z)]$ for $z\in \mathbb D_{1/2}$ is given in the three different cases by the first column of Eqs. \eqref{Ppsi-}, \eqref{Ppsi+} and \eqref{Ppsi++}. In each case, this sign is also that of the first term of \eqref{Ppsi0} in the same domain and for the same case. For each case, \emph{choose the sign of $\delta$} so that the second term in \eqref{Ppsi0} has the \emph{same uniform sign} as the first term in $\mathbb D_{1/2}$. Then $\mathcal P(D)[\psi_{\pm}(z,\bar z)]$ and $\mathcal P(D)[\psi_{\pm}(z,\bar z)\ell_\delta(z\bar z)]$ have the same sign in $\mathbb D_{1/2}$. \\ 
  
\noindent  $\bullet$ Consider now the complementary domain $\mathbb D\setminus \mathbb D_{1/2}$. Take $z\in \mathbb D\setminus \mathbb D_{1/2}$ on the circle $\partial \mathbb D(r)$ of radius $r<1$ centered at the  origin (with $\partial \mathbb D(1)=\partial \mathbb D$), so that $1-z\bar z=1-r^2$. The quantity $(1-z\bar z-x) x^{-1}$ in \eqref{Ppsi0} is negative in $\mathbb D\setminus \mathbb D_{1/2}$ and equals  $(1-r^2)x^{-1}-1$   on $\partial \mathbb D(r)$,  
 for which $-1< (1-r^2)x^{-1}-1 \leq 0.$ 
 On the circle $\partial \mathbb D(r)$ of radius $r<1$ and \emph{outside of} $\mathbb D_{1/2}$, one thus has the following bounds for the first term of  \eqref{Ppsi0}:
 $$ -(1-r^2) <(1-z\bar z)(1-z\bar z-x) x^{-1} \leq 0.$$
Therefore, for $r$ close enough to $1$, say $r_1 <r<1$, the second term in \eqref{Ppsi0}, which vanishes  logarithmically when $r\to 1^-$, dominates the first term, which is of order $O(1-r^2)$, hence determines the overall sign of  $\mathcal P(D)[\psi_{\pm}(z,\bar z)\ell_\delta(z\bar z)]$ for $z\in \mathbb D\setminus \mathbb D_{1/2}$.\\

\noindent $\bullet$  Recall now that $\delta$  has been chosen precisely such that the sign of the second term in \eqref{Ppsi0} is that of $\mathcal P(D)[\psi_{\pm}\ell_\delta]$ and $\mathcal P(D)[\psi_{\pm}]$  in $\mathbb D_{1/2}$. We thus conclude that 
in the \emph{whole annulus} $r_1<r<1$, the sign of $\mathcal P(D)[\psi_{\pm}(z,\bar z)\ell_\delta(z\bar z)]$ is uniform  and given by that of $\mathcal P(D)[\psi_{\pm}(z,\bar z)]$ for $z\in \mathbb D_{1/2}$, as given for the three canonical cases  $[\psi_-], [\psi_+, p\leq p(\kappa)], [\psi_+, p\geq p(\kappa)]$ by the first column in Eqs. \eqref{Ppsi-}, \eqref{Ppsi+} and \eqref{Ppsi++}.\\

We therefore  conclude that there exist for these three cases (denoted by $i=1,2,3$) three open annuli  $\mathbb A(r_i):=\{z: r_i<|z|<1\}=\mathbb D\setminus \overline{\mathbb D}(r_i)$  whose boundary includes $\partial \mathbb D$, where one has respectively (for a specific sign of $\delta$ chosen in each case as described above): \\
 $\bullet$ $\mathcal P(D)[\psi_-(z,\bar z)\ell_\delta(z\bar z)] < 0, z\in \mathbb A(r_1)$, so that $\psi_-(z,\bar z)\ell_\delta(z\bar z)$ is locally a \textit{subsolution} to the equation $\mathcal P(D)[F(z,\bar z)]=0$;\\
 $\bullet$ for $p < p(\kappa)$,  $\psi_+(z,\bar z)\ell_\delta(z\bar z)$ is a \textit{supersolution} with $\mathcal P(D)[\psi_+(z,\bar z)\ell_\delta(z\bar z)] > 0$ for $z\in \mathbb A(r_2)$; \\$\bullet$  for  $p> p(\kappa)$, $\psi_+(z,\bar z)\ell_\delta(z\bar z)$ is a \textit{subsolution} with $\mathcal P(D)[\psi_+(z,\bar z)\ell_\delta(z\bar z)] < 0$ for $z\in \mathbb A(r_3)$;\\ 
  $\bullet$ for  $p=p(\kappa)$,  $\mathcal P(D)[\psi_+(z,\bar z)] = 0, z\in \mathbb D$, so that $\psi_+(z,\bar z)=F(z,\bar z)=(1-z\bar z)^{-\beta_+}|1-z|^{2\gamma_+}$ is the exact solution in Theorem \ref{main1} with parameters \eqref{kpgen}: $\gamma_+=\alpha(\kappa)$ and $\beta_+=\kappa \alpha^2/2$.\\
  
  We then follow the same method as in Refs. \cite{BKTH,BS}. The operator $\mathcal P(D)$, when written in polar coordinates as in \eqref{BSsig},  is \emph{parabolic}, where $\theta$ corresponds to the spatial variable, and $r$ to the time variable \cite{evans}.  In the above, the functions $\psi_{\pm}(z,\bar z)\ell_\delta(z\bar z)$ are positive functions bounded on the respective circles of radius $r_i$, as $F(z,\bar z)=\mathbb E\big[|f'_0(z)|^p\big]$ is. One can thus find positive constants $c_i$ such that $$F<c_1\,\psi_-\,\ell_\delta,\, r=r_1;\,\,\,\, c_2\,\psi_+\,\ell_\delta < F,\,r=r_2,\, p<p(\kappa);\,\,\,\, F< c_3\,\psi_+\,\ell_\delta,\,r=r_3,\,p>p(\kappa).$$ Using then in each of the corresponding annuli where $\mathcal P(D)[\psi_{\pm}\ell_\delta]$ has a definite sign, respectively, the maximum principle, the minimum principle, and  the maximum principle  (\cite{evans}, Th. 7.1.9), yields the
  \begin{prop} \label{annineq}
  \begin{eqnarray}\label{F-}&&F<c_1\,\psi_-\,\ell_\delta,\,\,\, z\in \mathbb A(r_1),\,\,\, \forall p,\\ \label{F+} &&c_2\,\psi_+\,\ell_\delta < F,\,\,\,z\in \mathbb A(r_2),\,\,\, p<p(\kappa),\\ \label{F++}&&F< c_3\,\psi_+\,\ell_\delta,\,\,\,z\in \mathbb A(r_3),\,\,\,p>p(\kappa).\end{eqnarray} 
  \end{prop}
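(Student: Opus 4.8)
The analytic preparation has already been carried out in the preceding paragraphs, so the plan is to convert the local sign information on $\mathcal P(D)[\psi_\pm\ell_\delta]$ into the global comparisons \eqref{F-}--\eqref{F++} by means of the parabolic comparison principle. First I would record the qualitative behaviour of the three functions on a fixed inner circle $\partial\mathbb D(r_i)$, $r_i<1$. Since $F(z,\bar z)=\mathbb E[|f_0'(z)|^p]$ is bi-analytic in $\mathbb D$ and solves $\mathcal P(D)[F]=0$ (Proposition \ref{pBS1}), and is strictly positive (as $f_0$ is univalent, $|f_0'|>0$), it is continuous and bounded on $\partial\mathbb D(r_i)$; the same is true of $\psi_\pm(z,\bar z)=(1-z\bar z)^{-\beta_\pm}x^{\gamma_\pm}$ and of $\ell_\delta(z\bar z)=[-\log(1-z\bar z)]^\delta$, which are continuous and strictly positive there (on such a circle $x=|1-z|^2$ stays bounded away from $0$ and $\infty$, and $-\log(1-r_i^2)>0$). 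By compactness of $\partial\mathbb D(r_i)$ I may then choose positive constants $c_i$ realizing the three boundary comparisons \emph{at} $r=r_i$: namely $F<c_1\psi_-\ell_\delta$ on $\partial\mathbb D(r_1)$, $c_2\psi_+\ell_\delta<F$ on $\partial\mathbb D(r_2)$, and $F<c_3\psi_+\ell_\delta$ on $\partial\mathbb D(r_3)$.

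Next I would exploit the parabolic structure of the Beliaev--Smirnov operator. Written in $(r,\theta)$ as in \eqref{BSsig}, $\mathcal P(D)$ is parabolic with $\theta$ the (periodic) spatial variable and $r$ the time variable; dividing through by the coefficient $\kappa/2$ of $F_{\theta\theta}$ recasts $\mathcal P(D)[F]=0$ as a genuine \emph{forward} parabolic equation $\mathcal L[F]=0$, with $\mathcal L=\partial_r-a(r,\theta)\partial_\theta^2+\cdots$ and strictly positive diffusion coefficient $a$ for $r<1$; concretely $\mathcal P(D)=-q(r,\theta)\,\mathcal L$ with $q=r(1-r^2)/D>0$. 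This sign flip is exactly what turns a $\mathcal P(D)$-subsolution into an $\mathcal L$-supersolution and conversely, so that the three cases \eqref{Ppsi-}, \eqref{Ppsi+}, \eqref{Ppsi++} — strengthened over the full annuli by the logarithmic factor through \eqref{Ppsi0} — acquire the correct comparison directions relative to the exact solution $F$. I would then apply the weak parabolic maximum/comparison principle (\cite{evans}, Th. 7.1.9) on each cylinder $[r_i,R]\times S^1$ with $R<1$: since $S^1$ has no boundary, the parabolic boundary reduces to the single time slice $\{r=r_i\}\times S^1$, so the boundary comparison chosen above propagates to all $r\in(r_i,R]$. Letting $R\to1^-$ then yields \eqref{F-}--\eqref{F++} on the open annuli $\mathbb A(r_i)$.

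The point needing the most care — and the reason the parabolic rather than elliptic formulation is essential — is that $F$ is singular as $r\to1$, so no boundary data can be imposed on $\partial\mathbb D$ itself. This is precisely what the forward parabolic comparison principle accommodates: it never uses the final time slice as part of the parabolic boundary, so the estimate at $r=r_i$ alone controls the whole annulus, and the exhaustion $R\to1^-$ delivers the bound up to, but not touching, the unit circle. The one genuine technical check is that the zeroth-order coefficient of $F$ in $\mathcal L$ need not have a favourable sign; I would remove it by the standard substitution $F\mapsto e^{\lambda r}F$ (and likewise for the trial functions) with $\lambda$ large enough, restoring the hypotheses of the comparison principle. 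Everything else is the bookkeeping already completed in establishing the sign relations \eqref{Ppsi-}--\eqref{Ppsi++} together with the logarithmic identity \eqref{Ppsi0}, so the proposition follows.
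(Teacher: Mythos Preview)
Your proposal is correct and follows essentially the same route as the paper: fix the boundary comparison constants $c_i$ on the inner circles $\partial\mathbb D(r_i)$ by compactness and positivity, recognize $\mathcal P(D)$ as parabolic with $r$ as time, and invoke the parabolic maximum/minimum principle (the paper's reference is likewise \cite{evans}, Th.~7.1.9). You go further than the paper in spelling out the sign flip $\mathcal P(D)=-q\,\mathcal L$ that converts $\mathcal P(D)$-super/subsolutions into $\mathcal L$-sub/supersolutions, the exhaustion $R\to1^-$ that avoids imposing data on $\partial\mathbb D$, and the $e^{\lambda r}$ conjugation to neutralize the zeroth-order term; the paper leaves all of this implicit and simply appeals to the principle in one line.
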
 
These inequalities will be used in the following section  to establish  the existence at $p^*(\kappa)$ of a phase  transition in the integral means spectrum of the inner whole-plane SLE  and to prove Theorem \ref{theoMFc}. 

 \subsubsection{Proof of Theorem \ref{theoMFc}}\label{singan}
\begin{proof}The average  integral means spectrum of the whole-plane SLE is given  by the asymptotic behavior  for $r\to 1^-$ of the $F$ integral: 
\begin{equation}\label{intmeanF} \int_0^{2\pi} F(r,\theta)d\theta = \int_0^{2\pi}  \mathbb E (|f'_0(r,\theta)|^p)d\theta \stackrel{(r\to 1^-)}{\asymp} (1-r)^{-\beta(p)}.
\end{equation}
For the $\psi$ function defined in Eqs. \eqref{psior}-\eqref{psiphi}, the integral means are: 
\begin{equation}\label{intmean} \int_0^{2\pi} \psi(r,\theta)d\theta =(1-r^2)^{-\beta} \int_0^{2\pi} g(r,\theta)d\theta
,\end{equation} 
where we write $g(r,\theta)=g[D(r,\theta)]=D^\gamma(r,\theta)=|1-z|^{2\gamma}$. This function $g$ has a singularity at $z=1$, i.e., for $r=1,\theta=0$. Near that point, its argument $D(r,\theta)=r^2-2r\cos \theta+1$ 
 is equivalent to $D(r,\theta)\sim (1-r)^2+\theta^2$. 
 
The exponents $\beta$ and $\gamma$ in the above are given by Eqs. \eqref{betapgammafin} and \eqref{gammafin}. For the $(+)$ branch, $\gamma_+> 0$ and the  integral 
$\int_0^{2\pi} g_{+}(1,\theta) d\theta$ is integrable at $\theta=0$, which is a zero of $g$. 
 For the other branch, $\gamma_{-}$ is negative for $p\geq 0$, and  the singularity along the unit circle at $\theta =0$ is no longer integrable when $2\gamma_{-}+1\leq 0$. This  corresponds to a cross-over value $p=\tilde p(\kappa):=(4+\kappa)/8$.
One therefore has: 
 \begin{eqnarray} \label{g+}
 \int_0^{2\pi} \psi_{+}(r,\theta)d\theta&\stackrel{(r\to 1^-)}{\asymp}& (1-r)^{-\beta_{+}},\\ \label{g-} 
  \int_0^{2\pi} \psi_{-}(r,\theta)d\theta&\stackrel{(r\to 1^-)}{\asymp}& \begin{cases}(1-r)^{-\beta_{-}},\,\,\,\,p\leq \frac{4+\kappa}{8},\\ \label{+g--} 
  (1-r)^{-\beta_{-}+2\gamma_{-}+1},\,\,\,\, p\geq\frac{4+\kappa}{8}.\end{cases}\end{eqnarray}
 
Consider now the modified functions 
 $\psi_{\pm}\,\ell_\delta$, where $\ell_\delta$ is the weakly diverging or vanishing logarithmic function \eqref{ell}. The asymptotic power law behaviors of their integral means  near the unit circle are obviously  the same as for $\psi_{\pm}$:
\begin{eqnarray} \label{g+ell}
 \int_0^{2\pi} \psi_{+}(r,\theta)\ell_\delta(r^2)d\theta&\stackrel{(r\to 1^-)}{\asymp}& (1-r)^{-\beta_{+}},\\ \label{g-ell} 
  \int_0^{2\pi} \psi_{-}(r,\theta)\ell_\delta(r^2)d\theta&\stackrel{(r\to 1^-)}{\asymp}& \begin{cases}(1-r)^{-\beta_{-}},\,\,\,\,p\leq \frac{4+\kappa}{8},\\  
  (1-r)^{-\beta_{-}+2\gamma_{-}+1},\,\,\,\, p\geq\frac{4+\kappa}{8}.\end{cases}\end{eqnarray}
 By plugging   the asymptotic behaviors \eqref{intmeanF} and  \eqref{g+ell} into inequalities  \eqref{F+} and \eqref{F++} of Proposition \ref{annineq}, we obtain  \begin{eqnarray}
 \label{+}
&&{\beta_{+}}(p)\leq \beta(p),\,\,\, p<p(\kappa)=\frac{(6+\kappa)(2+\kappa)}{8\kappa},\\ \label{+0} &&{\beta(p)}= \beta_+(p),\,\,\, p=p(\kappa),\\ \label{++}
&&{\beta(p)}\leq \beta_+(p),\,\,\, p>p(\kappa)=\frac{(6+\kappa)(2+\kappa)}{8\kappa}.
 \end{eqnarray}
This ends the proof of Theorem \ref{theoMFc} for the $m=1$ case. \end{proof}

By plugging  the asymptotic behaviors \eqref{intmeanF} and \eqref{g-ell} into inequality \eqref{F-} of Proposition \ref{annineq}, we obtain
\begin{eqnarray}
\label{-} 
\beta(p)\leq \begin{cases}{\beta_{-}}(p),\,\,\,p\leq \frac{4+\kappa}{8},\\ 
 {\beta_{-}(p)-2\gamma_{-}(p)-1},\,\,\,p\geq\frac{4+\kappa}{8}.\end{cases}
\end{eqnarray}
At this point, we can invoke the bound implied by the existence of an universal spectrum $B_{\mathcal S}$ for the $\mathcal S$  schlicht class of univalent functions $f$ in the unit disk.   As we have seen in Section \ref{universal},  the integral means spectrum $\beta_f(p)$ of such an $f\in \mathcal S$ is bounded,  for $p\geq 2/5$, as:
$$\beta_f(p)\leq B_{\mathcal S}(p)=3p-1,\,\,\,  p\geq 2/5,$$
the maximum being attained for the Koebe function \eqref{koebe}. For $p> 0$, we have from the definitions \eqref{betapgammafin} and \eqref{gammafin}
\begin{eqnarray}
\beta_{+}< 3p-1 < \beta_{-},\,\,\, p\leq \frac{4+\kappa}{8},\\
\beta_{+}< 3p-1 < \beta_{-}\leq \beta_{-}-2\gamma_{-}-1,\,\,\, p\geq \frac{4+\kappa}{8}.
\end{eqnarray}
This therefore excludes the $(-)$ branch as a possible integral means spectrum.

This strongly suggests that the average integral means spectrum of the unbounded inner whole-plane SLE$_\kappa$ is simply given, for $p\geq 0$, by 
\begin{equation}\label{betawhopl} \beta(p,\kappa)=\max\left\{\beta_{0}(p,\kappa),3p-\frac{1}{2}-\frac{1}{2}\sqrt{1+2\kappa p}\right \},
\end{equation}
where $\beta_0(p,\kappa)$ is given by Eq. \eqref{beta000}.  If one extends the range of moment orders to  $p\leq 0$, $\beta_0$ is  replaced by $\beta_{tip}$ \eqref{beta0tip} for $p$ sufficiently negative. The phase transition in Eq. \eqref{betawhopl} occurs at the critical point $p^*(\kappa)$, as given by Eq. \eqref{pstar}, where the \textcolor{black}{BS spectrum ceases to hold} (see Proposition \eqref{BScease}). These conclusions are illustrated in Fig. \ref{BSfig}.
\begin{figure}
\begin{center}
\includegraphics[angle=0,width=.8329\linewidth]{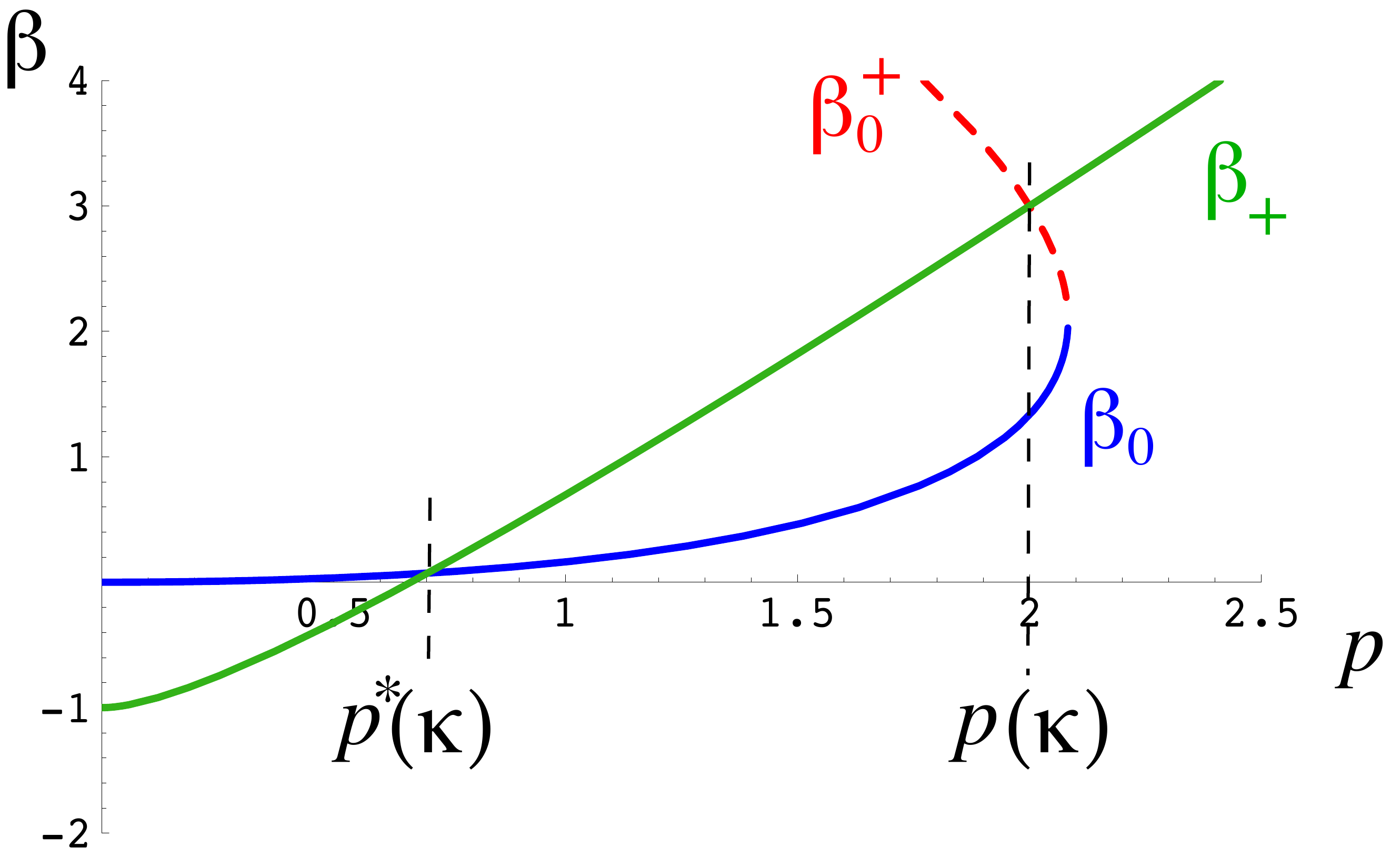}
\caption{{\it Various integral means spectra for the interior whole-plane SLE$_\kappa$ (here for $\kappa=6$): Standard BS branch of the SLE$_{\kappa=6}$ bulk average means spectrum $\beta_0(p,\kappa)=\beta_0^-(p,\kappa)$ in Eqs. \eqref{beta000} and \eqref{BS0} \textcolor{blue}{(in blue)};  second ``non-physical'' BS branch $\beta_0^+(p,\kappa)$ in Eq. \eqref{BS0pm} \textcolor{red}{(in red)};  whole-plane multifractal function $\beta_+(p,\kappa)=3p-1/2-(1/2)\sqrt{1+2\kappa p}$ in Eq. \ref{betapgammafin} \textcolor{green}{(in green)}. The resulting average integral means spectrum $ \beta(p,\kappa)$ \eqref{betawhopl} of whole-plane SLE$_\kappa$ undergoes a phase transition at $p=p^*(\kappa)$ \eqref{pstar}, where the first two curves intersect, $\beta_0(p^*,\kappa)=\beta_+(p^*,\kappa)$, such that $\beta(p,\kappa)=\beta_0(p,\kappa), \forall p\in [0,p^*(\kappa)]$, and $\beta(p,\kappa)=\beta_+(p,\kappa), \forall p\in [p^*(\kappa),\infty)$. Note that the special point $p=p(\kappa)$ \eqref{kpgen} (here $p(6)=2$) lies at the intersection \eqref{+++} of the curve $\beta_+(p,\kappa)$ with the ``non-physical'' BS branch $\beta_0^+(p,\kappa)$.
}}
\label{BSfig}
\end{center}
\end{figure}
 \newpage  
\subsubsection{Proof of Theorem \ref{theoMFrigorous}}\label{secondsolrig}
\textcolor{black}{In this section, we provide a proof of Theorem  \ref{theoMFrigorous}. It is based on an extended use of a \textit{duality} property of the boundary solution  \eqref{ghyper}, which we specialize here to the interior whole-plane case $\sigma=-1$. A similar study can be made for the exterior BS case $\sigma=+1$ \cite{BDZ}.    Define the dual parameters $(\gamma,\gamma')$ such that 
  \begin{eqnarray}\label{bb'}
  \beta(\gamma)&=&\beta(\gamma'),\\ \label{gg'}
   \gamma+\gamma'&=&\frac{2}{\kappa}+\frac{1}{2},  
   \end{eqnarray}
  where $\beta(\gamma)$ is defined in \eqref{betapgamma} (see Fig. \ref{BSfigter}). By using this duality, we make the following observation:  \begin{rkk}\label{BSCA} $\bullet$  The solution to the differential equation \eqref{BSsigDgx} for $g$, or \eqref{hypergeom} for $g_0$, which involves the combination of two hypergeometric functions \eqref{ghyper}, can be written in a dual manner as:
\begin{eqnarray}\label{ghyperbis}
g(x)&=&\big(\frac{x}{4}\big)^{\gamma} {}_{2}F_{1}\big(a,b,c,\frac{x}{4}\big) -C_0 \big(\frac{x}{4}\big)^{\gamma'} {}_2F_{1}\big(a',b',c',\frac{x}{4}\big)\\ \label{g0ter}
&=:&\big(\frac{x}{4}\big)^{\gamma}g_0(x)\\ \label{g0tilde} 
&=:&\big(\frac{x}{4}\big)^{\gamma'}\tilde g_0(x),
\end{eqnarray}
with
\begin{eqnarray}
&&a=a(\gamma):=\gamma-\gamma_+,\,\,
b=b(\gamma):=\gamma-\gamma_-,\,\, c=\frac{1}{2}+a+b, \label{abcBSbis}\\ \label{a'b'c'}
&&a'=a(\gamma')=\frac{1}{2}-b(\gamma),\,\,
b'=b(\gamma')=\frac{1}{2}-a(\gamma),\,c'=\frac{1}{2}+a'+b',\end{eqnarray}
where $\gamma_{\pm}$ is defined in \eqref{gammafin}:
\begin{eqnarray}\label{g+-dual}
\gamma_{\pm}=\frac{1}{\kappa}\left(1\pm\sqrt{1+2\kappa p}\right),\,\,\,
\beta(\gamma_{\pm})=3p-\frac{1}{2}-\frac{1}{2}\left(1\pm\sqrt{1+2\kappa p}\right).
\end{eqnarray}
 Recall that the constant
\begin{eqnarray}\label{C0bis} 
C_0(a,b):=\frac{\Gamma(c)}{\Gamma(a)\Gamma(b)}\frac{\Gamma(a')\Gamma(b')}{\Gamma(c')}=\frac{\Gamma(1/2+a+b)}{\Gamma(a)\Gamma(b)}\frac{\Gamma(1/2-a)\Gamma(1/2-b)}{\Gamma(3/2-a-b)}
\end{eqnarray}
is chosen such that $g(x)$ is singularity-free at $x=4$, i.e., at the point $z=-1$ on the unit circle. 
\end{rkk}}
\begin{figure}
\begin{center}
\label{BSfigbis}
\includegraphics[angle=0,width=.63290\linewidth]{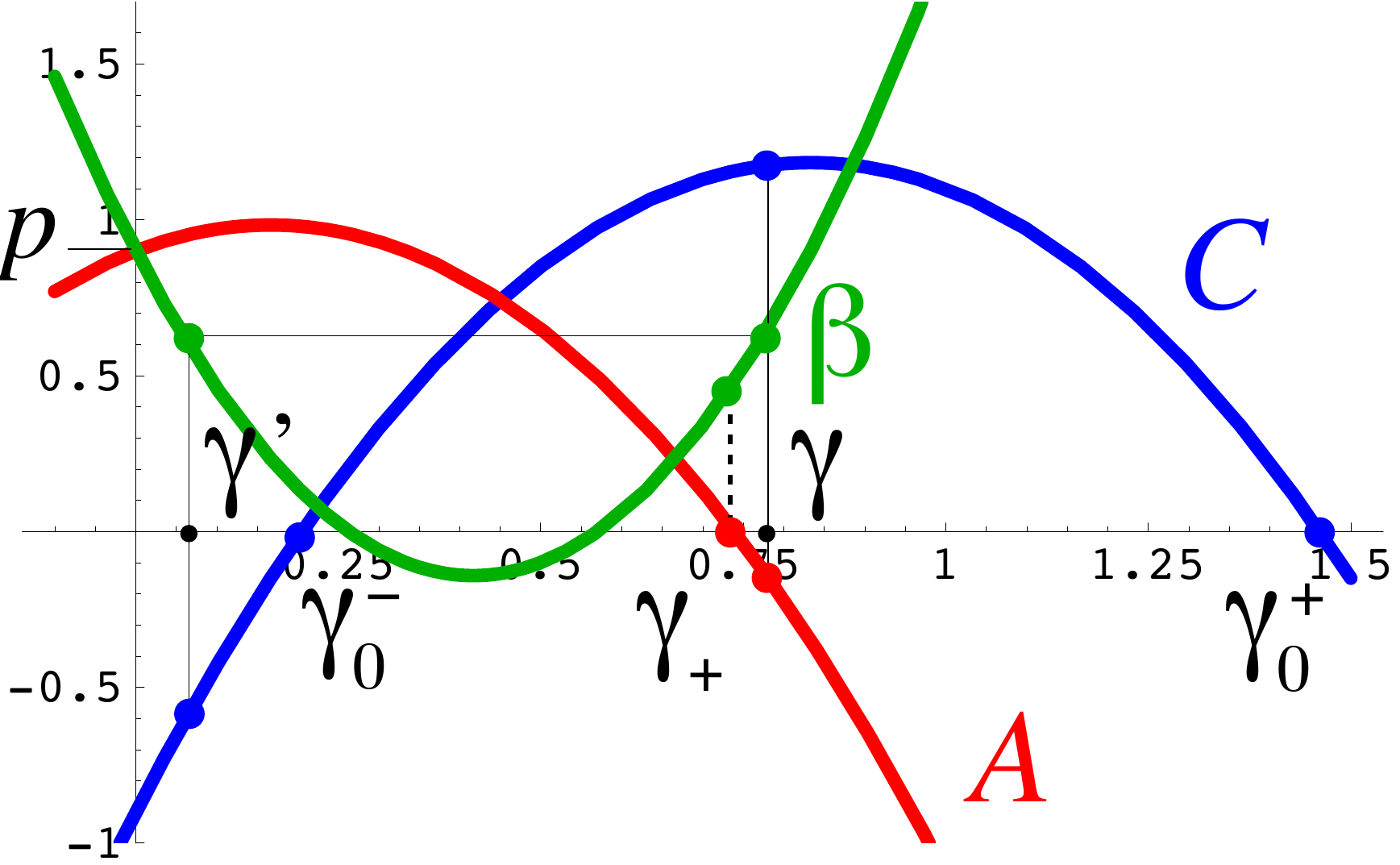}
\caption{\textcolor{black}{Curves $A(\gamma)$, $\beta(\gamma)=\kappa\gamma^2/2-C(\gamma)$, and  $C(\gamma)$, for the value $\kappa=6$ of the SLE parameter, and for a generic $p=0.9\in \big(p^*(6),p(6)\big)$. The dual parameters $(\gamma,\gamma'
)$ obey $\gamma+\gamma'=2/\kappa+1/2$, so that $\beta(\gamma)=\beta(\gamma')$. The choice of parameter $\gamma\in\big[\gamma_+,\gamma_0^+\big)$ yields values $\beta(\gamma) \geq \beta(\gamma_+)$, $A(\gamma)\leq 0$, $C(\gamma)> 0$, and $A(\gamma')>0$, $C(\gamma')< 0$.}}
\label{BSfigter}
\end{center}
\end{figure}\textcolor{black}{In Sections \ref{firstsol} and \ref{singan}, we studied the properties of the truncated function $\psi_+(z,\bar z)=x^{\gamma_+} (1-z\bar z)^{-\beta(\gamma_+)}$, which, once modified by an adequate logarithmic factor, provided a \textit{supersolution} to the BS differential equation for $p\in [p^*(\kappa),p(\kappa)]$, and a \textit{subsolution} for $p\geq p(\kappa)$. Here we propose to use the complete solution \eqref{ghyperbis} for a value of the parameter $\gamma$ slightly above $\gamma_+$, in order to obtain a \textit{subsolution}  for $p\geq p^*(\kappa)$, such that $\beta(\gamma)>\beta(\gamma_+)$ (Fig. \ref{BSfigter}). We therefore set $\gamma=\gamma_+ +a$, with $a>0$, and will ultimately let $a\to 0^+$ to prove that the limit $\beta(\gamma_+)$ yields the rigorous average integral means spectrum. 
\begin{rkk}\label{gamma'gamma0}
We have seen above that the phase transition at $p^*(\kappa)$ occurs when $1/2-b(\gamma_0)=0$. Because of the duality equation \eqref{a'b'c'}, this corresponds to the equality  $a(\gamma'_0)=0$, hence $\gamma'_0=\gamma_+$,  i.e., $\gamma_0=\gamma'_+$.  One has furthermore $\gamma_+'\geq \gamma_0$ for $p\leq p^*(\kappa)$ and $\gamma_+'\leq \gamma_0$ for $p\geq p^*(\kappa)$. Since $\gamma_0=\gamma_0^-$ is the lower value such that $C(\gamma_0)=0$, one has $C(\gamma_+') \leq 0$ after the transition, hence $C(\gamma')<0$ for $\gamma'<\gamma_+'$ (see Fig. \ref{BSfigter}). This last property is the key to obtain a subsolution  to the partial differential equation for $\gamma>\gamma_+$.
\end{rkk}
$\bullet$ \textit{The $x\to 0$, $z\to 1$ limit}. In this limit, the leading term in the function $g$ \eqref{ghyperbis} is the second one, since $\gamma'<\gamma$; it gives the equivalent 
$ g(x)\sim -C_0 ({x}/{4})^{\gamma'}$, 
so that $g_0$ \eqref{g0ter} diverges as $g_0(x)\sim  -C_0 ({x}/{4})^{\gamma'-\gamma}$, while for \eqref{g0tilde}, $\tilde g_0(0)=-C_0$. It is easy to see from \eqref{C0bis} that for $a$ small and positive
$$C_0(a,b) \sim \frac{a}{1/2-b} \frac{\Gamma(1/2+b)\Gamma(1/2)}{\Gamma(b)}.$$
We furthermore have $1/2-b(\gamma)=a(\gamma')=\gamma'-\gamma_+=\gamma_+'-\gamma_+-a<0$, so that $C_0<0$ and $g(x) >0$ for $x$ small, and $\tilde g_0(0)>0$.}

\textcolor{black}{$\bullet$ \textit{Sign of $g$.} To show that $g$ \eqref{ghyperbis} is positive on the interval $[0,4]$, we  argue as in Ref. \cite{BS}.  Since $\gamma_+$ is the rightmost zero of $A$,  we have $A(\gamma)<0$ for $\gamma >\gamma_+$ (Fig. \ref{BSfigter}).
 In the hypergeometric equation \eqref{hypergeom}, the signs of the $g_0$ and $g''_0$ terms are thus opposite.  We already know that $g(0^+)>0$, and it suffices to prove that at the endpoint $x=4$, $g(4)=g_0(4)=\tilde g_0(4)>0$, to show the positivity of $g$ on the whole interval. From \eqref{ghyperbis} and \eqref{C0bis}, we have \cite{BS}
 $$g(4)=\frac{\Gamma(1/2)\Gamma(1/2+a+b)}{\Gamma(1/2+a)\Gamma(1/2+b)}\left(1-\tan \pi a\tan \pi b\right).$$
 By continuity, it is sufficient to study the sign of this quantity for $a\to 0^+$. Recall that $b=a+1/2+\gamma_+-\gamma_+'$, where the constant added  to $a$ is positive. If that constant is different from $1/2+n$, $n\in \mathbb N$, then $g(4)$ tends to $1$ for $a\to 0^+$. If the constant happens to be  an half-integer, $1-\tan \pi a\tan \pi b=2$, independently of $a$, and $g(4)$ is positive for $a\geq 0$, and tends to $2$ for $a\to 0^+$.} 
 
 \textcolor{black}{$\bullet$ \textit{Action of the operator $\mathcal P(D).$}
 The action of the partial differential operator $\mathcal P(D)$ on the function in the unit disk, $\psi_0(z,\bar z)=(1-z\bar z)^{-\beta(\gamma)} g(x)$, where $g(x)$ is the boundary solution \eqref{ghyperbis}, \eqref{g0ter}, is given by  \eqref{genresquater}, here specified for $\sigma=-1$,
 \begin{eqnarray} \nonumber\frac{\mathcal P(D)[\psi_0(z,\bar z)]}{\psi_0(z,\bar z)}&=&(1-z\bar z)\left[-\frac{1}{x}\big[C(\gamma)+A(\gamma)\big]+\frac{1}{4-x}2A(\gamma)+\left(\frac{\kappa}{2}-1-\frac{2\kappa}{4-x}\right)\frac{g'_0}{g_0}\right] \\ \nonumber &+&\frac{(1-z\bar z)^2}{x^2} \left\{\frac{1}{4-x}\left[4A(\gamma)-\kappa x\frac{g_0'}{g_0}\right]-2p+\big(\frac{\kappa}{2}+1\big)\big(\gamma+x\frac{g_0'}{g_0}\big)\right\}.\\ 
&& \label{genresquinter}
\end{eqnarray}
We now use the $(\gamma,\gamma')$ duality \eqref{gg'} and the associated dual function $\tilde g_0$ \eqref{g0tilde}, to rewrite the operator's action as
  \begin{eqnarray} \nonumber\frac{\mathcal P(D)[\psi_0(z,\bar z)]}{\psi_0(z,\bar z)}&=&(1-z\bar z)\left[-\frac{1}{x}\big[C(\gamma')+A(\gamma')\big]+\frac{1}{4-x}2A(\gamma')+\left(\frac{\kappa}{2}-1-\frac{2\kappa}{4-x}\right)\frac{\tilde g'_0}{\tilde g_0}\right] \\ \nonumber &+&\frac{(1-z\bar z)^2}{x^2} \left\{\frac{1}{4-x}\left[4A(\gamma')-\kappa x\frac{\tilde g_0'}{\tilde g_0}\right]-2p+\big(\frac{\kappa}{2}+1\big)\big(\gamma'+x\frac{\tilde g_0'}{\tilde g_0}\big)\right\}.\\ 
&& \label{genrestilde}
\end{eqnarray}
\begin{rkk}
The Remark \eqref{singularity4}  about the absence of singularity in the operator's action at $x=4$ is also valid after using duality in \eqref{genrestilde}. More specifically, the identity \eqref{Asigmag0} there can be recast as 
$$ \frac{\tilde g'_0(4)}{\tilde g_0(4)}=-\frac{1}{2}(\gamma'-\gamma_+)(\gamma'-\gamma_-)=\frac{1}{\kappa}A(\gamma').$$ 
\end{rkk}
\begin{rkk}In the $x \to 0$ limit, because $x\tilde g'_0(x)/\tilde g_0(x)=O(x^{\gamma-\gamma'})$, the second line of \eqref{genrestilde} is equivalent to 
\begin{eqnarray}\label{Cxto0tilde}\frac{(1-z\bar z)^2}{x^2} \left\{A(\gamma')-2p+\left(\frac{\kappa}{2}+1\right)\gamma'\right\}  = \frac{(1-z\bar z)^2}{x^2} C(\gamma').\end{eqnarray}
Because of Remark \ref{gamma'gamma0}, we know that $C(\gamma')<0$ for $p\geq p^*(\kappa)$.
\end{rkk}
We can therefore write \eqref{genrestilde} under the form
  \begin{eqnarray} \label{h0th0}\frac{\mathcal P(D)[\psi_0(z,\bar z)]}{\psi_0(z,\bar z)}&=&\frac{1-z\bar z}{x} h_0(x) +\frac{(1-z\bar z)^2}{x^2} \tilde h_0(x) \label{genresdual}\\ \nonumber \label{h0}
  h_0(x)&:=&-\big[C(\gamma')+A(\gamma')\big]+\frac{x}{4-x}2A(\gamma')+\left(\frac{\kappa}{2}-1-\frac{2\kappa}{4-x}\right)x\frac{\tilde g'_0}{\tilde g_0}\\ \nonumber \label{tildeh0}
  \tilde h_0(x)&:=&\frac{1}{4-x}\left[4A(\gamma')-\kappa x\frac{\tilde g_0'}{\tilde g_0}\right]-2p+\big(\frac{\kappa}{2}+1\big)\big(\gamma'+x\frac{\tilde g_0'}{\tilde g_0}\big),
\end{eqnarray}
where $h_0(x)$ and $\tilde h_0(x)$ are two bounded functions on the interval $[0,4]$. Owing to \eqref{Cxto0tilde}, $\tilde h_0(0)=C(\gamma')<0.$
}

\textcolor{black}{$\bullet$ \textit{Logarithmic modification.} As in Section \ref{firstsol}, let us consider now the action of the differential operator on the modified function $\psi_0(z,\bar z)\ell_\delta(z\bar z)$, with the logarithmic factor \begin{equation}\nonumber \ell_\delta(z\bar z):= [-\log (1-z\bar z)]^\delta,\,\, \delta \in \mathbb R .\end{equation}  Eq. \eqref{Ppsiell} yields
\begin{eqnarray}\nonumber \mathcal P(D)[\psi_0(z,\bar z)\ell_\delta(z\bar z)]
=\ell_\delta(z\bar z)\left\{\mathcal P(D)[\psi_0(z,\bar z)]-\psi_0(z,\bar z) \frac{2\delta z\bar z x^{-1}}{[-\log(1-z\bar z)]}\right\};\end{eqnarray}
 Eq. \eqref{h0th0} then gives: 
\begin{eqnarray}\label{Ppsi0tilde} \frac{x\mathcal P(D)[\psi_0(z,\bar z)\ell_\delta(z\bar z)]}{\ell_\delta(z\bar z)\psi_0(z,\bar z)}= (1-z\bar z) h_0(x)+\frac{(1-z\bar z)^2}{x} \tilde h_0(x)-\frac{2\delta z\bar z }{[-\log(1-z\bar z)]}.\end{eqnarray}
\noindent  $\bullet$ Consider now the annulus $\mathbb A(r)=\{z: r<|z|<1\}$, and its intersection with the  domain $\mathbb D\setminus \mathbb D_{1/2}$, where $1-z\bar z\leq x$ (Fig. \ref{figlune}). In this domain, $$(1-z\bar z) h_0(x)+(1-z\bar z)^2 x^{-1} \tilde h_0(x)=O(1-z\bar z),$$  so that this term is dominated by the logarithmic term in \eqref{Ppsi0tilde} for $r$ close enough to $1$. The sign of $\mathcal P(D)[\psi_0\ell_\delta]$ is therefore given by that of $-\delta$ in this domain.\\
$\bullet$ Consider next the domain $\mathbb A(r)\cap \mathbb D_{1/2}$. For $r \to 1^-$, the first term  in \eqref{Ppsi0tilde} is dominated as before by the logarithmic one. The second term is not $O(1-z\bar z)$, but its sign, for $x$ small enough, hence for $r$ near $1$, is that of $\tilde h_0(0)=C(\gamma')<0$. Therefore, if we choose $\delta >0$, the sign of the r.h.s. of \eqref{Ppsi0tilde} will be negative in $\mathbb A(r)\cap \mathbb D_{1/2}$,  for $r$ close enough to $1$.\\ As seen just above, this  also holds in $\mathbb A(r)\cap(\mathbb D\setminus \mathbb D_{1/2})$, so that we conclude that for $\delta >0$, $\mathcal P(D)[\psi_0\ell_\delta]<0$ in the whole annulus $\mathbb A(r)$, i.e., $\psi_0\ell_\delta$ is  a \textit{positive subsolution} there.}\\ 
  
 \textcolor{black}{We then follow the same method as in Proposition \ref{annineq} and Section \ref{singan}. 
 The subsolution $\psi_{0}(z,\bar z)\ell_\delta(z\bar z)$ is a positive function on $\mathbb A(r)$, as  the true solution  $F(z,\bar z)=\mathbb E\big[|f'_0(z)|^p\big]$ is. The maximum principle then yields 
   \begin{prop} \label{annineqdual} There exists a positive constant $c$ such that 
  \begin{eqnarray}\label{F+dual}&&F<c\,\psi_0\,\ell_\delta,\,\,\, z\in \mathbb A(r),\,\,\,  p \geq p^*(\kappa).
 \end{eqnarray} 
  \end{prop}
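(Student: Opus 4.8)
The plan is to obtain the bound \eqref{F+dual} as an immediate consequence of the subsolution property just established, via the same parabolic comparison used to prove Proposition \ref{annineq}. Recall that, for $\gamma=\gamma_++a$ with $a>0$ small, its dual $\gamma'$ given by \eqref{gg'}, and the choice $\delta>0$, the sign analysis of \eqref{Ppsi0tilde} shows that $\psi_0\ell_\delta$ is a \emph{positive} function with $\mathcal P(D)[\psi_0\ell_\delta]<0$ throughout an annulus $\mathbb A(r)=\{r<|z|<1\}$ with $r$ close enough to $1$; the crucial input is $\tilde h_0(0)=C(\gamma')<0$, valid exactly for $p\geq p^*(\kappa)$ by Remark \ref{gamma'gamma0}. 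Thus all the analytic work is already done, and what remains is purely a maximum-principle comparison between $\psi_0\ell_\delta$ and the true solution $F(z,\bar z)=\mathbb E[|f_0'(z)|^p]$, which solves $\mathcal P(D)[F]=0$.

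First I would fix the comparison on the inner boundary. Both $F$ and $\psi_0\ell_\delta$ are continuous and strictly positive on the compact circle $\partial\mathbb D(r)=\{|z|=r\}$, and $\psi_0\ell_\delta$ is bounded away from $0$ there; hence there is a constant $c>0$ with $F<c\,\psi_0\ell_\delta$ on $\partial\mathbb D(r)$. Setting $w:=c\,\psi_0\ell_\delta-F$, linearity of $\mathcal P(D)$ gives $\mathcal P(D)[w]=c\,\mathcal P(D)[\psi_0\ell_\delta]-\mathcal P(D)[F]<0$ in $\mathbb A(r)$, since the subsolution term is strictly negative and $\mathcal P(D)[F]=0$, while $w>0$ on the inner circle $\partial\mathbb D(r)$. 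The goal is then to propagate $w>0$ to all of $\mathbb A(r)$.

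Next I would invoke the parabolic maximum principle, exactly as in Proposition \ref{annineq} (see \cite{evans}, Th.~7.1.9). Written in polar coordinates as in \eqref{BSsig}, $\mathcal P(D)$ is parabolic with $\theta$ the spatial variable (principal part $\tfrac{\kappa}{2}\partial_\theta^2$) and $r$ the time variable; the coefficient $r(r^2-1)/D$ of $\partial_r$ is negative for $r<1$, so that increasing $r$ towards $1$ plays the role of forward time. In the cylinder $\{(\theta,\rho):\theta\in\mathbb R/2\pi\mathbb Z,\ r\le\rho<1\}$ the $\theta$-direction is periodic, so the parabolic boundary reduces to the initial slice $\partial\mathbb D(r)$; applying the principle to $w$, which is positive on that slice and satisfies $\mathcal P(D)[w]<0$, yields $w>0$, i.e.\ $F<c\,\psi_0\ell_\delta$, throughout $\mathbb A(r)$. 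Since $\psi_0\ell_\delta$ is a subsolution in the same sense ($\mathcal P(D)[\cdot]<0$) as $\psi_-\ell_\delta$ in Proposition \ref{annineq}, this is verbatim the argument there, and it gives \eqref{F+dual}.

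The one technical point requiring care is the zeroth-order term $p\big(N/D^2+1\big)F$ of $\mathcal P(D)$, whose coefficient has no definite sign on $\mathbb A(r)$; the weak maximum principle in its cleanest form assumes a non-positive undifferentiated coefficient. This is handled in the standard way, either by the substitution $w\mapsto e^{-\lambda r}w$ with $\lambda$ large, which shifts the zeroth-order coefficient negative while preserving both $\mathcal P(D)[w]<0$ and the sign of $w$ on $\partial\mathbb D(r)$, or by restricting to an annulus thin enough that the relevant coefficients are uniformly controlled. The only other thing to verify is that the comparison survives up to the outer rim $|z|\to1$; this is automatic, as the argument bounds $w$ on every closed subcylinder $r\le|z|\le r''<1$ with a constant $c$ independent of $r''$. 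I expect this bookkeeping, rather than any genuine difficulty, to be the main effort, the substantive obstacle having already been cleared in the sign computation for \eqref{Ppsi0tilde}.
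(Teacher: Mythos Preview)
Your proposal is correct and follows essentially the same approach as the paper: the paper's proof of Proposition~\ref{annineqdual} simply states that $\psi_0\ell_\delta$ is a positive subsolution on $\mathbb A(r)$, that $F$ is positive there as well, and that the maximum principle (as in Proposition~\ref{annineq}) yields the inequality. You have filled in the details the paper leaves implicit---choosing $c$ on the inner circle, writing $w=c\,\psi_0\ell_\delta-F$, and invoking the parabolic comparison---and your additional remarks on the sign of the zeroth-order coefficient and the passage to the full open annulus are reasonable refinements that the paper does not spell out.
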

Recall now that $\psi_0$ here involves a parameter $\gamma>\gamma_+$, whereas  $\psi_+=\psi_+(z,\bar z)=x^{\gamma_+} (1-z\bar z)^{-\beta(\gamma_+)}$, which appears in  Proposition \ref{annineq} (about the existence of a positive constant $c_2$, such that $c_2\,\psi_+\,\ell_\delta < F$ for $z\in \mathbb A(r)$ and $p<p(\kappa)$), involves  $\gamma=\gamma_+$ exactly.}\\

\textcolor{black}{$\bullet$ \textit{Proof of Theorem \ref{theoMFrigorous}}.\\
Recall that the average  integral means spectrum $\beta(p):=\beta(p,\kappa)$ of the whole-plane SLE$_\kappa$ is defined  by \eqref{intmeanF} (a notation which should not be confused with $\beta(\gamma)$, as defined in \eqref{betapgamma}). For the function    $\psi_0(z,\bar z)=\psi_0(r,\theta)$,  the integral means are: 
\begin{equation}\label{intmeandual} \int_0^{2\pi} \psi_0(r,\theta)d\theta =(1-r^2)^{-\beta(\gamma')} \int_0^{2\pi} g(r,\theta)d\theta,
\end{equation} 
where we write \eqref{ghyperbis} as $g(x)=g(|1-re^{i\theta}|^2)=g(r,\theta)$. 
 Recall that $g(x)\sim x^{\gamma'}$ for $x\to 0$, where $\gamma'=\gamma'_+-a\leq \gamma'_+$ can be negative, 
 and  the singularity along the unit circle at $\theta =0$ is no longer integrable when $2\gamma'+1\leq 0$. For the upper limit of $\gamma'$,  $\gamma'_+=1/2+(1/\kappa)(1-\sqrt{1+2\kappa p})$, this  corresponds to a cross-over value $p=\hat p(\kappa):=1+\kappa/2$, after which $2\gamma'_+ +1<0$.}
 
 \textcolor{black}{Consider now the logarithmically modified functions 
 $\psi_{+}\,\ell_\delta$ and $\psi_0\,\ell_\delta$, whose integral means asymptotic power law behaviors are obviously  the same as for $\psi_{+}$ and $\psi_0$:
 \begin{eqnarray} \label{g+elldual}
 \int_0^{2\pi} \psi_{+}(r,\theta)\ell_\delta(r^2)d\theta&\stackrel{(r\to 1^-)}{\asymp}& (1-r)^{-\beta(\gamma_+)},\\ \label{g-elldual} 
  \int_0^{2\pi} \psi_{0}(r,\theta)\ell_\delta(r^2)d\theta&\stackrel{(r\to 1^-)}{\asymp}& \begin{cases}(1-r)^{-\beta(\gamma')},\,\,\,\,2\gamma'+1 \geq 0,\\  
  (1-r)^{-\beta(\gamma')+2\gamma'+1},\,\,\,\, 2\gamma'+1 < 0.\end{cases}\end{eqnarray}
  }
\textcolor{black}{By using  the asymptotic behaviors \eqref{intmeanF}, \eqref{g+elldual}, and \eqref{g-elldual} in Propositions \ref{annineq} and \ref{annineqdual}, we obtain  \begin{eqnarray}
 \label{+dual}
&&\beta(\gamma_+)\leq \beta(p,\kappa),\,\,\, p^*(\kappa)\leq p\leq p(\kappa)=\frac{(6+\kappa)(2+\kappa)}{8\kappa},\\ \label{+0dual} &&{\beta(p,\kappa)}\leq \beta(\gamma')=\beta(\gamma),\,\,\,0\leq 2\gamma'+1, \,\,\, p^*(\kappa)\leq p, \\  \label{++dual}
&&{\beta(p,\kappa)}\leq \beta(\gamma')-2\gamma'-1,\,\,\, 2\gamma'+1<0, \,\,\, p^*(\kappa)\leq p. \end{eqnarray} 
 Suppose first that $2\gamma_+'+1>0$, i.e., $p<\hat p(\kappa)=1+\kappa/2$, then $2\gamma'+1=2\gamma'_++1-2a$ is non-negative for $a>0$ small enough. Eq. \eqref{+0dual} then gives by duality $\beta(p,\kappa)\leq \beta(\gamma')=\beta(\gamma'_+-a)=\beta(\gamma_++a)=\beta(\gamma)$.  Similarly, if $2\gamma_+'+1=0$, i.e., $p=\hat p(\kappa)$, then $2\gamma'+1=-2a$, and by \eqref{++dual}, $\beta(p,\kappa)\leq \beta(\gamma')+2a=\beta(\gamma'_+-a)+2a=\beta(\gamma_++a)+2a$.  In both cases, by letting  $a\to 0+$, we obtain $\beta(p,\kappa)\leq \beta(\gamma_+)$ for $p^*(\kappa)\leq p \leq \hat p(\kappa)$.  
 By combining this with Eq. \eqref{+dual}, we obtain the expected identity $\beta(p,\kappa)=\beta(\gamma_+)$ for $p^*(\kappa)\leq p \leq  \min\{\hat p(\kappa),p(\kappa)\}$. By recalling that $\beta(\gamma_+)=\beta(\gamma_+(p))=\beta_+(p)$ (Eqs. \eqref{gammafin},\eqref{betapgammafin}), we finally obtain $\beta(p,\kappa)=3p-1/2-(1/2)\sqrt{1+2\kappa p}$, i.e.,Theorem \ref{theoMFrigorous}.}
\textcolor{black}{\begin{rkk} For $p>p(\kappa)$, in contrast to \eqref{+}, \eqref{+dual}, we have  from \eqref{++}  the (subsolution) inequality $\beta(p,\kappa)\leq \beta_+(p)=\beta(\gamma_+)$; this simply co\"{\i}ncides with the inequality obtained here by the duality method. Hence, we cannot  prove that $\beta(p,\kappa)=\beta_+(p)$ for $p> p(\kappa)$ by this method.
\end{rkk}} 
 
\subsection{Spectrum of the $m$-fold whole-plane SLE ($m\geq 1$)}\label{subsecSpecm}
\textcolor{black}{In this section, we address the derivation of Statement \ref{theoMF} for general $m$. 
For the $m$-fold inner whole-plane SLE$_\kappa$ map, defined as $h_0^{(m)}(z)=z\big[f_0(z^m)/z^m\big]^{1/m}$, $m\geq 1$, the average integral means spectrum is, for $p\geq 0$,
\begin{eqnarray}\label{betamm}
\beta_m(p,\kappa)&=& \max \left \{\bar \beta_0(p,\kappa),B_m(p,\kappa)\right\},\\ \nonumber
B_m(p,\kappa)&=&\left(1+\frac{2}{m}\right)p-\frac{1}{2}-\frac{1}{2}\sqrt{1+\frac{2\kappa p}{m}},
\end{eqnarray}
where $\bar\beta_0(p,\kappa)$ is the BS expected integral mean spectrum \eqref{beta00bar}. 
The phase transition takes place when the second term $B_m$ on the r.h.s.  of \eqref{betamm} equals, then exceeds, the first one. For $1\leq m\leq 3$, this takes place  at the critical point \eqref{pmstar} $p_m^*(\kappa)\leq p_0^*(\kappa)$, hence before the transition point $p_0^*(\kappa)$ \eqref{p00star} of the BS spectrum $\beta_0$ in \eqref{beta00bar} to the linear behavior $\hat \beta_0$ \eqref{tildebeta00}. 
For $m\geq 4$, the order of the two critical points $p^*_0(\kappa)$ and $p^*_m(\kappa)$ depends on $\kappa$, and is given by \eqref{km0}
\begin{eqnarray*}\label{km}
p_m^*(\kappa)\lesseqqgtr p_0^*(\kappa),\,\,\,\,\, \kappa\lesseqqgtr \kappa_m,\,\,\,\,\kappa_m:=4\frac{m+3}{m-3},\,\,\,\,\, m\geq 4,
\end{eqnarray*}
such that for $\kappa\leq \kappa_m$, 
\begin{eqnarray*}\label{betamm0}
\beta_m(p,\kappa)=\begin{cases}\beta_0(p,\kappa),\,\,\,\,0\leq p\leq p_m^*(\kappa),\,\,\,
\\ B_m(p,\kappa),\,\,\,\, p_m^*(\kappa)\leq p,\end{cases}
\end{eqnarray*}
whereas  for $\kappa\geq \kappa_m$,
\begin{eqnarray*}\label{in0bis}\beta_m(p,\kappa)=\begin{cases}\beta_0(p,\kappa),\,\,\,\,0\leq p\leq  p_0^*(\kappa),\,\,\,
\\ 
\hat \beta_0(p,\kappa),\,\,\,\,p_0^*(\kappa)\leq p\leq p^{**}_m(\kappa),
\\ 
 B_m(p,\kappa),\,\,\,\,p_m^{**}(\kappa)\leq p,\end{cases}
\end{eqnarray*}}
where $p_m^{**}(\kappa)$ is the second critical point \eqref{pdoublestar0} 
$p_m^{**}(\kappa):=m(\kappa^2-16)/{32\kappa},$ where the spectrum $B_m(p,\kappa)$ intersects the linear spectrum $ \hat \beta_0(p,\kappa)$ \eqref{tildebeta00}.

These results for $\beta_m(p,\kappa)$  obviously  satisfy Makarov's Theorem \ref{mtheo} \cite{Makanaliz}  for $m$-fold symmetric functions: \begin{equation}\label{mak2}\beta_m(p,\kappa)\leq \left(1+\frac{2}{m}\right)p-1\,\, \textrm{for}\,\, p\geq \frac{2m}{4+m}.\end{equation}

\subsubsection{Derivation of Statement \ref{theoMF}} Rather than providing here in full detail the calculation of the spectrum for the $m$-fold whole plane SLE map, which is quite similar to those of Section \ref{IMS1} above, 
  we shall take the following shortcut, as suggested by Remark \ref{remark46} and by the comment after Eq. \eqref{PpsiA}. 

We now use the identities \eqref{Aalpham}, \eqref{Balpham} and \eqref{Calpham}  giving $A$, $B$ and $C$ in the $m$-fold case.
 We first remark that the expression \eqref{Calpham} for $C$ does not depend on $m$, hence  the standard spectrum, as given by  $\beta=\kappa\alpha^2/2$  and $C= -\frac{\kappa}{2}\alpha(\alpha-1)+2\alpha-p=0$, co\"{\i}ncides with the BS spectrum $\beta_0(p)$ (for the choice of the negative branch solution $\beta_0^-(p)$ to $C=0$). This is expected, since this part of the spectrum should correspond to the \emph{fine multifractal structure of the bulk of the SLE curve}, which should  stay \textit{invariant}  under any $m$-fold transform. 
 
 The spectrum corresponding to the \textit{unbounded} nature of the $m$-fold whole-plane SLE can now be obtained  by setting $A_m=0$ in \eqref{Aalpham}, and using again $\beta=\kappa \alpha^2/2-C$ together with $C$  \eqref{Calpham}. This gives the two solutions 
 \begin{eqnarray}\label{gammam}
 \alpha={\gamma}_{m}^{\pm}(p,\kappa)&:=&\kappa^{-1}\big(1\pm \sqrt{1+2\kappa p/m}\big),\\ \nonumber
\beta= B_{m}^{\pm}(p,\kappa)&:=&\left(\frac{2}{m}+1\right)p-\frac{\kappa}{2}{\gamma}_m^{\pm}(p,\kappa)\\ \label{betampm} &=&\left(\frac{2}{m}+1\right)p-\frac{1}{2}\left(1\pm \sqrt{1+2\kappa \frac{p}{m}}\right). \end{eqnarray}
\textcolor{black}{Note that in the case $m=1$, the two functions $B_1^{\pm}(p,\kappa)$ coincide with the functions $\beta_{\pm}(p,\kappa)$ used in Section \ref{IMS1}, and  defined in \eqref{betapgammafin}.} Thanks to the universal spectrum for $m$-fold symmetric analytic functions, as given by Makarov's Theorem \ref{mtheo}, the negative branch $B_{m}^{-}$ is clearly excluded, while the positive one $B_{m}^{+}(\equiv B_m \eqref{Bm})$ satisfies the universal bound \eqref{mak2}. The transition point where $B_{m}(p,\kappa)=\beta_0(p,\kappa)$ is given by $p=p^*_m(\kappa)$ in Eq. \eqref{pmstar}. 
 
\textcolor{black}{Consider first the case $1\leq m\leq 3$ for which, for all $\kappa$, this transition point $p_m^*(\kappa)\leq p_0^*(\kappa)$ \eqref{p00star}, where  the BS spectrum \eqref{beta000} in \eqref{beta00bar} changes to the linear spectrum \eqref{tildebeta00}. For $p\leq p^*_m(\kappa)$ one has $\beta_0(p,\kappa)\geq B_{m}(p,\kappa)$, hence $\beta_m(p,\kappa)=\beta_0(p,\kappa)$. For $p\geq p^*_m(\kappa)$, the unbounded integral means spectrum $B_m(p,\kappa)$ dominates, hence $\beta_m(p,\kappa)=B_m(p,\kappa)$.}
 
 The alternative inequality $p_0^*(\kappa)\leq p_m^*(\kappa)$ arises only for $m\geq 4$ and for $\kappa\geq \kappa_m$ \eqref{km0}. In this case, the phase transition at $p^*_0(\kappa)$ to the linear piece  \eqref{tildebeta00} of the BS spectrum appears \textit{before} the phase transition from $\beta_0(p,\kappa)$ to the $m$-fold unbounded spectrum $B_{m}(p,\kappa)$  happens at $p_m^*(\kappa)$. The latter transition  therefore takes place at the second phase transition point $p_m^{**}(\kappa)$ \eqref{pdoublestar0}, where $\hat \beta_0(p,\kappa)$ intersects $B_{m}(p,\kappa)$. We thus expect the sequence of spectra \eqref{in0} to take place for $m\geq 4, \kappa\geq \kappa_m$.   This concludes the (non-rigorous) derivation of Statement \ref{theoMF}. 
\subsubsection{Proof of Theorem \ref{theoMFcm}} 
\begin{proof} A rigorous proof of Theorem \ref{theoMFcm} for general $m$ can be achieved in the same manner as in Section \ref{IMS1} above for the $m=1$ case. \textcolor{black}{Using the differential operator $\mathcal P_m(D)$ \eqref{BSm} 
instead of \eqref{zz1} (or the cylindrical coordinate version thereof instead of \eqref{BSsig}) in the analysis of Section \ref{IMS1}, we search for approximate solutions to $\mathcal P_m(D)[\psi(\zeta,\bar\zeta)]=0$ \eqref{BSmm} of the form} 
\begin{equation}
\psi(\zeta,\bar\zeta):=(1-\zeta\bar \zeta)^{-\beta} x^\gamma,
\end{equation} with here $$\zeta:=z^m,\,\,\,x:=(1-\zeta)(1-\bar \zeta)=(1-z^m)(1-\bar z^m).$$
Since $\psi(\zeta,\bar\zeta)=(1-\zeta\bar\zeta)^{-\beta}\varphi_\gamma(\zeta)\varphi_\gamma(\bar\zeta)$, with $\varphi_\gamma(\zeta)=(1-\zeta)^\gamma$,  the same algebra as in \eqref{Pphialpham} in Section \ref{action} yields, for arbitrary values of $\beta,\gamma$, the analogue of the action \eqref{genres}
\begin{eqnarray}\label{genresm}
\mathcal P_m(D)[\psi(\zeta,\bar \zeta)]&=&\psi(\zeta,\bar \zeta) x^{-1} \left\{(\kappa \gamma^2-2\beta)\zeta\bar \zeta-A_m(\gamma) (1-\zeta\bar \zeta-x)\right. \\ \nonumber &&\left.+C(\gamma)\left[(1-\zeta\bar \zeta)\big(\frac{1-\zeta\bar \zeta}{x}+1\big)-2\right]\right\},\end{eqnarray} where $A_m$ is given by \eqref{Aalpham} and $C$ by \eqref{Calpham} (for $\alpha=\gamma$), in conjunction with \eqref{Pphialpham}.
We write  $\psi=\psi_+$ for the choice $\beta:=B_{m}^{+}(p,\kappa)$ and $\gamma:=\gamma_{m}^+(p,\kappa)$, as defined in Eqs. \eqref{gammam} and \eqref{betampm}, such that $A_m(\gamma)=0$ and $C(\gamma)=\kappa\gamma^2/2-\beta$. The action \eqref{genresm} then simply becomes, as in \eqref{Ppsi}, 
\begin{eqnarray}\label{Ppsim}
\mathcal P_m(D)[\psi_+(\zeta,\bar \zeta)]= 
&=&\psi_+(\zeta,\bar \zeta)\left(\frac{\kappa}{2}\gamma^2-\beta\right) (1-\zeta\bar \zeta)(1-\zeta\bar \zeta-x)x^{-2}.\end{eqnarray}
In complete analogy to Eqs. \eqref{signpsi+_} and \eqref{signpsi++}, we then have, with the special point $p_m(\kappa)$ defined as in \eqref{alpham}, 
\begin{eqnarray} \nonumber 
&&p< p_m(\kappa):\mathcal P_m(D)[\psi_+(\zeta,\bar \zeta)]> 0, \zeta\in \mathbb D_{1/2},\,\,\,
\mathcal P_m(D)[\psi_+] < 0, \zeta\in \mathbb D\setminus\overline{\mathbb D}_{1/2};\\ \nonumber
&&p= p_m(\kappa): \mathcal P_m(D)[\psi_+(\zeta,\bar \zeta)]=0,\zeta\in \mathbb D;\\ \nonumber 
&&p> p_m(\kappa):\mathcal P_m(D)[\psi_+(\zeta,\bar \zeta)]< 0, \zeta\in \mathbb D_{1/2}\,\,\,
\mathcal P_m(D)[\psi_+] > 0, \zeta\in \mathbb D\setminus\overline{\mathbb D}_{1/2}.\end{eqnarray}
 A modification by the logarithmic factor \eqref{ell} of $\psi_+$ into $\psi_+(\zeta,\bar\zeta)\ell_\delta(\zeta\bar\zeta)$,   yields the same  conclusions as in Section \ref{firstsol}: 
 there exist  open annuli  $\mathbb A_m(r_i):=\{\zeta: r_i<|\zeta|<1\}=\mathbb D\setminus \overline{\mathbb D}(r_i)$, $i=1,2$,   whose boundary includes $\partial \mathbb D$, and where one has respectively (for a specific sign of $\delta$ chosen appropriately to each case): \\
 $\bullet$ for $p < p_m(\kappa)$,  $\psi_+\ell_\delta$ is a \textit{supersolution} with $\mathcal P_m(D)[\psi_+(\zeta,\bar \zeta)\ell_\delta(\zeta\bar \zeta)] > 0$ for $\zeta\in \mathbb A_m(r_1)$; \\
 $\bullet$  for  $p> p_m(\kappa)$, $\psi_+\ell_\delta$ is a \textit{subsolution} with $\mathcal P_m(D)[\psi_+(\zeta,\bar \zeta)\ell_\delta(\zeta\bar \zeta)] < 0$ for $\zeta\in \mathbb A_m(r_2)$;\\ 
  $\bullet$ for  $p=p_m(\kappa)$,  $\mathcal P_m(D)[\psi_+(\zeta,\bar \zeta)] = 0, \zeta\in \mathbb D$, so that $\psi_+(\zeta,\bar \zeta)=F(\zeta,\bar \zeta)=(1-\zeta\bar \zeta)^{-\beta}|1-\zeta|^{2\gamma}$ is the exact solution  of Theorem \ref{main4} with parameters \eqref{alpham}: $\gamma=\gamma^+_{m}(p_m(\kappa),\kappa)=\alpha_m(\kappa)$ and $\beta=B^+_{m}(p_m(\kappa),\kappa)=\kappa \alpha_m(\kappa)^2/2$.
  
    We then follow the same method as above \cite{BKTH,BS}. The operator $\mathcal P_m(D)$, when written in polar coordinates,  is \emph{parabolic}. 
  Using in each of the two annuli where $\mathcal P_m(D)[\psi_{+}\ell_\delta]$ has a definite sign, respectively, the minimum principle, and  the maximum principle  (\cite{evans}, Th. 7.1.9), yields 
  \begin{prop} \label{annineqm} There exist two positive constants $c_i$, $i=1,2$, such that 
  \begin{eqnarray}
  \label{F+m} &&c_1\,\psi_+\,\ell_\delta < F,\,\,\,\zeta\in \mathbb A_m(r_1),\,\,\, p<p_m(\kappa),\\ \nonumber
 && \psi_+=F,\,\,\,\zeta\in \mathbb D,\,\,\,p=p_m(\kappa),\\ \label{F++m}&&F< c_2\,\psi_+\,\ell_\delta,\,\,\,\zeta\in \mathbb A_m(r_2),\,\,\,p>p_m(\kappa).
 \end{eqnarray} 
where $\zeta=z^m$ and $F=F(\zeta,\bar \zeta):=\mathbb{E}\left(|({h}^{(m)}_0)'(z)|^p\right)$, with  ${h}^{(m)}_0(z):=[f_0(z^m)]^{1/m}$.  
 \end{prop}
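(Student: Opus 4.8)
The plan is to derive the two comparison inequalities \eqref{F+m} and \eqref{F++m} from a parabolic maximum/minimum principle applied to the $m$-fold operator $\mathcal P_m(D)$, in exact parallel with the derivation of Proposition \ref{annineq} in the $m=1$ case, while treating the resonant value $p=p_m(\kappa)$ separately by exact solvability. First I would record that, after the substitution $\zeta=z^m$, the function $F(\zeta,\bar\zeta)=\mathbb E\big[|(h^{(m)}_0)'(z)|^p\big]$ is a genuine (classical) solution of $\mathcal P_m(D)[F]=0$ in the open bidisk, i.e. of the $m$-fold Beliaev--Smirnov equation \eqref{BSmm}; the interchange of expectation and differentiation is legitimate since $h^{(m)}_0$ is conformal and its derivative moments are finite and smooth away from $\partial\mathbb D$. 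At the resonant point $p=p_m(\kappa)$ the middle line of the Proposition is not a maximum-principle statement: there $\psi_+$ is the explicit exact solution of Theorem \ref{main4} with parameters \eqref{alpham}, and the one-dimensionality of the space of power-series solutions (the argument of Lemma \ref{lem0} applies verbatim to $\mathcal P_m(D)$, whose non-differential part again vanishes at the origin) together with the common normalization $F(0,0)=\psi_+(0,0)=1$ forces $\psi_+=F$ on $\mathbb D$.

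For $p\neq p_m(\kappa)$ I would view $\mathcal P_m(D)$, written in polar coordinates $(r,\theta)$ with $\zeta=re^{i\theta}$, as a parabolic operator in which $\theta$ is the periodic spatial variable and $r$ the time variable increasing outward toward $\partial\mathbb D$. On the inner circle $\{|\zeta|=r_i\}$ both $F$ and the comparison function $\psi_+\ell_\delta$, with $\ell_\delta$ the logarithmic factor \eqref{ell}, are strictly positive, continuous and hence bounded, so compactness furnishes a constant $c_i>0$ realizing the desired inequality on that circle. I would then invoke the parabolic comparison principle (\cite{evans}, Th. 7.1.9) on the annulus $\mathbb A_m(r_i)$: in the supersolution case $p<p_m(\kappa)$, where $\mathcal P_m(D)[\psi_+\ell_\delta]>0$, the minimum principle propagates $c_1\psi_+\ell_\delta\le F$ from the initial circle $\{|\zeta|=r_1\}$ throughout the annulus, giving \eqref{F+m}; in the subsolution case $p>p_m(\kappa)$, where $\mathcal P_m(D)[\psi_+\ell_\delta]<0$, the maximum principle yields \eqref{F++m} in the same way. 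Periodicity in $\theta$ means there is no lateral boundary, so the only comparison data required are those on the initial circle.

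The delicate point, which is already the crux for $m=1$, is that $\mathcal P_m(D)$ degenerates on $\partial\mathbb D$ and that both $F$ and $\psi_+$ are singular there, so one cannot simply quote an off-the-shelf maximum principle on a domain whose boundary is $\partial\mathbb D$. The role of the logarithmic modification $\ell_\delta$, whose sign $\delta$ was fixed in the construction preceding the statement, is precisely to turn $\psi_+$ into a \emph{strict} super- (resp. sub-) solution in a full one-sided neighborhood $\{r_i<|\zeta|<1\}$ of the circle, so that the comparison survives the degeneration and extends up to $\partial\mathbb D$. Concretely I would run the argument on $\{r_i\le|\zeta|\le\rho\}$ for $\rho<1$, using the strict sign of $\mathcal P_m(D)[\psi_+\ell_\delta]$ to exclude an interior extremum of $F-c_i\psi_+\ell_\delta$, and only at the end let $\rho\to1^-$. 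Verifying that this limiting step is valid, i.e. that the strict super/subsolution property reinforced by $\ell_\delta$ dominates uniformly near $\partial\mathbb D$, is the main obstacle, and it is exactly here that the sign analysis of the two terms in the action \eqref{genresm} of $\mathcal P_m(D)$ pays off.
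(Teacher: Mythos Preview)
Your proposal is correct and follows essentially the same approach as the paper: recognize $\mathcal P_m(D)$ as parabolic in polar coordinates with $r$ as time, obtain the comparison constants $c_i$ by compactness on the inner circle $|\zeta|=r_i$, and then apply the parabolic minimum (resp.\ maximum) principle from \cite{evans}, Th.~7.1.9, in the annulus where the sign of $\mathcal P_m(D)[\psi_+\ell_\delta]$ is definite; the resonant case $p=p_m(\kappa)$ is handled by Theorem~\ref{main4} and the one-dimensionality argument of Lemma~\ref{lem0}. The paper's own justification is in fact terser than yours: it does not spell out the $\rho\to1^-$ limiting step you describe, since for a forward parabolic problem the terminal slice $|\zeta|=1$ is not part of the parabolic boundary and no data are required there.
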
 
From the inequality  \eqref{F+m} (resp. \eqref{F++m}), we therefore conclude  that  the spectrum  associated with $\psi_+$ or $\psi_+\ell_\delta$, 
\begin{equation}\nonumber B_{m}^{+}(p,\kappa)\equiv B_m(p,\kappa)=\left(1+\frac{2}{m}\right)p-\frac{1}{2}-\frac{1}{2}\sqrt{1+\frac{2\kappa p}{m}},\end{equation}
 is, for $p\leq p_m(\kappa)$, (resp. for $p\geq p_m(\kappa)$), a \textit{lower bound} $B_{m}\leq \beta_m$ (resp. \textit{upper bound}  $B_{m}\geq \beta_m$) to the \textit{exact} average integral means spectrum  $\beta_m(p,\kappa)$ of the $m$-fold inner whole-plane SLE$_\kappa$. 

Recall then that  the BS average integral means spectrum $\bar\beta_0(p,\kappa)$ \eqref{beta0}  becomes \textit{smaller} than the spectrum $B_{m}(p,\kappa)$  at the transition point $p=p_m^*(\kappa)$ \eqref{pmstar}  [for $1\leq m\leq 3, \forall \kappa$, or for $m\geq 4, \kappa\leq \kappa_m$ \eqref{km0}], or at the transition point $p=p_m^{**}(\kappa)$ \eqref{pdoublestar0} [for $m\geq 4,\kappa\geq \kappa_m$]. Observe that these transition values are both smaller than the special point $p_m(\kappa)$ \eqref{pkappam}.  We thus conclude that $\beta_m(p,\kappa)=\bar\beta_0(p,\kappa)$ before these transition points, whereas necessarily $\beta_m(p,\kappa)\geq B_{m}(p,\kappa)>\bar\beta_0(p,\kappa)$ after them. At the higher special value $p=p_m(\kappa)$, we know that the two spectra $\beta_m(p,\kappa)$ and $ B_{m}(p,\kappa)$ \textit{co\"incide}. For $p>p_m(\kappa)$, the inequality is reversed: $\beta_m(p,\kappa)\leq B_{m}(p,\kappa)$. This concludes the proof of Theorem \ref{theoMFcm}. 
 \end{proof}
 \subsection{Integral means spectrum and derivative exponents} \label{derivative}
 \subsubsection{Motivation}\label{motiv}
  In this section, we (heuristically) explain the striking relationship between the packing spectrum \eqref{packing} and the whole-plane average integral means spectrum \eqref{betaunb} after the phase transition that takes place at $p=p^*(\kappa)$ \eqref{pstar}. (See  also Remark \ref{tipexponents}.)  
 
 The average integral means spectrum  \eqref{betavdef} involves evaluating, for the whole-plane SLE map $f_0(z)$, the integral 
 \begin{equation}\label{intJ}
 \mathbb I_p(r):= \int_{\partial \bb D} \mathbb E\, \left[\I f_0'(rz)\I^p\right]\, \I dz\I,
 \end{equation} on a circle of radius $r<1$ concentric to $\partial \mathbb D$, and taking the limit for $r\to 1^-$
 \begin{equation}\label{betaI} 
 \beta(p)=\lim \sup_{r\to 1}\,\,\frac{\log \mathbb I_p(r)}{-\log(1-r)}.
 \end{equation}
 If $\mathbb I_p(r)$ has a power law behavior, an alternative definition of $\beta(p)$ would be such that
\begin{equation} \label{betapow}
(1-r)^{\beta(p)}\, \mathbb I_p(r)\stackrel{r\to 1}{\asymp} 1.
 \end{equation}
 To understand why the average integral means spectrum, for $p\geq p^*(\kappa)$ \eqref{pstar}, crosses over to the special whole-plane form \eqref{betaunb}, one should consider that the integrand in \eqref{intJ} behaves more like a distribution for $p$ large enough. Then, the circle integral \eqref{intJ} concentrates in the vicinity of the pre-image point $z_{0}:= f^{-1}_0(\infty)\in\partial \mathbb D$, which is sent to infinity by the unbounded whole-plane SLE map $f_0$ (Fig. \ref{whpl}). To see this and the relation to the packing spectrum, we first need to recall the relation of our inner whole-plane SLE to standard radial SLE.
 
\subsubsection{Radial and whole-plane SLE} Let us consider the standard {\it inner radial} SLE$_\kappa$ process $g_t$ in the unit disk $\mathbb D$ \cite{Schr}, satisfying the stochastic differential equation 
 $$
 \partial_t g_t(w)=g_t(w)\frac{\lambda(t)+g_t(w)}{\lambda(t)-g_t(w)},\,\,\,\lambda(t)=e^{i\sqrt{\kappa}B_t},\,\,\,t\geq 0. $$
It is defined for $w\in \mathbb D\setminus K_t$, where $(K_t, t\geq 0)$ is a random increasing family of subsets (hulls) of the unit disk that grows towards the origin $0$ (Fig. \ref{whsle}). The map $g_t$ is the unique conformal map from $\mathbb D\setminus K_t$ onto $\mathbb D$, such that $g_t(0)=0$ and $g_t'(0)=e^t$. 

 Denote by $g_t^{-1}(z),\,t\geq 0,\,z\in\mathbb D$, the inverse map  of $g_t$, which maps $\mathbb D$ to $\mathbb D\setminus K_t$ (Fig. \ref{whsle}).  It is such that $g_t^{-1}(0)=0$ and $(g_t^{-1})'(0)=1/g_t'(0)=e^{-t}$. It also has the same law as the continuation to \textit{negative times}, $g_{-t}$, of the forward radial map $g_t$. Consider now the inner whole-plane  map $f_{t}(z)$ as defined in \eqref{loewner}  for $t\geq 0,\,z\in \mathbb D$, its inverse map $f_{t}^{-1}$, and the whole-plane map at $t=0$, $f_0(z)$. Define
  \begin{equation}\label{phit}
 \varphi_{t}(z):=f_{t}^{-1}\circ f_0(z),\,\,\,z\in\mathbb D.
 \end{equation} 
We then have the identities in law \cite{MR2129588,MR2153402}
 \begin{equation}\label{phitinlaw}
 \varphi_{t}(z)\stackrel{\rm (law)}{=} g_{-t}(z)\stackrel{\rm (law)}{=}g_t^{-1}(z).
 \end{equation}  As  already mentioned in Section \ref{BSderiv} (see also \cite{MR2129588}), \textit{the limit for $t\to +\infty$ of $e^t \varphi_t(z)\stackrel{\rm (law)}{=} e^t g_{-t}(z)\stackrel{\rm (law)}{=} e^t g_t^{-1}(z)$  exists, and has  the same law as the inner whole-plane process $f_{0}(z)$:}
 \begin{eqnarray}\label{ftilde}
  \tilde f_t(z)&:=&e^t g_t^{-1}(z),\,\,\,z\in \mathbb D,\\
 \label{f0g}
 f_0(z)&\stackrel{\rm (law)}{=}&\lim_{t\to +\infty}\tilde  f_t(z)=\lim_{t\to +\infty}e^t g_t^{-1}(z).
 \end{eqnarray} 
 In the limit $t\to +\infty$ of the radial inverse SLE map $e^tg_t^{-1}$, the boundary circle $e^t\partial \mathbb D$ is pushed back to infinity, while the limit of hulls $(e^tK_t)_{t\to +\infty}$ becomes the whole-plane SLE hull (e.g., for $\kappa\leq 4$ the single slit $\gamma([0,\infty))$ in Fig. \ref{whpl}). 
 Since 
 the tip of $g_t^{-1}(\partial \mathbb D)$ is at distance of order $e^{-t}$ from $0$, the limit of the tip of $e^t g_t^{-1}(\partial \mathbb D)$ for $t\to+\infty$ stays at a finite distance from $0$, as does the tip $f_0(1)=\gamma(0)$  (Fig. \ref{whpl} and Fig.  \ref{whsle}).
  \begin{figure}[tb]\
\begin{center}
\includegraphics[angle=90,width=.93290\linewidth]{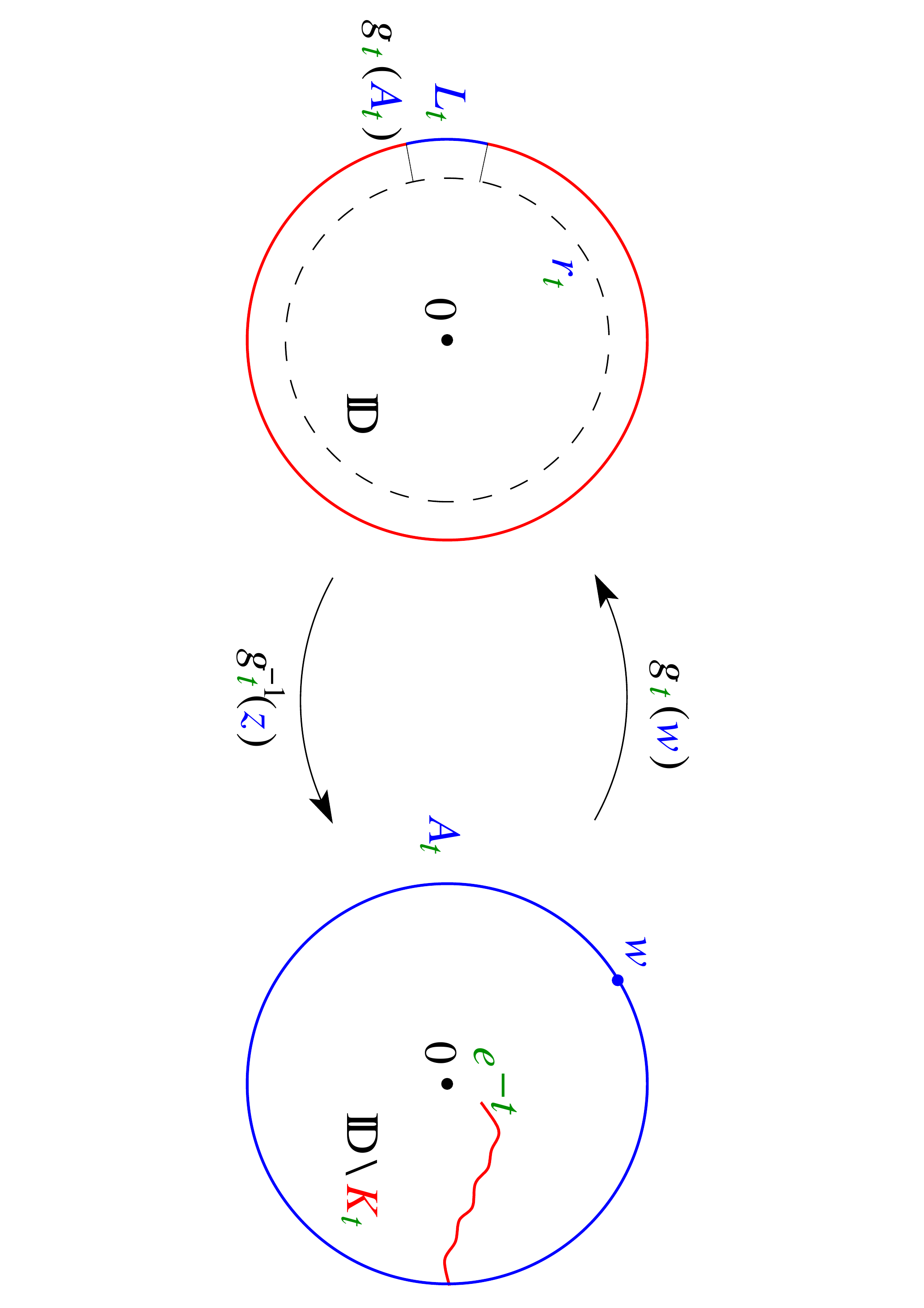}
  \vskip-2.039cm
 \caption{{\it Inverse Schramm--Loewner map $z\mapsto g^{-1}_t(z)$ from $\mathbb D$ to the slit domain $\mathbb D\setminus K_t$, where $K_t$ is the SLE$_\kappa$ hull (here a single curve for $\kappa\leq 4$). The distance from the SLE tip to the origin is of order $e^{-t}$ when $t\to+\infty$. The length $L_t:= |g_t(A_t)|$ of the image of the boundary set  $A_t:=\partial\mathbb D\setminus \overline{K_t}$   gives the harmonic  measure $L_t/2\pi$ of $A_t$ as seen from $0$ in $\mathbb D\setminus K_t$. The inner circle radius $r_t$  is chosen so that $1-r_t=L_t$. }}
 \label{whsle}
\end{center}
\end{figure}
 \subsubsection{Packing and derivative exponents}
 Let us briefly recall Lemma  {\bf 3.2} in Ref. \cite{MR2002m:60159b}: 
 \begin{lemma}\label{3.2}
 Let 
 $$
 A_t:=\partial\mathbb D\setminus\overline{K_t},
 $$
 which is either an arc on $\partial \mathbb D$ or $A_t=\emptyset$. Let $s\geq 0$, and set 
 \begin{equation}\label{nu}
\nu= \nu(s,\kappa):=\frac{s}{2} +\frac{1}{16}\left(\kappa-4+\sqrt{(4-\kappa)^2+16 \kappa s}\right).
  \end{equation}
  Assume $\kappa>0$ and $s>0$. Let $\mathcal H(\theta,t)$ denote the event $\{w=exp(i\theta)\in A_t\}$, and set
  \begin{eqnarray}\label{gprimes}
  \mathcal F(\theta,t)&:=&\mathbb E\left[\big|g_t'\big(\exp(i\theta)\big)\big|^s\,1_{\mathcal H(\theta,t)}\right],\\ \label{q}
  q=q(s,\kappa)&:=&\mathcal U^{-1}_\kappa(s)=\frac{\kappa-4+\sqrt{(4-\kappa)^2+16 \kappa s}}{2\kappa},\\ \nonumber
  h^*(\theta,t)&:=&\exp(-\nu\,t)\big(\sin(\theta/2)\big)^q.
  \end{eqnarray}
  Then there is a constant $c>1$ such that
  \begin{equation}\label{asymp} \forall t\geq 1,\,\,\,\forall \theta\in (0,2\pi),\,\,\,\,\,h^*(\theta,t)\leq \mathcal F(\theta,t)\leq c\,h^*(\theta,t),
 \end{equation} which we denote by $h^*(\theta,t)\asymp \mathcal F(\theta,t).$     \end{lemma}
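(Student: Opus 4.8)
This statement is Lemma~3.2 quoted from Lawler--Schramm--Werner \cite{MR2002m:60159b}, so the cleanest ``proof'' is to derive the asymptotics \eqref{asymp} directly from the radial SLE flow, exactly as in that reference. The plan is to set up a martingale/parabolic-PDE argument for the function $\mathcal F(\theta,t)$ in \eqref{gprimes}, completely parallel to the Beliaev--Smirnov derivation recalled in Section~\ref{BSderiv}.

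\textbf{Step 1: reduce to a one-point function satisfying a PDE.} First I would fix $w=\exp(i\theta)$ on $\partial\mathbb D$ and track the radial SLE flow $g_t(w)$, writing $g_t(w)=\exp\bigl(iY_t\bigr)$ as long as $w\in A_t$ (i.e.\ the point has not yet been swallowed). The event $\mathcal H(\theta,t)=\{w\in A_t\}$ is the event that the flow has survived up to time $t$, so that $Y_t$ has not hit the origin of the driving coordinate. From the radial Loewner equation one computes the It\^o differential of $\log|g_t'(w)|$ and of the angular coordinate $\Theta_t:=Y_t-\sqrt{\kappa}B_t$ (the position relative to the driving point). The object $\mathcal F$ is then the expectation of $\exp\bigl(s\int_0^t \Re\,\partial\log g'\bigr)$ killed when $\Theta$ reaches $0$ or $2\pi$; by the Markov property of SLE, $\mathcal F(\theta,t)$ solves a linear parabolic PDE in $(\theta,t)$ whose spatial part is the generator $\tfrac{\kappa}{2}\partial_\theta^2$ plus the drift coming from $\cot(\theta/2)$, plus the potential term $s\,\Re\bigl(\tfrac{1+g}{1-g}\bigr)$ evaluated on the circle.

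\textbf{Step 2: separate variables and identify the exponents.} The key ansatz, motivated by scaling, is $\mathcal F(\theta,t)\asymp h^*(\theta,t)=\exp(-\nu t)\,(\sin(\theta/2))^{q}$. Plugging the separated form $e^{-\nu t}\phi(\theta)$ into the PDE forces $\phi(\theta)=(\sin(\theta/2))^{q}$ to be an eigenfunction, and matching the $\theta\to 0$ singular behaviour (the indicial equation at the boundary point where the arc degenerates) pins down $q=q(s,\kappa)$ as in \eqref{q}, while the global eigenvalue condition yields $\nu=\nu(s,\kappa)$ as in \eqref{nu}. Here is where the two exponents enter: $q$ is exactly $\mathcal U^{-1}_\kappa(s)$, the inverse-KPZ exponent, and $\nu$ is the corresponding ``derivative exponent''. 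One checks algebraically that these $(q,\nu)$ make both the potential term and the boundary indicial equation consistent; this is the same quadratic structure as \eqref{gammasig}--\eqref{betapgammasig} in the whole-plane analysis, so the computation is routine once the PDE is in hand.

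\textbf{Step 3: upgrade the eigenfunction to two-sided bounds.} The separated solution $h^*$ is only a candidate; to get the genuine two-sided estimate $h^*\le\mathcal F\le c\,h^*$ uniformly in $\theta\in(0,2\pi)$ and $t\ge 1$, I would use $h^*$ as a sub/supersolution and invoke the maximum principle for the parabolic operator (exactly the strategy used for Proposition~\ref{annineq} via \cite[Th.~7.1.9]{evans}). The lower bound $h^*\le\mathcal F$ typically comes from an optional-stopping/positivity argument on the associated supermartingale, and the upper bound with constant $c$ from comparison on a fixed time slice $t=1$ (where $\mathcal F$ and $h^*$ are comparable by continuity and positivity) propagated forward by the parabolic comparison principle. \textbf{The main obstacle} is precisely this last step: controlling the behaviour uniformly up to the degenerate boundary points $\theta\to 0^+$ and $\theta\to 2\pi^-$, where the arc $A_t$ is about to vanish and both $\mathcal F$ and $h^*$ tend to $0$ with a nontrivial power $q$ that can be negative for small $s$; one must verify that the ratio $\mathcal F/h^*$ stays bounded away from $0$ and $\infty$ there, which requires a careful boundary-layer analysis of the indicial roots rather than a soft compactness argument. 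Everything else --- the derivation of the PDE and the identification of $(q,\nu)$ --- is a direct transcription of the SLE calculus already used repeatedly in this paper.
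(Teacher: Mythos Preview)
Your identification is correct: the paper does not prove this lemma at all. It is introduced with the words ``Let us briefly recall Lemma {\bf 3.2} in Ref.~\cite{MR2002m:60159b}'', and is used purely as a black box in the heuristic discussion of Section~\ref{derivative}. So the ``paper's proof'' is simply a citation to Lawler--Schramm--Werner, and your opening sentence already matches that.

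Your sketch of the LSW argument is a reasonable outline of what actually happens there: one tracks the angular diffusion $\Theta_t$ of $g_t(e^{i\theta})$ relative to the driving point, notes that $|g_t'(e^{i\theta})|^s\,\phi(\Theta_t)\,e^{\nu t}$ is a local martingale precisely when $\phi(\theta)=(\sin(\theta/2))^q$ with $(q,\nu)$ as in \eqref{q}--\eqref{nu}, and then upgrades this to the two-sided bound \eqref{asymp} by optional stopping and comparison. One small inaccuracy in your Step~3: for $s>0$ the exponent $q(s,\kappa)$ is always strictly positive (since $\sqrt{(4-\kappa)^2+16\kappa s}>|4-\kappa|$), so the boundary behaviour at $\theta\to 0^+,2\pi^-$ is a vanishing, not a blow-up; the uniformity issue you flag is genuine, but it is about controlling a ratio of two positive vanishing quantities, not about a sign change or divergence.
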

By conformal invariance, the harmonic measure from $0$ of the boundary arc $A_t$ in the slit domain $D_t:=\mathbb D\setminus K_t$ is $L_t/2\pi$, where $L_t$ is the length of the arc $g_t(A_t)$.  Let us then also recall Theorem {\bf 3.3} in \cite{MR2002m:60159b}: 
 \begin{theo}\label{3.3}
 Suppose the $\kappa>0$ and $s\geq 1$. Then, when $t\to +\infty$,
 $$
 \mathbb E\left[(L_t)^s\right]\asymp\exp(-\nu\, t).
 $$
 \end{theo}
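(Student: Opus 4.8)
The plan is to derive the asymptotics of $\mathbb E[(L_t)^s]$ from the local derivative estimate in Lemma \ref{3.2} by integrating the event-conditioned moment $\mathcal F(\theta,t)$ over the boundary circle, after re-expressing $L_t$ as such an integral via conformal invariance. First I would recall that the harmonic measure of the arc $A_t=\partial\mathbb D\setminus\overline{K_t}$ from $0$ in the slit domain $D_t=\mathbb D\setminus K_t$ equals $L_t/2\pi$, where $L_t$ is the length of $g_t(A_t)$. Since $g_t$ maps $D_t$ conformally onto $\mathbb D$, this length can be written as the integral of the boundary derivative modulus: writing $w=e^{i\theta}$, one has
\begin{equation}
L_t=\int_{A_t}\bigl|g_t'(w)\bigr|\,|dw|=\int_0^{2\pi}\bigl|g_t'(e^{i\theta})\bigr|\,1_{\mathcal H(\theta,t)}\,d\theta,
\end{equation}
where $\mathcal H(\theta,t)$ is the event $\{e^{i\theta}\in A_t\}$ from Lemma \ref{3.2}. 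This identity is the structural bridge between the geometric quantity $L_t$ and the derivative functional $\mathcal F(\theta,t)$.

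The key step is then to control the moment $\mathbb E[(L_t)^s]$ for $s\ge 1$. The natural route is to raise the above integral representation to the power $s$ and take expectations. For the lower bound, Jensen's inequality (or the simple fact that for $s\ge1$ the $s$-th power of an integral is bounded below after an appropriate normalization) together with the lower bound $h^*(\theta,t)\le\mathcal F(\theta,t)$ of Lemma \ref{3.2} produces the factor $\exp(-\nu t)$. For the upper bound, I would apply the moment bound $\mathbb E[(L_t)^s]\le C\int_0^{2\pi}\mathbb E[|g_t'(e^{i\theta})|^s\,1_{\mathcal H(\theta,t)}]\,d\theta=C\int_0^{2\pi}\mathcal F(\theta,t)\,d\theta$, valid because on the event $\mathcal H$ one is essentially integrating a single-variable derivative modulus whose $s$-th moment dominates the $s$-th power of its integral (this uses that the arc has length at most $2\pi$ and a Hölder-type estimate). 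Then the upper bound $\mathcal F(\theta,t)\le c\,h^*(\theta,t)$ gives
\begin{equation}
\int_0^{2\pi}\mathcal F(\theta,t)\,d\theta\le c\,e^{-\nu t}\int_0^{2\pi}\bigl(\sin(\theta/2)\bigr)^{q}\,d\theta,
\end{equation}
and the remaining $\theta$-integral converges (and is a finite constant independent of $t$) precisely when $q=q(s,\kappa)>-1$, which holds for the relevant range $\kappa>0,\ s\ge1$. Both bounds thus yield $\mathbb E[(L_t)^s]\asymp e^{-\nu t}$ with $\nu=\nu(s,\kappa)$ as in \eqref{nu}.

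The main obstacle I anticipate is the passage between the $s$-th power of the boundary integral and the boundary integral of the $s$-th power, i.e.\ interchanging the roles of $(\int\cdot)^s$ and $\int(\cdot)^s$ inside the expectation. For $s\ge1$ the function $x\mapsto x^s$ is convex, so one must be careful: the naive inequality goes the wrong way for one of the two bounds, and one needs the correct concentration/Hölder argument to show that both the $s$-th moment of $L_t$ and the single-point quantity $\int\mathcal F\,d\theta$ decay with the \emph{same} exponent $\nu$. The reason this works is that, to leading exponential order, the arc $A_t$ does not shrink too fast and the derivative $|g_t'|$ is, on the relevant portion of $\partial\mathbb D$, comparable at nearby points, so that multiplying $s$ derivative factors produces the exponent $\nu(s,\kappa)$ rather than $s\,\nu(1,\kappa)$; the nonlinearity of $\nu$ in $s$ is exactly the multifractal signature captured by Lemma \ref{3.2}. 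Verifying that the distortion theorems for SLE allow replacing the genuine $s$-fold product by the conditioned single moment $\mathcal F$, uniformly in $t$, is the delicate technical point and is what ultimately forces the hypothesis $s\ge1$. The remaining verifications, namely that the $\theta$-integral of $(\sin(\theta/2))^{q}$ converges and that the constants are $t$-independent, are routine.
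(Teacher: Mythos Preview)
This theorem is not proved in the present paper at all: it is \emph{recalled} verbatim from Lawler--Schramm--Werner \cite{MR2002m:60159b} (``Let us then also recall Theorem {\bf 3.3} in \cite{MR2002m:60159b}''). So there is no proof here to compare your attempt against; the paper only uses the result in the heuristic discussion of Section~\ref{derivative}.

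That said, your sketch has one half that works and one half that does not. The upper bound is correct: H\"older on the representation $L_t=\int_0^{2\pi}|g_t'(e^{i\theta})|\,1_{\mathcal H(\theta,t)}\,d\theta$ gives $L_t^s\le(2\pi)^{s-1}\int|g_t'|^s 1_{\mathcal H}\,d\theta$, and taking expectations plus the upper bound of Lemma~\ref{3.2} yields $\mathbb E[L_t^s]\le C e^{-\nu t}\int_0^{2\pi}(\sin(\theta/2))^q\,d\theta$, with the $\theta$-integral finite since $q(s,\kappa)\ge 1>-1$ for $s\ge1$.

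The lower bound, however, fails as written. Jensen only gives $\mathbb E[L_t^s]\ge(\mathbb E[L_t])^s\asymp e^{-s\,\nu(1,\kappa)\,t}$, and since $\nu(\cdot,\kappa)$ is strictly concave one has $s\,\nu(1,\kappa)>\nu(s,\kappa)$ for $s>1$; the bound decays too fast and says nothing about $e^{-\nu(s,\kappa)t}$. Your parenthetical ``the $s$-th power of an integral is bounded below after an appropriate normalization'' would need a reverse-H\"older inequality that is not available here. You diagnose the obstruction correctly in your last paragraph, but you do not resolve it. In \cite{MR2002m:60159b} the lower bound is not obtained by passing from Lemma~\ref{3.2} through convexity; rather one uses a deterministic distortion estimate (Koebe after Schwarz reflection across $A_t$) to compare $L_t$ pointwise with $|g_t'(e^{i\theta})|$ \emph{before} taking expectations, together with the martingale/eigenfunction structure underlying $h^*$. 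That pointwise comparison is exactly the missing ingredient in your sketch.
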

 Notice that Lemma \ref{3.2} and Theorem \ref{3.3} taken together strongly suggest that $|g_t'(w)|\asymp L_t$ for $w\in A_t$ and $t\to +\infty$.

 Let us now use \eqref{f0g}, and replace the whole-plane map $f_0(z)$ by its large time equivalent in law, $\tilde f_t(z)=e^t g_t^{-1}(z)$ \eqref{ftilde}, taken for some large time $t$. The domain which is sent far away from the origin by this map $\tilde f_t$ is the subset $g_t(A_t)  \subset \partial \mathbb D$ of the unit circle, as well as its immediate vicinity in $\mathbb D$ (see Fig. \ref{whsle}). This corresponds in the image domain $ \mathbb D\setminus  K_t$ to $w\in A_t$.  Define then the restricted boundary integral:
 \begin{eqnarray}\label{intg}
 \mathcal I_p(t) &:=& \int_{A_t} e^{pt}\I (g'_t(w)\I^{s}\, \I dw\I\\ \nonumber &=& \int_{0}^{2\pi} e^{pt}\I (g'_t\big(e^{i\theta}\big)\I^{s}\,1_{\mathcal H(\theta,t)}\, d\theta \\ \label{spb}
  s&=&s(p,\kappa)=\beta(p,\kappa)+1-p,
 \end{eqnarray}  
 where  $\beta(p,\kappa)$ is given by \eqref{betaunb} and $s(p,\kappa)$ by \eqref{sp}. This choice for $s$ is precisely the one that insures that $\nu$ \eqref{nu} equals
 \begin{eqnarray}\label{nup}
 \nu(s(p,\kappa),\kappa)=p. \end{eqnarray} 
From \eqref{asymp} in Lemma \ref{3.2} we have for $w\in A_t$  the asymptotic behavior for large $t$
$$
\mathbb E\left[\big|g_t'(w)\big|^s\,1_{\mathcal H(\theta,t)} \right]\asymp \exp(-\nu\,t)\big(\sin(\theta/2)\big)^q,$$ so that 
 \begin{eqnarray}\label{Eintg}
  \mathbb E \big[\mathcal I_p(t)\big]\asymp\int_0^{2\pi} \sin^q(\theta/2) d\theta.  
  \end{eqnarray} 
 This integral converges as Eqs. \eqref{sp}, \eqref{nu}, and \eqref{nup} imply that 
 $$
 \frac{\kappa}{8}q(s,\kappa)=\nu(s,\kappa)-\frac{s}{2}=p-\frac{s}{2}=\frac{1}{4}\left(\sqrt{1+2\kappa p}-1\right) \geq 0.
 $$
 The integral \eqref{intg} over $A_t$ can be mapped back via the $g_t$ map to the subset $g_t(A_t)$ of the unit circle and equals 
  \begin{eqnarray}\label{intg-1}
 \mathcal I_p(t)&=& \int_{g_t(A_t)} e^{pt}\I (g_t^{-1}\big)'(z)\I^{p-\beta(p)}\, \I dz\I,\,\,\,  z=g_t(w),\\ \nonumber
 &=& \int_{g_t(A_t)}   \I\tilde f'_t(z) \I^p \,\I (g_t^{-1}\big)'(z)\I^{-\beta(p)}\, \I dz\I.
 \end{eqnarray} 
 Consider then the set of sub-arc integrals
   \begin{eqnarray}\label{intgC}
 \mathcal I_p(\mathcal C, t)&:=&\int_{\mathcal C}   \I\tilde f'_t(z) \I^p \,\I (g_t^{-1}\big)'(z)\I^{-\beta(p)}\, \I dz\I,\\ \label{ineqC} 
\mathcal I_p(\mathcal C, t)&\leq& \mathcal I_p(t),\,\,\,\, \forall \mathcal C \varsubsetneq g_t(A_t)  .
 \end{eqnarray}      
    Using Schwarz's reflection principle, the function  $g_t^{-1}(z)$ (therefore its blow-up $\tilde f_t(z)=e^t g_t^{-1}(z)$) can be analytically extended,  outside of the unit disk $\mathbb D$, by inversion with respect to the unit circle of any angular sector spanned by a strict sub-arc $ \mathcal C \varsubsetneq g_t(A_t)$.  Koebe's theorem then implies in all these angular sub-sectors the uniformly bounded behavior     \begin{eqnarray}\label{koebeineq}
    C^{-1}\leq \left |\frac{\tilde f'_t(rz)}{\tilde f'_t(z)}\right |=\left |\frac{g_t^{-1}(rz)}{g_t^{-1}(z)}\right |\leq C,\,\,\,  r\leq 1,\,\,\,\forall z\in \mathcal C\varsubsetneq g_t(A_t)\subset\partial\mathbb D,
    \end{eqnarray}
 where the constant $C$ depends on the sub-arc $\mathcal C$ of $g_t(A_t)$, and may go to infinity when $ \mathcal C\to g_t(A_t)$. 
 
 By Koebe's bounds \eqref{koebeineq}, one can extend the boundary integral \eqref{intgC} to the interior of $\mathbb D$:
    \begin{eqnarray}\label{intgCr}
 \mathcal I_p(\mathcal C, t)\asymp  \mathcal I_p(r\,\mathcal C, t):=\int_{\mathcal C}   \I\tilde f'_t(rz) \I^p \,\I (g_t^{-1}\big)'(rz)\I^{-\beta(p)}\, \I dz\I,\,\forall r \leq 1,\,\forall \mathcal C \varsubsetneq g_t(A_t)  .
 \end{eqnarray}  
         Introduce now the time-dependent (random) radius $r_t$ 
        \begin{eqnarray}\label{rt}
        r_t:=1-L_t ;\,\,\,\,r_t\to 1^{-}, \,\,\, L_t\to 0,\,\,\, t\to +\infty .      
         \end{eqnarray}     
    In the boundary arc $w\in g_t^{-1}(\mathcal C) \varsubsetneq A_t$,  and $z=g_t(w)\in \mathcal C\varsubsetneq g_t(A_t)$, we have seen that the derivative tends to zero uniformly as $|g'_t(w)|=\I (g_t^{-1}\big)'(z)\I^{-1}\asymp L_t=1-r_t$, for $t\to +\infty$. For the particular choice of radius $r=r_t$, the equivalence  \eqref{intgCr}  can then be rewritten as
      \begin{eqnarray}\label{intgCrt}
 \mathcal I_p(\mathcal C, t)\asymp  \mathcal I_p(r_t\,\mathcal C, t)\asymp (1-r_t)^{\beta(p)}\int_{\mathcal C}   \I\tilde f'_t(r_tz) \I^p  \I dz\I,\,\,\,\forall \mathcal C \varsubsetneq g_t(A_t)  .
 \end{eqnarray}    
 As suggested in Section \ref{motiv} above in the case of the whole-plane map $f_0$, and for $p\geq p^*(\kappa)$, one  assumes that the similar integral, extended to the whole circle of radius $r_t <1$,  is dominated by the localized integral \eqref {intgCrt} for $t\to +\infty$. This condensation of the integral's support is precisely the signal of the onset of the transition from the standard SLE bulk spectrum $\beta_0(p,\kappa)$ \eqref{beta000} to the unbounded whole-plane spectrum $\beta(p,\kappa)$ \eqref{betaunb}. We therefore expect for $p\geq p^*(\kappa)$
       \begin{eqnarray}\label{intgDrt}
 \mathcal I_p(r_t\,\mathcal C, t)\asymp (1-r_t)^{\beta(p)} \int_{\partial \mathbb D}   \I\tilde f'_t(r_tz) \I^p \I dz\I.
 \end{eqnarray}   
From Eqs.  \eqref{ineqC}, \eqref{intgCrt} and \eqref{intgDrt}, and from the finite expectation result \eqref{Eintg},  one therefore concludes that, in expectation,   \begin{eqnarray}\label{EintgDrt}
 \mathbb E\left[(1-r_t)^{\beta(p)} \int_{\partial \mathbb D}   \I\tilde f'_t(r_tz) \I^p  \I dz\I\right]\asymp 1,\,\,\,t\to +\infty.
 \end{eqnarray}    
Since the random radius $r_t\to 1$, this equivalence  is (formally) very similar to the equivalence \eqref{betapow} above, which can serve as a heuristic definition of the average integral means spectrum. This strongly suggests why the average integral means spectrum $\beta(p,\kappa)$ \eqref{betaunb}, which is specific to the \emph{unbounded whole-plane} SLE map, is intimately related, via the packing spectrum $s(p,\kappa)=\beta(p,\kappa)-p+1$ \eqref{packing}, to the  derivative exponents  \eqref{nu}, as derived by Lawler, Schramm and Werner in Ref. \cite{MR2002m:60159b}:   \emph{the derivative exponent $p=\nu(s,\kappa)$ is the inverse  function of the unbounded whole-plane packing spectrum $s(p,\kappa)$.} \qed
 
  
 \section{Appendices}\label{Appendices} 
\subsection{Appendix A:  A brief history of Bieberbach's conjecture}\label{appendix}
\subsubsection{Proof for $n=2$ (Bieberbach \cite{Bi}, 1916)}
First, let us introduce the normalized, so-called {\it schlicht} class of univalent functions $$ \mathcal S=\{f:\,\bb D\to\bb C\, \mathrm{holomorphic \,and \,injective}\,;\,f(0)=0, \,f'(0)=1\}.$$
The Bieberbach conjecture is clearly equivalent to $\I a_n\I\leq n,\,n\geq 2$ for $f\in \mathcal S$. A related class of normalized functions is
$$\Sigma=\left\{f:\,\Delta=\overline{\bb C}\backslash\overline{\bb D}\to\overline{\bb C}\,\,\mathrm{holomorphic\,and\,injective}\,;\,f(z)=z+\sum_{n=0}^{\infty}b_n z^{-n}\;\mathrm{at}\;\infty\right\}.$$
The mapping $f\mapsto F$, where $F(z)=1/f(1/z)$, is clearly a bijection from $\mathcal S$ onto $\Sigma'$, the subclass of $\Sigma$ consisting of functions that do not vanish in $\Delta$. A simple application of the Stokes formula shows that if $f\in\Sigma$ then, denoting by $\I B\I$ the Lebesgue measure (area) of the Borelian subset $B$ of the plane, $$\I\bb C\backslash f(\Delta)\I=\pi\big(1-\sum_{n\geq 1}n\I b_n\I^2\big).$$
Since the area is a positive quantity, a consequence of this equality is that $\I b_1\I\leq 1$. But applying this inequality directly to the function $F$, image  in $\Sigma'$ of $f\in \mathcal S$,  does bring anything conclusive.  Bieberbach's idea was then to apply this inequality to an odd function in $\mathcal S$.

Let $f\in \mathcal S$ then $z\mapsto f(z)/z$ does not vanish in the disc and thus it has a unique holomorphic square root $g$ which is equal to $1$ at $0$. Then,  $h(z)=zg(z^2)$, such that $f(z^2)=h(z)^2$, is still in class $\mathcal S$ but is moreover odd. This establishes a bijection ($f\mapsto h$) between $\mathcal S$ and the set of odd functions in $\mathcal S$. Now, if $f(z)=z+a_2z^2+a_3z^3+\ldots$ belongs to class $\mathcal S$, then  $h(z)=z+\frac{1}{2}{a_2}{z^3}+O(z^5)$ and  the associated $H\in\Sigma$ satisfies $H(z)=1/h(1/z)=z-\frac{a_2}{2z}+\cdots$ By the area proposition, $\I a_2\I\leq 2.$ 
{\rkk The fact that $\vert a_2\vert $ is bounded above for functions in class $\mathcal S$ 
implies (see \cite{Pommerenkeuniv}) that the class $\mathcal S$ 
 is compact.}
\begin{rkk}\label{weakbiber}
 As a corollary, one can state a 
weak form of the Bieberbach conjecture, namely that for each $n\geq 2$ there
exists a positive constant $C_n < +\infty$ such that for any $f\in
 \mathcal S,\,f(z)=z+\sum_{n\geq 2}a_n z^n$,  then  $\vert a_n\vert\leq C_n$.
\end{rkk}

\subsubsection{Proof for $n=3$ (Loewner \cite{Lo}, 1923)}
Replacing $f(z)$ by $f(rz)$ with $r<1$ but close to $1$, one sees that it suffices to prove the estimate for conformal mappings onto smooth Jordan domains containing $0$. Consider such a domain $\Omega$ and let $\gamma:\, [0,t_0]\to \bb C$ be a parametrization of its boundary. Introduce then $\Gamma:\,[0,\infty)\to\bb C$, a Jordan arc joining $\gamma(0)=\gamma(t_0)$ to $\infty$ inside the outer Jordan component. We then define $$\Lambda(t):=\gamma(t), 0\leq t\leq t_0;\,\Lambda(t):=\Gamma(t-t_0),\,t\geq t_0,$$ and define for $t>0$, $$\Omega_t=\bb C\backslash \Lambda([t,\infty)).$$ The domain $\Omega_t$ is a simply connected domain containing $0$ and we can thus consider its Riemann mapping $f_t: \bb D\to \Omega_t,\,f_t(0)=0, f_t'(0)>0.$ By the Caratheodory convergence theorem, $f_t$ converges as $t\to 0$ to $f$,  the Riemann mapping of $\Omega$. We may assume without loss of generality that $f'(0)=1$ and, by a change  of time $t$ if necessary, that $f_t'(0)=e^t$.

The key idea of Loewner is to observe that the sequence of domains $\Omega_t$ is increasing, which translates into $ \Re\left({\frac{\partial f_t}{\partial t}}/{z\frac{\partial f_t}{\partial z}}\right)>0$ or, equivalently, that the same quantity is the Poisson integral of a positive measure, actually a probability measure because of the choice of parametrization $f_t'(0)=e^t$. Now, the fact that the domains $\Omega_t$ are slit domains implies that for every $t$ this probability measure must be, on the unit circle, the Dirac mass at $\lambda(t)=f_t^{-1}(\Lambda(t))$. Even if this is not needed in Loewner's proof, it is worthwhile to notice that $\lambda$ is a continuous function. The process $\Omega_t$ is then driven by the function $\lambda$, in the sense that $(f_t)$ satisfies the Loewner differential equation
\begin{equation}\label{loewnerbis}
\frac{\partial f_t}{\partial t}=z\frac{\partial f_t}{\partial z}\frac{\lambda(t)+z}{\lambda(t)-z}.
\end{equation}
To finish Loewner's proof, one extends both sides of the last equation as power series, with $f_t(z)=e^t(z+a_2z^2+a_3z^3+\cdots)$, and simply identifies the coefficients, as was done in Section \ref{analytic}. This leads to:
\begin{eqnarray*} \dot{a}_2-a_2&=&2\overline{\lambda},\\
\dot{a}_3-2a_3&=&4a_2\overline{\lambda}+2\overline{\lambda}^2.
\end{eqnarray*}
 As seen above (Eqs. \eqref{a2bis} \& \eqref{a3ter}), this is solved by  
\begin{eqnarray*}a_2(t)&=&-2e^t\int_t^\infty\overline{\lambda}(s)e^{-s}ds,\\
a_3(t)&=&4e^{2t}\left(\int_t^\infty \overline{\lambda}(s)e^{-s}ds\right)^2-2e^{2t}\int_t^\infty e^{-2s}\overline{\lambda}^2(s)ds.
\end{eqnarray*}
The first equation gives a new proof that $|a_2|\leq 2|a_1|=2$. 
For $a_3$, by considering $e^{-i\alpha}f(e^{i\alpha} z)$, one remarks that it suffices to prove that $\Re(a_3)\leq 3$. To this aim, write $\lambda(s)=e^{i\theta(s)}$. The Cauchy-Schwarz inequality,
$$\left(e^t\int_t^\infty e^{-s}\cos\theta(s)ds\right)^2\leq e^t\int_t^\infty e^{-s}\cos^2\theta(s)ds\,,$$ gives 
\begin{align*}
\Re(a_3)=\;&4e^{2t}\left(\int_t^\infty e^{-s}\cos{\theta(s)}ds\right)^2\\
&-4e^{2t}\left(\int_t^\infty e^{-s}\sin{\theta(s)}ds\right)^2-2e^{2t}\int_t^{\infty}e^{-2s}\cos{2\theta(s)}ds\\
\leq\;&4\int_t^{\infty}\left(e^{t-s}-e^{2(t-s)}\right)\cos^2\theta(s)ds +1\\
\leq \;&4\int_t^{\infty}\left(e^{t-s}-e^{2(t-s)}\right)ds +1 =3.
\end{align*}

\subsubsection{ The Bieberbach conjecture after Loewner}
 The next milestone after the 1923 theorem by Loewner is the proof in 1925 by Littlewood \cite{Li} that in class $\mathcal S,\,\I a_n\I\leq en$. In 1931, Dieudonn\'e \cite{Di}  proved the conjecture for functions with real coefficients. In 1932, Littlewood and Paley \cite{LP}  proved that the coefficients of an odd function in $\mathcal S$ are bounded by $14$, and they conjectured that the best bound is $1$, a conjecture that implies Bieberbach's. This conjecture was disproved in 1933 by Fekete and Szeg\H o \cite{FS}  for $n=5$. In 1935, Robertson \cite{R} stated the weaker conjecture $$\sum_{k=1}^{n}\I a_{2k+1}\I^2\leq n,$$ which also implies the Bieberbach conjecture.  
The next milestone was due in the sixties to Lebedev and Milin \cite{LM}. It had already been observed by Grunsky \cite{Gr} in 1939 that the logarithmic coefficients $\gamma_n$ defined by
$$\log [{f(z)}/{z}]=2\sum_{n=1}^{\infty}\gamma_nz^n$$ can easily be estimated.  Lebedev and Milin \cite{LM}  showed, through three inequalities, how to pass from those estimates to estimates for $f$. This allowed Milin \cite{M} to prove that $\I a_n\I\leq 1.243\, n $. He  then stated what has become known as  Milin conjecture: $$ \sum_{m=1}^{n}\sum_{k=1}^{m}\big(k\I\gamma_k\I^2-{1}/{k}\big)\leq 0.$$
It should be noticed that $\gamma_n=1/n$ for the Koebe function but the stronger conjecture $\I\gamma_n\I\leq 1/n$ is false, even as an order of magnitude. It happens that Milin $\Rightarrow $ Robertson $\Rightarrow $ Bieberbach, and de Branges actually proved the Milin conjecture.
\subsection{Appendix B: Coefficient quadratic expectations}\label{appendcoeff}
\subsubsection{Quadratic third order coefficient}\label{Appthird}
\medskip
\textcolor{black}{For calculations involving $a_3$ as given by \eqref{a3ter},  we compute  $\mathbb E (\I a_3-\mu a_2^2\I^2)$ for all $\mu$  real constant, and prove Proposition \eqref{theo-a3mu}.} 
\begin{proof} 
We write $$e^{-4t}\I a_3-\mu a_2^2\I^2=16(1-\mu)^2I_1-16(1-\mu)\Re{I_2}+4I_3,$$ where 
\begin{align*}
&I_1=\int_t^\infty \int_t^\infty \int_t^\infty \int_t^\infty e^{-(s_1+s_2+s_3+s_4)}\overline{\lambda}(s_1)\lambda(s_2)\overline{\lambda}(s_3)\lambda(s_4)ds_1ds_2ds_3ds_4,\\
&I_2=\int_t^\infty \int_t^\infty \int_t^\infty e^{-(s_1+s_2+2s_3)}\overline{\lambda}(s_1)\overline{\lambda}(s_2)\lambda(s_3)^2ds_1ds_2ds_3,\\
&I_3=\int_t^\infty \int_t^\infty e^{-2(s_1+s_2)}\overline{\lambda}(s_1)^2\lambda(s_2)^2 ds_1ds_2.
\end{align*}
From now on, we set  the parameter $t=0$ in the above formulae. The computation of $I_3$ follows the same lines as that in Proposition \ref{theo-a2} and we find $$\mathbb E(I_3)=\Re\left(\frac{1}{2(2+\eta_2)}\right).$$
To compute $\mathbb E(I_2)$ we use the strong Markov property. First, we may write by symmetry $$I_2=2\int_{s_1=0}^{\infty} \int_{s_2=s_1}^{\infty} \int_{s_3=0}^{\infty} e^{-(s_1+s_2+2s_3)}e^{i(L_{s_3}-L_{s_1})}e^{i(L_{s_3}-L_{s_2})}ds_1ds_2ds_3;$$  we cut this integral into  $I_2=2(I_{2,1}+I_{2,2}+I_{2,3})$, where in $ I_{2,1}$ (resp. in $I_{2,2},I_{2,3}$), $s_3$ lies in $[0,s_1]$ (resp. in $[s_1,s_2], [s_2,\infty)$).
For $I_{2,1}$,  write $$e^{i(L_{s_3}-L_{s_1})}e^{i(L_{s_3}-L_{s_2})}=e^{-2i(L_{s_1}-L_{s_3})}e^{-i(L_{s_2}-L_{s_1})},$$ so that the Markov property can be used to get its expectation as $e^{-\overline{\eta_2}(s_1-s_3)}e^{-\overline{\eta_1}(s_2-s_1)}.$ From this, the value of $\mathbb E(I_{2,1})$ easily follows as $$\mathbb E(I_{2,1})=\frac{1}{4(1+\overline{\eta_1})(2+\overline{\eta_2})}.$$ Similar considerations lead to
$$\mathbb E(I_{2,2})=\frac{1}{4(1+\overline{\eta_1})(3+\eta_1)},\,
\mathbb E(I_{2,3})=\frac{1}{4(2+\eta_2)(3+\eta_1)}.
$$
By combining these computations we get $$\Re\mathbb E (I_2)=\Re\left(\frac{1}{2(1+{\eta_1})(2+{\eta_2})}+\frac{1}{2(1+\overline{\eta_1})(3+\eta_1)}+\frac{1}{2(2+\eta_2)(3+\eta_1)}\right).$$
The computation of $I_1$ follows the same lines. First, by symmetry, $$I_1=4\int_0^\infty \int_{s_1}^\infty \int_0^\infty \int_{s_3}^\infty e^{-(s_1+s_2+s_3+s_4)}e^{i(L_{s_3}-L_{s_1})}e^{i(L_{s_4}-L_{s_2})}ds_1ds_2ds_3ds_4.$$
We then split this integral into the sum of six pieces, respectively associated with the domains 
(I) $s_3<s_4<s_1<s_2$; 
(II) $s_3<s_1<s_4<s_2$; 
(III) $s_3<s_1<s_2<s_4$; 
(IV) $s_1<s_3<s_4<s_2$; 
(V) $s_1<s_3<s_2<s_4$; 
(VI) $s_1<s_2<s_3<s_4$.

Clearly, the respective contributions of (I) and (VI), (II) and (V), (III) and (IV), are complex conjugate of each other. 
The same arguments as above give,  in a short-hand notation,
\begin{align*}
&\mathbb E(\text{I})=\frac{1}{4(1+\overline{\eta_1})(2+\overline{\eta_2})(3+\overline{\eta_1})},\\
&\mathbb E(\text{II})=\frac{1}{8(1+\overline{\eta_1})(3+\overline{\eta_1})},\,\,\,\,\mathbb E(\text{III})=\frac{1}{8(1+{\eta_1})(3+\overline{\eta_1})}.
\end{align*}
Altogether, we get $$\mathbb E(I_1)=\Re\left( \frac{2}{(1+{\eta_1})(2+{\eta_2})(3+{\eta_1})}+\frac{1}{(1+{\eta_1})(3+{\eta_1})}+\frac{1}{(1+\overline{\eta_1})(3+{\eta_1})}    \right).$$
\end{proof}
\subsubsection{Higher orders}\label{Apphigh}
Using dynamic programing, we performed computations of $\mathbb E(|a_n^2|)$ (formal up to $n=8$ and numerical up to $n=19$) on a usual computer. \textcolor{black}{The results for $a_3$ and  $a_4$ in the LLE case are given in Eqs. \eqref{a3eta} and  \eqref{a4eta}, respectively, whereas  for $a_5$:}
\begin{eqnarray}\nonumber
\mathbb E\left(\left|a_5\right|^2\right)&=&\,\frac{5!2^4}{(\eta_1+1)(\eta_1+3)(\eta_1+5)(\eta_1+7)}\\ \nonumber
&+&\frac{(\eta_1-1)(\eta_1-3)}{(\eta_1+1)(\eta_1+3)(\eta_1+5)(\eta_1+7)(\eta_2+2)(\eta_2+4)(\eta_2+6)(\eta_3+3)(\eta_3+5)}\\ &\times&\left[\frac{4\eta_2(\eta_2-4)(\eta_1+3)(\eta_3+1)(\eta_3-5)(\eta_1+3)(\eta_1+5)(\eta_2+4)}{3(\eta_4+4)}+Q\right] \label{a5square}
\end{eqnarray}
\ \begin{eqnarray}\nonumber
Q&=&\frac{4}{3}(24\eta_1^2\eta_2^2+9\eta_1^2\eta_2\eta_3^2+72\eta_1^2\eta_2\eta_3+39\eta_1^2\eta_2+36\eta_1^2\eta_3^2+288\eta_1^2\eta_3+520\eta_1^2\\ \nonumber
&+&19\eta_1\eta_2^3\eta_3+77\eta_1\eta_2^3+56\eta_1\eta_2^2\eta_3+472\eta_1\eta_2^2-36\eta_1\eta_2\eta_3^2-816\eta_1\eta_2\eta_3-3660\eta_1\eta_2\\ \nonumber
&-&144\eta_1\eta_3^2 -1152\eta_1\eta_3-2160\eta_1+75\eta_2^3\eta_3+285\eta_2^3+348\eta_2^2\eta_3^2+2952\eta_2^2\eta_3\\ \nonumber
&+&6420\eta_2^2+3507\eta_2\eta_3^2+26184\eta_2\eta_3+43245\eta_2+8460\eta_3^2+67680\eta_3+126900).
\end{eqnarray}
\noi In each  expression for $\mathbb E (|a_n|^2)$, and after the first term there, notice the presence   of the common factors $(\eta_1-1)(\eta_1-3)$ in the numerators. The first term, hence $\mathbb E (|a_n|^2)$ itself, equals  $1$ for $\eta_1=3$ (or $\kappa=6$), or equals $n$ for $\eta_1=1$ (or $\kappa=2$). \textcolor{black}{We  checked these results explicitly in symbolic computations up to $n=8$, and in numerical ones up to $n=19$.}\\
Let us end this Appendix with the results for $a_5$ to $a_8$ in the SLE case:
\begin{align*}
\mathbb E\left(\left|a_5\right|^2\right)=&\,(27 \kappa ^8 + 3242 \kappa ^7 + 194336 \kappa ^6 + 6142312 \kappa ^5 + 42644896 \kappa ^4 \\
&\,+ 119492832 \kappa ^3 + 153156096 \kappa ^2 + 87882624 \kappa + 18144000)\\
&\,\;/[36(\kappa + 14)(3 \kappa + 2)(\kappa + 10)(2 \kappa + 1)(\kappa + 6)(\kappa + 3)(\kappa + 1)(\kappa + 2)^2]\,;\\
\\
\mathbb E\left(\left|a_6\right|^2\right)=&\,\frac{2}{225}(216 \kappa ^{10}+29563 \kappa ^9+2062556 \kappa ^8+90749820 \kappa ^7+2277912280 \kappa ^6 \\
&\,+16419864848 \kappa ^5+50825787744 \kappa ^4+76716664128 \kappa ^3 \\
&\,+58263304320 \kappa ^2+21233664000 \kappa +2939328000) \\
&\;\,/[(\kappa +18)(3 \kappa +2)(\kappa +14)(2 \kappa +1)(\kappa +10)(\kappa +6)(5 \kappa +2)\\
&\,\;(\kappa +3)(\kappa +1)(\kappa +2)^2]\,;\\
\\
\mathbb E\left(\left|a_7\right|^2\right)=&\,\frac{1}{8100}(27000 \kappa ^{15}+4479353 \kappa ^{14}+373838334 \kappa ^{13}+20594712527 \kappa ^{12}\\
&\,+787796136854 \kappa ^{11}+19121503739240 \kappa ^{10}+221861771218136 \kappa ^{9}\\
&\,+1386550697705712 \kappa ^{8}+5130607642056896 \kappa ^7+11854768997862912 \kappa ^6\\
&\,+17547915006086400 \kappa ^5+16725481436226816 \kappa ^4+10110569026936320 \kappa ^3\\
&\,+3711483045734400 \kappa ^2+749049576192000 \kappa +63371911680000) \\
&\;\,/[(\kappa +22)(3 \kappa +1)(5 \kappa +2)(\kappa +18)(2 \kappa +1)(\kappa +14)(3 \kappa +2)\\
&\;\,(\kappa +10)(\kappa +6)(\kappa +5)(\kappa +3)(\kappa +1)^2(\kappa +2)^3]\,;
\end{align*}
\begin{align*}
\mathbb E\left(\left|a_8\right|^2\right)=&\,\frac{2}{99225}(729000 \kappa ^{18} + 143757261 \kappa ^{17} + 14031668642 \kappa ^{16} + 906444920407 \kappa ^{15} \\
&\,+ 42715714646750 \kappa ^{14} + 1476227672190480 \kappa ^{13}+ 34674813906653712 \kappa ^{12} \\
&\, + 471116720002819536 \kappa ^{11} + 3802657434377773600 \kappa ^{10} \\
&\,+ 19218418658636100992 \kappa ^9 + 63191729416067875840 \kappa ^8 \\
&\,+ 138392538501661946112 \kappa ^7 + 204258207932541043200 \kappa ^6 \\
&\,+ 203508494170475323392 \kappa ^5 + 135640094878259859456 \kappa ^4 \\
&\,+ 59063686024095313920 \kappa ^3 + 16005106174366310400 \kappa ^2 \\
&\,+ 2435069931098112000 \kappa + 158176291553280000) \\
&\,\;/[(7 \kappa + 2)(5 \kappa + 2)(\kappa + 26)(3 \kappa + 1)(\kappa + 22)(2 \kappa + 1)(\kappa + 18)(\kappa + 14)\\
&\,\;(3 \kappa + 2)(\kappa + 10)(\kappa + 5)(\kappa + 3)(\kappa + 6)^2(\kappa + 1)^2(\kappa + 2)^3]\,.
\end{align*}
These results call for two observations:\\
--All the coefficients of the polynomial expansions in $\kappa$ are positive.\\
--For $\kappa \to \infty$ (or $\eta\to\infty$), \textcolor{black}{the coefficients' quadratic moments vanish as $\kappa^{-1}$.}

\subsection{Appendix C:  A proof of Theorem \ref{FMcGo}}\label{McGo}
\begin{proof}
The proof closely follows the steps of that of Feng-McGregor's theorem.  
 We start with the computation of $h'$:
$$h'(z)=zf'(z^2)(f(z^2))^{-1/2}.$$
We may then write, putting $\rho=r^2,$
$$\int_0^{2\pi}\I h'(re^{i\theta})\I^p d\theta\leq\int_0^{2\pi}\frac{\I f'(\rho e^{i\theta})\I^p}{\I f(\rho e^{i\theta})\I^{p/2}} d\theta.$$
Consider now two positive reals $a,\,b$ such that $a-b=1$ and fix $0<p<2$. By H\"older inequality,
$$\int_0^{2\pi}\frac{\I f'(\rho e^{i\theta}\I^p}{\I f(\rho e^{i\theta})\I^{p/2}} d\theta\leq \left(\int_0^{2\pi}\frac{\I f'(\rho e^{i\theta})\I^2}{\I f(\rho e^{i\theta})\I^a} d\theta\right)^{p/2}\left(\int_0^{2\pi}\I f(\rho e^{i\theta})\I^{\frac{b p}{2-p}}d\theta\right)^{(2-p)/2}.$$
But
\begin{equation}\left(\int_0^{2\pi}\frac{\I f'(\rho e^{i\theta})\I^2}{\I f(\rho e^{i\theta})\I^a }d\theta\right)^{p/2}=\left(\int_0^{2\pi}\I f'(\rho e^{i\theta})\I^2\I f(\rho e^{i\theta})\I^{(2-a)-2}d\theta\right)^{p/2},\label{pom}\end{equation}
and we invoke the following Lemma, which is a consequence of Hardy's identity and Koebe's theorem (see \cite{Pommerenke}):
\begin{lemma}\label{tpom}
There exists a universal constant $C>0$ such that, for $f$  holomorphic and injective in the unit disk, with $f(0)=0,\,f'(0)=1$,
\begin{enumerate}[(i)]
\item if $p>0$, 
$$ \int_0^{2\pi}\I f'(\rho e^{i\theta})\I^2\I f(\rho e^{i\theta})\I^{p-2} d\theta\leq \frac{C}{(1-\rho)^{2p+1}};$$ 
\item if $p>1/2$, 
$$\int_0^{2\pi}\I f(\rho e^{i\theta})\I^p d\theta\leq\frac{C}{(1-\rho)^{2p-1}}.$$
\end{enumerate}\end{lemma}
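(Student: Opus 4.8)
The plan is to prove Lemma \ref{tpom}, the two Hardy--Koebe type estimates that were invoked in the proof of Theorem \ref{FMcGo}. Both inequalities follow the same route, which is entirely classical and relies on Hardy's identity together with the distortion bounds coming from Koebe's theorem. I would set $M(\rho):=\max_{|z|=\rho}|f(z)|$ and use the Koebe estimate, valid for $f\in\mathcal S$, that
\begin{equation*}
\frac{\rho}{(1+\rho)^2}\leq M(\rho)\leq \frac{\rho}{(1-\rho)^2},
\end{equation*}
so that $M(\rho)\asymp (1-\rho)^{-2}$ as $\rho\to1^-$, up to a universal multiplicative constant. The idea is to reduce each circle integral to a radial one controlled by $M(\rho)$.

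For part \emph{(ii)}, the integral means of $|f|^p$, I would introduce the function $I_p(\rho):=\frac{1}{2\pi}\int_0^{2\pi}|f(\rho e^{i\theta})|^p\,d\theta$, and recall Hardy's convexity theorem, which guarantees that $\log I_p(\rho)$ is a nondecreasing convex function of $\log\rho$. The growth rate of $I_p(\rho)$ is then governed by the maximal growth of $|f|$, i.e.\ by $M(\rho)\asymp(1-\rho)^{-2}$; a standard argument (see Pommerenke \cite{Pommerenke}) yields $I_p(\rho)\leq C (1-\rho)^{-(2p-1)}$ precisely when $p>1/2$, the threshold $2p-1>0$ being exactly where the circle integral starts to diverge. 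Here I would simply cite the relevant estimate in \cite{Pommerenke} rather than reprove convexity.

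For part \emph{(i)}, the mixed integral $\int_0^{2\pi}|f'(\rho e^{i\theta})|^2|f(\rho e^{i\theta})|^{p-2}\,d\theta$, the key tool is Hardy's identity: for $g$ holomorphic, $\frac{1}{2\pi}\frac{d}{d\rho}\big(\rho\frac{d}{d\rho}\int_0^{2\pi}|g(\rho e^{i\theta})|^2\,d\theta\big)=\frac{2}{\pi}\int\int_{|z|<\rho}|g'|^2$, applied to a suitable power $g=f^{p/2}$, for which $g'=\tfrac{p}{2}f^{p/2-1}f'$, so that $|g'|^2=\tfrac{p^2}{4}|f'|^2|f|^{p-2}$. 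This converts the area integral of $|g'|^2$ into the radial derivative of the integral means $I_p(\rho)$ of $|f|^p$, and the bound on $|f'|^2|f|^{p-2}$ over the circle then follows by differentiating the estimate of part \emph{(ii)} and controlling the boundary terms with Koebe's distortion bound for $|f'|$. The exponent $2p+1$ is the natural one: it is one unit larger than $2p-1$, reflecting the extra factor of $(1-\rho)^{-1}$ lost upon differentiation.

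The main obstacle is not any single estimate but rather assembling the constants and the monotonicity/convexity inputs cleanly so that the final constant $C$ is genuinely \emph{universal} (independent of $f\in\mathcal S$ and of $\rho$). The cleanest route is to quote the precise forms of these inequalities from Pommerenke \cite{Pommerenke}, since both are stated there as consequences of Hardy's identity and the Koebe distortion theorem; I would therefore present the proof of Lemma \ref{tpom} as essentially a citation plus the Hardy-identity reduction above, and devote care only to checking that the hypotheses $p>0$ (resp.\ $p>1/2$) are exactly the ranges in which the displayed power of $(1-\rho)^{-1}$ is correct. This then completes the chain in the proof of Theorem \ref{FMcGo}, where \emph{(i)} was used on the right-hand side of \eqref{pom} and \emph{(ii)} was used on the second H\"older factor.
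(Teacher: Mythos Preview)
Your proposal is correct and matches the paper's treatment: the paper does not actually prove this lemma but simply invokes it as ``a consequence of Hardy's identity and Koebe's theorem (see \cite{Pommerenke})''. Your sketch fills in exactly that classical route (Hardy's identity applied to $g=f^{p/2}$ for part~(i), Hardy convexity plus the Koebe growth bound $M(\rho)\le\rho(1-\rho)^{-2}$ for part~(ii)), so you are giving more detail than the paper itself, but along the same lines.
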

\noindent Consider the last Lemma and \eqref{pom}; we seek for $a$ such that $2-a>0\Leftrightarrow  b<1$, together with
$$\frac{bp}{2-p}>1/2\Leftrightarrow b>\frac{1}{p}-\frac{1}{2},$$
and we may find such a pair $(a, \,b)$ iff $\frac{1}{p}-\frac{1}{2}<1\Leftrightarrow p>2/3.$\\
With this condition on $p$ satisfied, and for that choice of $(a,\,b)$, Theorem $\ref{tpom}$ implies that [note: $1-r\leq 1-\rho\leq 2(1-r)$],
$$\int_{0}^{2\pi}\I h'(re^{i\theta})\I^pd\theta\leq C(1-r)^{-[2(2-a)+1)p/2]}(1-r)^{-[bp-(2-p)/2]}\leq C(1-r)^{-(2p-1)}.$$
The last statement shows that $\beta_h(p)\leq 2p-1$ for $2/3<p<2$. In order to prove it for all $p>0$, we first need the
\begin{lemma}
There exists a universal constant $C>0$ such that
$$\I h'(z)\I\leq C(1-r)^{-2}.$$
\end{lemma}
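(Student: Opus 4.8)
The plan is to establish the bound pointwise, directly from the classical growth and distortion estimates for the class $\mathcal S$, rather than through any integral mean. Writing $w := z^2$, so that $|w| = \rho = r^2$, recall from the opening of the present proof that
$$h'(z) = z\, f'(z^2)\, \big(f(z^2)\big)^{-1/2}, \qquad\text{hence}\qquad |h'(z)| = r\,\frac{|f'(w)|}{|f(w)|^{1/2}}.$$
First I would reduce the lemma to controlling the single quantity $|f'(w)|\,|f(w)|^{-1/2}$ on the circle $|w| = \rho$, where $f\in\mathcal S$.

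The naive route --- bounding $|f'(w)|$ from above by the distortion theorem and $|f(w)|$ from below by the growth theorem --- is too lossy: it produces an exponent $(1-\rho)^{-3}$, which after $1-\rho\asymp 1-r$ would yield $(1-r)^{-3}$, one power worse than required. The key step is therefore to invoke the sharp bound on the logarithmic derivative of a schlicht function,
$$\left| \frac{w f'(w)}{f(w)} \right| \le \frac{1+|w|}{1-|w|}, \qquad f \in \mathcal S,$$
a standard consequence of the Koebe distortion theorem (see \cite{Pommerenke}), together with the growth theorem upper bound $|f(w)| \le |w|/(1-|w|)^2$. Using the first to replace $|f'(w)|$ by $|f(w)|/|w|$ up to the factor $(1+|w|)/(1-|w|)$, and the second to estimate the remaining $|f(w)|^{1/2}$, I obtain
$$\frac{|f'(w)|}{|f(w)|^{1/2}} \le \frac{1+|w|}{1-|w|}\,\frac{|f(w)|^{1/2}}{|w|} \le \frac{1+\rho}{(1-\rho)^2\,\rho^{1/2}}.$$

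Finally I would substitute $\rho = r^2$ and exploit the exact cancellation $r/\rho^{1/2} = r/r = 1$, which is precisely what balances the two powers:
$$|h'(z)| = r\,\frac{|f'(w)|}{|f(w)|^{1/2}} \le \frac{1+\rho}{(1-\rho)^2}.$$
Since $1-\rho = (1-r)(1+r) \ge 1-r$ and $1+\rho \le 2$, this gives $|h'(z)| \le 2(1-r)^{-2}$, proving the lemma with $C = 2$. The only genuinely delicate point is recognizing that the logarithmic-derivative estimate, not the brute distortion bound, is what makes the powers of $(1-\rho)$ match; once that substitution is made the computation is immediate, the spurious factor $\rho^{1/2}$ in the denominator being exactly absorbed by the prefactor $r = \rho^{1/2}$ arising from the chain rule for $h$.
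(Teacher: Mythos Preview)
Your argument is correct and is essentially the same as the paper's: both rely on the Koebe estimate for the logarithmic derivative, $\bigl|wf'(w)/f(w)\bigr|\le(1+|w|)/(1-|w|)$, as the device that avoids losing a power of $(1-\rho)$. The only cosmetic difference is the factorization of $|f'|/|f|^{1/2}$: the paper writes it as $|f'/f|^{1/2}\,|f'|^{1/2}$ and bounds the second factor by the distortion theorem $|f'|\le(1+\rho)/(1-\rho)^3$, whereas you write it as $|f'/f|\cdot|f|^{1/2}$ and bound the second factor by the growth theorem $|f|\le\rho/(1-\rho)^2$; both routes land at $(1-\rho)^{-2}$ after the cancellation $r/\rho^{1/2}=1$.
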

Proof:  We write
$$\I h'(z)\I=\left\I z\sqrt{\frac{f'(z^2)}{f(z^2)}}\sqrt{f'(z^2)}\right\I \leq C(1-\rho)^{-1/2}(1-\rho)^{-3/2},$$
by the Koebe distortion theorem. (Use inequalities (11) and (13) on page 21 of \cite{Pommerenkeuniv}.) An immediate corollary of this Lemma is that
$\beta_h(p)\leq 2p$ if $p>0$.

We now argue as in \cite{Pommerenke}, using the fact that the function $\beta_h$ is convex. Any number bigger than $2/3$ may be written as $p+q$ with $2/3<p<2$ and $q>0$. We can then write
$$ p+q=t\frac{p}{t}+(1-t)\frac{q}{1-t}$$ where $t\in[0,1]$ is close to $1$ and 
$$\beta_h(p+q)\leq t\beta_h\left(\frac{p}{t}\right)+(1-t)\beta_h\left(\frac{q}{1-t}\right)\leq t\left(2\frac{p}{t}-1\right)+2q$$
by the Lemma above. Finally
$$\beta_h(p+q)\leq 2(p+q)-t$$
which converges to $2(p+q)-1$ as $t\to 1$.
\end{proof}
\bibliographystyle{plain}
\bibliography{biblio}

\end{document}